\newtheorem{definition}{Definition}
\newtheorem{lemma}[definition]{Lemma}
\newtheorem{theorem}[definition]{Theorem}
\newtheorem{corollary}[definition]{Corollary}
\newtheorem{example}[definition]{Example}
\newtheorem{conjecture}[definition]{Conjecture}
\newcommand{\sat}[1]{\ensuremath{\Gamma^{\scriptscriptstyle #1}_{\mathit{\scriptscriptstyle \mathrm{SAT}}}}}
\newcommand{\overbar}[1]{\mkern 0.8mu\overline{\mkern-0.8mu#1\mkern-0.8mu}\mkern 0.8mu}
\newcommand{\cclone}[1]{\ensuremath{\langle #1 \rangle}}
\newcommand{\strongof}[1]{\ensuremath{[#1]_s}}
\newcommand{\str}[1]{\ensuremath{\mathrm{Str}}}
\newcommand{\pcclone}[1]{\ensuremath{\langle #1 \rangle_{\not \exists}}}
\newcommand{\ar}{\ensuremath{\mathrm{ar}}}
\newcommand{\eq}{\ensuremath{\rm{Eq}}}
\newcommand{\ppol}{\ensuremath{\rm{pPol}}}
\newcommand{\pol}{\ensuremath{\rm{Pol}}}
\newcommand{\inv}{\ensuremath{\rm{Inv}}}
\newcommand{\N}{\ensuremath{\mathbb{N}}}
\newcommand{\domain}{\ensuremath{\mathrm{domain}}}
\newcommand{\CSP}{\protect\ensuremath\problemFont{CSP}}
\newcommand{\SAT}{\protect\ensuremath\problemFont{SAT}}
\newcommand{\problemFont}[1]{\textsc{#1}}
\newcommand{\padd}[1]{\ensuremath{{\cal P}_R}}
\newcommand{\upadd}[1]{\ensuremath{{\cal UP}_P}}
\newcommand{\pro}{\ensuremath{\mathrm{Proj}}}
\newcommand{\malt}{\ensuremath{\phi}}
\newcommand{\near}{\ensuremath{\mathrm{nu}}}
\newcommand{\universal}{\ensuremath{\mathrm{u}}}
\newcommand{\edge}{\ensuremath{e}}
\newcommand{\concat}[4]{\ensuremath{#1 \tensor[^{}_{#3}^{}_{}]{\times}{^{}_{#4}^{}_{}}#2}}
\begin{document}
\title{Which NP-Hard SAT and CSP Problems Admit 
  Exponentially Improved Algorithms?}
\author[1]{Victor Lagerkvist\thanks{victor.lagerqvist@tu-dresden.de}}
\author[2]{Magnus Wahlstr\"om\thanks{magnus.wahlstrom@rhul.ac.uk}}
\affil[1]{\small Institut f\"ur Algebra, TU Dresden, Dresden, Germany}
\affil[2]{\small Department of Computer Science, Royal Holloway,
  University of London, Great Britain}
\date{}
\pagenumbering{gobble}

\maketitle
\begin{abstract}
We study the complexity of $\SAT(\Gamma)$ problems for potentially infinite languages $\Gamma$
closed under variable negation, which we refer to as \emph{sign-symmetric} languages $\Gamma$.
Via an algebraic connection, this reduces to the study of restricted partial polymorphisms
we refer to as \emph{pSDI-operations} (for partial, self-dual and idempotent),
under which the language $\Gamma$ is invariant.  First, we focus on the language classes
themselves.  We classify the structure of the least restrictive pSDI-operations,
corresponding to the most powerful languages $\Gamma$, and find that these operations
can be divided into \emph{levels}, corresponding to a rough notion of difficulty,
where every level $k$ has an easiest language class, containing the language for $(k-1)$-SAT,
and a hardest language class, containing (among other things)
constraints encoded as roots of multivariate polynomials of degree $(k
- 1)$. 
Particular classes in each level correspond to the natural partially defined versions
of previously studied total algebraic invariants. In particular, the easiest class
on level $k\geq 3$ corresponds to the partial $k$-ary \emph{near-unanimity} ($k$-NU) operation,
and a larger class corresponds to the partial \emph{$k$-edge} operation.
The largest class at each level corresponds to a partial operation $\universal_{k}$ we call
\emph{$k$-universal}. Furthermore, every sign-symmetric language $\Gamma$ not
preserved by $\universal_{k}$ implements all $k$-clauses,
hence $\SAT(\Gamma)$ is at least as hard as $k$-$\SAT$; and if
$\Gamma$ is not preserved by $\universal_k$ for any $k$, 
then $\SAT(\Gamma)$ is trivially SETH-hard (i.e., takes time $O^*(2^n)$ under SETH). 

Second, we consider implications of this for the complexity of $\SAT(\Gamma)$.
We find that particular classes in the hierarchy correspond to previously
known algorithmic strategies. In particular, languages preseved by the
partial 2-edge operation can be solved via
\textsc{Subset Sum}-style meet in the middle,
and languages preserved by the partial 3-NU operation can be solved 
via fast matrix multiplication.
These results also hold for the correspondning non-Boolean CSP problems.
We also find that \emph{symmetric} 3-edge languages reduce to finding a monochromatic 
triangle in an edge-coloured graph, which can be done 
using algorithms for sparse matrix multiplication;
and if the \emph{sunflower conjecture} holds for sunflowers with $k$ petals,
then the partial $k$-NU language has an improved algorithm via
Sch\"oning-style local search.

Complementing this, we show a lower bound, showing that for every level $k$
there is a constant $c_k$ such that for every partial operation $p$ on level $k$, 
the problem $\SAT(\Gamma)$ with $\Gamma=\inv(p)$
cannot be solved faster than $O^*(c_k^n)$ unless SETH fails.
In particular, when $\Gamma=\inv(\text{2-edge})$,
this gives us the first NP-hard SAT problem which simultaneously
has non-trivial upper and lower bounds on the running time, assuming SETH.
Finally, we note a possible conjecture: It is consistent with our
present knowledge that $\SAT(\Gamma)$ admits an improved algorithm
if and only if $\Gamma$ is preserved by $\universal_k$ for some
constant $k$. However, to show this in the positive poses
some significant difficulty.
\end{abstract}

\newpage
\pagenumbering{arabic}

\section{Introduction}
\label{sec:intro}
Significant attention has been paid to the exact time complexity of
SAT and its various restrictions; in particular CNF-SAT and $k$-SAT,
but also other restrictions such as \textsc{Not-All-Equal SAT},
\textsc{1-in-$k$ SAT}, and several more
cases~\cite{CyganDLMNOPSW16TALG,hertli2014,impagliazzo2001,LokshtanovPTWY17SODA,DBLP:conf/coco/SchederS17}.
The usual focus is on an improved algorithm for some particular
variant, i.e., showing that the problem can be solved in time
$O^*(c^n)$ for some $c<2$, or, in some cases, that such an improvement
is not feasible, up to our current knowledge (i.e., it would require
disproving the strong exponential-time hypothesis, SETH; see
below). Here, and in the sequel, the parameter $n$ will in this
context always denote the number of variables in a given instance.
But what is the general rule for when a SAT problem admits such an
improved algorithm?  And can we say anything at all about lower bounds
on such improvements?

To refine the question, let us recall some terminology. 
A \emph{constraint language} is a (possibly infinite) set $\Gamma$ of finitary relations
$R \subseteq D^{\ar(R)}$ over some domain $D$, where $\ar(R)$ denotes the arity of $R$.
We will mainly focus on the Boolean case, i.e., $D=\{0,1\}$. Then
$\SAT(\Gamma)$, occasionally called the {\em parameterized
  satisfiability problem},  is 
the SAT problem where the constraints of the instance are applications
of relations from $\Gamma$, i.e., the constraints are statements
that $R(x_1,\ldots, x_r)$ must hold, for some $R \in \Gamma$
and some variables $x_1, \ldots, x_r$ from the variable set (where we do allow 
repetitions of a variable). 
The multi-valued generalization of $\SAT$, the {\em constraint
  satisfaction problem} over $\Gamma$ ($\CSP(\Gamma)$) is defined in essentially the same way, except that
$\Gamma$ may be non-Boolean. Full definitions of the problems under
consideration follow in Section~\ref{sec:prel}.
Thus, for example, 3-SAT corresponds to $\SAT(\sat{3})$
where $\sat{3}$ for each 3-clause in $\{(x \lor y \lor z), \ldots,
(\neg x \lor \neg y \lor \neg z)\}$ contains the relation
excluding only the tuple forbidden by that particular clause.
Similarly, for $k \geq 3$ let $\sat{k}$ denote
the constraint language of all $k$-clauses, i.e.,
$\SAT(\sat{k})$ is equivalent to $k$-SAT.

Let us also tentatively define $c(\Gamma)$ as the infimum over all constants $c>1$
such that $\SAT(\Gamma)$ can be solved in $O(c^n)$ on $n$ variables.
Then the \emph{exponential time hypothesis} (ETH), due to Impagliazzo and Paturi,
states that $c(\sat{k})>1$ for every $k$, and was shown to be 
equivalent to the statement that $c(\sat{3})>1$~\cite{impagliazzo2001}. It has also been shown to be equivalent to the statement
that $c(\Gamma)>1$ for every $\Gamma$ such that $\SAT(\Gamma)$ is NP-hard~\cite{jonsson2017}.
The \emph{strong exponential time hypothesis} (SETH)
is the statement that $\lim_{k \to \infty} c(\sat{k})=2$~\cite{impagliazzo2009,impagliazzo2001}.
Then our main research question can be rephrased as, 
for which constraint languages $\Gamma$ is $c(\Gamma)<2$,
respectively, when would $c(\Gamma)<2$ contradict SETH?
We say that $\SAT(\Gamma)$ allows an \emph{improved algorithm}
in the former case, and that it is \emph{SETH-hard} in the
latter. Hence, our main interest is in exponential improvements rather
than subexponential improvements of the form $O(2^{n - o(n)})$ which
have been proven to exist for CNF-SAT~\cite{evgeny2005}. 

Before we discuss our approach for the general case, we consider a few examples. 
First of all, the algorithms for $k$-SAT imply that
$c(\Gamma)<2$ for every finite language $\Gamma$.
However, such bounds are also known for some infinite languages. One
example is
\textsc{Exact SAT}, the language  
of 1-in-$k$-clauses of all arities, which
admits an improved algorithm~\cite{exactsat}.
As has been shown more recently, so does 
the problem where constraints are encoded as
the roots of bounded-degree multivariate polynomials
over a finite field~\cite{LokshtanovPTWY17SODA}.
Thus, we need a way to discuss properties of infinite arbitrary
languages, and we need to consider the representation of 
constraints from such a language. 
We address these issues in Section~\ref{sec:results}.

Lower bounds on $c(\Gamma)$ for some $\Gamma$ have
been significantly harder to come by. Some SAT problems
have been shown to be SETH-hard, in particular 
\textsc{Not-All-Equal SAT} and problems related to SAT
such as \textsc{Hitting Set}~\cite{CyganDLMNOPSW16TALG}. It is also known
that assuming ETH, 
the value of $c(\sat{k})$ increases infinitely often~\cite{impagliazzo2001}.
However, we do not even have conjectural evidence against
any particular value of $c(\Gamma)$ for any language $\Gamma$
such that $\SAT(\Gamma)$ is not SETH-hard, other than for
trivial cases.\footnote{By trivial cases, we mean problems
where the natural search space is smaller than $2^n$ but
otherwise unrestricted. 
Consider a language where every variable
is involved in a disequality, e.g., the language of relations
$R'(x_1,\ldots, x_{2k}) \equiv (x_1 \neq x_{k+1}) \land \ldots 
 \land (x_k \neq x_{2k}) \land R(x_1,\ldots,x_k)$
for arbitrary relations $R$.
It is easy to see that under SETH, 
this problem has $c(\Gamma)=2^{1/2}$.}
We also are not aware of any previous attempts to
engage with the question of what makes a SAT
problem SETH-hard or not in general. 

In this paper, we study these questions using tools from universal algebra. 
It is known that the value of $c(\Gamma)$ is determined by
algebraic invariants of $\Gamma$ known as \emph{partial
  polymorphisms}~\cite{jonsson2017}.
It is not difficult to prove that if $\Gamma$ has no interesting partial
polymorphisms, then $\SAT(\Gamma)$ is trivially SETH-hard.
We study the converse to this question, to essentially ask,
does the existence of even a single relevant partial polymorphism~$p$
imply that $\SAT(\Gamma)$ has an improved algorithm?
In particular, is it possible to design an algorithm
with an exponentially improved running time, whose correctness
depends only on~$p$? One of the main strengths of using such an
algebraic approach is that it makes the task of identifying languages
$\Gamma$ such that $c(\Gamma) < 2$ considerably easier. In fact, as
we discuss in Section~\ref{sec:algebraic_approach}, these languages
can be succinctly classified according to the expressive power of
individual partial operations.

Our paper has two main contributions. First, we 
characterize the structure of the weakest non-trivial invariants~$p$.
In this, we restrict ourselves to \emph{sign-symmetric} languages (see
below). 
This reveals a characterization of problem complexity, with
close ties to several previously studied problems and algorithm
classes.
Second, we use the framework to provide both upper and lower bounds
on $c(\Gamma)$ for the corresponding languages $\Gamma$, under SETH. 
We show 
that algorithms from the literature
can be extended to work for every language having
a certain partial polymorphism~$p$. 
In the negative direction, we are able to prove lower bounds on
$c(\Gamma)$ for every language $\Gamma$ characterised purely by 
its invariants. 
As a result, we produce the first language $\Gamma$ such that
$c(\Gamma)$ has both non-trivial upper and lower bounds
under SETH. 
Finally, we make connections between these $\SAT(\Gamma)$ problems 
and some problems in polynomial-time fine-grained complexity.

Our approach also implies some results for CSPs on a non-Boolean
domain, but our main focus in the present paper lies in studying the
Boolean case. 


\subsection{Universal algebraic aspects of SAT problems}
\label{sec:algebraic_approach}

To make the discussion of our approach more precise, we need to review
some notions from universal algebra. This is simply intended as an
introduction and overview to make the extended abstract
self-contained; full definitions follow later in the paper in Section~\ref{sec:prel}.
The universal algebraic approach to problem complexity originates in
research into the constraint satisfaction problem
(CSP)~\cite{jeavons1997}. Recall the definitions of a constraint language $\Gamma$ and the problem
$\CSP(\Gamma)$ from the preceding section.
Clearly, the complexity of CSP$(\Gamma)$ varies as a
function of $\Gamma$: if $\Gamma$ is simple enough, then CSP$(\Gamma)$
is in P; and if $\Gamma$ is rich enough, then CSP$(\Gamma)$ is
NP-complete.  The \emph{dichotomy conjecture}, first posed by Feder
and Vardi~\cite{FV98}, states that these are
the only two cases and that no NP-intermediate CSP problems exist: for
every fixed language $\Gamma$, CSP$(\Gamma)$ is either in P or is
NP-complete. This conjecture has been the subject of intense research
and the piece remaining to complete the puzzle was recently resolved
by two independent authors~\cite{bulatov2017,zhuk2017}.

The algebraic approach turned out to be central in this research programme. In
short, this approach boils down to the realization that properties of
constraint languages can be expressed by properties of their {\em
  polymorphisms}. Informally, a polymorphism
of a constraint language $\Gamma$ is an operation which yields a
method to combine satisfying assignments of instances of CSP$(\Gamma)$. The algebraic
reformulation of the CSP dichotomy theorem then states that
CSP$(\Gamma)$ is tractable if there exists a non-trivial method to
combine solutions, and is NP-complete otherwise. 
More formally, we may define polymorphisms as follows. First, let $R \subseteq D^n$ be a relation on $D$, and let $p: D^r \to D$ be
an $r$-ary operation over $D$. We can then generalise $p$ to an operation
$(D^n)^r \to D^n$ on tuples over $D$ by
$p(x_1, \ldots, x_r)[i] = p(x_1[i], \ldots, x_r[i])$ for every position $i \in [n]$ (where $x_j[i]$ denotes the $i$th
element of the tuple $x_j$).  Then $p$ is a \emph{polymorphism
of $R$} if this generalised operation preserves $R$, i.e., if $p(x_1,
\ldots, x_r) \in R$ for any $x_1, \ldots, x_r \in R$.  Note that
if $p$ is a {\em projection}, i.e., $p(t_1,\ldots,t_r)
= t_i$ for some $i \in [r]$, then $p$ preserves every possible
relation. The notion of a polymorphism easily extends to constraint
languages, and we say that $p$ is a polymorphism of the constraint
language $\Gamma$ if $p$ is a polymorphism of $R$ for
every relation $R \in \Gamma$, and let $\pol(\Gamma)$ denote this set.
It is then known that the complexity
of CSP$(\Gamma)$, up to polynomial-time many-one reductions, is determined
entirely by $\pol(\Gamma)$~\cite{jeavons1998}. 

\begin{theorem}
  Let $\Gamma$ and $\Delta$ be finite constraint languages over a finite
  domain $D$. If $\pol(\Delta) \subseteq \pol(\Gamma)$, then $\CSP(\Gamma)$ is
  polynomial-time many-one reducible to $\CSP(\Delta)$.
\label{thm:jeavons}
\end{theorem}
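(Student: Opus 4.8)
The plan is to run the standard Galois-connection argument relating operations to relations. The key tool is the correspondence, valid because the domain $D$ is finite, between $\pol$ and the operator $\cclone{\cdot}$ that sends a constraint language to the set of relations it can define by a \emph{primitive positive} formula, i.e.\ a formula built from atoms $S(\mathbf{z})$ with $S$ in the language, conjunction, existential quantification, and the equality relation $=_D$. First I would invoke the theorem of Bodnarchuk--Kaluzhnin--Kotov--Romov (equivalently Geiger's theorem): for constraint languages $\Gamma,\Delta$ over a finite domain one has $\pol(\Delta)\subseteq\pol(\Gamma)$ if and only if $\Gamma\subseteq\cclone{\Delta}$. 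The nontrivial implication --- that a relation preserved by every polymorphism of $\Delta$ admits a pp-definition over $\Delta$ --- is the only place where finiteness of $D$ is essential: one writes the $k$-ary relation $R$ as the $k\times t$ matrix whose columns are its $t=|R|$ tuples, forms the relation obtained by applying all $t$-ary operations of $\pol(\Delta)$ to those columns (the ``indicator'' construction), observes this relation is pp-definable over $\Delta$, and shows $R$ is a coordinate projection of it. This is the main obstacle in the sense of being the one genuinely substantial ingredient, but it is entirely classical.

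Given the Galois connection, the reduction itself is syntactic. Since $\Gamma$ is finite and $\Gamma\subseteq\cclone{\Delta}$, fix for each $R\in\Gamma$ a pp-definition $R(x_1,\dots,x_{\ar(R)})\equiv\exists y_1\cdots\exists y_{m_R}\;\bigwedge_j S_{R,j}(\mathbf{z}_{R,j})$ with each $S_{R,j}\in\Delta\cup\{=_D\}$, and let $M$ bound the number of conjuncts over all $R\in\Gamma$. Given an instance $I$ of $\CSP(\Gamma)$ with variable set $V$ and constraints $C_1,\dots,C_n$, build an instance $I'$ of $\CSP(\Delta)$: for each $C_i=R_i(v_1,\dots,v_{\ar(R_i)})$ introduce $m_{R_i}$ fresh variables for the existential variables of the pp-definition of $R_i$, and add the corresponding conjuncts with the $x$-slots instantiated to $v_1,\dots,v_{\ar(R_i)}$ and the $y$-slots to the fresh variables; each equality atom $=_D$ is discharged by identifying (contracting) the two variables it relates, which keeps $I'$ inside $\CSP(\Delta)$ proper. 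Then $I'$ has at most $|V|+nM$ variables and at most $nM$ constraints, so it is computable from $I$ in polynomial (indeed linear) time; this is where finiteness of $\Gamma$ matters, keeping all pp-definitions of bounded size.

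Finally I would check equivalence. If $\sigma\colon V\to D$ satisfies $I$, then $(\sigma(v_1),\dots,\sigma(v_{\ar(R_i)}))\in R_i$ for each $i$, so by the semantics of the pp-definition there are values for the fresh existential variables of $C_i$ making every conjunct true; the equality atoms of a satisfied constraint's pp-definition are automatically satisfied, so the forced identifications are consistent, and together with $\sigma$ this gives a satisfying assignment of $I'$. Conversely, any satisfying assignment of $I'$ restricted to $V$ satisfies every $C_i$, since the witnessing values for the existential variables are present and the pp-definition then forces $R_i(v_1,\dots,v_{\ar(R_i)})$. Hence $I$ is satisfiable iff $I'$ is, giving the desired many-one reduction from $\CSP(\Gamma)$ to $\CSP(\Delta)$. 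Everything after the first paragraph is routine bookkeeping; the $\pol$/$\cclone{\cdot}$ Galois theorem is the crux.
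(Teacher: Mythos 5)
Your proof is correct and follows exactly the route the paper relies on: the paper states Theorem~\ref{thm:jeavons} as a known result of Jeavons (citing \cite{jeavons1998}) without a formal proof, and the surrounding discussion sketches precisely your argument --- the Galois connection $\pol(\Delta)\subseteq\pol(\Gamma)\Leftrightarrow\Gamma\subseteq\cclone{\Delta}$ followed by replacing each constraint by its pp-definition gadget, with equality atoms handled by variable identification. Nothing further is needed.
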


At this stage this result may seem slightly puzzling since we do not
yet have a clear correspondence between polymorphisms and their
implications on constraint languages.
However, there exists a dual concept to polymorphisms on the
relational side called \emph{implementations}. 
Given a set of relations $\Gamma$ over a domain $D$, a $k$-ary relation $R$ is
definable by a {\em primitive positive implementation} over $\Gamma$ (pp-definable)
if there exists a first-order formula making use of existential
quantification and conjunctive constraints over $\Gamma$ such that the
set of models of this formula is precisely $R$. Given a constraint language
$\Gamma$ we then let $\cclone{\Gamma}$ be the smallest set of
relations containing $\Gamma$ and which is closed under taking pp-definitions.
The polymorphisms of $\Gamma$ then characterize
the power of pp-definitions over $\Gamma$ in the following sense.

\begin{theorem}[\cite{BKKR69i,BKKR69ii,Gei68}]
  Let $\Gamma$ and $\Delta$ be two constraint languages. Then 
   $\Gamma \subseteq \cclone{\Delta}$ if and only if  $\pol(\Delta) \subseteq
    \pol(\Gamma)$.
\end{theorem}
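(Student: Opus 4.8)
The statement is the classical Galois connection between $\pol$ and $\inv$, and I would prove it along the standard lines, in the two directions; throughout I assume $D$ finite (and every relation finite), which is the setting relevant to the paper.

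\textbf{Easy direction} ($\Gamma \subseteq \cclone{\Delta} \Rightarrow \pol(\Delta) \subseteq \pol(\Gamma)$). The plan is to observe that for any fixed set $F$ of operations on $D$, the set $\inv(F)$ of relations preserved by every operation of $F$ is closed under primitive positive definitions. One checks this for each basic step used in building a pp-formula: if $f \in F$ preserves $R_1$ and $R_2$, then it preserves the relation obtained by conjoining applications of $R_1$ and $R_2$ on an arbitrary tuple of (possibly repeated) variables, since this amounts to intersection after padding with dummy coordinates and permuting; $f$ preserves the equality relation, as $f(d_1,\ldots,d_r)=f(d_1,\ldots,d_r)$; and $f$ preserves existential quantification of a coordinate, because given solutions $a^1,\ldots,a^r$ of the quantified formula one extends each to a witnessing assignment, applies $f$ coordinatewise, and notes the result still satisfies the quantifier-free part. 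Hence $\inv(\pol(\Delta))$ is closed under pp-definitions and contains $\Delta$, so it contains $\cclone{\Delta}$ and therefore $\Gamma$; that is, every $R \in \Gamma$ is preserved by every $f \in \pol(\Delta)$, which is exactly $\pol(\Delta)\subseteq\pol(\Gamma)$.

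\textbf{Hard direction} ($\pol(\Delta) \subseteq \pol(\Gamma) \Rightarrow \Gamma \subseteq \cclone{\Delta}$). Fix $R \in \Gamma$, write $R=\{t_1,\ldots,t_m\} \subseteq D^{\ar(R)}$ with the enumeration fixed, and build the ``canonical'' pp-formula over $\Delta$ on the finite variable set $D^m$ (a variable is a tuple $v=(v[1],\ldots,v[m])$). For every $S \in \Delta$ of arity $k$ and every $k$-tuple of variables $(v^1,\ldots,v^k)$ with $(v^1[\ell],\ldots,v^k[\ell]) \in S$ for all $\ell$, include the conjunct $S(v^1,\ldots,v^k)$; take as free the $\ar(R)$ variables $c_i := (t_1[i],\ldots,t_m[i])$, the $i$-th ``column'' of $R$, and existentially quantify the rest, obtaining a relation $R'$. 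I claim $R'=R$. For $R \subseteq R'$: each projection $\pi_\ell \colon D^m \to D$, $v \mapsto v[\ell]$, satisfies every conjunct by construction and assigns $\pi_\ell(c_i)=t_\ell[i]$ to the free variables, so $t_\ell \in R'$. For $R' \subseteq R$: let $h \colon D^m \to D$ be any satisfying assignment. Then $h$, as an $m$-ary operation, is a polymorphism of each $S \in \Delta$: given $s^1,\ldots,s^m \in S$ with $\ar(S)=k$, the tuples $w^j := (s^1[j],\ldots,s^m[j]) \in D^m$ satisfy $(w^1[\ell],\ldots,w^k[\ell]) = s^\ell \in S$ for every $\ell$, so $S(w^1,\ldots,w^k)$ is a conjunct and hence $(h(w^1),\ldots,h(w^k)) \in S$, which says precisely that $h$ applied coordinatewise maps $s^1,\ldots,s^m$ into $S$. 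Thus $h \in \pol(\Delta) \subseteq \pol(\Gamma)$, so $h$ preserves $R$; applying $h$ coordinatewise to $t_1,\ldots,t_m \in R$ yields a tuple of $R$ equal to $(h(c_1),\ldots,h(c_{\ar(R)}))$, the values $h$ gives the free variables. Hence $R'=R$, so $R \in \cclone{\Delta}$, and as $R \in \Gamma$ was arbitrary, $\Gamma \subseteq \cclone{\Delta}$.

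\textbf{Main obstacle.} The forward direction is a routine structural induction. The content is the hard direction, and the one genuinely clever point is the choice of the exponential-size gadget on variable set $D^m$, together with the dual reading of a satisfying assignment as an $m$-ary polymorphism of $\Delta$; arranging that both inclusions $R \subseteq R'$ and $R' \subseteq R$ drop out of the same construction is the delicate part. I would also flag the standing finiteness assumption, which is what makes the canonical pp-formula an honest finite conjunction over finitely many variables; for infinite domains one additionally needs local (topological) closure of the polymorphism clone, which I would not pursue here.
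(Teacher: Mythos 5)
Your proposal is correct and is exactly the classical Geiger/Bodnarchuk--Kaluzhnin--Kotov--Romov argument (the easy closure direction plus the canonical pp-formula on variable set $D^m$ whose satisfying assignments are precisely the $m$-ary polymorphisms of $\Delta$); the paper does not prove this theorem but imports it from those references, so there is nothing different to compare against. The only point worth adding is that for an infinite language $\Delta$ your canonical conjunction is a priori infinite, which is repaired by noting that the satisfying assignments form a subset of the finite set of maps $D^m \to D$, so some finite subfamily of conjuncts already defines the same relation.
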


This duality has two implications. First, note that an instance 
of $\CSP(\Gamma)$ can be viewed as a special case of a pp-definition over $\Gamma$, 
hence the polymorphisms of $\Gamma$ describe closure properties 
for the whole $\CSP(\Gamma)$ problem, and can be used to design
polynomial-time algorithms. This is in line with the intuition that a
polymorphism yields a method for combining satisfying assignments. 
Second, if $R$ has a pp-definition in $\Gamma$ then there is a
polynomial-time many-one reduction from $\CSP(\Gamma \cup \{R\})$ to
$\CSP(\Gamma)$; essentially, the pp-definition describes a classical
``gadget reduction'' between the problems obtained by replacing
constraints over $R$ by the collection of constraints over $\Gamma$
prescribed by the pp-definition. Therefore, dually to the previous
point, the absence of sufficiently interesting polymorphisms for
$\Gamma$ would imply a polynomial-time reduction from an NP-hard
problem $\CSP(\Gamma')$, e.g., \textsc{3-SAT}, to $\CSP(\Gamma)$.

In practice, for CSPs beyond the Boolean domain,
the complexity landscape gets very complex and
one needs to apply a richer algebraic toolbox to make
progress. However, it was
realized early that not only does the complexity of CSP$(\Gamma)$
depend on $\pol(\Gamma)$, but in fact only the {\em identities}
satisfied by the operations in $\pol(\Gamma)$~\cite{bulatov2005}. In
technical terms this means that the complexity of $\CSP(\Gamma)$ only
depends on the {\em variety} generated by $\pol(\Gamma)$. 
 We will
 not define these concepts formally since they are not needed to
 present the main results; it is sufficient to know that the
 complexity of $\CSP(\Gamma)$ only depends on the identities satisfied
 by the operations
 in $\pol(\Gamma)$. For example, $\CSP(\Gamma)$ is
 solvable using $k$-consistency if $\pol(\Gamma)$ contains a {\em majority operation},
 i.e., a ternary operation $m$ satisfying the identities $m(x,y,y) =
 y$, $m(y,x,y) = y$, $m(y,y,x) = y$~\cite{jeavons1997}. Moreover, all operations
 resulting in tractable CSPs can be characterized using such identities.


It is worth remarking that for the Boolean domain the situation is
considerably simplified due to Post's classification of Boolean
$\pol(\Gamma)$~\cite{pos41}, and a large range of such problems have been
proven to admit dichotomies~\cite{creignou2008b}. For
example, Schaefers dichotomy theorem for $\SAT(\Gamma)$~\cite{sch78}
can be proven in an extremely straightforward manner using this approach.
However, for our purposes the above methods are too coarse-grained,
since the precise running time $O^*(c^n)$ for a problem $\SAT(\Gamma)$
is not preserved by the introduction of existentially quantified variables.
Hence, we are in need of more fine-grained algebraic tools than usual,
which can be applied as follows.

A \emph{partial operation} over $D$ (of some arity $r$)
is an operation $p: X \to D$ for some domain $X \subseteq D^r$.
Similar to the total case we again extend it to a partial operation on tuples over $D$:
for $x_1, \ldots, x_r \in D^n$,
we let $p(x_1, \ldots, x_r)[i]=p(x_1[i], \ldots, x_r[i])$
if this is defined for every position $i \in [n]$;
otherwise $p(x_1, \ldots, x_r)$ is undefined. 
Then $p$ is a \emph{partial polymorphism} of a relation $R \subseteq D^n$
if, for any $x_1, \ldots, x_r \in R$ such that $p(x_1,\ldots,x_r)$ is defined
we have $p(x_1, \ldots, x_r) \in R$. We will occasionally also say
that $R$ is {\em invariant} under the partial operation $p$.
A \emph{partial projection} is a subfunction of a projection;
such an operation preserves every possible relation. 
A partial polymorphism of a constraint language $\Gamma$
is a partial polymorphism of every relation $R \in \Gamma$ and we let
$\ppol(\Gamma)$ denote the set of all partial polymorphisms of
$\Gamma$. Similarly, given a set of partial operations $P$ we write
$\inv(P)$ to denote the set of relations invariant under $P$, and if
$P = \{p\}$ is singleton we write $\inv(p)$ instead of $\inv(\{p\})$.
Dually to this relaxed notion of a polymorphism, we have a strengthened
notion on the relational side: a {\em
quantifier-free primitive positive definition} (qfpp-definition) over
$\Gamma$ is a pp-definition without existential quantification. We
let $\pcclone{\Gamma}$ denote the smallest set of relations containing
$\Gamma$ and which is closed under qfpp-definitions, and
then obtain the following correspondence.

\begin{theorem}[\cite{Gei68,romov1981}]
  \label{theorem:pgalois_intro}
    $\Gamma \subseteq \pcclone{\Delta}$ if and only if  $\ppol(\Delta) \subseteq
    \ppol(\Gamma)$ for any constraint languages $\Gamma$ and $\Delta$.
\end{theorem}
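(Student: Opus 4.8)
The plan is to prove the two inclusions separately, following the classical Geiger argument adapted to partial operations and quantifier-free definitions; throughout I work in the finite-domain setting relevant here (so every relation is finite; the general case needs only the standard local-closure refinement). \emph{The easy direction} ($\Gamma\subseteq\pcclone{\Delta}\Rightarrow\ppol(\Delta)\subseteq\ppol(\Gamma)$) rests on the auxiliary fact that $\pcclone{\Delta}\subseteq\inv(\ppol(\Delta))$, i.e.\ every relation qfpp-definable over $\Delta$ is invariant under every partial polymorphism of $\Delta$. This is an induction on the shape of a qfpp-definition: applying a partial operation $p$ coordinatewise to satisfying tuples of a single $\Delta$-atom $S(x_{i_1},\dots,x_{i_k})$ and then restricting to coordinates $i_1,\dots,i_k$ is the same as first restricting and then applying $p$, so the atom is invariant under $p\in\ppol(\Delta)$; equality atoms are invariant under every partial operation; and a conjunction of invariant constraints is invariant, since both definedness and membership are tested coordinate by coordinate. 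Then, if $\Gamma\subseteq\pcclone{\Delta}$ and $p\in\ppol(\Delta)$, $p$ is a partial polymorphism of every $R\in\Gamma$, so $p\in\ppol(\Gamma)$.

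\emph{The hard direction} ($\ppol(\Delta)\subseteq\ppol(\Gamma)\Rightarrow\Gamma\subseteq\pcclone{\Delta}$) is where the real work lies. Fix $R\in\Gamma$ of arity $n$, write $R=\{t_1,\dots,t_m\}$, and note that $\pcclone{\Delta}$ contains the full relation $D^n$ and is closed under intersections of same-arity relations (finitely many of each arity), so there is a smallest $\overline R\in\pcclone{\Delta}$ with $R\subseteq\overline R$; the degenerate case $R=\emptyset$ is handled by the usual convention that the empty relation lies in every $\pcclone{\cdot}$. It suffices to show $\overline R=R$, since then $R\in\pcclone{\Delta}$ and $R$ was arbitrary. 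Suppose not and pick $t\in\overline R\setminus R$. I would now build a partial polymorphism of $\Delta$ violating $R$: arrange the tuples of $R$ as the columns of the $n\times m$ matrix $M$ with $M[i,j]=t_j[i]$, let $r_i:=(t_1[i],\dots,t_m[i])\in D^m$ be its $i$-th row, and define the partial $m$-ary operation $p$ by $\mathrm{dom}(p)=\{r_i:i\in[n]\}$ and $p(r_i)=t[i]$. This is well defined: if $r_i=r_{i'}$ then all tuples of $R$ agree on coordinates $i$ and $i'$, so the equality-type relation $\{x\in D^n:x[i]=x[i']\}$ lies in $\pcclone{\Delta}$ and contains $R$, hence contains $\overline R$, forcing $t[i]=t[i']$.

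Next I would check $p\in\ppol(\Delta)$. Take $S\in\Delta$ of arity $k$ and $s_1,\dots,s_m\in S$ with $p(s_1,\dots,s_m)$ defined; by construction of $\mathrm{dom}(p)$ there is, for each $\ell\in[k]$, an index $f(\ell)\in[n]$ with $(s_1[\ell],\dots,s_m[\ell])=r_{f(\ell)}$, so $p(s_1,\dots,s_m)[\ell]=t[f(\ell)]$ and, reading columnwise, $s_j=(t_j[f(1)],\dots,t_j[f(k)])$ for every $j$. Then $S(x_{f(1)},\dots,x_{f(k)})$ is a one-atom qfpp-definition over $\Delta$ of a relation $S^\ast\subseteq D^n$, and every $t_j\in R$ satisfies it since it evaluates to $s_j\in S$; hence $R\subseteq S^\ast\in\pcclone{\Delta}$, so $\overline R\subseteq S^\ast$, and in particular $t\in S^\ast$, i.e.\ $(t[f(1)],\dots,t[f(k)])=p(s_1,\dots,s_m)\in S$. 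Thus $p$ preserves every $S\in\Delta$, so $p\in\ppol(\Delta)\subseteq\ppol(\Gamma)$ and in particular $p$ preserves $R$. But evaluating $p$ coordinatewise on $t_1,\dots,t_m\in R$ is precisely evaluating $p$ on the rows $r_1,\dots,r_n$ of $M$, which gives $p(t_1,\dots,t_m)=t$; being defined, $t$ must lie in $R$ — contradicting the choice of $t$. Hence $\overline R=R$, and $\Gamma\subseteq\pcclone{\Delta}$.

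\emph{Main obstacle.} The crux is the indicator operation $p$: one must see that "$p(s_1,\dots,s_m)$ is defined" encodes exactly the existence of a single index map $f$ realizing each $s_j$ as the $f$-restriction of $t_j$, so that the offending tuple $t\in\overline R$ is forced to satisfy the corresponding single $\Delta$-constraint $S(x_{f(1)},\dots,x_{f(k)})$. The surrounding bookkeeping — closure of $\pcclone{\Delta}$ under intersection, the presence of the trivial relations $D^n$ and the equality-type relations, and the degenerate cases $R=\emptyset$ or $m=0$ — is routine given the standard conventions for quantifier-free pp-definitions, and it is only there that the empty-relation corner case needs attention.
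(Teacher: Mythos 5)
Your proposal is correct: the paper does not prove this statement but imports it from the cited literature, and your argument is precisely the classical Geiger/Romov-style proof for the partial-operation setting — the easy direction via invariance of qfpp-definable relations, the hard direction via the $m$-ary indicator operation built from the rows of the matrix of tuples of $R$ and a witness $t \in \overline{R}\setminus R$. The only delicate point, the empty-relation (and empty-conjunction/$D^n$) convention, is exactly the corner case the standard treatments handle by convention, and you flag it appropriately.
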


With the help of this correspondence Jonsson et al.~\cite{jonsson2017}
proved that partial polymorphisms indeed can be used for studying the
fine-grained complexity of SAT and CSP.

\begin{theorem}
  Let $\Gamma$ and $\Delta$ be two finite constraint languages. If
  $\ppol(\Gamma) \subseteq \ppol(\Delta)$ then there is a
  polynomial-time many-one reduction from $\CSP(\Delta)$ to
  $\CSP(\Gamma)$ which does not increase the number of variables.
\end{theorem}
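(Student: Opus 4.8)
The plan is to reduce the claim to the Galois correspondence of Theorem~\ref{theorem:pgalois_intro}, turning the inclusion of partial polymorphisms into the syntactic statement that every relation of $\Delta$ has a quantifier-free primitive positive definition over $\Gamma$, and then to realise the reduction as a purely local ``gadget replacement'' that, precisely because qfpp-definitions forbid existential quantification, introduces no new variables.

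First I would apply Theorem~\ref{theorem:pgalois_intro} with the names $\Gamma$ and $\Delta$ interchanged, so that it reads: $\Delta \subseteq \pcclone{\Gamma}$ if and only if $\ppol(\Gamma) \subseteq \ppol(\Delta)$. The hypothesis therefore yields $\Delta \subseteq \pcclone{\Gamma}$, i.e.\ every $R \in \Delta$, say of arity $k$, has a qfpp-definition over $\Gamma$: a quantifier-free formula
\[
  \varphi_R(x_1,\dots,x_k) \;=\; \bigwedge_{j} R_j(\bar y_j),
\]
where each $R_j$ is a relation of $\Gamma$ or the equality relation, each $\bar y_j$ is a tuple of variables among $x_1,\dots,x_k$ (with repetitions allowed), and the set of satisfying assignments of $\varphi_R$ equals $R$. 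Since $\Delta$ is finite and all relations are over a fixed finite domain, there are finitely many such $R$, so I may fix one witnessing formula $\varphi_R$ per relation once and for all; its size is a constant depending only on $\Gamma$ and $\Delta$.

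Next I would describe the reduction constraint by constraint. Given an instance $I$ of $\CSP(\Delta)$ on $n$ variables, replace each constraint $R(z_{i_1},\dots,z_{i_k})$ of $I$ by the conjunction of atoms obtained from $\varphi_R$ by substituting $z_{i_\ell}$ for $x_\ell$; eliminate each resulting equality atom $z_a = z_b$ by identifying the two variables throughout. The output $I'$ is then an instance of $\CSP(\Gamma)$, and since $\varphi_R$ has no existentially quantified variables, every variable appearing in a substituted gadget already occurs among $z_{i_1},\dots,z_{i_k}$; hence $I'$ has at most $n$ variables. Correctness is immediate: an assignment satisfies $R(z_{i_1},\dots,z_{i_k})$ exactly when it satisfies the corresponding gadget, so $I$ and $I'$ have the same satisfying assignments modulo the variable identifications, and in particular $I'$ is satisfiable iff $I$ is. Since each gadget has constant size and the identifications cost only polynomial time, $I'$ is computable from $I$ in polynomial time, which gives the desired reduction.

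The main obstacle is not in the combinatorics of the reduction --- which is essentially bookkeeping, the ``quantifier-free'' restriction being exactly what prevents new variables from sneaking in --- but in the legitimacy of the appeal to Theorem~\ref{theorem:pgalois_intro}: one must know that the Galois connection between partial clones and qfpp-closed relation sets holds in the stated form. For the finite languages over a finite domain considered here this is unproblematic (and the equality relation is always qfpp-definable via $x_1 = x_2$, so it may be used freely in the gadgets), but the general statement requires the care taken in~\cite{Gei68,romov1981}; any subtlety there would have to be imported rather than re-proved.
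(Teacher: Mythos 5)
Your proposal is correct and follows essentially the same route the paper indicates: it cites this result to Jonsson et al.\ and presents it as a direct consequence of the Galois connection (Theorem~\ref{theorem:pgalois}) together with the fact that $\inv(P)$ is closed under qfpp-definitions, which is exactly your fixed-gadget replacement with equality atoms removed by variable identification. The only caveat is the one you already note — the heavy lifting sits in the Galois correspondence itself, which the paper likewise imports from~\cite{Gei68,romov1981} rather than re-proving.
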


Unfortunately, this theorem is difficult to apply in practice since
it requires a good understanding of the structure of the closed sets
$\ppol(\Gamma)$ for all possible choices of $\Gamma$. Despite advances
made by several different researchers~\cite{lagerkvist2017,DBLP:conf/ismvl/CouceiroHSW14,Lagerkvist2016c,scholzel2014}, no such
classification is known even for Boolean $\Gamma$, and even less is
known for $\Gamma$ such that $\SAT(\Gamma)$ is NP-hard.  Hence, we
propose a method inspired by the rich algebraic toolbox developed for
studying the classical complexity of CSP: does the SETH-hardness of
$\SAT(\Gamma)$ and $\CSP(\Gamma)$ only depend on the identities
satisfied by the partial polymorphisms of $\Gamma$? On the one hand,
it is easily verified that if the only partial polymorphisms of
$\Gamma$ are the partial projections, then $\Gamma$ can qfpp-define
all $k$-clauses for every $k \geq 1$, and $\SAT(\Gamma)$ is SETH-hard.
On the other hand, we would have to show that every non-trivial
partial polymorphism $p$ allows the design of an algorithm that solves
$\SAT(\Gamma)$ in $O^*(c^n)$ time for some $c<2$.

One issue which speaks against the feasibility of this approach is
that individual partial polymorphisms are very weak restrictions.  For
one thing, it is known that for every finite set $P$ of partial
operations (that does not imply any non-trivial total operation), the
set $\inv(P)$ of all relations that are invariant under $P$ contains a
double-exponential number of relations as a function of the arity
$n$~\cite[Lemma~35]{lagerkvist2017c}.  Note that for a finite language
such as $k$-SAT, there are in contrast only $2^{O(n^k)}$ distinct
instances on $n$ variables.  Hence, languages $\inv(p)$ for a
single partial operation $p$ would be much richer than previously
studied problems.  Very similarly, in a related
study~\cite{LagerkvistW17CP}, it was shown that the existence of
so-called \emph{polynomial kernels} for $\SAT(\Gamma)$ cannot be
characterised by such a finite set $P$, whereas every finite problem,
as well as \textsc{Exact SAT} and problems defined via bounded-degree
polynomials, have polynomial kernels~\cite{JansenP16MFCS}.

Nevertheless, contrary to these earlier results, we will prove that
the presence of certain individual partial polymorphisms can be used
to design improved algorithms for SAT problems. As a starting point we
in the first hand consider the partial analogues of well-studied
polymorphisms resulting in tractable CSPs. For example, a {\em Maltsev
operation} is a ternary operation $\malt$ satisfying the two
identities $\malt(x,x,y) = y$ and $\malt(y,x,x) = y$, and is
well-known to result in tractable CSPs due to the algorithm by Bulatov
and Dalmau~\cite{bulatov2006b}. We may then define the {\em partial
  Maltsev operation} over a domain $D$ as the unique partial operation
which for all $x,y \in D$ satisfies these two identities, but which is
undefined otherwise. Similarly, it is possible to define partial variants of
$k$-ary {\em near unanimity} ($k$-NU) and $k$-ary {\em
  edge} ($k$-edge)
operations. These classes of operations are formally defined in
Section~\ref{section:patterns} and at the moment we will simply regard them as
well-behaved operations resulting in tractable CSPs, but we remark
that a 2-edge operation  is equivalent to a Maltsev operation and that
a ternary NU-operation is nothing else than a majority operation. It may also be
interesting to observe that 
the partial operations defined in this manner are unique for every
fixed domain, even though there may exist a large number of total
operations satisfying the identities.


\subsection{Our results and structure of the paper}
\label{sec:results}
For a partial polymorphism $p$, let \textsc{$\inv(p)$-SAT} refer
to the problem $\SAT(\Gamma)$ where $\Gamma=\inv(p)$. Hence, in this
problem every involved relation is invariant under the given partial
operation $p$. We will sometimes also refer to the CSP-variants of
these problems and denote these by $\inv(p)$-CSP (and tacitly assume
that the domain of the operation $p$ is clear from the context, or is not relevant).
We look at three related aspects of the complexity of these problems.
Let us first discuss our model more carefully.

\paragraph{Our questions and model.}
Since $\Gamma$ is infinite we first need to
fix a constraint representation. 
Let $R \subseteq \{0,1\}^r$ be a relation. 
An \emph{explicit representation} of $R$
is a list of all tuples $t \in R$. For infinite languages the explicit
representation is not always the most natural one since a relation may
contain exponentially many tuples with respect to the arity. This is
particuraly troublesome when proving lower bounds for
\textsc{$\inv(p)$-SAT} since we may not be able to construct relations
of arbitrary arity in the required time bound. Hence, we also consider
an {\em implicit representation}. In this model of representation a
contraint $R(x_1, \ldots, x_r)$ is represented by an oracle consisting
of a computable function which, given an assignment to variables $X
\subseteq \{x_1, \ldots, x_r\}$, can determine if this assignment can
be extended to an assignment to $\{x_1, \ldots, x_r\}$ consistent with
$R$.

\begin{example}
  For each $r \geq 3$ consider the relation $R^{r} = \{(x_1, \ldots,
  x_r) \in \{0,1\}^r \mid x_1 + \ldots + x_r$ is even$\}$. Even though
  $|R^r|$ is exponential with respect to $r$ it is not difficult to
  see that constraints over $R^r$ can be implicitly represented by
  computing the parity of the given assignment.
\end{example}

Given these definitions, we consider the following three notions of
improved algorithms.

\begin{definition}
  Let $\Gamma$ be an infinite constraint language. 
  \begin{enumerate}
  \item $\SAT(\Gamma)$ admits a \emph{non-uniform improved algorithm}
    with running time $O^*(c^n)$, $c<2$,
    if for every finite $\Gamma' \subset \Gamma$
    the problem $\SAT(\Gamma')$ can
    be solved in $O^*(c^n)$ time. 
  \item $\SAT(\Gamma)$ admits an improved algorithm
    \emph{in explicit representation} if $\SAT(\Gamma)$
    admits an improved algorithm for the problem variant
    where every relation is provided in explicit representation. 
  \item $\SAT(\Gamma)$ admits an improved algorithm
    \emph{in the oracle model} if $\SAT(\Gamma)$ admits
    an improved algorithm when constraints are provided
    only as extension oracles. 
  \end{enumerate}
\end{definition}

Note that for a non-uniform improved algorithm,
the representation does not matter. Also note that
these are gradually stronger requirements,
and that in these terms, SETH states that
CNF-SAT does not admit even a non-uniform
improved algorithm. On the other hand,
allowing constraints of unbounded arity
via oracle access can be useful; for example,
the $n$-ary constraint ($\sum_{i=1}^n x_i = k$)
has a simple extension oracle, and if included
in the language, can be used to phrase
optimisation problems as oracle-access SAT problems. 

To restrict our scope, we focus on constraint languages that are
closed under variable negation. Informally, this means that whenever
$R \in \Gamma$, in addition to constraints $R(x_1, \ldots, x_r)$ on
only positive variables, we are also allowed to impose constraints
such as $R(x_1, \ldots, \neg x_i, \ldots, x_r)$ with some occurrences
of variables $x_i$ negated in the constraint.  More formally, it means
that for every $R \in \Gamma$, and for every subset $S \subseteq
[\ar(R)]$ of positions of $R$, the relation produced by negating every
tuple $t \in R$ in positions $S$ is also contained in $\Gamma$.  In
this case, we say that $\Gamma$ is \emph{sign-symmetric}.  This is a
natural restriction which holds for many well-studied constraint
language, e.g., the languges corresponding to $k$-SAT,
\textsc{1-in-$k$-SAT} and the roots of bounded-degree polynomials are
all sign-symmetric.  Furthermore, it is known that the expressive
power of a sign-symmetric constraint language is characterised by a
restricted kind of partial polymorphism which we refer to as
\emph{pSDI-operations} (for partial, self-dual and
idempotent)~\cite{lagerkvist2016,lagerkvist2015}.  Thus, the
restriction to sign-symmetric languages corresponds directly to a
restriction on the algebraic level. Most importantly, the Boolean partial
operations arising from system of identities of the form considered in
Section~\ref{sec:algebraic_approach} are guaranteed to be pSDI.

\paragraph{The fine-grained structure of NP-hard SAT problems.}
The first part of the paper, Section~\ref{section:structure}, is dedicated to
explaining the the structure of pSDI-operations. Due to the algebraic
correspondence between partial polymorphisms and qfpp-definability
this also serves as a classification of the NP-hard SAT problems we
need to consider for constructing improved algorithms.

First, we study the structure of single pSDI-operations $p$ that
impose some non-trivial restrictions on the expressive power of
$\Gamma$.  We particularly consider the weakest such operations, i.e.,
such that the language $\Gamma=\inv(p)$ is as rich as possible.  In
particular, we consider $p$ such that every subfunction of $p$ which
is pSDI is
a partial projection.  Let us
refer to such an operation as being \emph{minimal}. For example, the partial variants of Maltsev,
$k$-NU, and $k$-edge operations are all minimal.  Equipped with this
notion we then show that minimal
pSDI-operations are naturally organised into \emph{levels}, with a
structure as follows.
\begin{itemize}
\item There is a single minimal operation on level 2, which is
  the partial Maltsev, or, equivalently, the partial 2-edge operation. 
  This is also equivalent to the 2-universal operation defined below.
\item For every other minimal pSDI-operation $p$,
  there is a unique largest constant $k$ such that
  $p$ is implied by the partial $k$-NU
  operation $\near_k$. We refer to this as the \emph{level} of $k$.
  Thus, the partial $k$-NU operation is the strongest operation on level
  $k \geq 3$.
\item For every level $k \geq 2$, there is also
  a unique weakest pSDI-operation $\universal_k$ which we refer to as the \emph{$k$-universal} operation,
  such that $\universal_k$ is implied by every operation on level $k$. 
\item The language $\sat{k}$ corresponding to $k$-SAT
  is preserved by the partial $(k+1)$-NU operation,
  but not by any operation on a previous level; 
  and every sign-symmetric language $\Gamma$ that is not preserved
  by the $k$-universal operation can qfpp-define $\sat{k}$.
\item Finally, as an interesting case, roots of polynomials 
  of degree at most $d$ are preserved by the $(d+1)$-universal
  operation, but not by any other operation on a level up to $d+1$.
\end{itemize}
Thus, the levels of minimal pSDI-operations correspond to a natural
notion of difficulty.  It also follows that if a sign-symmetric
language $\Gamma$ is not preserved by the $k$-universal operation for
any constant $k$, then $\SAT(\Gamma)$ is trivially SETH-hard, whereas
every other language $\Gamma$ has some kind of restriction on its
expressive power.  We also note that there is no known case of a
problem known to be SETH-hard, which fits into a framework of
searching through the set $\{0,1\}^n$ for a solution, and which is
$k$-universal for any $k$.  Hence, it is consistent with our present
knowledge that every $k$-universal problem $\SAT(\Gamma)$ admits an
improved algorithm.

Last, we remark that although we in this paper are mainly interested
in the time complexity of SAT, the classification of minimal
pSDI-operations in this section may be of independant interest for any
Boolean problem compatible with qfpp-definitions. 
In this vein, we also give a ``vertical'' result in the above
hiearchy, and show that every sign-symmetric constraint language
$\Gamma$ not preserved by the partial $k$-NU operation for any $k$ 
can qfpp-define either 1-in-$k$-clauses of all arities, 
or counting constraints modulo $p$ of all arities for some fixed
prime $p$.  This result is the main technical challenge in this
section, and relies on an application of Szemer\'edi's
theorem~\cite{SzemerediThm} to analyse the structure of
symmetric relations $R \notin \inv(\near_k)$. 

\paragraph{Upper and Lower Bounds on the SAT problem.}
Second, in Section~\ref{section:upper} and Section~\ref{section:lower}, we consider the strength of the problem $\inv(p)$-SAT
for various pSDI-operations $p$, with an interest in bounding 
the value $c(\Gamma)$ for $\Gamma=\inv(p)$ from above and below.
The first question here is the matter of constraint representation. As
mentioned previously, the language $\inv(p)$ contains a double-exponential
number of relations of arity $r$ as a function of $r$; hence any fixed
representation would in the worst case use $2^{O(r)}$ bits just to encode
the relations. This becomes an issue when we allow constraints of
unbounded arity. Recall that we consider three alternatives for
representation: explicit representation, extension oracles, and
non-uniform algorithms where the particular choice of representation
does not matter. We then obtain the following results.





\begin{itemize}
\item When $p$ is the partial 2-edge operation, we refer to
  $\inv(p)$-$\SAT$ as \textsc{2-edge-SAT}.
  We show that \textsc{2-edge-SAT} can be solved in $O^*(2^{\frac{n}{2}})$ time
  in the oracle setting using a meet-in-the-middle strategy
  combined with the computation of a kind of canonical labels
  for partial assignments, similarly to the $O^*(2^{\frac{n}{2}})$-time
  algorithm for \textsc{Subset Sum} with $n$ integers~\cite{horowitz1974}.
  A similar improved algorithm is possible for 
  the generalisation to \textsc{2-edge-CSP}, i.e., for fixed
  non-Boolean domains. 
  Furthermore, if $c(\inv(p)) < 2^{1/2}$ in the extension oracle
  setting, then \textsc{Subset Sum} 
  can be solved in $O^*(2^{(\frac{1}{2}-\varepsilon)n})$ for some $\varepsilon > 0$,
  which is a long-standing open problem. 
\item When $p$ is the partial $k$-NU operation, we refer to $\inv(p)$-SAT
  as \textsc{$k$-NU-SAT}. For $k=3$, this problem is equivalent
  to 2-SAT, and hence in P, but the generalisation \textsc{3-NU-CSP}
  to larger fixed domains is NP-hard and
  admits an improved algorithm 
  using fast matrix multiplication,
  similarly to the well-known algorithm for the CSP problem over
  binary constraints.
\item For $k>3$, we show two conditional connections. First, if 
  the \textsc{$(k, k-1)$-hyperclique} problem for hypergraphs
  with ground set of size $n$ can be solved in time $O(n^{k-\varepsilon})$
  for any $\varepsilon > 0$, then both \textsc{$k$-NU-SAT} and 
  \textsc{$k$-NU-CSP} admit improved algorithms in the oracle
  setting. 
  Second, if the Erd\H{o}s-Rado \emph{sunflower conjecture}~\cite{JLMS:JLMS0085} holds for 
  sunflowers with $k$ sets, then $k$-NU-SAT admits an improved
  algorithm via a local search strategy in the explicit representation,
  similar to Sch\"oning's algorithm for $k$-SAT~\cite{schoning1999}. 
\item We also investigate the case that $p$ is the partial 3-edge
  operation $\edge_3$, and give a partial result. Assume that 
  every relation $R$  in the input is either preserved by the partial
  2-edge relation, or by $\near_3$, or $R$ is \emph{symmetric}
  and preserved by $\edge_3$ -- i.e., whether $t \in R$ depends only
  on the Hamming weight of $t$. 
  Then the SAT problem has an improved
  algorithm via a reduction to the problem of 
  finding monochromatic triangles in an edge-coloured graph, 
  which in turn can be solved using fast algorithms for 
  triangle finding in sparse graphs. 
  We do not know whether this strategy generalises to non-symmetric relations.
\end{itemize}
For further classes, we note that $\SAT(\Gamma)$ contains some highly
challenging special cases. In particular an algorithm for the 
$k$-universal languages for $k>2$ would need to generalise the
algorithm of Lokshtanov et al.\ for bounded-degree
polynomials~\cite{LokshtanovPTWY17SODA}, while only
using the abstract properties guaranteed by $\universal_k$.

Finally, we show lower bounds in the oracle extension model: for every
minimal pSDI-operation $p$, we get a concrete lower 
bound $c(\inv(p)) \geq c_k > 1$ assuming the randomized SETH,  
where $k$ is the level of $p$. That is, unless SETH is false, no
algorithm can solve $\inv(p)$-$\SAT$ in time
$O^*(c_k^{(1-\varepsilon)n})$ for any $\varepsilon > 0$ 
and any $p$ at level $k$. The bound $c_k$ converges to 2 at a rate of
$2-c_k = \Theta(\frac{\log k}{k})$. 

\paragraph{A connection to polynomial-time problems.}
Finally, we make some connections between the
$\inv(p)$-SAT and $\inv(p)$-CSP problems and some problems in
polynomial-time algorithms. We show that the minimal pSDI-operations
generalise not only to CSP problems on fixed domains, 
but to abstract conditions on ``CSP-like'' problems on a domain of 
size $n$ and with $d=\Theta(1)$ variables. We refer to this as
the \emph{abstract $\inv(p)$-problem}. Any solution to such a problem
that runs in time $O(n^{d-\varepsilon})$ for any $\varepsilon > 0$
implies an improved algorithm for the corresponding $\inv(p)$-SAT 
and $\inv(p)$-CSP problems in the oracle setting for every fixed domain.
This lies behind the improved algorithms for \textsc{2-edge-CSP}
and \textsc{3-NU-CSP}. 

However, there is some indication that these problems may be tougher
than the original problems, 
since the reduction loses a significant amount of instance structure
(e.g., the local search strategy for $k$-NU-$\SAT$ cannot be lifted to
the abstract problem). 
In fact, there are conjectures that would prevent improved algorithms
for most cases of the abstract problem considered in this article:
\begin{itemize}
\item The abstract $k$-NU problem is equivalent to
  \textsc{$(k,k-1)$-hyperclique}, i.e., the problem of finding a
  $k$-hyperclique in a $(k-1)$-regular hypergraph.
  Thus, it has an improved algorithm for $k=3$ but the status is unknown 
  for $k>3$. Moreover, the general $(l,k)$-hyperclique problem for $l
  > k$ has been conjectured to require $n^{l - o(1)}$ time~\cite{williams2018}.
\item The abstract 3-universal problem contains the problem of finding
  a zero-weight triangle in an edge-weighted graph with arbitrary edge weights.
  This does not admit an improved algorithm unless the 3-SUM conjecture fails
  (but SETH-hardness is not known)~\cite{VWilliamsW13SICOMP}.
\end{itemize}
Considering the connections, we still consider it useful to ask
which minimal pSDI-operations $p$ suffice to guarantee an improved
algorithm for the abstract $\inv(p)$-problem. We leave this question
for future work. 

\subsection{Technical notes and proof methods}

Let us now give a few more details about the proofs of the above
results.

The structural characterisation builds on a description of minimal
non-trivial pSDI-operations (Lemma~\ref{lemma:whichminimal}) --- 
they are precisely the operations produced by padding the partial
$k$-NU operation by additional arguments. The weakest and strongest
operations on each level follow from this almost by definition. It
also follows that the operations on each level $k$ are characterized 
by the presence or absence of each of roughly $2^k$ possible types of
padding argument. 
Note that such a padding makes an operation weaker; e.g., in order to
apply the partial majority operation to a sequence of tuples 
$t_1, \ldots, t_k \in R$ for some relation $R$, in a padded version of
arity $r$ we require that $R$ further contains a sequence of tuples
$t_{k+1}$, \ldots, $t_r$ determined by the padding arguments from
the tuples $t_1, \ldots, t_k$. 

This also provides a way to think about
the consequences of not being preserved by such an operation. 
Assume e.g.\ that a relation $R$ is not preserved by $\near_k$.
Then by definition there are $t_1, \ldots, t_k \in R$
such that $\near_k(t_1,\ldots,t_k) = t$ is defined, and by
sign-symmetry we may assume that $t$ is the constant 0-tuple $0^{\ar(R)}$. 
Then the witness produces a partition of the arguments of $R$,
in a way which can be used to implement a relation $R'$ 
of arity $k$ which accepts every tuple of weight 1 but none of weight
0. However, we have no information at this point about the remaining
tuples in $R'$. 
Continuing this line of reasoning to derive a consequence for an
infinite sign-symmetric language $\Gamma$ with $\near_k \notin
\ppol(\Gamma)$ for every $k$, we first observe that 
we can define a symmetric relation $R'' \notin \inv(\near_k)$
as a conjunction of $k!$ applications of $R'$ under argument
permutation, then (as announced) analyse the possibilities for
families of such relations using Szemer\'edi's theorem.
In particular, a broken arithmetic progression of $i$ accepted weights 
in such a relation implies that we can qfpp-define an
$i+1$-clause using $R$. 

By contrast, if $\universal_k \notin \ppol(R)$, then the tuples 
$t_1, \ldots, t_{2^k-1} \in R$ required by the arguments of
$\universal_k$ imply that such a relation $R'$ must 
have $|R'|=2^k-1$, i.e., it must be the relation corresponding to a
$k$-clause. 

Moving on to the algorithmic applications, most of the positive
results are relatively straight-forward applications of known ideas; 
the interesting aspect is that the applicability of these ideas
follows from such simple conditions as the minimal pSDI-operations.
Here, we particularly wish to highlight the conjectural connection to
local search.  Recall that Sch\"oning's algorithm~\cite{schoning1999}
reduces $k$-SAT to several applications of local search, i.e.,
given a starting point $x \in \{0,1\}^n$ and a parameter $t$, 
find a satisfying assignment within Hamming distance $t$ of $x$. 
By sign-symmetry, for our problem this reduces to the case $x=0^n$ 
(alternatively, one could use monotone local search; cf. Fomin et
al.~\cite{FominGLS16localsearch}). 
Now, consider the set of all minimal tuples in any relation
$R \in \inv(\near_k)$ with $0^{\ar(R)} \notin R$. 
It is easy to see that by the $\near_k$-condition,  this set
does not contain a sunflower of $k$ sets, and by the sunflower
conjecture, this implies that for every $i$ there are at most $C^i$
such minimal tuples in $R$ of weight $i$ for some $C$. A simple computation shows
that a recursive algorithm that finds an unsatisfied relation $R$, 
enumerates minimal tuples in it, and recursively proceeds 
from every such tuple yields a total searching time of $2^{O(t)}$, 
which would be precisely sufficient to yield an improved algorithm for
\textsc{$k$-NU-SAT}. This algorithm uses the explicit representation
in order to be able to enumerate such minimal tuples. It is
an interesting open question whether this can be achieved efficiently
in the oracle setting.


Finally, we move on to our lower bounds. These are of two kinds, a
reduction from \textsc{Subset Sum} to \textsc{2-edge-SAT}, 
and the generic lower bound under SETH against any problem
\textsc{$\inv(p)$-SAT}. 
For the former, recall that the partial 2-edge operation is equivalent
to $\universal_2$, and thus contains all constraints which can be
phrased as linear equations, e.g., \textsc{Subset Sum} instances.
But we are also required to provide an extension oracle for each
constraint, which is clearly infeasible if we plug in the
\textsc{Subset Sum} equation as-is. 
However, this is easily solved by splitting the binary expansion of
the target number into $O(\sqrt{n})$ blocks of $O(\sqrt{n})$ bits each.
With some moderate guessing, each block reduces to one linear
equation, and via the tabulation algorithm for \textsc{Subset Sum} 
an extension oracle each such block can be produced with
a query time of $2^{O(\sqrt{n})}$. 

The generic bounds, in turn, work via a generic padding argument: we
show that for every level $k$, and any set $X$ of $n$ variables, there
is a universal padding formula $R(X,Y)$ on $|Y|=\Theta(n)$ additional
variables such that $R'(X,Y) \equiv R'(X) \land R(X,Y)$ is $k$-NU for
any relation $R'(X)$. Furthermore, random parity-check variables
suffice to produce this padding formula, allowing for an efficient
extension oracle for the relation $R'(X,Y)$. Finally, by the
regularity of the padding formula, we can reuse the same variables $Y$
for all constraints in an input instance of $q$-SAT, for any $q$,
and only pay with $|Y|=\Theta(n)$ extra variables in total. 

The fact that some such padding exists was previously
known~\cite{lagerkvist2017c}. Recall that every operation $p$
considered has at least one tuple of values for which it is
undefined. Then, if we add enough random variables, for every
attempt  $p(t_1, \ldots, t_r)$ of finding a valid application of $p$
on a relation $R$ there will be a padding variable $j$ such
that $(t_1[j], \ldots, t_r[j])$ takes the values of such a tuple, 
and $p$ is undefined. The fact that parity-check variables suffice
in our case follows from the fact that $p$ contains $k$ arguments that
form a partial $k$-NU operation. It is easy to check that almost all
parity-check variables form an undefined tuple of values already over
these arguments. 
This construction could be derandomized using a universal hash family,
possibly at the cost of a larger constant $|Y|/|X|$, but we do not
pursue this.

\subsection{Related work}
Our work can be seen as an amalgamation of the following areas:
fine-grained time complexity and lower bounds under the SETH, and
the algebraic approach for studying classical complexity of CSP. 

Concerning the former, SETH has turned out to be a highly useful
conjecture for exact algorithms since a relative lower bound from SETH
shows that any further improvements also implies a breakthrough
speed-up for SAT. Many different problems have been shown
to admit lower bounds via the SETH, but in the current context of SAT,
in addition to the foundational works of Impagliazzo et al.~\cite{impagliazzo2001,impagliazzo98}
it is worth mentioning the lower bound for \textsc{Not-all-equal SAT}
(NAE-SAT) by Cygan et al.~\cite{CyganDLMNOPSW16TALG} and the lower bound for $\Pi_23$-SAT
by Calabro et al.~\cite{Calabro2013}. However, to the best of our knowledge,
all concrete lower bounds using SETH for exponential-time algorithms
falls into one of the following cases: 
either
  the lower bound matches the running time of a trivial algorithm, as in
  the case of \textsc{Hitting Set}, \textsc{NAE-SAT}, and
  $\Pi_2$-3-SAT, showing that no improvement is possible;
  or the lower bounds are with respect to a much more permissive
  complexity parameter than $n$, such as
  treewidth~\cite{Lokshtanov:2011:KAG:2133036.2133097}. 
The one other example we are aware of is from the study of 
infinite-domain CSPs by Jonsson and
Lagerkvist~\cite{Jonsson:Lagerkvist:ai2017},
who obtained upper bounds of the form $O^{*}(2^{f(n)})$ for non-linear
functions $f$  and a lower bound stating that the CSPs are not
solvable in $O(c^n)$ time for any constant $c$. These bounds are therefore in a sense closer to non-subexponentiality results usually obtained from the ETH.
SETH and other conjectures have also seen significant applications
over recent years in producing conditional lower bounds for
polynomial-time solvable problems, but these are only tangentially
relevant here.



With regards to the algebraic approach we wish to highlight a few
related but different results. Partial polymorphisms and the link to
qfpp-definitions were first introduced to the CSP community by Schnoor
\& Schnoor~\cite{schnoor2008a} even though these notions were
well-known in the algebraic community much
longer~\cite{Gei68,romov1981}. However, the principal motivation by
Schnoor \& Schnoor was to obtain dichotomy theorems for CSP-like
problems incompatible with existential quantification, and the
explicit connection to fine-grained time complexity of CSP was not
realized until later by Jonsson et al.~\cite{jonsson2017}. This work
utilized a {\em lattice-informed} approach which exploited the
structure of the inclusion structure of closed sets of partial
polymorphisms, in order to identify an NP-complete $\SAT(\Gamma)$
problem such that $c(\Gamma) \leq c(\Delta)$ for every other
NP-complete $\SAT(\Delta)$. This problem was referred to as the {\em
easiest NP-complete SAT problem} and was later generalized to a broad
class of finite-domain CSPs~\cite{lagerkvist2017e}. However,
continued advancements in understanding this inclusion structure
revealed that even severely restricted classes of constraint languages
had a very complicated
structure~\cite{lagerkvist2017,Lagerkvist2016c}. In a similar vein of
negative results it was also proven that (1) $\ppol(\Gamma)$ cannot be
generated by any finite set of partial operations whenever $\Gamma$ is
finite and SAT$(\Gamma)$ is NP-hard, and (2) if $P$ is a finite set of
partial operations such that $\inv(P)$-SAT is NP-hard, then any
pp-definable relation over $\inv(P)$ can be transformed into a
pp-definition using only a linear number of existentially quantified
variables~\cite{lagerkvist2017c}. In plain language, these results show
that finite constraint languages result in complex partial
polymorphisms, and that simple partial polymorphisms result in complex
constraint languages.
A previous attempt at grappling with this difficulty provided closure
operators that generate $\ppol(\Gamma)$ for a finite $\Gamma$ from a
finite basis~\cite{lagerkvist2016}, but this intrinsically uses that
$\Gamma$ is finite, and is not applicable in the current paper. 


Our approach in this paper avoids the pitfalls of the lattice-informed
approach since it is sufficient to understand the behaviour of
individual pSDI-operations.  This is in line with how the research
programme of classifying the complexity of finite-domain CSPs evolved
into a project of describing properties of operations defined by
system of identities (see the survey by Barto et al.\ for more
details~\cite{barto2017}).

Another related paper by the present authors investigates the
existence of \emph{polynomial} (or linear) \emph{kernels} for problems
$\SAT(\Gamma)$, using ideas of extending the language $\Gamma$ into a
tractable CSP on a larger domain~\cite{LagerkvistW17CP},
including extensions into 2-edge (i.e., Maltsev) and $k$-edge languages. 
However, there is no concrete technical connection between that paper
and this one, as having polynomial kernels turns out to be a much more
restricted property than admitting improved algorithms.

\subsection{Concluding remarks and open questions}
\label{sec:conclusions}  
Our principal motivation in this paper is to study the SETH-hardness
of the parameterized SAT$(\Gamma)$ problem. To simplify our study we
restricted our focus to sign-symmetric constraint languages, which 
is a common assumption for SAT problems studied in practice. Moreover,
due to the connection between sign-symmetric constraint languages and
pSDI-operations, understanding the inclusion structure between
sign-symmetric constraint languages is tantamount to describing the
expressive power of pSDI-operations. Even better, pSDI-operations can
in many cases be understood as the partial analouges of well-studied
operations such as Maltsev operations, NU-operations and
edge-operations, making them easier to reason with.

The main open question is whether our results can be strengthened into
a dichotomy for sign-symmetric SAT problems. One direction is already clear: if
$\Gamma$ is not preserved by any $k$-universal operation then
SAT$(\Gamma)$ is SETH-hard and does not admit an improved algorithm
without breaking the SETH. The other direction is harder and
requires a substantially better understanding of languages invariant
under a given $k$-universal operation; such languages include, but
are not limited to, relations expressible as roots of polynomial equations
of degree at most $k+1$, where an improved algorithm is
known~\cite{LokshtanovPTWY17SODA}. It is not clear at this point
how much richer the set $\inv(\universal_k)$ is, compared to this
class of problems. 
Existing (conjectured) lower bounds against polynomial-time problems
captured by abstract $\inv(p)$-problems also indicate that 
the problem might be more difficult for remaining cases.
We also proved that the SAT problems under consideration admit lower
bounds under the SETH. To the best of our knowledge, this is the first
result showcasing both a non-trivial upper bound and a concrete lower
bound under the SETH in terms of a natural parameter $n$. 
These bounds were obtained in the extension
oracle setting and it is currently unclear if matching bounds can also
be obtained if constraints are represented explicitly. The padding
construction is still valid in this setting, but it is a challenge 
to apply it without creating constraints with exponentially many tuples.

Last, our approach easily extends to finite-domain CSPs, as evidenced
by the improved algorithms for 2-edge-CSP and 3-NU-CSP. The
notion of a pSDI-operation is only relevant in the Boolean domain, but
a similar notion can likely be defined for arbitrary finite
domains. For example, instead of self-duality, essentially meaning
that the partial operation is closed under negation, we would require
that the operation is closed under every unary operation over the
domain. However, it is not clear if the inclusion structure of such
generalized pSDI-operations can be characterized in a similar
hierarchy as the Boolean pSDI-operations.




\section{Preliminaries}
\label{sec:prel}

A $k$-ary {\em relation} over a domain $D$ is a subset of $D^k$. If
$t = (x_1, \ldots, x_n)$ is a $k$-ary tuple we for every $1 \leq i
\leq k$ let $t[i] = x_i$, and if $i_1, \ldots, i_{k'} \in [k] = \{1,
\ldots, k\}$ we write $\pro_{i_1, \ldots, i_{k'}}(t) = (t[i_1],
\ldots, t[i_{k'}])$ for the {\em projection} of $t$ on the coordinates
$i_1, \ldots, i_{k'}$. This notation easily extends to relations and
we write $\pro_{i_1, \ldots, i_{k'}}(R)$ for the relation
$\{\pro_{i_1, \ldots, i_{k'}}(t) \mid t \in R\}$.

A set
of relations is called a {\em constraint language}, or simply a {\em
language}, and will usually be denoted by $\Gamma$ and $\Delta$.  We
will typically define relations either by their defining logical
formulas or by their defining equations. For example, the relation
$R_{1/3} = \{(0,0,1), (0,1,0), (1,0,0)\}$ may be defined by the
expression $R_{1/3} \equiv x_1 + x_2 + x_3 = 1 $. However, we will not
always make a sharp distinction between relations and their defining
logical formulas and will sometimes treat e.g.\ a $k$-clause as a relation. We write $\ar(R)$
for the arity of a relation $R$, and use the notation $\eq_D$ to
denote the equality relation $\{(x,x) \mid x \in D\}$ over $D$.

A $k$-ary relation $R$ is said to be {\em totally symmetric}, or just
{\em symmetric}, if there exists a set $S \subseteq [k] = \{1, \ldots,
k\}$ such that $(x_1, \ldots, x_k) \in R$ if and only if $x_1 + \ldots
+ x_k \in S$. For example, $R_{1/3}$ is totally symmetric as witnessed
by the set $S = \{1\}$. Symmetric relations will prove to be useful
since it is sometimes considerably simpler to describe the symmetric
relations invariant under a partial operation. 

\subsection{The parameterized SAT and CSP Problems}

Let $\Gamma$ be a Boolean constraint language. The {\em parameterized
  satisfiability problem} over $\Gamma$ ($\SAT(\Gamma)$) is the
computational decision problem defined as follows.

\smallskip

\noindent
{\sc Instance:} A set $V$ of variables and a set $C$ of constraint
applications $R(v_1,\ldots,v_{k})$ where $R \in \Gamma$, $\ar(R) = k$,
and $v_1,\ldots,v_{k} \in V$.

\noindent
{\sc Question:} Is there a function $f : V \rightarrow \{0,1\}$ such
that $(f(v_1),\ldots,f(v_{k})) \in R$ for each 
$R(v_1,\ldots,v_{k})$~in~$C$?

\smallskip

The {\em constraint satisfaction problem} over a constraint language
$\Gamma$ ($\CSP(\Gamma)$) is defined analogously with the only distinction that
$\Gamma$ is not necessarily Boolean. We write $(d,k)$-CSP for the CSP
problem over a domain with $d$ elements where each constraint has
arity at most $k$. 

\subsection{The extension oracle model}

Recall from Section~\ref{sec:results} that we consider two distinct
representations of $\SAT$ and $\CSP$ instances. We now define these
in more detail. In the first representation
each relation $R$ occurring in a constraint $R(x_1,
\ldots, x_k)$ is represented as a list of tuples. We call this
representation the {\em explicit representation}. This is one of the
most frequently occurring representation methods in the algebraic
approach to CSP, but it is fair to say that it is not convenient in
any practical application since a relation may contain exponentially many
tuples with respect to the number of arguments. We therefore
consider a more implicit representation where each constraint is
represented by a procedure which can verify whether
a partial assignment of its variables is consistent with the
constraint.

\begin{definition}
  Let $R$ be an $n$-ary relation over a set $D$. A computable function which given indices $i_1, \ldots, i_{n'} \in
  [n]$ and $t \in D^{n'}$ answers yes if and only if $t \in \pro_{i_1, \ldots, i_{n'}}(R)$
  is called an {\em extension oracle representation} of $R$.
\end{definition}

Hence, given a constraint $R(x_1, \ldots, x_n)$ and a partial truth
assignment $f : X \rightarrow D$, $X \subseteq \{x_1, \ldots, x_n\}$,
the extension oracle representation can be used to 
decide whether $f$ can be completed into a satisfying assignment
of $R(x_1, \ldots, x_n)$.

\begin{example}
  CNF-SAT can be succinctly represented in the extension oracle
  model. Consider e.g.\ a positive clause $(x_1 \vee \ldots \vee x_n)$ and a
  partial truth assignment $f$ on $\{x_1, \ldots, x_n\}$. We can then
  answer yes if and only if not every variable $x_i$ occurring in the
  clause is assigned the value 0.
\end{example}

\subsection{Sign-symmetric constraint languages}
\label{section:sign}
An {\em $n$-ary sign pattern} is an tuple $s$ where $s[i] \in
\{+,-\}$ for each $1 \leq i \leq n$. If $t$ is an $n$-ary Boolean
tuple and $s$ an $n$-ary sign pattern then we let $t^{s}$ be the tuple
where $t^s[i] = t[i]$ if $s[i] = +$ and $t^s[i] = 1 - t[i]$ if $s[i] =
-$. Similarly, if if $R$ is a Boolean relation
and $s$ an $n$-ary sign pattern we by $R^{s}$ denote the relation
$\{t^s \mid t \in R\}$. Last, for $1 \leq i \leq n$ and $c \in
\{0,1\}$ we let $R_{i=c} = \{t \mid t \in R, t[i] = c\}$ be the
relation resulting from freezing the $i$th argument of $R$ to $c$.

\begin{definition}
  A Boolean constraint language $\Gamma$ is said to be {\em
    sign-symmetric} if (1) $R^{s} \in \Gamma$ for every
  $n$-ary $R \in \Gamma$ and every $n$-ary sign pattern $s$ and
  (2) $R_{i = c} \in \Gamma$ for every $c \in \{0,1\}$ and every $1
  \leq i \leq n$.
\end{definition}


\subsection{Partial polymorphisms and quantifier-free primitive
  positive definitions}
Let $D$ be a finite set of values. A $k$-ary {\em partial operation},
or a {\em partial function}, $f$
over $D$ is a mapping $X \rightarrow D$ where $X \subseteq D^k$. The
set $X$ is said to be the {\em domain} of $f$ and we let $\domain(f) =
X$ denote this set and $\ar(f) = k$ denote the arity of $f$. If $f$
and $g$ are two $n$-ary partial operations over $D$ such that
$\domain(g) \subseteq \domain(f)$ and $g(x_1, \ldots, x_n) = f(x_1,
\ldots, x_n)$ for every $(x_1, \ldots, x_n) \in \domain(g)$ then $g$
is said to be a {\em subfunction} of $g$. For $n \geq 1$ the $i$-ary
{\em projection}, $1 \leq i \leq n$,
is the operation $\pi^n_{i}(x_1, \ldots, x_i, \ldots, x_n) = x_i$ and
a {\em partial projection} is any subfunction of a total projection.

If $R$ is an $n$-ary relation over $D$ and $f$ a $k$-ary partial
operation over $D$ we say that $f$ is a {\em partial polymorphism} of
$R$, that $R$ is {\em invariant} under $f$, or that $f$ {\em
  preserves} $R$, if $f(t_1, \ldots, t_k)
\in t$ or $f(t_1, \ldots, t_k)$ is undefined, for each sequence of
tuples $t_1, \ldots, t_k$. We let $\ppol(R)$ be the set of all partial
polymorphisms of the relation $R$, and if $\Gamma$ is a constraint
language we let $\ppol(\Gamma)$ denote the set of partial operations
preserving each relation in $\Gamma$. The notion of a total
polymorphism can be defined simply by requiring that $f$ is total,
i.e., $\domain(f) = D^k$, and we let $\pol(\Gamma)$ be the set of all
total polymorphsims of the constraint language $\Gamma$.  Similarly,
if $P$ is a set of partial operations we let $\inv(P)$ be the set of
all relations invariant under $P$. Each set of partial operations $P$
naturally induces a SAT problem $\SAT(\inv(P))$ where each relation
involved in a constraint is preserved by every partial operation in
$P$. Recall from Section~\ref{sec:results} that we as a shorthand denote this
problem by \textsc{$\inv(P)$-SAT}.
The two operators $\inv(\cdot)$ and
$\ppol(\cdot)$ are related by the following {\em Galois connection}.

\begin{theorem}[\cite{Gei68,romov1981}]
  \label{theorem:pgalois}
  Let $\Gamma$ and $\Delta$ be two constraint languages. Then 
  $\Gamma \subseteq \inv(\ppol(\Delta))$ if and only if  $\ppol(\Delta) \subseteq
    \ppol(\Gamma)$.
\end{theorem}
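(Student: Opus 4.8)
The plan is to prove this directly from the definitions; it is the ``easy half'' of the relational Galois connection, and no machinery beyond the monotonicity of $\ppol(\cdot)$ and $\inv(\cdot)$ is needed. First I would record three purely definitional facts, which together amount to the statement that $\ppol$ and $\inv$ are the two polarities of the Galois connection induced by the binary relation ``$f$ preserves $R$''. (i) Both operators are \emph{antitone}: $\Gamma_1 \subseteq \Gamma_2$ implies $\ppol(\Gamma_2) \subseteq \ppol(\Gamma_1)$, and $P_1 \subseteq P_2$ implies $\inv(P_2) \subseteq \inv(P_1)$ --- a partial operation preserving every relation in a set preserves every relation in any subset, and dually for relations. (ii) $\Gamma \subseteq \inv(\ppol(\Gamma))$ for every language $\Gamma$, since by definition every $R \in \Gamma$ is preserved by every partial polymorphism of $\Gamma$. (iii) Dually, $P \subseteq \ppol(\inv(P))$ for every set of partial operations $P$.

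Given these, the forward direction runs as follows. Assuming $\Gamma \subseteq \inv(\ppol(\Delta))$, apply $\ppol$ and use antitonicity to get $\ppol(\inv(\ppol(\Delta))) \subseteq \ppol(\Gamma)$; then chain with fact (iii) applied to $P = \ppol(\Delta)$, that is $\ppol(\Delta) \subseteq \ppol(\inv(\ppol(\Delta)))$, to conclude $\ppol(\Delta) \subseteq \ppol(\Gamma)$. For the backward direction, assuming $\ppol(\Delta) \subseteq \ppol(\Gamma)$, apply $\inv$ and use antitonicity to get $\inv(\ppol(\Gamma)) \subseteq \inv(\ppol(\Delta))$, then chain with fact (ii) to obtain $\Gamma \subseteq \inv(\ppol(\Gamma)) \subseteq \inv(\ppol(\Delta))$. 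Alternatively one can argue ``pointwise'': unwinding $R \in \inv(\ppol(\Delta))$ as ``every $f \in \ppol(\Delta)$ preserves $R$'', and $f \in \ppol(\Gamma)$ as ``$f$ preserves every $R \in \Gamma$'', both implications become immediate substitutions of the hypothesis.

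I do not expect a genuine obstacle here; the only care needed is to track the directions of the inclusions and not to transpose the roles of $\ppol$ and $\inv$. The genuinely substantial content traditionally associated with this Galois connection --- the identification of the Galois-closed set $\inv(\ppol(\Gamma))$ with the qfpp-closure $\pcclone{\Gamma}$ (and of $\ppol(\inv(P))$ with the corresponding closure on the operator side), due to Geiger and Romov --- is \emph{not} required for the biconditional as stated. It becomes relevant only when one wishes to replace the abstract inclusion ``$\Gamma \subseteq \inv(\ppol(\Delta))$'' by the concrete statement ``every relation in $\Gamma$ is qfpp-definable over $\Delta$'', which is the form in which Theorem~\ref{theorem:pgalois} is invoked later via Theorem~\ref{theorem:pgalois_intro}.
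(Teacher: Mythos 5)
Your argument is correct, and it is worth noting that the paper offers no proof of its own here: Theorem~\ref{theorem:pgalois} is attributed to Geiger and Romov and cited without proof. As the statement is literally worded, with $\inv(\ppol(\Delta))$ rather than the qfpp-closure $\pcclone{\Delta}$ on the relational side, it is indeed exactly the formal biconditional of the Galois connection induced by the preservation relation, and your purely definitional derivation --- antitonicity of $\ppol(\cdot)$ and $\inv(\cdot)$ together with the two closure inequalities $\Gamma \subseteq \inv(\ppol(\Gamma))$ and $P \subseteq \ppol(\inv(P))$ --- establishes it completely; the pointwise unwinding you sketch is an equally valid shortcut. You are also right about where the mathematical substance of the cited works actually lies: the nontrivial theorem of Geiger and Romov is the identification of the Galois-closed sets, i.e.\ $\inv(\ppol(\Delta)) = \pcclone{\Delta}$, which is what licenses the formulation in Theorem~\ref{theorem:pgalois_intro} and what the paper relies on whenever it passes from ``not preserved by $f$'' to concrete qfpp-definability consequences (e.g.\ in Section~\ref{section;comp-cons} and in the reduction machinery of Jonsson et al.). So your proof is a correct proof of the stated biconditional, but it should not be mistaken for a proof of the cited closure theorem; the citation in the paper is carrying that heavier load.
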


The applicability of partial polymorphism in the context of fine-grained
time complexity might not be evident from these definitions. However,
sets of the form $\inv(P)$, called {\em weak systems}
or {\em weak co-clones}, are closed under certain restricted
first-order formulas which are highly useful in this context. Say that
a $k$-ary relation $R$ has a {\em quantifier-free definition}
(qfpp-definition) over a constraint language $\Gamma$ over a domain $D$ if $R(x_1,
\ldots, x_k) \equiv R_1(\mathbf{x}_1) \land \ldots \land
R_m(\mathbf{x}_m)$ where each $R_i \in \Gamma \cup \{\eq_D\}$ and each
$\mathbf{x}_i$ is a tuple of variables of length $\ar(R_i)$. It is
then known that $\inv(P)$ for any set of partial operations $P$ is
closed under taking qfpp-definitions. With this property the following
theorem is then a straightforward consequence.

\begin{theorem}\cite{jonsson2017}
Let $\Gamma$ and $\Delta$ be two finite constraint languages. If
$\ppol(\Gamma) \subseteq \ppol(\Delta)$ then there exists a
polynomial-time many-one reduction from $\SAT(\Delta)$ to
$\SAT(\Gamma)$ which maps an instance $(V,C)$ of $\SAT(\Delta)$ to an
instance 
$(V',C')$ of $\SAT(\Gamma)$ where $|V'| \leq |V|$ and
$|C'| \leq c|C|$, 
where $c$ depends only on $\Gamma$ and $\Delta$.
\end{theorem}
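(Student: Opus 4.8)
The plan is to convert the algebraic hypothesis $\ppol(\Gamma)\subseteq\ppol(\Delta)$ into a statement about qfpp-definability via the Galois connection, and then realise it as a gadget substitution on instances. First I would apply Theorem~\ref{theorem:pgalois} with the roles of $\Gamma$ and $\Delta$ interchanged: from $\ppol(\Gamma)\subseteq\ppol(\Delta)$ we get $\Delta\subseteq\inv(\ppol(\Gamma))$. Since weak systems $\inv(P)$ are exactly the sets of relations closed under qfpp-definitions, and $\inv(\ppol(\Gamma))$ is the smallest such set containing $\Gamma$ (equivalently $\inv(\ppol(\Gamma))=\pcclone{\Gamma}$, combining Theorem~\ref{theorem:pgalois} with Theorem~\ref{theorem:pgalois_intro}), every $R\in\Delta$ admits a qfpp-definition $R(x_1,\ldots,x_k)\equiv R_1(\mathbf{x}_1)\wedge\cdots\wedge R_{m_R}(\mathbf{x}_{m_R})$ with each $R_i\in\Gamma\cup\{\eq_{\{0,1\}}\}$ and all variables of each $\mathbf{x}_i$ drawn from $\{x_1,\ldots,x_k\}$ (no existential quantifiers). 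Because $\Delta$ is finite, fixing one such definition per relation yields a constant $c=\max_{R\in\Delta}m_R$ depending only on $\Gamma$ and $\Delta$.

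Next I would describe the reduction itself. Given an instance $(V,C)$ of $\SAT(\Delta)$, process each constraint $R(v_1,\ldots,v_k)\in C$ by substituting $v_j$ for $x_j$ throughout the chosen qfpp-definition of $R$, producing constraints $R_1(\mathbf{y}_1),\ldots,R_{m_R}(\mathbf{y}_{m_R})$ over $\Gamma\cup\{\eq_{\{0,1\}}\}$; since the definition introduces no fresh variables, every variable occurring lies in $V$. The auxiliary equality constraints $\eq_{\{0,1\}}(y,y')$ are then eliminated by identifying the variables $y$ and $y'$ (a single union-find pass over all such constraints arising in the whole instance). This step only ever merges variables, hence can only decrease the variable count, and repetitions of a variable within a constraint are harmless since $\SAT$ allows them. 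The result is an instance $(V',C')$ of $\SAT(\Gamma)$ with $|V'|\leq|V|$ and $|C'|\leq c|C|$, computable in polynomial (indeed linear) time, as the table of qfpp-definitions is fixed and depends only on $\Gamma$ and $\Delta$.

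For correctness I would observe that an assignment $f\colon V\to\{0,1\}$ satisfies $R(v_1,\ldots,v_k)$ iff $(f(v_1),\ldots,f(v_k))\in R$, and since the qfpp-definition of $R$ has no quantified variables its set of models equals $R$ exactly; hence this holds iff $f$ satisfies all of $R_1(\mathbf{y}_1),\ldots,R_{m_R}(\mathbf{y}_{m_R})$. Collapsing variables forced equal by $\eq$-constraints preserves satisfiability in both directions, so $(V,C)$ is a yes-instance of $\SAT(\Delta)$ precisely when $(V',C')$ is a yes-instance of $\SAT(\Gamma)$.

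The instance-level substitution and bookkeeping are routine; the two points that need care are getting the direction of the Galois correspondence right together with the identification $\inv(\ppol(\Gamma))=\pcclone{\Gamma}$, and checking that removing the $\eq$-constraints by variable identification really gives $|V'|\leq|V|$ rather than merely $|V'|=O(|V|)$. Everything else is a standard gadget reduction.
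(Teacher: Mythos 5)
Your proposal is correct and follows essentially the same route as the source: the paper does not prove this theorem itself but cites it from~\cite{jonsson2017}, whose argument is precisely your Galois-connection reduction --- from $\ppol(\Gamma) \subseteq \ppol(\Delta)$ conclude $\Delta \subseteq \pcclone{\Gamma}$, fix one qfpp-definition per relation of the finite language $\Delta$, substitute it constraint-by-constraint, and eliminate the equality constraints by identifying variables, which keeps $|V'| \leq |V|$ and $|C'| \leq c|C|$. No gaps.
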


In particular this implies that if $\CSP(\Gamma)$ is solvable in
$O(c^n)$ time and $\ppol(\Gamma) \subseteq \ppol(\Delta)$ then
$\CSP(\Delta)$ is solvable in $O(c^n)$ time, too. We will now briefly
describe the closure properties of $\ppol(\Gamma)$, which are usually
called {\em strong partial clones}. First, if $f, g_1, \ldots, g_m \in
\ppol(\Gamma)$ where $f$ is $m$-ary and each $g_i$ is $n$-ary, then
the {\em composition} $f \circ g_1, \ldots, g_{m}(x_1,
\ldots, x_n) = f(g_1(x_1, \ldots, x_n), \ldots, g_m(x_1, \ldots,
x_n))$ is also included in $\ppol(\Gamma)$. This operation will be
defined on a tuple $(x_1, \ldots, x_n) \in D^n$ if and only if each
$g_i(x_1, \ldots, x_n)$ is defined and the resulting application over
$f$ is defined. Second, $\ppol(\Gamma)$ contains every partial
projection, which is known to imply that $\ppol(\Gamma)$ is closed
under taking subfunctions (i.e., if $f \in \ppol(\Gamma)$ then every
subfunction of $f$ is included in $\ppol(\Gamma)$). If $P$ is a set of
partial operations we write $\strongof{P} = \ppol(\inv(P))$ for the
smallest strong partial clone containing $P$.

\subsection{Polymorphism patterns}
\label{section:patterns}
In this section we describe a method for constructing partial
polymorphisms that have a strong connection to the sign-symmetric
constraint languages defined in Section~\ref{section:sign}. As a shorthand we will sometimes denote the $k$-ary constant tuple
$(d, \ldots, d)$ by $d^k$.

\begin{definition}
  Let $f$ be a Boolean partial operation. We say (1) that $f$ is
  {\em self-dual} if $\overbar{x} \in \domain(f)$ for every $x \in
  \domain(f)$ and $f(x) = 1 - f(\overbar{x})$, where $\overbar{x}$
  denotes the complement of the tuple $x$, and (2) that $f$ is {\em
    idempotent} if $d^k \in \domain(f)$ and $f(d^k) = d$ for every $d \in D$.
\end{definition}

In the sequel, we will call a Boolean partial operation which is both
self-dual and idempotent a {\em pSDI-operation}, short for partial,
self-dual, and idempotent operation.
Let a {\em polymorphism pattern} of arity $r$ be a set of pairs
$(t,x)$ where $t$ is an $r$-ary tuple of variables and where $x$
occurs in $t$. We say that a $r$-ary partial operation $f$ over a set of
values $D$ {\em satisfies} an $r$-ary polymorphism pattern $P$ if
\[\domain(f) = \{(\tau(x_1),
\ldots, \tau(x_r)) \mid ((x_1, \ldots, x_r), x) \in P, \tau : \{x_1,
\ldots, x_r\} \rightarrow D\}\] and 
$f(\tau(x_1), \ldots, \tau(x_r)) = \tau(x)$ for every
$((x_1, \ldots, x_r), x) \in P$ and every $\tau : \{x_1, \ldots, x_r\}
\rightarrow D$.

A Boolean operation is pSDI if and only if it satisfies a
polymorphism pattern. To see this, note that if $f$ is pSDI, then it
is easy to create a polymorphism pattern $P$ by letting each tuple $t
\in \domain(f)$ such that $f(t) = 0$ correspond to an entry in
$P$. Similarly, it is not difficult to show that any partial operation
satisfying a polymorphism pattern must be self-dual and idempotent. We
then have the following link between sign-symmetric constraint
languages and partial operations satisfying polymorphism patterns.

\begin{theorem}\cite{lagerkvist2015}
  Let $f$ be a pSDI-operation. Then $\inv(f)$ is sign-symmetric. 
\end{theorem}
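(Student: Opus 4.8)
The plan is to verify directly the two closure conditions in the definition of sign-symmetry for the language $\inv(f)$, using self-duality for the first condition and idempotence for the second. No heavy machinery is needed; the argument is a coordinate-wise bookkeeping of where the generalized operation $f$ is defined.

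First I would treat condition (1), closure under sign patterns. Fix $R \in \inv(f)$ of arity $n$ and an $n$-ary sign pattern $s$; the goal is $R^s \in \inv(f)$. The key elementary fact is that $(t^s)^s = t$, so $t \in R^s$ if and only if $t^s \in R$. Hence, given $t_1, \ldots, t_k \in R^s$, the tuples $t_1^s, \ldots, t_k^s$ all lie in $R$. Now compare the coordinate-wise applications of $f$ to the column sequence $(t_1, \ldots, t_k)$ and to $(t_1^s, \ldots, t_k^s)$: at a coordinate $j$ with $s[j] = +$ the two columns are literally equal, while at a coordinate $j$ with $s[j] = -$ one column is the bitwise complement of the other. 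By self-duality, $f$ is defined on a column if and only if it is defined on its complement, and in that case the outputs are complementary. Therefore $f(t_1, \ldots, t_k)$ is defined exactly when $f(t_1^s, \ldots, t_k^s)$ is, and when defined we have $f(t_1, \ldots, t_k) = \bigl(f(t_1^s, \ldots, t_k^s)\bigr)^s$. Since $R$ is invariant under $f$, $f(t_1^s, \ldots, t_k^s) \in R$, and hence $f(t_1, \ldots, t_k) \in R^s$, establishing $R^s \in \inv(f)$.

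Next I would treat condition (2), closure under freezing a coordinate. Fix $R \in \inv(f)$, an index $i$, and $c \in \{0,1\}$; the goal is $R_{i=c} \in \inv(f)$. Take $t_1, \ldots, t_k \in R_{i=c}$. These tuples lie in $R$, so if $f(t_1, \ldots, t_k)$ is defined then it lies in $R$, and it only remains to check that its $i$th coordinate equals $c$. But the $i$th column of the sequence is the constant tuple $c^k$, and by idempotence $c^k \in \domain(f)$ with $f(c^k) = c$. Hence whenever $f(t_1, \ldots, t_k)$ is defined, its $i$th coordinate is $c$, so $f(t_1, \ldots, t_k) \in R_{i=c}$, establishing $R_{i=c} \in \inv(f)$.

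Since both conditions (1) and (2) hold, $\inv(f)$ is sign-symmetric. There is no genuine obstacle here; the only point requiring a little care is tracking which coordinate-wise partial applications of $f$ are defined, and this is precisely where self-duality enters for (1) and idempotence for (2). As an alternative presentation one could instead invoke the polymorphism-pattern characterisation of pSDI-operations, but the direct computation above is the most transparent route.
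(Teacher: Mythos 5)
Your proof is correct: self-duality of $f$ gives exactly the column-wise definedness and complementation needed for closure of $\inv(f)$ under sign patterns, and idempotence on the constant column $c^k$ gives closure under freezing a coordinate, which are the two conditions in the paper's definition of sign-symmetry. The paper itself only cites this result from earlier work without reproducing a proof, and your direct coordinate-wise verification is the standard argument one would give (the alternative via polymorphism patterns, which you mention, is the route suggested by the surrounding discussion in Section~\ref{section:patterns}).
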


Hence, pSDI-operations provide a straightforward way to describe broad
classes of sign-symmetric constraint languages. It is also known that
if $\Gamma$ is sign-symmetric and $\SAT(\Gamma)$ is NP-hard, then
every partial polymorphism of $\Gamma$ is a subfunction of a
pSDI-operation preserving $\Gamma$~\cite{lagerkvist2015}[Theorem 3]
(see Lagerkvist~\cite{lagerkvist2016} for a full proof). We will now define
the pSDI-operations that will play a central role in our current
pursuit.

\begin{definition}
  Let $k \geq 2$. A $(k+1)$-ary partial operation is a {\em partial
    $k$-edge operation} if it satisfies the pattern consisting of 
  $((x, x, y, y, y, \ldots, y, y), y)$, $((x, y, x, y, y, \ldots, y,
  y),y)$, and for each $i \in \{4, \ldots, k+1\}$, the tuple $((y,\ldots,y, x, y, \ldots,
  y),y)$, where $x$ appears in position $i$.
\end{definition}

We will typically denote partial $k$-edge operations by $\edge_k$,
and, if the underlying set $D$ is important, by $\edge^D_k$. A partial
2-edge operation will sometimes be called a {\em partial Maltsev operation}.

\begin{definition}
  Let $k \geq 3$. A $k$-ary partial operation is a {\em partial
    $k$-ary near-unanimity operation} (partial $k$-NU operation) if it satisfies the pattern
  which for each $i \in
\{1, \ldots, k\}$ contains $((x, x, \ldots, x, y, x, \ldots,
x), x)$, where $y$ occurs in position $i$.
\end{definition}

We write $\near^{D}_k$ to denote this operation over the domain $D$,
and $\near_k$ if the domain is clear from the context, or not
relevant. Ternary partial NU-operations will sometimes be called {\em
partial majority operations}. Note that the partial majority operation
is total in the Boolean domain but is properly partial for every
larger domain. Last, we define the following class of self-dual
partial operations. Say that the argument $i$ of a $k$-ary partial
operation $f$ is {\em redundant} if there exists $j \neq i$ such that
$t[i] = t[j]$ for every $t \in \domain(f)$.


\begin{definition}
  Let $k \geq 2$. The {\em $k$-universal} operation $\universal_k$ is the
  Boolean $(2^{k} - 1)$-ary pSDI-operation defined on $2k + 2$ tuples such that (1)
  $\universal_k$ is not a partial projection and (2) $\universal_k$
  does not have any redundant arguments.
\end{definition}

While not immediate from the definition, the operation $\universal_k$
is in fact unique up to permutation of arguments. To see this, simply
take the $k$ non-constant tuples $t_1, \ldots, t_k \in
\domain(\universal_k)$ such that $\universal_k(t_1) = \ldots =
\universal_k(t_k) = 0$. Since $\universal_k$ is not a projection and
is pSDI, it follows that there cannot exist $i \in [2^{k} - 1]$ such
that $(t_1[i], \ldots, t_k[i]) = 0^k$. Hence, since $u_k$
does not have any redundant arguments, there for every
$t \in \{0,1\}^k \setminus \{0^k\}$ must exist a unique $i
\in [2^{k} - 1]$ such that $(t_1[i], \ldots, t_k[i]) = t$.

Last, we remark that there is a connection between our notion of
polymorphism patterns and the operations studied in connection to the
CSP dichotomy (see e.g.\ the survey by Barto et
al.~\cite{barto2017}). In technical terms polymorphism patterns
essentially matches {\em strong Maltsev condititions} where the right-hand side is
restricted to a single variable. Similar restrictions, called {\em
height-1 identities}, have been considered earlier and it is known
that the complexity of a CSP$(\Gamma)$ problem only depends on the
height-1 identities satisfied by the operations in
$\pol(\Gamma)$~\cite{barto2017b}.
\section{Structure of Constraint Languages under Minimal Restrictions}
\label{section:structure}

We now properly begin the first part of the paper, investigating the
structure of maximally expressive, yet restricted sign-symmetric
constraint languages. This investigation is performed via the study of
the weakest non-trivial pSDI-operations, including the operations
defined in Section~\ref{section:patterns}.  As a preview of the
structure, and of some of the included problems, we refer to
Figure~\ref{fig:1}.  The problem and language inclusions illustrated
in this figure will be shown across the next two subsections.

\begin{figure} 
\tikzset{
   problem/.style={rectangle,rounded corners,dotted, draw},
   class/.style={rectangle,rounded corners, draw}
}

\begin{tikzpicture}[shorten >=1pt, auto, node distance=1cm, ultra thick,
   edge_style/.style={draw=black, ultra thick}]



   \node [problem,align=left] (equations) {Subset Sum \\ Linear Equations \\ 1-in-$k$ SAT};

   \node [class, right = of equations] (2edge) {2-edge = 2-universal};

   \draw [->] (equations) edge (2edge);

   \node [class, below = of 2edge] (3edge) {3-edge} edge [<-] (2edge);

   \node [class, left = of 3edge] (3nu) {3-NU} edge [->] (3edge);

   \node [problem, left = of 3nu, align=left] (3nuapps) {2-SAT \\ $(d,2)$-CSP \\ Graph $k$-Clique} edge [->] (3nu);

   \node [class, right = of 3edge] (3uni) {3-universal} edge [<-] (3edge);
   
   \node [problem, right = of 2edge,align=left] (deg2) {Sidon Sets \\ Degree-2 Polynomials};
   \draw [->] (deg2) edge (3uni);

   \node [class, below = of 3nu] (4nu) {4-NU} edge [<-] (3nu);

   \node [problem, left = of 4nu, align=left] (3sat) {3-SAT \\ $(d,3)$-CSP \\ $(3,\ell)$-Hyperclique} edge [->] (4nu);

   \node [class, below = of 3edge] (4edge) {4-edge} edge [<-] (4nu);

   \draw [->] (3edge) edge (4edge);

   \node [class, below = of 3uni] (4uni) {4-universal} edge [<-] (3uni);

   \node [problem, right = of 3uni] (deg3) {Degree-3 Polynomials};
   \draw [->] (deg3) edge (4uni);

   \draw [->] (4edge) edge (4uni);

   \node [below = of 4nu] (dotnu) {\dots} edge [<-] (4nu);

   \node [below = of 4edge] (dotedge) {\dots} edge [<-] (4edge);
 
   \draw [->] (dotnu) edge (dotedge);

   \node [below = of 4uni] (dotuni) {\dots} edge [<-] (4uni);

   \draw [->] (dotedge) edge (dotuni);

   \node [class, below = of dotnu] (knu) {$k$-NU} edge [<-] (dotnu);

   \node [class, below = of dotedge] (kedge) {$k$-edge} edge [<-] (dotedge);

   \draw [->] (knu) edge (kedge);

   \node [problem, below = of 3sat, align=left] {$(k-1)$-SAT \\ $(d, k-1)$-CSP \\ $(k-1,\ell)$-Hyperclique} edge [->] (knu);

   \node [class, below = of dotuni] (kuni) {$k$-universal} edge [<-] (dotuni);
  
   \draw [->] (kedge) edge (kuni);

   \node [problem, right = of dotuni] (degk) {Degree-$(k-1)$ Polynomials};
   \draw [->] (degk) edge (kuni);

\end{tikzpicture}
\caption{The inclusion structure between
  selected minimal pSDI-operations (solid outlines),
  and some problems that reduce
  to the corresponding SAT or CSP problem (dotted outlines).
  Several classes on each level $k \geq 3$ have been omitted.}
\label{fig:1}
\end{figure}
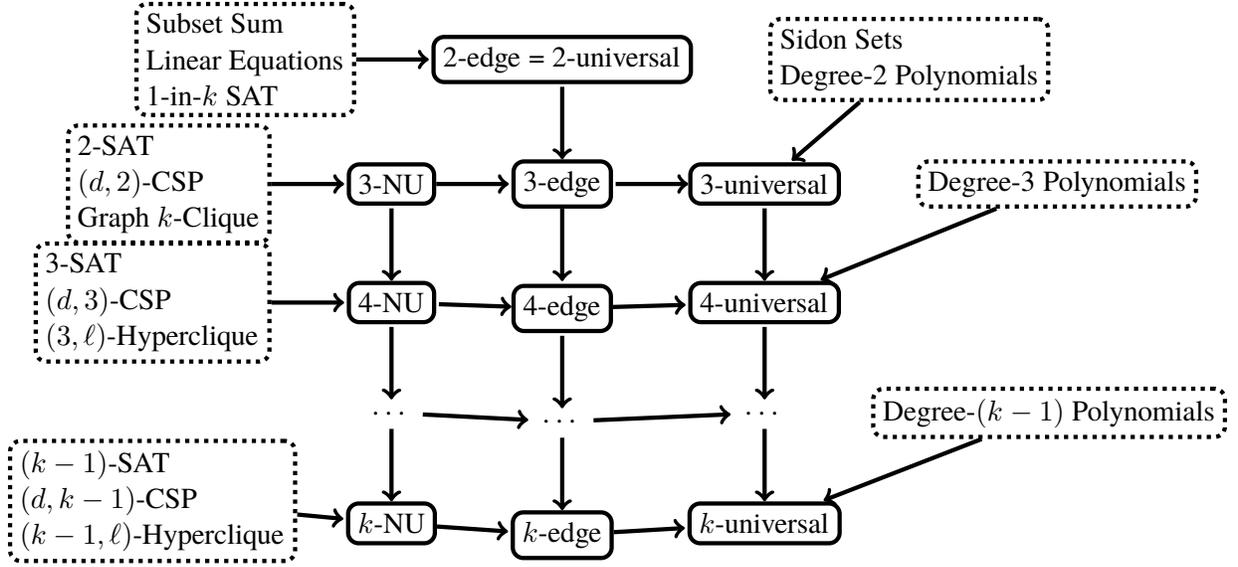

More precisely, by ``weakest'' pSDI-operations, we mean 
partial operations that are minimal in the following sense. 
Recall that for every pSDI-operation $f$ and
every subfunction $f'$ of $f$, we 
have $\inv(f) \subseteq \inv(f')$. 
This motivates the following definition. 

\begin{definition}
  \label{def:psdi-minimal}
  Let $f$ be a pSDI-operation. We say that $f$ is \emph{trivial}
  if it is a subfunction of a projection, and a
  \emph{minimal non-trivial} pSDI-operation if $f$ is non-trivial
  but every proper subfunction $f'$ of $f$ which is a pSDI-operation
  is trivial.
\end{definition}

Our study in this section is focused on constraint languages
$\Gamma=\inv(f)$ where $f$ is a single minimal non-trivial
pSDI-operation, since these are the most expressive sign-symmetric
constraint languages that are still restricted in expressive power. 
We begin by giving some examples for the particular classes of $k$-NU,
$k$-edge and $k$-universal partial operations defined in 
Section~\ref{section:patterns}. 

\subsection{Properties of specific sign-symmetric constraint  languages}


In this section, we provide some illustrative examples of languages
included in $\inv(f)$ for particular pSDI-operations $f$. We first
recall the following result from Lagerkvist \& Wahlstr\"om.


\begin{theorem}~\cite{lagerkvist2017c} \label{thm:linear}
  Let $F$ be a finite set of partial operations such that
  $\inv(F)$-SAT is NP-complete. Then any $n$-ary Boolean relation
  has a pp-definition over $\inv(F)$ using at most $O(n)$
  existentially quantified variables.
\end{theorem}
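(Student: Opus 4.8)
The plan is to reduce the statement to the construction of a single ``universal'' relation plus one constraint. Since $\SAT(\inv(F))$ is NP-complete, Schaefer's theorem together with the structure of Post's lattice tells us that $\cclone{\inv(F)}$ is either $\mathrm{BR}$ (all Boolean relations) or $\mathrm{IN}_2$ (all negation-closed relations); in the first case ``any $n$-ary Boolean relation'' is meant literally, in the second it must be read as ``any negation-closed $n$-ary relation'', and in both the argument is the same. So fix $n$, let $X=\{x_1,\dots,x_n\}$, and suppose we can find a padding relation $P_n$ on $X$ together with $O(n)$ fresh variables $Y$ such that (i) $\exists Y\,P_n(X,Y)=\{0,1\}^n$, and (ii) for every $n$-ary relation $R$ of the relevant class the relation $S_R:=\{(x,y):x\in R,\ (x,y)\in P_n\}$ lies in $\inv(F)$. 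Then $R(x_1,\dots,x_n)\equiv\exists Y\,S_R(X,Y)$ is a pp-definition of $R$ over $\inv(F)$ using the single base relation $S_R\in\inv(F)$ and $|Y|=O(n)$ existential variables, which is exactly the conclusion.

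For $P_n$ I would take (essentially) the graph of a \emph{random} $\mathbb{F}_2$-linear map $x\mapsto Ax$ with $|Y|$ rows, $|Y|=O(n)$ and the hidden constant depending on $F$; property (i) is then immediate. For (ii) fix $f\in F$ of arity $k$ and tuples $t_1,\dots,t_k\in S_R$, and write $t_i=(x_i,Ax_i)$ with $x_i\in R$. If $f$ is undefined on some coordinate column, $f(t_1,\dots,t_k)$ is undefined and there is nothing to check; otherwise every $X$-column of $(x_1,\dots,x_k)$ lies in $\domain(f)$. Let $V\le\{0,1\}^k$ be the span of the $X$-columns. Because $f$ acts coordinatewise and $A$'s rows are independent and uniform, each $Y$-column of $(Ax_1,\dots,Ax_k)$ is an independent uniformly random element of $V$, and crucially $|V|\le 2^k=O(1)$ since $k$ is a constant. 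Hence, if $V\not\subseteq\domain(f)$, each $Y$-column misses $\domain(f)$ with probability $\ge 2^{-k}$, so a union bound over the $\le 2^{nk}$ choices of $(x_1,\dots,x_k)$ and over the finitely many $f\in F$ shows that, for $|Y|$ a suitable $O(n)$, with positive probability a single $A$ simultaneously forces $f(t_1,\dots,t_k)$ undefined in every such case; fix such an $A$. The remaining case is $V\subseteq\domain(f)$, where $f$ genuinely is defined on all of $S_R$'s columns: here $|V|$ is tiny, so the $x_i$ fall into very few equality classes, and one argues from idempotency (and self-duality, or directly from the explicit shape of $\domain(f)$) that $f$ restricted to $V$ coincides with a projection $\pi_j$ --- for $f=\near_k$ this is precisely the observation that at least $k-1$ of the $x_i$ must coincide --- whence $f(t_1,\dots,t_k)=t_j\in S_R$.

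The step I expect to be the real obstacle is exactly this degenerate case $V\subseteq\domain(f)$ for a \emph{general} $f\in F$. For the concrete operations of interest --- $\near_k$, $\edge_k$, $\universal_k$ and their relatives --- the subspaces $V$ contained in $\domain(f)$ are easily enumerated and the conclusion ``$f|_V$ is a projection'' holds by inspection, so every low-rank configuration is harmless. For an arbitrary partial operation one has to exclude the possibility that $V\subseteq\domain(f)$ while $f|_V$ is \emph{not} a projection, since then an adversarial choice of $R$ could push $f(t_1,\dots,t_k)$ out of $S_R$; this is where the hypothesis that $\SAT(\inv(F))$ is NP-complete (rather than merely non-trivial) has to be used, either to enrich the padding so that those $V$ are also pierced (for instance by adding a non-linear block, or parity checks tailored to the finitely many undefined tuples of the operations in $F$), or to argue from the Boolean (co-)clone lattice that such $f$ cannot occur in an $F$ with NP-complete $\inv(F)$-SAT. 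A minor side point is the treatment of total members of $F$: by Post's lattice such an $f$ is a projection or a negated projection, and it preserves every $S_R$ automatically (in the $\mathrm{IN}_2$ case using that $R$ is negation-closed), so these contribute nothing.
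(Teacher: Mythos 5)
Your overall plan --- pad $\{0,1\}^n$ by $O(n)$ auxiliary variables so that every application of every $f\in F$ on the padded relation is undefined or projective, then existentially quantify the padding --- is the same route as the actual proof (which this paper does not reprove but cites, cf.\ Lemma~\ref{lemma:padding_constant} and \cite{lagerkvist2017c}), and your union-bound step is fine. The genuine gap is exactly the case you flag, $V\subseteq\domain(f)$, and neither of your proposed escapes survives scrutiny: with a \emph{linear} padding $x\mapsto Ax$ the construction provably fails for sets $F$ satisfying the hypothesis, and the alternative claim that ``such $f$ cannot occur when $\inv(F)$-SAT is NP-complete'' is false. Concretely, let $f$ be the ternary partial operation with $\domain(f)=\{(0,0,0),(0,1,1),(1,0,1),(1,1,0)\}$ (the even-parity columns) which returns the majority of its arguments there. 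Then $f$ preserves $R_{1/3}$, because no tuples $t_1,t_2,t_3\in R_{1/3}$ (repetitions allowed) have coordinatewise XOR equal to $0^3$, so $\inv(\{f\})$-SAT is NP-complete. Yet for $R=\{(0,1,1),(1,0,1),(1,1,0)\}$ every linear padding fails: these three tuples XOR to $0^3$, hence all $X$-columns and all $Y$-columns $\langle a_j,x_i\rangle$ lie in $\domain(f)$ for \emph{every} matrix $A$, the application is defined, and its $X$-part is $(1,1,1)\notin R$, so $S_R\notin\inv(\{f\})$ no matter how many parity rows you add. So ``enriching the padding'' is not an optional repair; it is the heart of the argument, and your sketch does not contain it.

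The cited proof does this by taking the padding from a (strongly) universal hash family rather than an $\mathbb{F}_2$-linear map: then for any $x_1,\dots,x_k$ the padding columns range (essentially uniformly) over \emph{all} patterns consistent with the equality type of $(x_1,\dots,x_k)$, not merely over their linear span, so undefinedness can be forced by your union bound unless every equality-consistent pattern lies in $\domain(f)$. In that last, unavoidable case, the identification minor of $f$ obtained by merging equal arguments is a \emph{total} operation in $\strongof{F}$, and this is precisely where NP-completeness enters via Schaefer/Post: the total part of $\strongof{F}=\ppol(\inv(F))$ consists only of projections (in the $\mathrm{IN}_2$ reading, possibly negated projections), so the application is projective after all. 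This is the missing lemma in your write-up. A smaller inaccuracy: negated projections do not preserve your $S_R$ ``automatically'' even when $R$ is complement-closed --- one needs $S_R$ itself to be complement-closed, which for a padding by the graph of $A$ requires, e.g., all rows of $A$ to have odd weight (or an affine rather than linear padding).
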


In effect, this implies that any constraint language $\inv(F)$, where
$F$ is a finite set of pSDI-operations, is extremely expressive. 
One direct consequence is that $\inv(F)$ contains at least
$2^{2^{cn}}$ $n$-ary relations for some constant $0 < c \leq 1$.
This makes such constraint languages markedly different from finite
constraint languages, since for any finite constraint language $\Gamma$,
the number of $n$-ary qfpp-definable relations over $\Gamma$ is bounded by
$O(2^{p(n)})$ for a polynomial $p$ depending on $\Gamma$. 
This also implies that there cannot exist a
finite $\Gamma$ such that $\ppol(\Gamma) =\strongof{F}$.
In fact, the relations of $\inv(F)$ for such an $F$ are dense enough
that for any $n$-ary relation $R$, a random padding of $R$ by $O(n)$
parity-check variables is enough to create a variable in $\inv(F)$
with high probability. This fact will be exploited in
Section~\ref{section:padding_constants}.

Despite this, we will see that the pSDI-operations defined in
Section~\ref{section:patterns} do correspond roughly to natural
restrictions on the expressive power of a language $\Gamma$.  We now
illustrate the classes with a few examples. In the process will
occasionally refer to the language inclusions illustrated in
Figure~\ref{fig:1}. Proofs of these inclusions is given in
Theorem~\ref{thm:inclusions} in Section~\ref{section:inclusions}.  Let
us now begin with a basic example.

\begin{lemma}   \label{lemma:near_inv}
  $R \in \inv(\near_k)$ for every $(k-1)$-ary relation $R$, $k \geq 3$.
\end{lemma}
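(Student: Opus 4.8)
The plan is a short pigeonhole argument over the columns of a candidate application of $\near_k$. First I would read off from the polymorphism-pattern definition in Section~\ref{section:patterns} exactly what $\near_k$ is: its domain consists of the constant $k$-tuples together with the $k$-tuples having one value in $k-1$ coordinates and a different value in the remaining coordinate, and on any tuple $c \in \domain(\near_k)$ there is a value occurring in at least $k-1$ of the $k$ coordinates (this value is unique since $k \geq 3$), with $\near_k(c)$ equal to that value. The one consequence I need is: if $\near_k$ is defined on $c$, then $\near_k(c) = c[i]$ for at least $k-1$ of the indices $i \in [k]$.

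Now fix a $(k-1)$-ary relation $R$ over an arbitrary domain $D$ and tuples $t_1, \ldots, t_k \in R$ with $t := \near_k(t_1, \ldots, t_k)$ defined. For each coordinate $j \in [k-1]$, apply the above observation to the column $(t_1[j], \ldots, t_k[j]) \in \domain(\near_k)$: the set $S_j := \{\, i \in [k] : t_i[j] \neq t[j] \,\}$ satisfies $|S_j| \leq 1$ (it is empty if the column is constant, and the singleton of the minority index otherwise). Since $\ar(R) = k-1$, the union $\bigcup_{j=1}^{k-1} S_j$ has cardinality at most $k-1 < k$, so some index $i^\ast \in [k]$ lies outside it. Then $t_{i^\ast}[j] = t[j]$ for every $j \in [k-1]$, i.e.\ $t = t_{i^\ast} \in R$, which shows that $R$ is invariant under $\near_k$.

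There is essentially no obstacle here: the argument uses only $\ar(R) = k-1$ (no structural property of $R$) and makes sense over any domain, so the non-Boolean case is covered identically. The only place requiring care is translating $\near_k$ out of the pattern formalism; after that the counting is immediate. It is also worth noting that the arity bound is tight — taking $t_1, \ldots, t_k$ to be the $k$ weight-$1$ tuples of arity $k$ shows that the $k$-clause relation $\{0,1\}^k \setminus \{0^k\}$ lies in $\sat{k}$ yet is not invariant under $\near_k$, which matches the fact noted in the introduction that $\sat{k}$ is preserved by $\near_{k+1}$ but not by $\near_k$.
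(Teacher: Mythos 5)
Your proof is correct and follows essentially the same route as the paper's: for each of the $k-1$ coordinates at most one of the $k$ argument tuples deviates from the value of $\near_k$ on that column, so by pigeonhole some tuple $t_{i^\ast}$ agrees with the output in every coordinate, giving $\near_k(t_1,\ldots,t_k)=t_{i^\ast}\in R$. The explicit unpacking of the pattern definition and the tightness remark are fine additions but do not change the argument.
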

\begin{proof}
  Let $t_1, \ldots, t_k \in R$ be such that $\near_k(t_1,\ldots,t_k)$
  is defined, and for $i \in [k-1]$ let $t^{(i)}=(t_1[i], \ldots, t_k[i])$.
  For every $i \in [k-1]$, either $t^{(i)}$ is constant or there is a 
  single index $j$ where $t^{(i)}[j]$ deviates from its other
  entries. By the pigeonhole principle, there is at least one 
  index $j \in [k]$ such that $t^{(i)}[j]$ does not deviate from the 
  majority for any $i \in [k-1]$. Then we have 
  $\near_k(t_1,\ldots,t_k)=t_j$.
\end{proof}

We also show a corresponding negative statement. By the inclusions
shown in the next section, this will imply that a $k$-clause is not
preserved by any operation at ``level $k$'' of the hierarchy in Figure~\ref{fig:1}. 

\begin{lemma} \label{lemma:kary-knu}
  Let $R \subset \{0,1\}^k$ be a $k$-clause, i.e., 
  $|R|=2^k-1$, $k \geq 2$. 
  Then $R$ is not preserved by the partial $k$-universal operation. 
\end{lemma}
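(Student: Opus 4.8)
The plan is to reduce to one concrete $k$-clause and then exhibit a single bad application of $\universal_k$. Since all $k$-clauses form a single orbit under variable negation, and $\inv(f)$ is sign-symmetric for every pSDI-operation $f$ by the theorem of \cite{lagerkvist2015}, it suffices to treat the positive clause $R = \{0,1\}^k \setminus \{0^k\}$; for an arbitrary $k$-clause $R_0^s$, if it were preserved by $\universal_k$ then so would be $R_0$, a contradiction. So from now on $R = \{0,1\}^k \setminus \{0^k\}$, i.e.\ $R \equiv x_1 \lor \cdots \lor x_k$.

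Next I would invoke the structural description of $\universal_k$ recorded right after its definition. Let $t_1, \ldots, t_k \in \domain(\universal_k)$ be the $k$ non-constant tuples with $\universal_k(t_j) = 0$; each $t_j$ has arity $2^k - 1$. Form the $k \times (2^k-1)$ matrix $M$ whose $j$-th row is $t_j$. As argued there --- using that $\universal_k$ is pSDI, is not a projection, and has no redundant arguments --- no column of $M$ equals $0^k$ and no two columns of $M$ coincide; since there are $2^k - 1$ columns and exactly $2^k - 1$ nonzero vectors in $\{0,1\}^k$, the columns of $M$ enumerate $\{0,1\}^k \setminus \{0^k\}$, each exactly once.

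The construction is then immediate. For $i \in [2^k-1]$ let $s_i \in \{0,1\}^k$ be the $i$-th column of $M$, so $s_i[p] = t_p[i]$ for $p \in [k]$. Each $s_i$ is a nonzero vector, hence $s_i \in R$. Now apply $\universal_k$ (which has arity $2^k-1$) to the sequence $s_1, \ldots, s_{2^k-1}$. For each output coordinate $p \in [k]$ the $p$-th input column is $(s_1[p], \ldots, s_{2^k-1}[p]) = (t_p[1], \ldots, t_p[2^k-1]) = t_p$, so
\[
  \universal_k(s_1, \ldots, s_{2^k-1})[p] \;=\; \universal_k(t_p) \;=\; 0 .
\]
Thus $\universal_k(s_1,\ldots,s_{2^k-1}) = 0^k$, which is defined but not in $R$; therefore $R \notin \inv(\universal_k)$. (It is irrelevant whether the $s_i$ are distinct, since repetition of tuples in a constraint application is allowed, although here they happen to be pairwise distinct since the columns of $M$ are.)

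There is no real obstacle beyond careful bookkeeping: one must keep straight that the matrix $M$ describing $\domain(\universal_k)$ (arity $2^k-1$, $k$ rows) is the transpose of the matrix whose rows are the tuples $s_1,\ldots,s_{2^k-1}$ fed into $\universal_k$ (arity $k$, $2^k-1$ rows). The argument transparently covers the base case $k = 2$, where $\universal_2$ is the partial $2$-edge (Maltsev) operation and $R$ is a $2$-clause.
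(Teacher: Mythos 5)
Your proof is correct and follows essentially the same route as the paper: reduce to the positive clause by sign-symmetry, take the $k$ non-constant tuples on which $\universal_k$ evaluates to $0$, and feed their columns (the paper's $t^{(i)}$, your $s_i$) into $\universal_k$ to produce $0^k \notin R$. Your write-up merely makes the transposition bookkeeping and the column-enumeration property of $\domain(\universal_k)$ more explicit than the paper does.
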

\begin{proof}
  By sign-symmetry, we assume that $R=\{0,1\}^k \setminus \{0^k\}$. 
  Let $t_1, \ldots, t_k$ be the non-constant tuples in $\domain(\universal_k)$
  such that $\universal_k(t_i)=0$ for each $i \in [k]$. 
  Then for each $i \in [2^k-1]$, the tuple $t^{(i)}=(t_1[i],\ldots,t_k[i])$
  defines a tuple of $R$; thus the application
  \[
  \universal_k(t^{(1)}, \ldots, t^{(2^k-1)})=0^k
  \]
  is defined and shows that $R \notin \inv(\universal_k)$. 
\end{proof}

Next, we consider a canonical example of a useful relation preserved
by the partial $2$-edge operation. 

\begin{lemma} \label{lemma:2edge:linear}
  Let $R(x_1,\ldots,x_n) \subseteq \{0,1\}^n$ be defined via a linear
  equation 
  \[
  \sum_{i=1}^n \alpha_i x_i = \beta
  \]
  evaluated over a finite field $\mathbb{F}$. Then $R \in \inv(\edge_2)$. 
\end{lemma}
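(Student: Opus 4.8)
The plan is to exhibit a single affine ``term'' over $\mathbb{F}$ that agrees with the partial $2$-edge operation everywhere on its domain, and then exploit the $\mathbb{F}$-linearity of the defining equation. Concretely, I would first record what $\edge_2$ is: by its pattern it is the ternary partial operation over $\{0,1\}$ whose domain is exactly $\{(a,a,b): a,b \in \{0,1\}\} \cup \{(a,b,a): a,b \in \{0,1\}\}$, with $\edge_2(a,a,b)=b$ and $\edge_2(a,b,a)=b$. The key remark is that the map $g(a,b,c)=b+c-a$, evaluated in $\mathbb{F}$ (viewing $0,1$ as the neutral elements of $\mathbb{F}$), satisfies $g(a,a,b)=b$ and $g(a,b,a)=b$; hence the restriction of $g$ to $\domain(\edge_2)$ coincides with $\edge_2$. (This works in every characteristic; in characteristic $2$ the term is simply $a+b+c$.)

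Next I would take arbitrary $t_1,t_2,t_3 \in R$ such that $t := \edge_2(t_1,t_2,t_3)$ is defined, and show $t \in R$. Applying $\edge_2$ coordinatewise means precisely that $(t_1[i],t_2[i],t_3[i]) \in \domain(\edge_2)$ for every $i \in [n]$, so by the previous paragraph $t[i] = g(t_1[i],t_2[i],t_3[i]) = t_2[i]+t_3[i]-t_1[i]$ as an element of $\mathbb{F}$, and it lies in $\{0,1\}$ because $\edge_2$ is defined there. Then, since $x \mapsto \sum_i \alpha_i x_i$ is $\mathbb{F}$-linear, $\sum_{i=1}^n \alpha_i t[i] = \sum_{i=1}^n \alpha_i t_2[i] + \sum_{i=1}^n \alpha_i t_3[i] - \sum_{i=1}^n \alpha_i t_1[i] = \beta + \beta - \beta = \beta$, using that $t_1,t_2,t_3 \in R$. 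Hence $t \in R$, which gives $R \in \inv(\edge_2)$.

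I do not expect a genuine obstacle here; the only point needing a moment's care is that the equation is evaluated over $\mathbb{F}$ while the variables range over $\{0,1\}$, so one must check that the affine term $t_2[i]+t_3[i]-t_1[i]$ never escapes $\{0,1\}$ — but this is automatic exactly on $\domain(\edge_2)$, which is precisely why the argument relies on the partial operation being defined (an unrestricted Maltsev-style combination of Boolean tuples would of course fail this). The very same computation shows the statement is unchanged if the single equation is replaced by any conjunction of linear equations over $\mathbb{F}$, and if $\edge_2$ is replaced by the Maltsev-convention partial operation $\malt$ (after permuting its last two arguments); I would note these remarks in passing since Lemma~\ref{lemma:2edge:linear} is used exactly to place \textsc{Subset Sum} and linear-equation languages inside $\inv(\edge_2)$.
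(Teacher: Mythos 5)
Your proof is correct, and in particular you use the right convention for $\edge_2$ (identities $(x,x,y)\mapsto y$ and $(x,y,x)\mapsto y$, so the affine term is $g(a,b,c)=b+c-a$ rather than the usual Maltsev term $a-b+c$), and you correctly isolate the only delicate point, namely that the output stays in $\{0,1\}$ precisely because $\edge_2$ is only applied where it is defined. The paper, however, does not argue this directly: its proof is a one-line appeal to the notion of a \emph{Maltsev embedding} from~\cite{LagerkvistW17CP}, i.e.\ it observes that the Boolean solution set of a linear equation over $\mathbb{F}$ embeds into the affine (Maltsev-closed) solution set over $\mathbb{F}^n$, and then invokes the known general fact that any relation with a Maltsev embedding is preserved by a whole family of partial operations, of which $\edge_2$ is the weakest. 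Your argument is essentially a self-contained verification of that cited fact in this special case: what you compute with $g$ is exactly the restriction of the affine Maltsev polymorphism of the embedding to the domain of $\edge_2$. What the paper's route buys is generality and brevity — the same citation covers conjunctions of equations, all the stronger partial operations in the family, and the later uses of embeddings (e.g.\ in the symmetric 3-edge analysis) — whereas your route buys transparency and independence from~\cite{LagerkvistW17CP}, and your closing remarks (conjunctions of equations, and the Maltsev-convention operation after permuting arguments) recover the extra generality that is actually needed for the \textsc{Subset Sum} and 1-in-$k$ applications.
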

\begin{proof}
  This is a special case of the notion of a \emph{Maltsev embedding} of $R$
  previously investigated by the authors~\cite{LagerkvistW17CP}. It is
  known that a relation with a Maltsev embedding is closed under a
  family of partial operations, of which $\edge_2$ is the simplest.
\end{proof}

A particular example of such relations is the \textsc{Exact SAT} problem.
We show that its 1-in-$k$ relations are also not closed under $\near_k$.

\begin{lemma} \label{lemma:exactsat-2edge}
  Let $R_{1/k} = \{(x_1, \ldots, x_k) \in \{0,1\}^k \mid x_1 + \ldots
  + x_k = 1\}$, and $\Gamma_{\mathrm XSAT} = \{R^{s}_{1/k} \mid k \geq
  1, s$ is a $k$-ary sign-pattern$\}$. Then $\Gamma_{\mathrm XSAT}
  \subseteq \inv(\edge_2)$ but is not preserved by $\near_k$ for any $k$.
\end{lemma}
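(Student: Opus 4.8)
The plan is to establish the two halves of the statement separately. For the inclusion $\Gamma_{\mathrm{XSAT}} \subseteq \inv(\edge_2)$, I would observe that each relation $R_{1/k}$ is defined by the single linear equation $x_1 + \dots + x_k = 1$ over, say, the rationals (or over any sufficiently large prime field, so that no wraparound occurs on Boolean inputs), so by Lemma~\ref{lemma:2edge:linear} we get $R_{1/k} \in \inv(\edge_2)$. Since $\inv(\edge_2)$ is sign-symmetric (Theorem of Lagerkvist, as $\edge_2$ is a pSDI-operation), every sign-pattern variant $R^{s}_{1/k}$ also lies in $\inv(\edge_2)$, which gives the inclusion. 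Strictly one should double-check that the Boolean solutions of $\sum x_i = 1$ over the chosen field are exactly the weight-one tuples, which is immediate for any field of characteristic $0$ or characteristic $>k$.

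For the negative half, I would show directly that $\near_k$ fails to preserve $R_{1/k}$, for every $k \geq 3$ (and handle $k=2$ trivially, where $R_{1/2}$ is just the disequality relation and there is no $\near_2$ to speak of, so the claim ``not preserved by $\near_k$ for any $k$'' is about $k \geq 3$). The idea is to feed $\near_k$ the $k$ unit tuples $t_i = e_i \in R_{1/k}$, where $e_i$ has a $1$ in coordinate $i$ and $0$ elsewhere. For a fixed coordinate $j \in [k]$, the column $(t_1[j], \dots, t_k[j])$ equals $e_j$ read vertically: it has a single $1$ (in row $j$) and $k-1$ zeros. So in every column the ``minority'' value is $1$ and the ``majority'' value is $0$, hence $\near_k$ is defined on each column and returns $0$. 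Therefore $\near_k(t_1, \dots, t_k) = 0^k$, which is not in $R_{1/k}$ since it has weight $0 \neq 1$. This witnesses $R_{1/k} \notin \inv(\near_k)$, and a fortiori $\Gamma_{\mathrm{XSAT}} \not\subseteq \inv(\near_k)$.

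I do not expect any real obstacle here — both directions are short. The only points requiring a moment's care are (i) choosing the field in the inclusion direction so that the Boolean solution set is literally the weight-one tuples (any field of characteristic $0$ or $> k$ works, and one can even use a single uniform choice such as $\mathbb{Q}$), and (ii) noting that the column vectors $e_j$ have exactly one deviating entry, so they lie in $\domain(\near_k)$ — this is exactly the pattern defining the partial $k$-NU operation in Section~\ref{section:patterns}. Both are routine once spelled out. It is worth remarking that this argument is essentially the $1/k$-analogue of Lemma~\ref{lemma:kary-knu}: there the $k$-clause was killed by $\universal_k$, here the $1/k$-relation is killed already by the much stronger $\near_k$, reflecting that \textsc{Exact SAT} sits at the $2$-edge level rather than any $k$-NU level.
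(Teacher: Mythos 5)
Your proposal is correct and follows essentially the same route as the paper: the positive direction via Lemma~\ref{lemma:2edge:linear} by phrasing $R_{1/k}$ as a linear equation over a prime field with $p \geq k+1$ (your fallback choice; note the lemma as stated requires a \emph{finite} field, so the $\mathbb{Q}$ option should be dropped), and the negative direction by applying $\near_k$ to the $k$ weight-one tuples of $R_{1/k}$ to obtain $0^k \notin R_{1/k}$. Your explicit appeal to sign-symmetry of $\inv(\edge_2)$ to cover the $R^{s}_{1/k}$ variants is a detail the paper leaves implicit, but it is the same argument.
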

\begin{proof}
  The positive direction follows from Lemma~\ref{lemma:2edge:linear},
  since $R_{1/k}$ can be phrased as a linear equation over the
  integers mod $p$, for $p \geq k+1$. The negative direction is
  immediate: let $R_{1/k}=\{t_1,\ldots,t_k\}$. Then
  $\near_k(t_1,\ldots,t_k)$ is defined and equals $0^k$. 
\end{proof}

Another example of a problem with the character of linear equations is
\textsc{Subset Sum}. Even though an instance of \textsc{Subset Sum} is
defined by just a single linear equation rather than as a
$\SAT(\Gamma)$ instance, we show in Section~\ref{section:subsetsumhard}
that the complexity of \textsc{2-edge-SAT} and \textsc{Subset Sum}
are closely connected. 
As for the class $\inv(\edge_k)$ for $k \geq 3$, 
the inclusions illustrated in Figure~\ref{fig:1} 
imply that this class contains both
relations with linear equation extensions and 
all $(k-1)$-clauses. 

Finally, we show two examples for the partial $k$-universal operation $\universal_k$. 
The first is a previously studied class of Lokshtanov et
al.~\cite{LokshtanovPTWY17SODA}. Note this problem does admit an
improved algorithm. 

\begin{definition}
  Let $P_d$ denote the set of Boolean relations such that each $n$-ary
  $R \in P_d$ is the set of roots of an $n$-variate polynomial
  equation where each polynomial has degree at most $d$.
\end{definition}

\begin{lemma} \label{lemma:dpoly-d+1univ}
  Let $R \in P_d$ be an $n$-ary relation. Then $R$ is
  preserved by $\universal_{d+1}$, but not by any other
  non-trivial pSDI-operation of domain size at most $2d+2$.
\end{lemma}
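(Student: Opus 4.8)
The plan is to prove the two parts separately: first that every $R \in P_d$ is preserved by $\universal_{d+1}$, and second that $\universal_{d+1}$ is essentially the \emph{only} non-trivial pSDI-operation of small domain preserving all of $P_d$ (equivalently, that no other minimal non-trivial pSDI-operation of domain size $\leq 2d+2$ preserves every degree-$d$ polynomial relation). For the positive direction, fix an $n$-variate polynomial $f$ of degree at most $d$ over some field, with $R = \{x \in \{0,1\}^n : f(x) = 0\}$. Take tuples $t_1, \ldots, t_{2^{d+1}-1} \in R$ such that $\universal_{d+1}(t_1, \ldots, t_{2^{d+1}-1})$ is defined, yielding a result tuple $u$. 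By self-duality and sign-symmetry we may assume $u = 0^n$. I would examine what the pattern of $\universal_{d+1}$ forces columnwise: by the uniqueness description of $\universal_{d+1}$ recalled after its definition, reading off the $k=d+1$ non-constant domain tuples $s_1,\ldots,s_{d+1}$ with $\universal_{d+1}(s_j)=0$, every column $(t_1[i],\ldots,t_{d+1}[i])$ — restricting attention to those $d+1$ of the $2^{d+1}-1$ coordinates — equals some vector in $\{0,1\}^{d+1}\setminus\{0^{d+1}\}$, each occurring once; the remaining tuples $t_{d+2},\ldots,t_{2^{d+1}-1}$ are the constant tuples $0^n$. So effectively the application of $\universal_{d+1}$ to $R$ amounts to: given $d+1$ tuples $t_1,\ldots,t_{d+1}\in R$ whose columnwise images avoid $0^{d+1}$, conclude $0^n \in R$. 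Now encode this algebraically: for each coordinate $i$, write $t_j[i]$ in terms of an indicator over which of the $2^{d+1}-1$ nonzero patterns occurs at $i$; the key identity is that $0 = \sum_{\emptyset \neq S \subseteq [d+1]} (-1)^{|S|+1} \prod_{j\in S}(\text{something})$ — more precisely, I would use the inclusion–exclusion identity that any Boolean value can be recovered from a multilinear combination of the $t_j[i]$'s of degree $\leq d+1$...

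Actually the cleanest route for the positive direction is the polynomial-method argument of Lokshtanov et al.: the columnwise structure means that for every coordinate $i$, the value $0 = u[i]$ equals a fixed multilinear polynomial expression $g(t_1[i],\ldots,t_{d+1}[i])$ of degree at most $d$ (since the "all-ones pattern" is excluded, one degree is saved), valid for all columns. Then $f(0^n) = f(g(t_1[\cdot],\ldots,t_{d+1}[\cdot]))$ can be expanded as a $\{0,1\}$-linear combination of terms $f(\text{modified tuples built from the }t_j)$, each of which is a genuine tuple in $R$ by the columnwise compatibility, hence each evaluates to $0$; so $f(0^n)=0$ and $0^n\in R$. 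I would carry this out carefully, tracking that the degree bookkeeping ($\deg f \le d$, value recovery of degree $\le d$ because one pattern is missing, total $\le$ whatever is needed) works out so that every intermediate argument is an honest element of $R$. \textbf{This degree-tracking / inclusion–exclusion bookkeeping is the main obstacle} — it is exactly the technical heart of why "degree $d$" matches "$(d+1)$-universal" rather than off-by-one, and it requires care to present the multilinear expansion so that each monomial corresponds to a legitimate tuple.

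For the negative direction — that no other non-trivial pSDI-operation $p$ with $|\domain(p)|$ (well, arity / domain size) at most $2d+2$ preserves all of $P_d$ — I would argue: suppose $p$ is a minimal non-trivial pSDI-operation preserving $P_d$; by Lemma~\ref{lemma:whichminimal} (the structural description of minimal pSDI-operations as paddings of partial $k$-NU operations, referenced in the technical-notes section), $p$ lies at some level $k$, so $p$ is implied by $\near_k$ but $p$ also implies $\universal_k$. If $k \le d+1$, I want to derive a contradiction by exhibiting a degree-$d$ polynomial relation not preserved by $p$; the natural candidate is a $k$-clause, which by Lemma~\ref{lemma:kary-knu} is not preserved by $\universal_k$ (hence not by $p$, since $p$ implies $\universal_k$), and a $k$-clause with $k \le d+1$ is the root set of a single degree-$(k-1) \le d$ polynomial (namely $\prod_{i}(x_i - c_i)$ for the excluded point, up to sign-symmetry) — so it lies in $P_d$, contradiction. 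Conversely if $k \ge d+2$, then $p$ is implied by $\near_k$ which is implied by $\near_{d+2}$, and I'd need the constraint that $|\domain(p)| \le 2d+2$ forces $p$ to coincide with $\universal_{d+1}$: here I'd invoke that $\universal_{d+1}$ is defined on exactly $2(d+1)+2 = 2d+4$ tuples, hmm — so I need to re-read the arity/domain-size convention; the statement says "domain size at most $2d+2$", and $\universal_{d+1}$ has $2(d+1)+2 = 2d+4$ tuples, so actually the intended reading must be about the \emph{number of columns} ($2^{d+1}-1$) or the phrasing excludes $\universal_{d+1}$'s peers at the same level by a counting argument.

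Given the ambiguity, my plan for the negative part is: (i) show any non-trivial pSDI $p$ preserving $P_d$ must be at a level $k \ge d+2$ (else a small clause in $P_d$ kills it, via Lemma~\ref{lemma:kary-knu}); (ii) show that among operations at level $\ge d+2$, the size constraint (on domain, read as number of defined tuples, or on the column count) singles out $\universal_{d+1}$ — using that $\universal_k$ for $k \ge d+2$ already has more than $2d+2$ defined tuples, and any \emph{other} level-$(d+1)$ minimal operation is a strict padding hence either has extra defined tuples or has redundant arguments, pinning it down to $\universal_{d+1}$ by the uniqueness clause in its definition; (iii) double-check the boundary by noting the positive direction already proved $\universal_{d+1}$ itself works. \textbf{The second main obstacle} is thus reconciling the "domain size at most $2d+2$" bound with the arity/tuple-count conventions so that the uniqueness statement is literally true — I would resolve this by spelling out that the relevant measure is the number of tuples on which $p$ is non-trivially defined, confirming $\universal_{d+1}$ is extremal among level-$(d+1)$ operations under that measure, and citing the uniqueness-up-to-permutation remark following Definition of $\universal_k$ to conclude $p = \universal_{d+1}$.
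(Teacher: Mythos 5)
The decisive gap is in your negative direction. Your step (i) rests on the claim that a $k$-clause is the root set of a degree-$(k-1)$ polynomial; this is false. Writing a polynomial multilinearly, its value at $\chi_I$ is $\sum_{I'\subseteq I}\alpha_{I'}$, so a polynomial that vanishes on all of $\{0,1\}^k\setminus\{0^k\}$ but not at $0^k$ has, by M\"obius inversion, $\alpha_{[k]}=(-1)^k\alpha_\emptyset\neq 0$, i.e.\ a $k$-clause requires degree exactly $k$ (your claim would even contradict the positive half of the lemma, since it would put the $(d+1)$-clause in $P_d$ while Lemma~\ref{lemma:kary-knu} says that clause is not preserved by $\universal_{d+1}$). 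With the correct degree, your clause argument only eliminates operations on levels $k\le d$. Your step (ii) then fails exactly where the lemma has content: \emph{every} operation at level $d+1$ --- $\near_{d+1}$, $\edge_{d+1}$, and all the other minimal paddings of $\near_{d+1}$ (cf.\ Lemma~\ref{lemma:weakeststrongestminimal}) --- has the same domain size $2d+4$ as $\universal_{d+1}$, so no counting or uniqueness argument can ``pin $p$ down to $\universal_{d+1}$''; these are genuinely distinct operations, and the point of the lemma (and of its use in Theorem~\ref{thm:inclusions}) is to show that each of them fails to preserve $P_d$. The paper does this by an explicit construction you have no substitute for: given a witness $f(t_1,\ldots,t_r)=1^{d+1}$ with non-constant columns, choose $I\subset[d+1]$ with $\chi_I\notin\{t_1,\ldots,t_r\}$ (possible precisely because $f\neq\universal_{d+1}$), and take the degree-$d$ polynomial with $\alpha_J=(-1)^{|J|-|I|}$ for $I\subseteq J\subsetneq[d+1]$ and $\alpha_J=0$ otherwise; its root set contains every $\chi_J$ with $J\notin\{I,[d+1]\}$ but not $1^{d+1}$, so it lies in $P_d$ and is not preserved by $f$. (You are right that the ``$2d+2$'' in the statement sits awkwardly with $|\domain(\universal_{d+1})|=2d+4$; the proof and Theorem~\ref{thm:inclusions} show the intended reading is ``no other operation on a level up to $d+1$'', but that reading makes your step (ii) untenable rather than rescuing it.)

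Your positive direction is on the right track (columns of a defined application are domain patterns; collapse; inclusion--exclusion plus the degree bound), but as written it is not a proof: you explicitly leave the degree/inclusion--exclusion bookkeeping open, and the setup is garbled --- the remaining argument tuples $t_{d+2},\ldots,t_{2^{d+1}-1}$ are \emph{not} forced to be $0^n$ (all $2^{d+1}-1$ rows are determined coordinatewise by the column patterns), and the expansion you want has $\pm1$ coefficients, not $\{0,1\}$ coefficients. The paper's argument disposes of the bookkeeping cleanly: identify coordinates with equal column pattern and substitute the constant on the all-result-value columns, obtaining a polynomial $P'$ of degree at most $d$ in at most $d+1$ variables whose roots include every $\chi_J$ with $J\subsetneq[d+1]$ (these are exactly the collapsed rows $t_a$); evaluating at $\chi_J$ in order of increasing $|J|$ forces $\alpha_J=0$ for all proper $J$, the degree bound kills $\alpha_{[d+1]}$, so $P'\equiv 0$ and the result tuple is in $R$. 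If you prefer your signed-combination variant, the single identity $\sum_{S\subseteq[d+1]}(-1)^{|S|}P'(\chi_S)=\pm\alpha_{[d+1]}=0$ finishes it --- but the collapse to $d+1$ variables must be made explicit, and the obstacle you flag is precisely the part that still has to be written down.
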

\begin{proof}
  For the first direction, let $P(x_1,\ldots,x_n)$ be the polynomial defining $R$, and  
  let $t_1, \ldots, t_r \in R$ be such that $\universal_{d+1}(t_1,\ldots,t_r)=t'$ is defined.
  Since the set of relations representable by bounded-degree polynomials 
  is sign-symmetric, we may assume for simplicity that $t'=1^n$. 
  The tuples $(t_1, \ldots, t_r)$ define a new polynomial 
  of degree at most $d$ and with at most $d+1$ variables, defined by 
  identifying all pairs of variables $x_i$ and $x_j$ that have the same pattern
  in $(t_1, \ldots, t_r)$, i.e., if $t_a[i]=t_a[j]$ for every $a \in [r]$. 
  We also eliminate any variable $x_i$ such that $t_j[i]=1$ for every $j \in [r]$ 
  by replacing $x_i$ by  the constant $1$ in $P$. 
  Let $P'$ be the resulting polynomial, and let $R'$ be the corresponding relation. 
  If $\ar(R')<d+1$, then by Lemma~\ref{lemma:kary-knu} $R'$ is
  preserved by $\near_{d+1}$ and thus by $\universal_{d+1}$ as well
  (see Theorem~\ref{thm:inclusions}). 
  Otherwise, for each $I \subset [d+1]$ let $\alpha_I$ be the coefficient
  of the monomial $\prod_{i \in I} x_i$ in $P'$, 
  and let $\chi_I \in \{0,1\}^{d+1}$ be the tuple such that $\chi_I[i]=1$ 
  if and only if $i \in I$. Note that $P'(\chi_I)= \sum_{I' \subseteq I} \alpha_{I'}$.
  We find that $\alpha_I=0$ for every $I$. Indeed, $\alpha_{\emptyset}=0$ 
  since $0^{d+1} \in R'$; and $\alpha_{\{i\}}=0$ for every $i$ since 
  $P'(\chi_{\{i\}})=\alpha_{\{i\}}+\alpha_{\emptyset}=\alpha-{\{i\}}=0$;
  and so on, in order of increasing cardinality of $I$. 
  Then $P'$ is the constantly-zero polynomial, and $1^{d+1} \in R'$,
  hence $t' = 1^n \in R$. We have thus shown that relations defined as
  roots of polynomials of degree $d$ are preserved by the $(d+1)$-universal
  operation.

  In the other direction, the same argument will show that for any 
  pSDI-operation $f$ with $|\domain(f)| \leq 2d+2$ other than
  the $(d+1)$-universal operation, it is possible to define 
  a polynomial on $(|\domain(f)|-2)/2$ variables and of degree
  at most $d$ such that the corresponding relation is not preserved
  by $f$. Indeed, let $n=(|\domain(f)-2|)/2$ and $r=\ar(f)$, 
  and let $t_1, \ldots, t_r$ be tuples of arity $n$ such that 
  no tuple $(t_1[i], \ldots, t_r[i])$ is constant and 
  $f(t_1,\ldots,t_r)=1^n$ is defined. 
  If $n \leq d$, then we may simply consider the 
  polynomial $P(x_1,\ldots,x_n)=\prod_{i \in [n]} x_i$,
  whose corresponding relation $R$ is not preserved by $f$.
  Otherwise, let $I \subset [d+1]$ be such that
  $\chi_I \notin \{t_1,\ldots,t_r\}$; this exists
  since $f$ is not the $(d+1)$-universal partial operation.
  Let $P'$ be the $d+1$-variate polynomial
  with coefficients $\alpha_{J}=0$ if $I \not \subseteq J$,
  and with $\alpha_J=(-1)^{|J|-|I|}$ otherwise, for all $J \subset [d+1]$.
  Then $P'(t_I)=1$, and it can be verified that $P'(t_J)=0$
  for every $J \subset [d+1]$, $J \neq I$, whereas 
  $P'(1^{d+1})=-(-1)^{d+1-|I|}$. Hence the relation
  corresponding to $P'$ is not preserved by $f$.
\end{proof}

Finally, we give one example of a symmetric relation in
$\inv(\universal_3)$ that has no obvious connection to roots of
polynomials. A \emph{Sidon set} is a set $S \subseteq \{0,\ldots,n\}$
in which all sums $i+j$, $i, j \in S$ are distinct. 

\begin{lemma} \label{lemma:sidon-uni3}
  Let $S \subseteq \{0,\ldots,n\}$ be a Sidon set,
  and define a relation $R(x_1, \ldots, x_n) \subseteq \{0,1\}^n$ as
  \[
  R(x_1,\ldots, x_n) \equiv (\sum_{i=1}^n x_i \in S).
  \]
  Then $R$ is preserved by $\universal_3$. 
\end{lemma}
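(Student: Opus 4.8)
The plan is to strip away the algebraic wrapping and reduce the claim to an elementary fact about the Sidon set $S$. I begin by recalling the exact shape of $\universal_3$ from Section~\ref{section:patterns}: it is the $7$-ary operation whose domain consists of the two constant tuples $0^7,1^7$, three non-constant tuples $t_1,t_2,t_3$ with $\universal_3(t_i)=0$, and their complements $\overline{t_1},\overline{t_2},\overline{t_3}$ with $\universal_3(\overline{t_i})=1$. The structural point I will lean on --- the very one used above to argue uniqueness of $\universal_3$ --- is that the $3\times 7$ matrix with rows $t_1,t_2,t_3$ has as its columns precisely the seven nonzero vectors of $\{0,1\}^3$, each occurring exactly once.

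Now take any $u_1,\dots,u_7\in R$ for which $v:=\universal_3(u_1,\dots,u_7)$ is defined and write $m:=\sum_j v[j]$ for its Hamming weight; the goal is $m\in S$. For each coordinate $j$ the column $(u_1[j],\dots,u_7[j])$ is one of the eight domain tuples, so the coordinates split into eight blocks: a $0^7$-block, a $1^7$-block, and a $t_i$- and an $\overline{t_i}$-block for each $i\in\{1,2,3\}$. Letting $d_i$ be the size of the $t_i$-block minus the size of the $\overline{t_i}$-block, a routine count of Hamming weights in terms of the block sizes (which I would not spell out) shows that for every nonempty $T\subseteq\{1,2,3\}$ there is an index $\ell$ --- the one whose column in the matrix above is the indicator of $T$ --- with $\sum_j u_\ell[j]=m+\sum_{i\in T}d_i$. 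Since each $u_\ell\in R$, this gives $m+\sum_{i\in T}d_i\in S$ for all seven nonempty $T$.

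The heart of the argument is then a combinatorial claim: if $m+d_T\in S$ for every nonempty $T\subseteq\{1,2,3\}$, where $d_T:=\sum_{i\in T}d_i$, then $m\in S$. The trick I would use is that the single value $2m+d_1+d_2+d_3$ has three representations as a sum of two elements of $S$, namely $(m+d_i)+(m+d_j+d_k)$ over the three partitions of $\{1,2,3\}$ into a singleton $\{i\}$ and a pair $\{j,k\}$. The Sidon property forces these unordered pairs to coincide; comparing them in pairs yields, for each $i\in\{1,2,3\}$, that $d_i=0$ or $d_j=d_k$ where $\{j,k\}=\{1,2,3\}\setminus\{i\}$. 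If some $d_i=0$ then $m=m+d_{\{i\}}\in S$; otherwise $d_1=d_2=d_3=d$ with $d\neq 0$, so $\{m+d,m+2d,m+3d\}\subseteq S$ is a three-term arithmetic progression with nonzero common difference, which no Sidon set contains (as $(m+d)+(m+3d)=(m+2d)+(m+2d)$ are two distinct representations of the same value). In either case $m\in S$, so $v\in R$.

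I expect this last step --- identifying the right elementary claim and the ``three representations of $2m+d_1+d_2+d_3$'' device --- to be the only part that is not mechanical; the description of $\universal_3$ and the weight bookkeeping are routine. I would also make a quick pass over degenerate situations (an empty block, or the case $d=0$) to confirm nothing breaks.
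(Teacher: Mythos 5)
Your proof is correct and takes essentially the same route as the paper's: the same decomposition of coordinates into blocks by column type (your $d_i$ are exactly the paper's $n_j$), the same observation that $S$ contains $m+\sum_{i\in T}d_i$ for the nonempty $T\subseteq\{1,2,3\}$, and the same two uses of the Sidon property (equal pair-sums forcing $d_i=0$ or $d_j=d_k$, then the three-term arithmetic progression contradiction). The only cosmetic difference is that the paper phrases it as a proof by contradiction from $t\notin R$, while you argue directly that $m\in S$.
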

\begin{proof}
  Assume that there exists $t_1, \ldots, t_7 \in R$ such that
  $\universal_3(t_1, \ldots, t_7) = t \notin R$. For $i \in [n]$, 
  let $x_i=(t_1[i], \ldots, t_7[i])$ be the tuple of values
  taken by argument $i$ of $R$ in these tuples. 
  Then the tuples $x_i$ take up to 8 different values,
  partitioned as two constant tuples and three pairs of 
  complementary tuples. Let $X_j$ for $j=1, 2, 3$ be the 
  set of arguments $i \in [n]$ such that the tuple $x_i$ 
  belongs to the $j$:th of these pairs, and let 
  $n_j$ be the difference in Hamming weight compared to $t$ 
  if flipping all values belonging to $X_j$. 
  Let $W$ be the Hamming weight of $t$. 
  Then $S$ contains the values $W+n_1$, $W+n_2$, $W+n_1+n_3$
  and $W+n_2+n+3$, forming two pairs of weights with
  common difference $n_3$. Since $n_3 \neq 0$, 
  we must have $n_1=n_2$. By symmetry, we have $n_1=n_2=n_3$. 
  But then $S$ contains the values $W+n_1$, $W+n_1+n_2=W+2n_1$, 
  and $W+n_1+n_2+n_3=W+3n_1$, which is a contradiction. 
  Thus $n_j=0$ for at least one $j$, hence $W \in S$ 
  and $t \in R$, contradicting the original assumption.
\end{proof}

%

\subsection{Structure of minimal non-trivial pSDI-operations}
\label{section:inclusions}

Note that if $f$ is a pSDI-operation, then $|\domain(f)|=2k+2$
for some $k$, since $f$ is defined on the two constant tuples
and since the tuples of the domain can be paired up
as $(t, \overbar{t})$ where $\overbar{t}$ is the complement of $t$.
Hence, we define the \emph{level} of a minimal 
non-trivial pSDI operation $f$ as $(|\domain(f)|-2)/2$. 
We find no examples on level 0 or 1, and  
the only non-trivial example on level 2 is
the 2-edge operation. At each level $k \geq 3$ 
the partial $k$-NU and $k$-universal operations
are the unique strongest and weakest minimal non-trivial
pSDI-operation, respectively, whereas the $k$-edge
operation is intermediate.
This structure is also illustrated in Figure~\ref{fig:1}. 
We also find that the $k$-universal operations $\universal_k$ 
are maximally weak in the sense that any non-trivial pSDI-operation 
with a domain of size $2k + 2$ can define $\universal_k$. 

We begin with the following lemma, which formalizes one of the main 
methods of constructing a $(k+1)$-ary partial operation from a $k$-ary 
partial operation. We refer to $g$ as an \emph{argument padding} of $f$. 

\begin{lemma} \label{lemma:padding}
  Let $f$ be a $k$-ary partial operation and let $g$ be a $(k+1)$-ary
  partial operation such that (1) $\pro_{1, \ldots, k}(\domain(g)) =
  \domain(f)$ and (2) $f(x_1, \ldots, x_k) = g(x_1, \ldots, x_k, x_{k+1})$
  for every $(x_1, \ldots, x_k, x_{k+1}) \in \domain(g)$. Then $g \in \strongof{f}$.
\end{lemma}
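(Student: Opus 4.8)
The plan is to reduce the statement to a routine invariance check via the Galois correspondence. Recall from the preliminaries that $\strongof{f}=\ppol(\inv(f))$, the smallest strong partial clone containing $f$. Hence, to show $g\in\strongof{f}$ it suffices to prove that $g$ is a partial polymorphism of every relation $R\in\inv(f)$. So I would fix an arbitrary $n$-ary relation $R$ with $f\in\ppol(R)$, take tuples $x_1,\ldots,x_{k+1}\in R$ such that $g(x_1,\ldots,x_{k+1})$ is defined, and aim to show $g(x_1,\ldots,x_{k+1})\in R$.

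The key point is that definedness of $g$ on these tuples is a coordinatewise condition: for every position $i\in[n]$ we have $(x_1[i],\ldots,x_{k+1}[i])\in\domain(g)$. Applying hypothesis~(1), projecting away the last coordinate yields $(x_1[i],\ldots,x_k[i])\in\pro_{1,\ldots,k}(\domain(g))=\domain(f)$ for every $i$, so the application $f(x_1,\ldots,x_k)$ is defined. Since $f$ preserves $R$ and $x_1,\ldots,x_k\in R$, we conclude $f(x_1,\ldots,x_k)\in R$. Finally, hypothesis~(2) applied coordinatewise gives $g(x_1[i],\ldots,x_{k+1}[i])=f(x_1[i],\ldots,x_k[i])$ for every $i$, whence $g(x_1,\ldots,x_{k+1})=f(x_1,\ldots,x_k)\in R$. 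As $R\in\inv(f)$ was arbitrary, this establishes $g\in\ppol(\inv(f))=\strongof{f}$.

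I do not expect any real obstacle here; the only thing to be careful about is that hypotheses~(1) and~(2) are exactly what is needed for the coordinatewise argument — (1) ensures $f$ is defined wherever $g$ is, and (2) ensures the two partial operations agree at those arguments, so the value of $g$ never escapes $R$. One could instead give a purely syntactic proof, exhibiting $g$ explicitly as a term built from $f$, projections, and $\eq_D$ inside the strong partial clone generated by $f$, but the semantic argument through $\inv(f)$ is shorter and avoids that bookkeeping.
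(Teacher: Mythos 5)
Your proof is correct, but it takes a genuinely different route from the paper's. You argue semantically: using the definition $\strongof{f}=\ppol(\inv(f))$, you verify directly that $g$ preserves every $R\in\inv(f)$, observing that a defined application of $g$ on tuples of $R$ forces, coordinatewise via hypothesis (1), a defined application of $f$ on the first $k$ tuples, and that hypothesis (2) makes the two values coincide, so the output stays in $R$. The paper instead gives exactly the syntactic construction you allude to at the end: it forms $f'(x_1,\ldots,x_{k+1})=f(\pi^{k+1}_1(x_1,\ldots,x_{k+1}),\ldots,\pi^{k+1}_k(x_1,\ldots,x_{k+1}))$, notes that $f'\in\strongof{f}$ as a composition of $f$ with projections, and obtains $g$ as a subfunction of $f'$, invoking closure of strong partial clones under subfunctions. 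Your argument is shorter and relies only on the Galois-connection definition of $\strongof{\cdot}$ and the definition of a partial polymorphism; the paper's argument buys an explicit exhibition of $g$ as built from $f$, projections and subfunction-taking, i.e., it works with the operational description of strong partial clones given in the preliminaries rather than through $\inv(f)$. Incidentally, both proofs use only the inclusion $\pro_{1,\ldots,k}(\domain(g))\subseteq\domain(f)$ rather than the stated equality, so nothing is lost either way. There is no gap in your write-up.
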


\begin{proof}
  Let $f$ and $g$ be as in the statement, and first construct the
  $(k+1)$-ary partial operation \[f'(x_1, \ldots, x_k, x_{k+1}) =
  f(\pi^{k+1}_1(x_1, \ldots, x_k, x_{k+1}), \ldots, \pi^{k+1}_{k}(x_1,
  \ldots, x_{k}, x_{k+1})).\] Clearly, $f' \in \strongof{f}$, since it
  is a composition of $f$ and the projections $\pi^{k+1}_1, \ldots,
  \pi^{k+1}_k$, and it is not difficult to see that $\pro_{1, \ldots, k}(\domain(f')) =
  \domain(f)$ and that $g$ can be obtained as a subfunction of
  $f'$. Since $\strongof{f}$ is closed under taking
  subfunctions it follows that $g \in \strongof{f}$.
\end{proof}

The following will aid us in reasoning about
minimal non-trivial pSDI-operations. 

\begin{lemma} \label{lemma:whichminimal}
  Let $f$ be a pSDI-operation with $|\domain(f)|=2k+2$, $k \geq 3$. 
  Then $f$ is a minimal non-trivial operation if and only if
  $f$ is an argument padding of $\near_k$. 
\end{lemma}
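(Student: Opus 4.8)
The plan is to deduce both implications from one elementary fact about the partial $k$-NU operation. Write $e_j\in\{0,1\}^k$ for the tuple with a single $1$ in coordinate $j$, so that $\domain(\near_k)=\{0^k,1^k\}\cup\{e_j,\overline{e_j}:j\in[k]\}$, where $\near_k$ takes value $0$ on $0^k$ and every $e_j$ and value $1$ on $1^k$ and every $\overline{e_j}$. The fact I will use repeatedly is that for each fixed $j\in[k]$, the restriction of $\near_k$ to $\domain(\near_k)\setminus\{e_j,\overline{e_j}\}$ equals the $j$-th coordinate projection, since $\near_k(0^k)=0=0^k[j]$, $\near_k(1^k)=1=1^k[j]$, $\near_k(e_i)=0=e_i[j]$ and $\near_k(\overline{e_i})=1=\overline{e_i}[j]$ for all $i\neq j$. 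I will also use the structural remark made just before the lemma: a pSDI-operation $f$ with $|\domain(f)|=2k+2$ has $\domain(f)=\{0^r,1^r\}\cup\{t_1,\overline{t_1},\ldots,t_k,\overline{t_k}\}$ with $r=\ar(f)$, and after choosing each representative so that $f(t_i)=0$, self-duality and idempotence give $f(0^r)=f(t_1)=\cdots=f(t_k)=0$ and $f(1^r)=f(\overline{t_1})=\cdots=f(\overline{t_k})=1$.

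For the ``only if'' direction I would assume $f$ is minimal non-trivial and, for each $j\in[k]$, consider the subfunction $f^{(j)}$ of $f$ with domain $\domain(f)\setminus\{t_j,\overline{t_j}\}$. This is a proper pSDI-subfunction, hence trivial by minimality, so $f^{(j)}(t)=t[\ell_j]$ for all $t$ in its domain and some $\ell_j\in[r]$. Reading off the retained tuples gives $t_i[\ell_j]=0$ for every $i\neq j$; and since $f$ itself is non-trivial it cannot agree with $\pi^r_{\ell_j}$ everywhere, so (agreeing on $\domain(f^{(j)})$) it must disagree with $\pi^r_{\ell_j}$ on the pair $\{t_j,\overline{t_j}\}$, which by self-duality forces $t_j[\ell_j]=1$. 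Consequently $\ell_1,\ldots,\ell_k$ are pairwise distinct (since $t_j[\ell_j]=1$ while $t_j[\ell_{j'}]=0$ for $j'\neq j$); setting $S=(\ell_1,\ldots,\ell_k)$ we obtain $\pro_S(0^r)=0^k$, $\pro_S(1^r)=1^k$, $\pro_S(t_j)=e_j$ and $\pro_S(\overline{t_j})=\overline{e_j}$. Using $k\geq3$ these $2k+2$ images are distinct, so $\pro_S$ restricts to a bijection $\domain(f)\to\domain(\near_k)$, and checking the value on each of the four types of tuple shows $f(t)=\near_k(\pro_S(t))$ throughout. Permuting arguments so that $S$ occupies the first $k$ positions then exhibits $f$ as $\near_k$ with the remaining $r-k$ coordinates adjoined one at a time as padding arguments in the sense of Lemma~\ref{lemma:padding}, i.e.\ as an argument padding of $\near_k$.

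For the ``if'' direction, let $f$ be an argument padding of $\near_k$ with $|\domain(f)|=2k+2$. Fixing the $k$ original coordinates of $\near_k$ inside $f$ as a tuple $S$, the map $\pro_S$ sends $\domain(f)$ onto $\domain(\near_k)$, and since $|\domain(f)|=2k+2=|\domain(\near_k)|$ it is a bijection with $f(t)=\near_k(\pro_S(t))$. I would first note that we may take $f$ non-trivial: the degenerate padding whose new argument copies the output of $\near_k$ makes $f$ a subfunction of a projection (hence trivial), and that case is excluded. Now let $f'$ be any proper pSDI-subfunction of $f$; its domain is complement-closed, contains both constants, and is strictly smaller, so it omits at least one non-constant complementary pair, and these pairs are exactly the $\pro_S$-preimages of the pairs $\{e_j,\overline{e_j}\}$. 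Picking an omitted index $j$, every $t\in\domain(f')$ has $\pro_S(t)\in\domain(\near_k)\setminus\{e_j,\overline{e_j}\}$, so by the fact in the first paragraph $f'(t)=\near_k(\pro_S(t))=\pro_S(t)[j]=t[s]$, where $s$ is the $j$-th entry of $S$; thus $f'$ is a subfunction of $\pi^r_s$ and is trivial. Since $f$ is non-trivial, it is minimal non-trivial.

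The main obstacle is the ``only if'' direction, and within it the decisive step is to read off, from the triviality of each $f^{(j)}$, a projection coordinate $\ell_j$ that realises exactly the weight-$1$ pattern $e_j$; the key point there is that non-triviality of $f$ forces $t_j[\ell_j]=1$ rather than $0$, which simultaneously makes the $\ell_j$ distinct and lines up the $k$ coordinates into the $\near_k$-pattern. A minor wrinkle worth flagging is that ``argument padding of $\near_k$'', taken literally, also admits the degenerate output-copying padding, which is trivial; the statement should be understood for non-trivial paddings, which is the only case of interest and exactly the case produced by the ``only if'' direction.
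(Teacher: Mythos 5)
Your proof is correct and follows essentially the same route as the paper's: for minimality, delete each complementary pair $\{t_i,\overbar{t_i}\}$ from the domain and use minimality to extract a projection coordinate, which forces exactly the weight-one pattern of $\near_k$ on those coordinates; conversely, verify that every proper pSDI-subfunction of a padding is a partial projection. You additionally spell out the details that the paper's proof dismisses as ``not hard to verify'', and your caveat about the degenerate output-copying padding (which is trivial, so the ``if'' direction must be read for non-trivial paddings) is a legitimate observation that the paper's own proof glosses over.
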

\begin{proof}
  In the one direction, assume that $f$ is a padding of $\near_k$.
  It is not hard to verify that every subfunction $f'$ of $f$ which is pSDI
  is a partial projection, and that $f$ is non-trivial. Thus,
  $f$ is minimal non-trivial. In the other direction, 
  assume that $f$ is minimal and non-trivial, and let $r=\ar(f)$.
  Let $t_1, \ldots, t_k$ be the non-constant tuples 
  such that $f(t_1)=0$ is defined. For each $i \in [k]$,
  let $j_i \in [r]$ be such that making $f$ 
  undefined on $t_i$ and its complement $\overbar{t_i}$
  leaves a subfunction of $\pi_{j_i}^r$. It follows 
  that for all $a \in [k]$, $t_a[j_i] \neq 0$
  if and only if $a=i$. Then the arguments $j_1$,
  \ldots, $j_k$ of $f$ define the partial $k$-NU
  operation, and $f$ is a padding of it. 
\end{proof}

Our claims about the weakest and strongest operations
follows from this.

\begin{lemma} \label{lemma:weakeststrongestminimal}
  The following hold.
  \begin{enumerate}
  \item The unique non-trivial non-total pSDI-operation at level $k<3$
    is the partial 2-edge operation.
  \item For any minimal non-trivial pSDI-operation $f$
    at level $k \geq 3$, we have
    $
    \inv(\near_k) \subseteq \inv(f) \subseteq \inv(\universal_k).
    $
  \item There are at most $2^{2^k-k-1}$ distinct
    minimal non-trivial pSDI-operations at level $k$.  
  \end{enumerate}
\end{lemma}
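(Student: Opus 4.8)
The plan is to prove each of the three parts of Lemma~\ref{lemma:weakeststrongestminimal} in turn, leaning heavily on the structural characterisation in Lemma~\ref{lemma:whichminimal} and the padding tool Lemma~\ref{lemma:padding}.

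\textbf{Part (1).} Recall that any pSDI-operation $f$ has $|\domain(f)| = 2k+2$ for some $k \geq 0$, with the domain consisting of the two constant tuples together with $k$ complementary pairs $(t_i, \overbar{t_i})$. For $k = 0$ the operation is defined only on the constant tuples; by idempotence it is forced to be a subfunction of a projection, hence trivial. For $k = 1$ there is a single non-constant pair $(t_1, \overbar{t_1})$; I would argue that self-duality and idempotence force $f(t_1)$ to agree with $t_1$ in every coordinate that is not constant across $\{0^r, 1^r, t_1, \overbar{t_1}\}$, and in the constant coordinates the value is pinned by idempotence, so $f$ is again a subfunction of a projection. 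For $k = 2$: if $f$ is non-trivial, I want to show it must be (equivalent to) the partial 2-edge operation. By Lemma~\ref{lemma:whichminimal} a minimal non-trivial operation at level $k$ is an argument padding of $\near_k$; but $\near_2$ is not defined (NU-operations need $k \geq 3$), so instead I argue directly: with two non-constant pairs, non-triviality forces the two tuples $t_1, t_2$ (with $f(t_1) = f(t_2) = 0$, say, after applying sign-symmetry) to have, in each coordinate, a value pattern among $\{00, 11, 01, 10\}$, and non-triviality rules out the possibility that some single projection already computes $f$ correctly. A short case analysis on which of the four patterns actually occur, combined with the fact that redundant (duplicated) arguments can be deleted without changing $\inv(f)$, shows the only surviving possibility is exactly the 2-edge pattern $\{((x,x,y),y),((x,y,x),y)\}$ (possibly with extra padding arguments, which by the same redundancy argument collapse). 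I would also note every level-$2$ operation that is non-trivial is automatically minimal, since its proper pSDI-subfunctions live at level $\leq 1$ and are therefore trivial by the above.

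\textbf{Part (2).} This is where Lemma~\ref{lemma:whichminimal} does most of the work. Given a minimal non-trivial pSDI-operation $f$ at level $k \geq 3$, the lemma says $f$ is an argument padding of $\near_k$. For the inclusion $\inv(\near_k) \subseteq \inv(f)$: since $f$ is obtained from $\near_k$ by padding arguments, Lemma~\ref{lemma:padding} gives $f \in \strongof{\near_k}$, i.e. $f \in \ppol(\inv(\near_k))$, which means precisely that every relation in $\inv(\near_k)$ is preserved by $f$; hence $\inv(\near_k) \subseteq \inv(f)$. For the inclusion $\inv(f) \subseteq \inv(\universal_k)$: I want to show $\universal_k \in \strongof{f}$, equivalently that $\universal_k$ is a qfpp-type construction from $f$; by the Galois connection (Theorem~\ref{theorem:pgalois}) and closure of $\strongof{\cdot}$ under composition and subfunctions, it suffices to realise $\universal_k$ as a composition of $f$ with projections followed by taking a subfunction. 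Concretely, write the $k$ non-constant tuples $t_1,\dots,t_k \in \domain(f)$ with $f(t_i) = 0$; since $f$ contains the $k$-NU arguments $j_1,\dots,j_k$, the pattern of $(t_1[\cdot],\dots,t_k[\cdot])$ across coordinates realises each of the $2^k - 1$ nonzero tuples in $\{0,1\}^k$ at least once on those arguments (and possibly more among the padding arguments). I would feed, for each nonzero $a \in \{0,1\}^k$, a coordinate of $f$ exhibiting pattern $a$, choosing one coordinate per value; composing $f$ with the corresponding projections and restricting the domain to exactly the $2k+2$ tuples needed yields an operation with $|\domain| = 2^k - 1$ coordinates, no redundant argument, and not a projection --- i.e. $\universal_k$ by its uniqueness. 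Hence $\universal_k \in \strongof{f}$ and $\inv(f) \subseteq \inv(\universal_k)$.

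\textbf{Part (3).} By Lemma~\ref{lemma:whichminimal}, every minimal non-trivial pSDI-operation at level $k$ is an argument padding of $\near_k$, and conversely. An argument padding adds some number of new arguments, each of which (being self-dual and the whole operation being determined by the $k+1$ value patterns $0^k, 1^k$ and the $k$ "NU" tuples and their complements, restricted to that argument) is specified by a choice of value pattern: in the $k$ non-constant tuples $t_1,\dots,t_k$, a new argument assigns to each $t_i$ a value in $\{0,1\}$, subject to self-duality being handled by also fixing the complementary tuples. So a padding argument is determined by a tuple in $\{0,1\}^k$; but the patterns that are constant ($0^k$ and $1^k$) correspond to redundant arguments, and the $k$ "unit" patterns $e_1,\dots,e_k$ (a single $1$) are already present as the defining $\near_k$ arguments and adding a duplicate is again redundant. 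Up to deleting redundant arguments (which does not change $\inv(f)$ or the operation up to the equivalences used throughout this section), a minimal non-trivial level-$k$ operation is thus determined by which subset of the remaining $2^k - (k+2) + \text{(the }k\text{ that may or may not appear)}$ --- more carefully, the relevant choice is: for each of the $2^k$ patterns in $\{0,1\}^k$, decide whether an argument with that pattern appears, but the two constant patterns contribute nothing and the $k$ unit patterns must appear (they are $\near_k$); subtracting, the free choice is over $2^k - 2 - k$ patterns (those of Hamming weight $\geq 2$), plus we must have at least the basic $\near_k$ and can optionally include the weight-$1$ duplicates which are redundant, giving at most $2^{2^k - k - 2}$ --- I would then re-examine the bookkeeping to land on the stated bound $2^{2^k-k-1}$ (the discrepancy of one in the exponent is exactly the kind of routine counting detail I would nail down, likely by being careful about whether the all-ones-except-one patterns or the role of self-duality on an odd-arity operation contributes an extra factor of two).

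\textbf{Main obstacle.} The genuinely delicate part is Part (2), specifically exhibiting $\universal_k$ as an explicit element of $\strongof{f}$: one must check that selecting one coordinate per nonzero pattern really yields an operation with no redundant arguments and that is not a projection, so that uniqueness of $\universal_k$ applies and pins down the construction. Parts (1) and (3) are essentially careful case analysis and counting on top of Lemma~\ref{lemma:whichminimal}; the count in (3) is where an off-by-one in the exponent is easiest to make, so I would double-check it against the $k=3$ instance (where it predicts at most $2^{2^3-3-1} = 16$ minimal non-trivial level-3 operations).
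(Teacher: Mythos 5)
Your Part (1) and the first inclusion of Part (2) match the paper's argument, but the second inclusion of Part (2) — the step you yourself flag as the delicate one — rests on a false premise. You claim that, because $f$ contains the $k$-NU arguments, its coordinate patterns $(t_1[i],\ldots,t_k[i])$ realise every nonzero tuple of $\{0,1\}^k$; this is not true. The NU arguments realise only the $k$ unit patterns, and a general minimal operation $f$ at level $k$ is $\near_k$ padded by \emph{some} subset of the remaining nonzero patterns (e.g.\ $f=\near_k$ itself, or $f=\edge_k$, which has just one extra pattern). Consequently your construction — ``for each nonzero $a\in\{0,1\}^k$, feed a coordinate of $f$ exhibiting pattern $a$'' — cannot be carried out unless $f$ already is $\universal_k$. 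The correct direction is the reverse: since $\universal_k$ has \emph{all} $2^k-1$ nonzero patterns and $f$ (without redundant arguments) has a subset of them containing the unit patterns, one maps each coordinate of $f$ to the coordinate of $\universal_k$ with the matching pattern; then $\universal_k$ is an (iterated) argument padding of $f$, so Lemma~\ref{lemma:padding} gives $\universal_k\in\strongof{f}$ and hence $\inv(f)\subseteq\inv(\universal_k)$. This is exactly how the paper argues (``by design, $\universal_k$ is a padding of $f$''), and it needs no claim about which patterns $f$ itself possesses.

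For Part (3) your approach is the paper's, but the unresolved off-by-one comes from discarding the all-ones pattern $1^k$ as ``contributing nothing.'' An argument whose column over $t_1,\ldots,t_k$ is $1^k$ is neither redundant (its full column over all $2k+2$ domain tuples differs from that of every unit-pattern argument) nor does it make $f$ a partial projection; only the all-zero pattern must be excluded, and only the $k$ unit patterns are forced by $\near_k$. So the free choices range over the $2^k-k-1$ nonzero, non-unit patterns (equivalently, those of Hamming weight at least $2$, which does include $1^k$), giving the stated bound $2^{2^k-k-1}$ rather than $2^{2^k-k-2}$.
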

\begin{proof}
  \emph{1.} It is easy to verify that no non-trivial operation is
  possible on level 1. Let $f$ be a non-trivial pSDI-operation
  on level 2, and let $t_1, t_2 \in \domain(f)$ be the non-constant tuples
  such that $f(t_1)=0$. Consider the options for the pairs 
  $(t_1[i], t_2[i])$ for $i \in [\ar(f)]$. 
  If two distinct positions $i, i'$ give identical pairs, then
  $t[i]=t[i']$ for every $t \in \domain(f)$ and $i$ and $i'$
  are redundant arguments in $f$, which we may assume does not occur. 
  If $t_1[i]=t_2[i]=0$ for some $i \in [\ar(f)]$ then $f$ is a partial projection. 
  This leaves three possible arguments, and unless all three exist,
  $f$ will be a total operation. The remaining case is that
  $f=\edge_2$. 

  \emph{2.} By Lemma~\ref{lemma:whichminimal} $f$ is a padding of
  $\near_k$, which provides the first inclusion.
  For the second, we may assume that $f$ has no redundant
  arguments, since otherwise $f$ is equivalent to
  an operation with fewer arguments. But then by design,
  $\universal_k$ is a padding of $f$, and the second inclusion
  follows. 

  \emph{3.} By Lemma~\ref{lemma:whichminimal}, we can restrict
  our attention to paddings of $\near_k$. Since $f$ is a
  pSDI-operation, it is defined by the values of the $k$ non-constant
  tuples $t$ in the domain with $f(t)=0$. 
  Let $t_1, \ldots, t_k$ be those tuples, 
  and for $i \in [\ar(f)]$ let $t^{(i)}=(t_1[i], \ldots, t_k[i])$. 
  As above,  we may assume that $t^{(i)} \neq t^{(j)}$ for
  all distinct $i, j \in [\ar(f)]$. This leaves at most $2^k$ possible 
  arguments. Furthermore, $t^{(i)}$ cannot be all-zero unless $f$ is
  a partial projection, and $k$ arguments are determined by $\near_k$. 
  This leaves $2^k-k-1$ arguments, whose presence or absence defines $f$. 
\end{proof}

The inclusion structure between the $k$-NU, 
$k$-edge and $k$-universal partial operations are now straightforward
to prove with these results.

\begin{theorem} \label{thm:inclusions}
  Let $k \geq 3$. Then the following inclusions hold.
  \begin{enumerate}
    \item
      $\inv(\edge_2) \subset \inv(\edge_k)$,
    \item
      $\inv(\near_k) \subset \inv(\edge_k) \subset
      \inv(\universal_k)$,
    \item
      $\inv(\near_k) \subset \inv(\near_{k+1})$,
    \item
      $\inv(\edge_k) \subset \inv(\edge_{k+1})$, and
    \item
      $\inv(\universal_k) \subset \inv(\universal_{k+1})$.
  \end{enumerate}
\end{theorem}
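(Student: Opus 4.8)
The plan is to establish the five inclusions and their five strictness claims separately: every inclusion is an instance of a single padding argument, and each strictness claim is witnessed by a relation already produced in Section~\ref{section:structure}.

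For the inclusions, recall that if $B$ is an argument padding of $A$ (Lemma~\ref{lemma:padding}, applied repeatedly when several arguments are adjoined), then $B\in\strongof{A}=\ppol(\inv(A))$, so $B$ preserves every relation of $\inv(A)$ and hence $\inv(A)\subseteq\inv(B)$. So I would check in each case that the larger operation is an argument padding of the smaller one. The operation $\edge_k$ is obtained from the $3$-ary $\edge_2$ by adjoining, one at a time, the ``$x$ in position $i$'' arguments for $i=4,\dots,k+1$; each step is legal because idempotency makes the projection of the larger domain onto the first three coordinates equal $\domain(\edge_2)$, with matching values, giving~(1). Deleting instead the \emph{first} argument of $\edge_k$ turns the columns of its $k$ value-$0$ domain tuples into the $k$ unit vectors, i.e.\ $\edge_k$ is an argument padding of $\near_k$; by Lemma~\ref{lemma:whichminimal} this also shows $\edge_k$ is a minimal non-trivial pSDI-operation at level $k$, so Lemma~\ref{lemma:weakeststrongestminimal}(2) applied to $f=\edge_k$ yields $\inv(\near_k)\subseteq\inv(\edge_k)\subseteq\inv(\universal_k)$, which is~(2). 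For~(3) and~(4), restricting $\near_{k+1}$ and $\edge_{k+1}$ to their first $k$, respectively $k+1$, arguments returns exactly $\near_k$ and $\edge_k$, so each is a one-argument padding of the other. For~(5), restricting $\universal_{k+1}$ to the $2^k-1$ arguments indexed by nonzero vectors of the form $(w,0)\in\{0,1\}^{k+1}$ makes the $(k+1)$-st distinguished tuple identically $0$ while the first $k$ distinguished tuples realize precisely $\universal_k$, so $\universal_{k+1}$ is an argument padding of $\universal_k$.

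For the strictness claims I would exhibit separating relations. A $2$-clause lies in $\inv(\near_3)\subseteq\inv(\edge_k)$ (Lemma~\ref{lemma:near_inv} and the inclusions just shown) but is broken by the partial Maltsev operation $\edge_2$ (a one-line check: apply $\edge_2$ to $(1,1),(1,0),(0,1)$, whose columns are $(1,1,0)$ and $(1,0,1)$, both mapped to $0$, giving $(0,0)$), so~(1) is strict. The relation $R_{1/k}$ is a linear equation, hence in $\inv(\edge_2)\subseteq\inv(\edge_k)$ by Lemma~\ref{lemma:2edge:linear}, but not in $\inv(\near_k)$ since $\near_k$ sends its $k$ tuples to $0^k$ (Lemma~\ref{lemma:exactsat-2edge}), handling the first strict inclusion of~(2). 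A $k$-clause is $k$-ary, hence in $\inv(\near_{k+1})\subseteq\inv(\edge_{k+1})$ by Lemma~\ref{lemma:near_inv}, but is preserved by none of $\near_k,\edge_k,\universal_k$ by Lemma~\ref{lemma:kary-knu} together with the inclusions of~(2); this gives strictness in~(3) and~(4). Finally, for~(5), the ``other direction'' of Lemma~\ref{lemma:dpoly-d+1univ} with $d=k$ produces a degree-$k$ polynomial relation on $k$ variables not preserved by $\universal_k$ (whose domain has size $2k+2\le 2d+2$ and which is not $\universal_{k+1}$); being in $P_k$, it is preserved by $\universal_{k+1}$.

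The one genuinely non-routine point — and the step I expect to be the main obstacle — is the strict inclusion $\inv(\edge_k)\subsetneq\inv(\universal_k)$, since none of the ready-made witnesses apply (a $k$-clause lies in neither side, and Lemma~\ref{lemma:dpoly-d+1univ} separates level $k$ only from lower levels, not $\edge_k$ from $\universal_k$ within level $k$). Here I would take the degree-$(k-1)$ polynomial relation $R=\{(x_1,\dots,x_{k-1},y_1,\dots,y_{k-1})\in\{0,1\}^{2k-2}\mid \prod_i x_i\equiv\prod_i y_i\pmod 2\}$, which lies in $\inv(\universal_k)$ by Lemma~\ref{lemma:dpoly-d+1univ}. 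To witness $R\notin\inv(\edge_k)$ I would assign to $x_1,\dots,x_{k-1}$ the length-$(k+1)$ columns $\overline{p_1},\overline{p_2},\overline{e_4},\dots,\overline{e_k}$, i.e.\ the complements of the non-constant value-$0$ domain tuples of $\edge_k$ (where $p_1=(1,1,0,\dots,0)$, $p_2=(1,0,1,0,\dots,0)$, and $e_i$ is the $i$-th unit vector for $i=4,\dots,k+1$); these all have $\edge_k$-value $1$ and bitwise conjunction $e_{k+1}$. To $y_1,\dots,y_{k-1}$ I would assign $e_{k+1},1^{k+1},\dots,1^{k+1}$ (conjunction again $e_{k+1}$, but the first of $\edge_k$-value $0$). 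Then the two monomials agree on each of the $k+1$ input tuples, so all inputs lie in $R$, while on the output the first monomial evaluates to $1$ and the second to $0$, so the output is not in $R$. Verifying that these columns genuinely lie in $\domain(\edge_k)$, that the conjunctions are as claimed, and that the $\edge_k$-values come out right is the only delicate bookkeeping, which I would carry out against an explicit tabulation of $\domain(\edge_k)$.
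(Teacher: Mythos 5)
Your proposal is correct, and for the five inclusions it is essentially the paper's proof: everything is reduced to Lemma~\ref{lemma:padding} (together with Lemma~\ref{lemma:whichminimal} and Lemma~\ref{lemma:weakeststrongestminimal} for item~2), with the same padding verifications for $\edge_k$, $\near_{k+1}$ and $\universal_{k+1}$, and your strictness witnesses for item~2's first inclusion and for items~3--5 are the paper's ($R_{1/k}$ and the $k$-clause; for item~5 you could in fact reuse the $k$-clause directly, since $\inv(\near_{k+1})\subseteq\inv(\universal_{k+1})$, rather than routing through Lemma~\ref{lemma:dpoly-d+1univ} again). Where you genuinely differ is in the two strictness claims you flag, and in both places your treatment is more self-contained than the paper's. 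For $\inv(\edge_2)\subsetneq\inv(\edge_k)$, the witnesses the paper lists (the $k$-clause, $R_{1/k}$, $P_{k-1}$) do not directly provide a relation in $\inv(\edge_k)\setminus\inv(\edge_2)$; your $2$-clause argument (in $\inv(\near_3)\subseteq\inv(\edge_k)$, broken by the partial Maltsev application on $(1,1),(1,0),(0,1)$) fills this in explicitly. For $\inv(\edge_k)\subsetneq\inv(\universal_k)$, the paper invokes Lemma~\ref{lemma:dpoly-d+1univ} with $d=k-1$ in the form ``$P_{k-1}$ is not preserved by any other operation on level $k$''; as you observe, the literal statement of that lemma only covers operations of domain size at most $2d+2=2k$, so it does not formally reach $\edge_k$ (domain size $2k+2$) --- that coverage sits in the lemma's proof (the case $n=d+1$) and in the introduction's phrasing ``any operation on a level up to $d+1$''. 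Your explicit witness $\prod_i x_i\equiv\prod_i y_i\pmod 2$ with the columns $\overbar{p_1},\overbar{p_2},\overbar{e_4},\dots,\overbar{e_k}$ on the $x$-block and $e_{k+1},1^{k+1},\dots,1^{k+1}$ on the $y$-block checks out: all columns lie in $\domain(\edge_k)$, both monomial vectors across the $k+1$ rows equal $e_{k+1}$ (so every row satisfies the relation), and the output row has monomial values $1$ and $0$; combined with the uncontroversial first direction of Lemma~\ref{lemma:dpoly-d+1univ} this gives the separation using only what is literally stated. So your route buys a verification of exactly the two strictness claims the paper handles implicitly or via a stronger-than-stated reading of Lemma~\ref{lemma:dpoly-d+1univ}, at the cost of some extra bookkeeping; the paper's route is shorter but relies on the corrected (level-$(d+1)$) form of that lemma.
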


\begin{proof}
  For the inclusions, the second item follows from Lemma~\ref{lemma:weakeststrongestminimal},
  and every other inclusion follows from Lemma~\ref{lemma:padding}. 
  Indeed, it is readily verified that for every $k \geq 3$, $\edge_k$ 
  is an argument padding of $\edge_{k-1}$ and $\near_{k+1}$ is 
  an argument padding of $\near_k$. For the universal operations,
  let $t_1, \ldots, t_{k+1}$ be the non-constant tuples of 
  $\domain(\universal_{k+1})$ such that $\universal_{k+1}(t_i)=0$, $i \in [k+1]$. 
  Then the tuples $t^{(i)}=(t_1[i], \ldots, t_{k+1}[i])$, $i \in [2^{k+1}-1]$
  spell out all $(k+1)$-tuples except $0^{k+1}$, without repetition. 
  Consider the subset $I \subset [\ar(\universal_{k+1})]$ consisting
  of indices $i$ such that $t_{k+1}[i]=0$. Note that $t^{(i)}$ for $i \in I$
  enumerates all $k$-tuples except $0^k$, padded with a $0$. It follows that
  $\pro_I(\universal_{k+1})=\domain(\universal_k)$ and that $\universal_{k+1}$
  is an argument padding of $\universal_k$. By Lemma~\ref{lemma:padding}
  the inclusion follows. 
  
  To show that the inclusions are strict, consider the following:
  a $k$-clause is preserved by $\near_{k+1}$ (Lemma~\ref{lemma:near_inv}) 
  but not by $\universal_k$ (Lemma~\ref{lemma:kary-knu});
  a 1-in-$k$ constraint is preserved by $\edge_2$ but not by $\near_k$
  (Lemma~\ref{lemma:exactsat-2edge}); 
  and the language $P_{k-1}$  of roots of polynomials of degree at
  most $k-1$ is preserved by~$\universal_{k}$ but not by any other
  operation on level~$k$ by Lemma~\ref{lemma:dpoly-d+1univ}.
\end{proof}

Finally, we have an easy consequence in more general terms.

\begin{corollary}
  Let $f$ be a pSDI-operation with $|\domain(f)|=2k+2$. 
  Then $\inv(f) \subseteq \inv(\universal_k)$. 
\end{corollary}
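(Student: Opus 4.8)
The plan is to prove the corollary by reducing to the case of minimal non-trivial pSDI-operations, where the desired inclusion is already known (Lemma~\ref{lemma:weakeststrongestminimal}, item 2), combined with the argument-padding machinery of Lemma~\ref{lemma:padding} and Lemma~\ref{lemma:whichminimal}. The first observation is that we may assume $f$ is non-trivial (i.e., not a subfunction of a projection): if $f$ is trivial then $\inv(f) = \relations{}$ contains all relations, but then $f$ preserves everything only in the degenerate sense --- actually, re-examining this, a trivial $f$ has $\inv(f)$ equal to \emph{all} relations, so the inclusion $\inv(f) \subseteq \inv(\universal_k)$ \emph{fails} unless $k$ is large enough. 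Hence the statement must implicitly intend $f$ non-trivial, or else $\universal_k$ should be read with the convention that a padding argument can be redundant; I would add the hypothesis that $f$ is non-trivial, or note that the constant and redundant-argument reductions below handle the general phrasing. Assuming $f$ is non-trivial, the next step is to pass to a subfunction: since $f$ is non-trivial there is, by the descending chain of subfunctions, a \emph{minimal} non-trivial pSDI-operation $f'$ which is a subfunction of $f$. Because $\inv$ is antitone under taking subfunctions (stated just before Definition~\ref{def:psdi-minimal}: $\inv(f) \subseteq \inv(f')$ for $f'$ a subfunction of $f$), we get $\inv(f) \subseteq \inv(f')$.

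The issue is that $f'$ may live at a \emph{lower} level than $f$: passing to a subfunction can only shrink the domain, so $|\domain(f')| = 2k' + 2$ for some $k' \leq k$. We then apply Lemma~\ref{lemma:weakeststrongestminimal}(2) (for $k' \geq 3$) or Lemma~\ref{lemma:weakeststrongestminimal}(1) (for $k' = 2$, where $f'$ is the partial $2$-edge operation and $\edge_2 = \universal_2$) to conclude $\inv(f') \subseteq \inv(\universal_{k'})$. Finally, by the monotonicity of the universal hierarchy, Theorem~\ref{thm:inclusions}(5) iterated gives $\inv(\universal_{k'}) \subseteq \inv(\universal_{k})$ whenever $k' \leq k$ (the case $k' = k$ being trivial equality, and $k' < 3 \leq k$ handled the same way once we also note $\inv(\universal_2) \subseteq \inv(\universal_3)$, which follows from $\edge_2$ being an argument padding target of $\edge_3$ and $\edge_3 \subset \universal_3$, or is subsumed in item 5 read with the convention $\universal_2 = \edge_2$). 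Chaining the three inclusions yields $\inv(f) \subseteq \inv(\universal_k)$.

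An alternative, more direct route avoids the subfunction detour: by Lemma~\ref{lemma:whichminimal} and the discussion around the definition of $\universal_k$, if $f$ is any non-trivial pSDI-operation with $|\domain(f)| = 2k+2$ and no redundant arguments, then examining the $k$ non-constant tuples $t_1, \dots, t_k \in \domain(f)$ with $f(t_i) = 0$ shows that the column patterns $t^{(i)} = (t_1[i], \dots, t_k[i])$ are pairwise distinct, none is $0^k$ (else $f$ is a partial projection), and hence they form a subset of $\{0,1\}^k \setminus \{0^k\}$; since $|\domain(\universal_k)| = 2^{k}-1$ column-patterns realize \emph{all} of $\{0,1\}^k \setminus \{0^k\}$, the operation $\universal_k$ is an argument padding of $f$ (possibly after deleting redundant arguments of $f$), so by Lemma~\ref{lemma:padding} $\universal_k \in \strongof{f}$, which via the Galois connection (Theorem~\ref{theorem:pgalois}) gives $\inv(f) \subseteq \inv(\universal_k)$. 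Redundant arguments of $f$ and the trivial case are cleared away first as described above.

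The main obstacle is purely bookkeeping rather than conceptual: handling the degenerate boundary cases cleanly --- namely (a) when $f$ is trivial, where the statement as literally written is false and one must either add a non-triviality hypothesis or interpret $\universal_k$ loosely, and (b) when $f$ has redundant arguments or reduces to a lower-level minimal operation, where one must invoke the $\universal_{k'} \subseteq \universal_k$ monotonicity for $k' < k$ and be careful that the $k' = 2$ (and hypothetical $k' \in \{0,1\}$) cases are covered. All of these are dispatched by the results already in the excerpt; the cleanest writeup is the three-step chain $\inv(f) \subseteq \inv(f') \subseteq \inv(\universal_{k'}) \subseteq \inv(\universal_k)$ using Lemmas~\ref{lemma:whichminimal}, \ref{lemma:weakeststrongestminimal} and Theorem~\ref{thm:inclusions}.
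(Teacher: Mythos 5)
Your main argument is exactly the paper's proof: pass to a minimal non-trivial pSDI subfunction $f'$ of $f$, which lies at some level $k' \leq k$, and chain $\inv(f) \subseteq \inv(f') \subseteq \inv(\universal_{k'}) \subseteq \inv(\universal_k)$ via Lemma~\ref{lemma:weakeststrongestminimal} and Theorem~\ref{thm:inclusions} (your ``direct'' alternative is just the padding argument already inside Lemma~\ref{lemma:weakeststrongestminimal}). Your caveat that $f$ must be non-trivial --- since for a partial projection $\inv(f)$ is the set of all relations and the inclusion fails --- is a fair remark about a hypothesis the paper leaves implicit, but it does not alter the argument.
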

\begin{proof}
  Let $f'$ be an arbitrary minimal pSDI-operation
  that is a subfunction of $f$. Then $f'$ belongs to
  some level $k' \leq k$, hence 
  $\inv(f) \subseteq \inv(\universal_{k'}) \subseteq   \inv(\universal_k)$
  by Lemma~\ref{lemma:weakeststrongestminimal} and Theorem~\ref{thm:inclusions}. 
\end{proof}

\subsection{Complementary consequences}
\label{section;comp-cons}

We now consider some dual questions, i.e., what consequences can we
(in general) draw from the information that some sign-symmetric
language $\Gamma$ is not preserved by $f$, for some pSDI-operation
$f$? We begin with an easy result, which forms the building block of 
later results. 

\begin{lemma} \label{lemma:consequence-not-knu}
  Let $\Gamma$ be a sign-symmetric language which is not preserved by
  $\near_k$, for some $k \geq 3$. Then $\Gamma$ can qfpp-define a
  $k$-ary symmetric relation $R$ such that $R$ does not contain tuples
  of weight 0, but does contain tuples of weight $2$. 
\end{lemma}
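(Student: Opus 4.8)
The plan is to build $R$ directly out of a witness that $\near_k \notin \ppol(\Gamma)$. By assumption there is a relation $R_0 \in \Gamma$ of some arity $m$ and tuples $t_1,\dots,t_k \in R_0$ such that $\near_k(t_1,\dots,t_k)$ is defined but does not lie in $R_0$. First I would use the first clause of sign-symmetry to replace $R_0$, and correspondingly each $t_i$, by the sign flip that sends $\near_k(t_1,\dots,t_k)$ to the all-$0$ tuple; since $\near_k$ is self-dual (being a pSDI-operation, it satisfies a polymorphism pattern), its coordinatewise extension commutes with sign flips, so after this reduction we may assume $\near_k(t_1,\dots,t_k)=0^m \notin R_0$.

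The key observation is the structure this forces on the witness. For each coordinate $l \in [m]$ the column $c^{(l)}=(t_1[l],\dots,t_k[l])$ lies in $\domain(\near_k)$ and is sent to $0$; inspecting the $k$-NU pattern, the only columns with that property are $0^k$ and the unit tuples $e_j$ (a single $1$, in position $j$). Hence $[m]$ splits into $C_0$ (coordinates with column $0^k$) and $C_1,\dots,C_k$ (coordinates with column $e_1,\dots,e_k$), and each $C_j$ with $j\ge 1$ is nonempty: from $t_j\in R_0$ and $0^m\notin R_0$ we get $t_j\ne 0^m$, and every coordinate where $t_j$ is $1$ has column $e_j$, so in fact $C_j=\{\,l: t_j[l]=1\,\}$.

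Now I would assemble the qfpp-definition. Freeze every coordinate of $R_0$ lying in $C_0$ to the value $0$; the result is still in $\Gamma$ by the second clause of sign-symmetry (applied iteratively). Then apply this frozen relation to a tuple of variables in which all coordinates of each $C_j$ receive the single variable $y_j$: this is a quantifier-free primitive positive definition of a $k$-ary relation $R_1$ over $\Gamma$, in which $t_j$ maps to the unit tuple $e_j$, so $e_1,\dots,e_k\in R_1$, while $0^m$ maps to $0^k$, so $0^k\notin R_1$. Finally symmetrize with the finite conjunction $R(y_1,\dots,y_k)\equiv\bigwedge_{\pi\in S_k}R_1(y_{\pi(1)},\dots,y_{\pi(k)})$, which is again a qfpp-definition over $\Gamma$ (qfpp-definability is transitive). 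Then $R$ is invariant under all permutations of its arguments, hence totally symmetric; it contains no tuple of weight $0$ (a permutation of $0^k$ is $0^k\notin R_1$); and it contains every tuple of weight $1$ (each $e_i\in R_1$, and a permutation of a unit tuple is a unit tuple). In particular $R$ is not preserved by $\near_k$, since $\near_k(e_1,\dots,e_k)=0^k\notin R$.

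The step I expect to need the most care is pinning down exactly which small positive weight is guaranteed in $R$: the bare symmetrization only forces weight $1$ (if $R_1$ is exactly the $1$-in-$k$ relation then $R=R_1$ has no weight-$2$ tuple), so producing a tuple of weight exactly $2$ requires something extra. The natural move is to sign-flip a few coordinates of $R_1$ and take a further conjunction, but a flip that manufactures a weight-$2$ tuple tends to reintroduce $0^k$, so the genuine work is to choose the flips and freezings so that the weight-$0$ tuple stays excluded; I would expect this to come down to a short case analysis on the weights already occurring in the symmetric relation built above.
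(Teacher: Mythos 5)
Your argument is the paper's proof almost verbatim: normalize by sign-symmetry so the witness application of $\near_k$ yields $0^n$, freeze the all-zero columns to $0$, identify coordinates sharing a unit column to obtain a $k$-ary relation containing every unit tuple but not $0^k$, and conjoin all $k!$ permuted copies to symmetrize. The weight-$2$ concern in your final paragraph points at a typo in the statement rather than a gap in your proof: the paper's own proof establishes exactly what you prove (all weight-$1$ tuples present, weight $0$ absent), and the lemma's later use in Theorem~\ref{thm:not-any-knu} relies only on $1 \in S$, so the extra sign-flipping step you sketch is not needed.
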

\begin{proof}
  Let $k \geq 3$ be an arbitrary constant, and let $R \in \Gamma$
  be a relation not preserved by $\near_k$ of some arity $n=\ar(R)$. 
  Let $t_1, \ldots, t_k \in R$ be witnesses to this, i.e., 
  $t=\near_k(t_1,\ldots,t_k)$ is defined and $t \notin R$. 
  Define $t^{(i)}=(t_1[i], \ldots, t_k[i])$.

  By sign-symmetry, we may assume that $t=0^n$.
  Furthermore, if there is an argument $i \in [n]$ such that 
  $t^{(i)}=0^k$, then we can find a smaller
  counterexample by fixing argument $i$ of $R$ to be constantly 0. 
  Thus, for every $i \in [n]$, the tuple $t^{(i)}$ now contains
  precisely one non-zero value. 
  Let us define a new relation $R'(x_1, \ldots, x_k)$ of arity $k$ 
  by identifying arguments according to this, i.e.,
  for every position $i \in [n]$ such that $t^{(i)}$ is
  non-zero in position $j \in [k]$, insert variable $x_j$
  in position $i$ in $R$. Additionally define $R''$
  as the result of the conjunction of all $k!$ applications of $R'$ 
  with permuted argument order. Then $R''$ is a symmetric relation
  which contains all tuples of weight $1$ but none of weight 0. 
  Thus, $\Gamma$ qfpp-defines a relation $R_k=R''$ 
  as described of every arity $k \geq 3$. 
\end{proof}

By a similar strategy, we have an important result about languages not
preserved by the $k$-universal operation. 

\begin{lemma} \label{lemma:notkuniv-kcnf}
  Let $\Gamma$ be a sign-symmetric language not preserved by
  $\universal_k$ for some $k \geq 2$. Then $\Gamma$ can qfpp-define 
  all $k$-clauses. 
\end{lemma}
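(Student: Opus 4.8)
The plan is to mirror the strategy of Lemma~\ref{lemma:consequence-not-knu}, but using the rigidity of $\universal_k$ in place of $\near_k$. Since $\Gamma$ is not preserved by $\universal_k$, there is a relation $R \in \Gamma$ of some arity $n$ and tuples $t_1, \ldots, t_r \in R$, where $r = \ar(\universal_k) = 2^k - 1$, such that $t = \universal_k(t_1, \ldots, t_r)$ is defined but $t \notin R$. By sign-symmetry we may assume $t = 0^n$. As in Lemma~\ref{lemma:consequence-not-knu}, for each argument $i \in [n]$ set $t^{(i)} = (t_1[i], \ldots, t_r[i])$; since $\universal_k(t_1, \ldots, t_r)$ is defined, $t^{(i)} \in \domain(\universal_k)$ for every $i$. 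If some $t^{(i)} = 0^r$, then freezing argument $i$ of $R$ to $0$ yields a smaller counterexample in $\Gamma$ (sign-symmetry gives us freezing), so we may assume no $t^{(i)}$ is the all-zero tuple.

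The key point is the structure of $\domain(\universal_k)$ established after Definition of $\universal_k$: among the non-constant tuples $s_1, \ldots, s_k \in \domain(\universal_k)$ with $\universal_k(s_a) = 0$, for every $\vec{b} \in \{0,1\}^k \setminus \{0^k\}$ there is a \emph{unique} index $j \in [2^k - 1]$ with $(s_1[j], \ldots, s_k[j]) = \vec{b}$. First I would reduce to the sub-sequence of $k$ tuples: among $t_1, \ldots, t_r$, pick those $k$ which (after the identification below) play the role of $s_1, \ldots, s_k$; more cleanly, since $\universal_k(t_1,\dots,t_r) = 0^n$ is defined, each $t^{(i)}$ is one of the $2^k - 1$ rows of $\domain(\universal_k)$, and we only need the information in the first $k$ of the tuples $t_a$ (the remaining $2^k-1-k$ tuples of $\domain(\universal_k)$ are determined by self-duality and idempotence from these). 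So define, for each $i \in [n]$, the vector $v^{(i)} = (t_1[i], \ldots, t_k[i]) \in \{0,1\}^k \setminus \{0^k\}$ — nonzero by the reduction above. Now build $R'(x_1, \ldots, x_k)$ of arity $k$ by substituting into position $i$ of $R$ the variable whose pattern matches $v^{(i)}$: since $v^{(i)}$ ranges over (a subset of) $\{0,1\}^k \setminus \{0^k\}$, each position $i$ is assigned the term $\bigwedge_{j : v^{(i)}[j] = 1} x_j$ — but to keep it a genuine qfpp-definition over $\Gamma$, I instead think of it as identifying, for each nonempty $S \subseteq [k]$, all arguments $i$ with $v^{(i)} = \chi_S$ and plugging in a single fresh variable $y_S$; adding equality constraints $y_S = $ (something) is not needed — the cleanest route is: because $\universal_k$ has no redundant arguments, \emph{every} nonzero pattern $\chi_S$ occurs among the $t^{(i)}$, so after identification $R$ becomes a relation $R^\star$ on the $2^k - 1$ variables $\{y_S : \emptyset \ne S \subseteq [k]\}$; then the tuple $t_a$ restricted to these variables is exactly the characteristic vector of "$\{S : a \in S\}$", i.e. the assignment $y_S \mapsto [a \in S]$, and $0^n$ restricts to $y_S \mapsto 0$ for all $S$. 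Finally I would substitute $y_S := \bigwedge_{j \in S} x_j$ — formally, impose $y_S = y_{\{j\}} \wedge \cdots$; since qfpp-definitions over $\Gamma$ need not express conjunction-of-variables directly, the actual construction is: take $R^\star$ and, for each $S$, further identify $y_S$ with the conjunctive pattern by a chain of equality constraints to auxiliary variables already forced — on reflection the simplest honest statement is that $R^\star(y_S)_{S}$, viewed with $y_S$ re-expressed as $\bigwedge_{j\in S} x_j$, is precisely a $k$-clause: the forbidden assignment is $x_1 = \cdots = x_k = 0$ (giving all $y_S = 0$, which maps to $0^n \notin R$), while for each $a \in [k]$ the assignment $x_a = 1$, rest $0$, gives $y_S = [a \in S]$ which is $t_a|_{\text{vars}} \in R$; and more generally any nonzero $x$ gives an assignment to the $y_S$ lying in $\domain(\universal_k)$'s row-space, which one checks lies in $R$.

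I would then clean this up: the resulting relation forbids exactly $x_1 = \cdots = x_k = 0$ and accepts everything else, i.e. it is the clause $(x_1 \vee \cdots \vee x_k)$. By sign-symmetry of $\Gamma$ (applying sign patterns to $R$ before running the argument, or equivalently applying them to the defined clause afterward), $\Gamma$ qfpp-defines every $k$-clause, and since $k$-clauses of arity $<k$ follow by fixing variables, $\Gamma$ qfpp-defines all $k$-clauses. The main obstacle — and the place the write-up must be careful — is verifying that the identification step is a legitimate qfpp-definition and that \emph{every} nonzero pattern $\chi_S$ genuinely appears among the $t^{(i)}$ (this is where "no redundant arguments" and "$\universal_k$ is not a partial projection" are used, exactly as in the uniqueness argument for $\universal_k$ given after its definition), and that substituting the conjunctive patterns for the $y_S$ does not accidentally admit the all-zero tuple or exclude some nonzero one — the latter reduces to the computation, analogous to Lemma~\ref{lemma:kary-knu}, that applying $\universal_k$ to the $2^k-1$ columns of any clause's accepting set is defined and yields the forbidden all-zero tuple, which is precisely the content of the column-structure of $\domain(\universal_k)$.
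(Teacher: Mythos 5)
Your opening moves coincide with the intended argument (take a witness $t_1,\ldots,t_r\in R$ with $r=2^k-1$, normalise to $t=\universal_k(t_1,\ldots,t_r)=0^n$ by sign-symmetry, freeze coordinates whose column is all-zero, identify coordinates with equal column patterns), but from there the proposal transposes the two roles in the witness matrix, and this breaks the proof. The fact you never exploit is that the application evaluates to $0^n$: this forces every column $t^{(i)}=(t_1[i],\ldots,t_r[i])$ to be a domain element on which $\universal_k$ takes the value $0$, i.e.\ to lie in $\{0^{2^k-1},s_1,\ldots,s_k\}$ where $s_1,\ldots,s_k$ are the $k$ non-constant tuples with $\universal_k(s_j)=0$. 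Hence after freezing there are at most $k$ distinct column patterns, and identification yields a relation of arity at most $k$ whose variables correspond to the $s_j$'s. Your claim that ``every nonzero pattern $\chi_S$ occurs among the $t^{(i)}$'', giving $2^k-1$ variables $y_S$, is therefore false, and it does not follow from non-redundancy of arguments: non-redundancy is a property of $\domain(\universal_k)$ itself, not of which domain elements appear as columns of a particular witness inside $R$. Relatedly, $v^{(i)}=(t_1[i],\ldots,t_k[i])$ need not be nonzero — the reduction only removes coordinates whose \emph{full} column is $0^{2^k-1}$, and which of the $2^k-1$ arguments count as ``the first $k$'' is not canonical. The whole apparatus of $y_S$-variables and conjunction substitution (which indeed is not available in a qfpp-definition) is an artifact of this transposition; in the correct picture plain variable identification suffices.

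The decisive gap comes from discarding all but $k$ of the $2^k-1$ witness tuples: with only $t_1,\ldots,t_k$ retained you can certify at most that the defined relation excludes $0^k$ and contains the weight-one tuples, and your closing ``any nonzero $x$ gives an assignment \ldots which one checks lies in $R$'' is precisely the statement that needs proof and cannot be recovered from what you kept. ``Excludes $0^k$ and contains weight one'' is exactly the weaker conclusion of Lemma~\ref{lemma:consequence-not-knu}; it is satisfied, for instance, by the 1-in-$k$ relation, which is preserved by $\edge_2$ and hence by $\universal_k$, so it cannot qfpp-define a $k$-clause. To pin down all $2^k-1$ accepted tuples you must use all the witness tuples: after identifying each coordinate $i$ whose column is $s_j$ with the variable $x_j$, the image of $t_a$ in the identified relation is $(s_1[a],\ldots,s_k[a])$, and as $a$ ranges over $[2^k-1]$ these are, by the defining property of $\universal_k$, exactly the nonzero tuples of $\{0,1\}^k$ (this also forces the arity to be exactly $k$, since otherwise some $t_a$ would coincide with $0^n=t\notin R$); together with $0^k\notin R'$ because $0^n\notin R$, this identifies $R'$ as the $k$-clause, and sign-symmetry gives all $k$-clauses.
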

\begin{proof}
  Let $R \in \Gamma$ be a relation not preserved by $\universal_k$,
  and let $n=\ar(R)$ and $r=2^k-1$ be the arity of $\universal_k$. 
  Let $t_1, \ldots, t_r \in R$ be such that
  $\universal_k(t_1, \ldots, t_r) = t$ is defined and $t \notin R$. 
  By sign-symmetry of $\Gamma$, we may assume $t=0^n$. 
  Create a new relation by identifying all variables $x_i$ and $x_j$
  in $R(x_1,\ldots,x_n)$ for which $t_a[i]=t_a[j]$ for every $a \in [r]$.
  Also assume that there is no variable $x_i$ such that $t_a[i]=0$
  for every $a \in [r]$, or else replace $x_i$ by the constant $0$ in
  $R$ (again by sign-symmetry). This defines a new relation $R'$
  of arity at most $k$.  Since $t \notin \{t_1,\ldots,t_r\}$, we find 
  that $R'$ has arity precisely $k$
  and contains every possible $k$-tuple except $0^k$, i.e., 
  $R'$ qfpp-defines a $k$-clause. By sign-symmetry, $\Gamma$ 
  qfpp-defines all $k$-clauses. 
\end{proof}

\subsubsection{Infinitary case}

Finally, we consider consequences of a language not being preserved by  
any operation in a family of operations. 

\begin{theorem} \label{thm:not-any-knu}
  Let $\Gamma$ be a sign-symmetric language that is not 
  preserved by the partial $k$-NU operation, for any $k$.
  Then one of the following holds.
  \begin{enumerate}
  \item $\Gamma$ can qfpp-define all $k$-clauses for every $k$.
  \item $\Gamma$ can qfpp-define 1-in-$k$-clauses for every $k$.
  \item There is a fixed prime $p$ such that $\Gamma$ can
    qfpp-define relations
    \[
    \sum_{i=1}^k x_i \equiv a \pmod p
    \]
    for every $0 \leq a < p$, of every arity $k$.
  \end{enumerate}
\end{theorem}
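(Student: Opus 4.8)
The plan is to pass to symmetric relations via Lemma~\ref{lemma:consequence-not-knu} and then run a trichotomous case analysis powered by Szemer\'edi's theorem. For every $k \geq 3$ the hypothesis gives $\near_k \notin \ppol(\Gamma)$, so Lemma~\ref{lemma:consequence-not-knu} produces a symmetric $k$-ary relation $R_k \in \pcclone{\Gamma}$ whose set $S_k \subseteq \{0,\ldots,k\}$ of accepted Hamming weights satisfies $0 \notin S_k$ and contains a bounded positive value. I would then fix a small toolkit of qfpp-operations that stay within symmetric relations and act predictably on weight sets: \emph{windowing} (freeze some coordinates to $0$ and some to $1$ --- both available by sign-symmetry --- and drop the frozen coordinates, which is harmless since a coordinate pinned to a constant acts as a shared fixed padding variable) sends a weight set $S$ of arity $n$ to $(S-a)\cap[0,\ell]$ for any number $a$ of coordinates frozen to $1$; \emph{blocking} (substituting one variable for a block of $d$ equal coordinates, an identification) sends $S$ to $\{w : dw \in S\}$; and \emph{conjunction} of two symmetric relations of equal arity intersects their weight sets. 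Composing windowing and blocking, $\Gamma$ qfpp-defines, for all integers $a \geq 0$ and $d \geq 1$ with $a+d\ell \le k$, the symmetric $\ell$-ary relation with weight set $\{w \in [0,\ell] : a+dw \in S_k\}$.

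The engine is a \emph{clause gadget}: if some $S_k$ contains an arithmetic progression $a+d, a+2d, \ldots, a+pd$ whose predecessor $a$ (a nonnegative integer) lies outside $S_k$ and $a+pd \le k$, then windowing and blocking give the symmetric $p$-ary relation with weight set exactly $\{1,\ldots,p\}$, i.e.\ the $p$-clause. Since identifying two variables sends an $(\ell+1)$-clause to an $\ell$-clause, the set of $\ell$ for which $\Gamma$ qfpp-defines the $\ell$-clause is downward closed; so if progressions of this ``clean'' kind have unbounded length, case~1 holds. Otherwise clause-definability is bounded and (taking $a=0$) the lengths of dilated runs $\{d,2d,\ldots\}$ occurring in the $S_k$ are bounded. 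I would then split on the size of the $S_k$. If beyond its smallest element each $S_k$ has only weights that grow without bound with $k$ --- so that for every $\ell$ some $S_k$ meets $\{1,\ldots,\ell\}$ in a single point --- then windowing $R_k$ to that initial segment leaves a singleton weight set, which after freezing a bounded number of coordinates to $1$ becomes a $1$-in-$m$ relation; ranging over $m$ gives case~2. Otherwise the $S_k$ have positive density along an infinite set of $k$, and Szemer\'edi's theorem~\cite{SzemerediThm} supplies arbitrarily long arithmetic progressions inside them. Extending such a progression backward inside $S_k$ as far as possible, it either stops at a predecessor that is $\ge 0$ and outside $S_k$ --- then the clause gadget yields a long clause, contradicting the failure of case~1 --- or it is pinned just above $0$ at a weight in $\{1,\ldots,d-1\}$, where $d$ is its common difference; ruling out every other outcome forces $S_k$ to lie in (essentially) one residue class $\{w \equiv b \pmod d\}$, and then blocking down to a prime $p \mid d$ together with shifting by coordinates frozen to $1$ realizes $\sum_i x_i \equiv a \pmod p$ for every $a$ --- case~3.

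The real obstacle is the dense branch: turning ``$S_k$ contains a long arithmetic progression'' into the clean alternative ``a long clause, or a genuine modular relation.'' Two things make it delicate. There is an \emph{alignment} problem --- a progression with common difference $d$ converts to a clause only when it can be shifted so that the slot directly below it is empty, which fails precisely when the progression is pinned against $0$ at a weight in $\{1,\ldots,d-1\}$ --- and ruling out the intermediate configurations, in particular a set that is dense but spread over several residue classes (which one must show would itself block or shift to a convertible progression), is exactly where Szemer\'edi's theorem is genuinely needed. Moreover the relations $R_k$ for different $k$ are a priori unrelated, each with its own bounded arity $k$, so the whole argument must proceed from the per-$k$ data and combine it by a pigeonhole over the finitely many relevant parameters (the small accepted weight, the first missing weight, the modulus) to extract a single prime $p$ valid for infinitely many $k$. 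A last, minor recurring technicality is keeping ``freeze and forget a coordinate'' inside qfpp-definability; I would handle it just as the padding arguments elsewhere in the paper do, by treating a coordinate pinned to a constant as a fixed variable shared across all constraints.
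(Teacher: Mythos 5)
Your overall route is the same as the paper's (symmetric witnesses from Lemma~\ref{lemma:consequence-not-knu}, weight-set surgery as in Lemma~\ref{lemma:symmetric-tools}, Szemer\'edi's theorem, and a trichotomy clauses / 1-in-$k$ / mod $p$), and your clause gadget and the extraction of 1-in-$m$ relations from a singleton initial segment are sound. The problem is that your case analysis is not exhaustive. The negation of ``beyond its smallest element every $S_k$ has only weights growing without bound'' merely says that the second-smallest accepted weight is bounded along infinitely many $k$; it does \emph{not} give positive density, so in your last branch Szemer\'edi's theorem cannot be invoked. Concretely, nothing in your argument handles weight sets such as $S_k=\{1,2\}$ for every $k$: clean progressions have length at most $2$, the second-smallest element is the constant $2$, and the density $2/k$ tends to $0$, so all three of your branches stall. (The theorem still holds there --- grouping coordinates in pairs leaves the weight set $\{1\}$ --- but your proof does not reach it.) The paper devotes a separate \emph{sparse} case ($|S|<n/(2k+2)$ with no incomplete progression of length $k$) to exactly this regime: it self-intersects $R_n$ with shifted copies to manufacture an isolated weight, partitions $\{0,\ldots,n\}$ into windows of length $k+1$ of which at least half are empty, and slides the window containing the isolated weight toward an internal empty window until that weight sits at position $1$ or $k-1$, giving a 1-in-$k$ or $(k-1)$-in-$k$ clause. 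Your proposal has no substitute for this argument; Szemer\'edi is only available above the density threshold, which is precisely why the paper splits on density rather than on the location of the second-smallest weight.

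In the dense branch, the step you yourself call ``the real obstacle'' --- passing from ``Szemer\'edi gives a long progression, which extends downward until it is pinned near $0$'' to ``$S_k$ lies in (essentially) one residue class'' --- is left unproved, and backward extension alone cannot deliver it: it only shows that this one progression is complete downward, and says nothing about upward completeness or about accepted weights in other residue classes. The paper closes this gap with a concrete construction: conjoin $R_n$ with its shifted copies $R_n^{-p},\ldots,R_n^{-(k-1)p}$, so that every weight accepted by the conjunction witnesses a length-$k$ progression of difference $p$ inside $S$, which (the incomplete-progression case having already been dispatched) must be complete; hence the new weight set is a union of complete residue classes modulo $p$, and further shifting and self-intersection cuts it down to a single class, after which the reduction to a prime divisor of $p$ and the pigeonhole over the bounded values of $p$ proceed as you sketch. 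So the two places where your write-up defers (the split ``on the size of the $S_k$'' and ``ruling out every other outcome'') are exactly where the substantive work of the proof lies, and as written the argument does not go through.
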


Before we proceed with the proof, let us make a simple observation
about qfpp-definitions among symmetric relations. 

\begin{lemma} \label{lemma:symmetric-tools}
  Let $R$ be a symmetric $n$-ary relation,
  including tuples of weights $S \subseteq \{0,\ldots,n\}$.
  Using $R$, we can qfpp-define symmetric relations
  of the following descriptions.
  \begin{enumerate}
  \item Shift down: a relation of arity $n-1$
    accepting values $S'=\{x-1 \mid x \in S, x>0\}$.
  \item Truncate: a relation of arity $n-1$
    accepting values $S'=\{x \in S \mid x < n\}$.
  \item Grouping: for any integer $p>1$, 
    a relation of arity $\lfloor n/p \rfloor$
    accepting values $S'=\{x' \mid x'p \in S\}$. 
  \end{enumerate}
\end{lemma}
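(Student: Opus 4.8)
The plan is to obtain each of the three relations by a single qfpp-definition over $R$ that both identifies some arguments of $R$ and freezes the remaining ones to a constant. Argument identification is permitted in any qfpp-definition, and substituting a constant $c$ into an argument $i$ amounts to passing to the relation $R_{i=c}$, which is available by sign-symmetry of the ambient language (or directly, if the constant unary relations are at hand), after which the $i$-th coordinate of $R_{i=c}$ is identically $c$ and may be dropped. In each case I will simply track how the Hamming weight of a tuple of the new relation relates to the weight of the tuple it induces in $R$; since $R$ is symmetric, every relation built this way is symmetric as well, and reading off its accepted weights yields exactly the sets $S'$ claimed.

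Concretely, for \emph{shift down} I define $R'(x_1,\ldots,x_{n-1}) \equiv R(x_1,\ldots,x_{n-1},c)$ with $c = 1$: a tuple of weight $w$ induces a tuple of weight $w+1$ in $R$, so $R'$ accepts weight $w$ iff $w+1 \in S$, giving $S' = \{x-1 \mid x \in S,\ x > 0\}$. For \emph{truncate} I use the same construction with $c = 0$: weights are preserved but can no longer reach $n$, so $S' = \{x \in S \mid x < n\}$. For \emph{grouping} with parameter $p$, let $m = \lfloor n/p \rfloor$ and $r = n - pm < p$, and define $R'(x_1,\ldots,x_m) \equiv R(\underbrace{x_1,\ldots,x_1}_{p},\ \ldots,\ \underbrace{x_m,\ldots,x_m}_{p},\ \underbrace{0,\ldots,0}_{r})$, i.e.\ each new variable is substituted into $p$ consecutive arguments and the last $r$ arguments are frozen to $0$; then a tuple of weight $w'$ induces a tuple of weight $pw'$ in $R$, so $S' = \{x' \mid px' \in S\}$. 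The three items are thus one and the same construction with the constant $c$ and the block size varied.

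The only point that needs a word of care is the claim that these are genuinely relations of arity $n-1$ (respectively $\lfloor n/p \rfloor$) rather than relations of the original arity with some coordinates pinned to a constant. I would dispatch this by the observation above — deleting a coordinate of $R_{i=c}$ that is identically $c$ does not change which assignments to the remaining variables are accepted — and note that in any case it is harmless for all later uses, since in a \SAT instance a single globally shared constant-valued variable costs only $O(1)$ variables. I do not anticipate a real obstacle here: once the substitution pattern is fixed, the lemma is pure weight bookkeeping.
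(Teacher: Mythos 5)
Your proof is correct and follows essentially the same route as the paper's (which simply says: fix an argument to $1$, fix an argument to $0$, and group arguments into blocks of size $p$ after truncating the arity to a multiple of $p$); your explicit weight bookkeeping and the remark on dropping the frozen coordinate, justified via sign-symmetry ($R_{i=c}$), match the conventions the paper itself uses in its other structural lemmas.
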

\begin{proof}
  These are implemented by, respectively, fixing an argument to $1$ in $R$;  
  fixing an argument to $0$ in $R$; and grouping arguments of $R$ in
  groups of size $p$ (after truncating $\ar(R)$ to an even multiple of $p$).
\end{proof}

We can now show the result. 

\begin{proof}[Proof of Theorem~\ref{thm:not-any-knu}]
  Let $k \geq 3$ be an arbitrary constant. By Szemer\'edi's theorem~\cite{SzemerediThm}
  there is a constant $n=N(2k, 1/2(k+1))$ such that every set
  $S \subseteq [n]$ with $|S| \geq n/(2k+1)$ contains an arithmetic
  progression $a, a+p, \ldots$ of at least $2k$ items. 
  Let $R_n$ be a relation produced by Lemma~\ref{lemma:consequence-not-knu}
  of arity $n$, and let $S$ be the accepted weights for $R_n$.
  Say that an arithmetic progression $a, a+p, \ldots$ is
  \emph{complete} in $S$ if $S$ contains all values
  $\{x \in \{0,\ldots,n\} \mid x \equiv a \pmod b\}$.
  We consider a few cases. 

  \emph{Case: $S$ contains an incomplete arithmetic progression
    with at least $k$ items.} We show that in this case,
  $\Gamma$ can qfpp-define all $k$-clauses.
  Let $a, a+p, \ldots, a+(k-1)p \in S$ be an arithmetic progression 
  that in one direction does not continue.
  If $a \geq p$ and $a-p \notin S$, then by shifting,
  truncating and grouping we can qfpp-define
  the $k$-ary relation $(\sum_i x_i \geq 1)$;
  in the other case, if $a+kp \leq n$ and $a+kp \notin S$,
  then we can similarly qfpp-define the $k$-ary
  relation $(\sum_i x_i < k)$. In both cases, taking
  closure under sign-symmetry shows that we can
  qfpp-define all $k$-clauses.
  This finishes this case.
  
  \emph{Case: $S$ is sparse.} 
  Assume that $|S|<n/(2k+2)$ and that $S$ contains no incomplete 
  arithmetic progression of at least $k$ items. 
  By truncation, we can assume that $n$ is an even multiple of $k+1$.
  By self-intersecting $R_n$ by its shifted variant,
  if needed repeated up to $k$ times,
  we can further ensure that $S$ contains no pairs $i, i+1$,
  except possibly in a chain $n-i$, $n-i+1$, \ldots
  ending with $n$, while retaining that $n$ is a multiple of $k$.
  By only doing this as many times as needed, we can be sure 
  that there is at least one \emph{isolated} weight $w$, $0<w<n$, 
  such that tuples of weight $w$ are accepted but not $w-1$ or $w+1$
  (recall that we start with a relation with $1 \in S$).
  
  Now partition $\{0,\ldots,n\}$ into windows
  $(0, \ldots, k)$, $(k+1, \ldots, 2k+1)$, \ldots
  of length $k+1$. By the density of $S$ (which did not increase
  during our modifications), at least half the windows
  contain no elements.  We may safely assume $n \geq 5k$; 
  thus there is an empty window that is not the first or the last. 
  Let $w$ be an isolated weight. Then by sliding the window
  containing $w$ towards the internal empty window,
  we must eventually reach a window where there is an isolated
  weight which is either in position $1$ or $k-1$ of the window.
  This lets us qfpp-define either a 1-in-$k$-clause or a
  $(k-1)$-in-$k$-clause; and in the latter case we get a
  1-in-$k$-clause by negating all variables. 
  Thus if $S$ is sparse and contains no incomplete progressions of
  length $k$, we can qfpp-define a 1-in-$k$-clause.

  \emph{Case: $S$ is dense.}
  Finally, we assume that $|S| \geq n/(2k+2)$ but does not contain any
  incomplete progressions of length $k$. By Szemer\'edi's theorem,
  $S$ contains at least one complete progression
  $\{x \mid 0 \leq x \leq n, x \equiv a \pmod p\}$
  for some $a$ and $p$, with at least $2k$ entries 
  (i.e., $(2k-1)p \leq n$). 
  Let $R_n^{-i}$ be the $(n-i)$-ary relation produced by 
  shifting $R_n$ $i$ steps down and consider the relation
  $R_n' = R_n \land R_n^{-p} \land \ldots \land R_n^{-(k-1)p}$
  of arity $n-(k-1)p$,
  with applications of $R_n$ and $R_n^{-i}$ padded with zeroes as necessary.
  Then $R_n'$ is the union of complete progressions with 
  difference $p$, since every weight $w$ accepted by $R_n'$
  corresponds to a progression of length $k$ in $S$. 
  Furthermore, the same holds for any constant shift $R_n'^{-i}$
  of $R_n'$, $i<p$. Note that $R_n'$ still has arity at least $pk$. 

  Let $A \subseteq \{0, \ldots, p-1\}$ be the weights $a$
  such that $R_n'$ contains the complete progression with offset $a$.
  Note that $0 \notin A$. By shifting and self-intersecting
  we can reduce to the case that $|A|=1$, i.e.,
  the remaining relation is equivalent to $(\sum x_i \equiv a \pmod p)$
  for some $a$ and $p$. 

  If $p \geq k$, then clearly $R_n'$ qfpp-defines a 1-in-$k$ relation
  by further shifting and truncation. Thus, if the difference $p$ 
  of the relations produced this way can grow without bound,
  then $\Gamma$ qfpp-defines 1-in-$k$ relations of all arities $k$. 

  Otherwise, if none of the above cases applies infinitely often,
  then there is a fixed $p$ such that
  this process produces relations
  $(\sum x_i \equiv a \pmod p)$ of infinitely many
  arities $k$, which leads to the last case in the theorem. 
  Assume we are in this case. 
  If $p$ is not a prime, we fix a prime $p'$  that
  divides $p$, and let $a'=a \bmod (p/p')$. 
  Shift the relation down by $a'$ and group
  the variables into blocks of size $p/p'$. 
  Then the remaining relation is equivalent to
  $(\sum_i x_i \equiv a'' \pmod p')$ for some $a''$.
  By shifting, and by starting from a sufficiently 
  large relation with period $p'$, we can produce 
  all relations as in the last case in the theorem.
\end{proof}

Finally, we note that since $k$-clauses can qfpp-define the other two
kinds of clauses, the same statement holds with only 1-in-$k$ clauses
and the counting relations $\mathrm{mod}\, p$.

\paragraph{Section summary.}
In summary of this section, towards the purpose of discussing
sign-symmetric languages $\Gamma$ such that $\SAT(\Gamma)$ does, or
does not, admit an improved algorithm under SETH, we conclude the
following. Recall that $\sat{k}$ denotes the language of all $k$-clauses. 
We find that $\sat{k}$ is preserved by every minimal operation on
level $k' > k$ (in particular, by $\near_{k+1}$); not preserved by any
operation on a level $k' \leq k$; and that any sign-symmetric language
$\Gamma$ which is not preserved by the $k$-universal partial operation
$\universal_k$ can qfpp-define $\sat{k}$. Assuming SETH, 
the minimal non-trivial pSDI-operations that preserve
$\Gamma$ therefore appear to be reasonable proxies for the complexity of
$\SAT(\Gamma)$. 

Finally, for each level $k$, there is a language -- namely the
language of roots of polynomials of degree less than $k$ -- which is
preserved by $\universal_k$ but not by any other operation at level
$k' \leq k$, and which does admit an improved
algorithm~\cite{LokshtanovPTWY17SODA}.
This shows that any ``dichotomy'' characterizing sign-symmetric
languages $\Gamma$ for which $\SAT(\Gamma)$ admits an improved
algorithm under SETH, cannot require a minimal non-trivial
pSDI-operation other than $\universal_k$ for some $k$. 

It remains to show that these very mild restrictions,
of requiring only the presence of a single non-trivial pSDI-operation
$f$ preserving $\Gamma$, can be powerful enough to ensure
that $\SAT(\Gamma)$ admits an improved algorithm. 
This is our topic of study for the next section. 

\section{Upper bounds for sign-symmetric satisfiability problems}
\label{section:upper}


In this section, we consider the feasibility of designing an improved
algorithm directly for \textsc{$\inv(f)$-SAT} and \textsc{$\inv(f)$-CSP}
for a minimal non-trivial pSDI-operation $f$, i.e., an improved
algorithm that only uses the abstract properties guaranteed by such an
operation $f$. 

We show this unconditionally for $f=\edge_2$ and for $f=\near_3$,
over arbitrary finite domains (where the latter result is only
interesting for the non-Boolean case, since the Boolean case is in
P).  The algorithms for these cases use, respectively, 
a \textsc{Subset Sum}-style meet-in-the-middle algorithm and
fast matrix multiplication over exponentially large matrices. 
These algorithms all work in the extension oracle model. 

We also show conditional or partial results.  We show two conditional
results for partial $k$-NU operations, showing that
\textsc{$k$-NU-CSP} admits an improved
algorithm in the oracle model if the \textsc{$(k, k-1)$-hyperclique}
problem admits an improved algorithm, 
and that \textsc{$k$-NU-SAT} admits an improved algorithm
in the explicit representation model if 
the Erd\H{o}s-Rado \emph{sunflower conjecture}~\cite{JLMS:JLMS0085}
holds for sunflowers with $k$ sets. The first of these results is a
direct generalisation of the matrix multiplication strategy; the
second uses fast local search in the style of
Sch\"oning~\cite{schoning1999}. 
Finally, we also consider the symmetric special case of
\textsc{3-edge-SAT}, and show that this problem reduces to a problem
of finding a unit-coloured triangle in an edge-coloured graph. 
This, in turn, follows from fast algorithms for sparse triangle
detection.  Several of the algorithms we reduce to have a running time
that depends on the matrix multiplication exponent $\omega$; the best
currently known value is $\omega < 2.373$~\cite{Gall14ammult,VWilliams12mmult}.

Before we begin, we need the following lemma, which shows that if a
relation is preserved by a pSDI-operation, then it is possible to view
the relation as a relation of smaller arity over a larger domain,
which is preserved by the corresponding partial operation over the
larger domain.

\begin{lemma} \label{lemma:larger}
  Let $R$ be an $n$-ary relation over a set of values $D$, $P$ a
  polymorphism pattern, and $f$ a partial operation preserving $R$ and
  satisfying $P$. Let $I_1
  \ldots, I_m$ be a partition of $[n]$, and $R_{I_1, \ldots, I_m}$ the
  $m$-ary relation \[R_{I_1, \ldots, I_m} = \{(\pro_{I_1}(t), \ldots,
  \pro_{I_m}(t)) \mid t \in R\}\] 
  over the set of values $\{\pro_{I_1}(R) \cup
  \ldots \cup \pro(I_m)(R)\}$. Then every partial operation $f'$
  satisfying $P$ over $\{\pro_{I_1}(R) \cup
  \ldots \cup \pro(I_m)(R)\}$ preserves $R_{I_1, \ldots, I_m}$
\end{lemma}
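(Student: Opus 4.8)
The plan is to unwind the condition ``$f'$ satisfies $P$'' one block at a time, lift each block's witness back up to $R$ using $f$, and then reassemble. Write $r=\ar(f)=\ar(f')$ for the arity prescribed by the pattern $P$. Fix tuples $s_1,\ldots,s_r \in R_{I_1,\ldots,I_m}$ such that $f'(s_1,\ldots,s_r)$ is defined, and choose preimages $t_1,\ldots,t_r \in R$ with $s_i=(\pro_{I_1}(t_i),\ldots,\pro_{I_m}(t_i))$; in particular $s_i[j]=\pro_{I_j}(t_i)$ for every $i\in[r]$ and $j\in[m]$. The goal is to produce a tuple of $R$ whose block decomposition equals $f'(s_1,\ldots,s_r)$.

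First I would, for each block index $j\in[m]$, extract a witnessing line of $P$. Since $f'(s_1[j],\ldots,s_r[j])$ is defined and $f'$ satisfies $P$, there is a line $((x^j_1,\ldots,x^j_r),x^j)\in P$ and an assignment $\tau_j$ of its variables into the enlarged domain such that $\tau_j(x^j_i)=s_i[j]=\pro_{I_j}(t_i)$ for all $i$, and $f'(s_1[j],\ldots,s_r[j])=\tau_j(x^j)$. The one piece of bookkeeping worth stating explicitly is that every variable occurring in this line is among $x^j_1,\ldots,x^j_r$, and $x^j$ is among them too, so \emph{every} value of $\tau_j$ is an element of $\pro_{I_j}(R)$, i.e.\ a tuple over $D$ of length exactly $|I_j|$. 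Different blocks $j$ may of course use different lines of $P$; this causes no trouble.

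Next comes the ``slicing'' step. For a block $j$ and a local coordinate $\ell\in[\,|I_j|\,]$ --- say $p$ is the $\ell$-th element of $I_j$ --- define $\sigma_{j,\ell}(x):=\tau_j(x)[\ell]$, an assignment of the line's variables into $D$; this is well defined precisely because of the length remark above. Since $f$ satisfies $P$ over $D$, the tuple $(\sigma_{j,\ell}(x^j_1),\ldots,\sigma_{j,\ell}(x^j_r))$ lies in $\domain(f)$ and $f$ maps it to $\sigma_{j,\ell}(x^j)=\tau_j(x^j)[\ell]$. But $\sigma_{j,\ell}(x^j_i)=\pro_{I_j}(t_i)[\ell]=t_i[p]$, so this says exactly that $(t_1[p],\ldots,t_r[p])\in\domain(f)$ with $f(t_1[p],\ldots,t_r[p])=\tau_j(x^j)[\ell]$. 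Letting $(j,\ell)$ range over all pairs sweeps out all of $[n]$, so $t^{\ast}:=f(t_1,\ldots,t_r)$ is defined; since $f$ preserves $R$ and $t_1,\ldots,t_r\in R$, we conclude $t^{\ast}\in R$.

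It remains to reassemble. By construction $t^{\ast}[p]=\tau_j(x^j)[\ell]$ whenever $p$ is the $\ell$-th element of $I_j$, which is precisely the statement that $\pro_{I_j}(t^{\ast})=\tau_j(x^j)=f'(s_1[j],\ldots,s_r[j])$ for every $j\in[m]$. Hence $f'(s_1,\ldots,s_r)=(\pro_{I_1}(t^{\ast}),\ldots,\pro_{I_m}(t^{\ast}))$, which lies in $R_{I_1,\ldots,I_m}$ since $t^{\ast}\in R$, as desired. I do not expect a genuine obstacle: the whole argument is a change-of-variables exercise, and the only point requiring care is keeping track of which line of $P$ handles which block, together with the identification of the value of $R_{I_1,\ldots,I_m}$ at position $j$ with an honest tuple over $D$ of length $|I_j|$, which is what licenses the coordinatewise slicing.
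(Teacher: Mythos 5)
Your proof is correct and follows essentially the same route as the paper's: the definedness of $f'$ on each block comes from a line of the pattern $P$, and slicing that instantiation coordinatewise gives an instantiation over $D$, so $f$ is defined on the original tuples and preservation of $R$ by $f$ does the rest. You are in fact somewhat more careful than the paper's terse argument, in particular in checking explicitly that $\pro_{I_j}(f(t_1,\ldots,t_r))=f'(s_1[j],\ldots,s_r[j])$ for each block, which the paper leaves implicit.
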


\begin{proof}
  Let $k = \ar(f') = \ar(f)$. Let $t_1, \ldots, t_k \in R$ and let
  $t'_1, \ldots, t'_k \in R_{I_1, \ldots, I_m}$ be the corresponding
  tuples of $R_{I_1, \ldots, I_m}$.
  Assume that $f'(t_1, \ldots, t_k)$ is defined, i.e., $(t_1[j],
  \ldots, t_k[j]) \in \domain(f')$ for each $j \in [k]$. Let $i \in
  [n]$ and let $I_j$ be the index set such that $i \in I_j$. Since
  $f'(t_1[j], \ldots, t_k[j])$ is defined it must be an instantiation
  of a tuple $p \in
  P$. It follows that $(t'_1[i], \ldots, t'_k[i])$ must be an instantiation
  of $p$ as well, implying that $f(t'_1[i], \ldots, t'_k[i])$ is
  defined. Hence, $f'$ preserves $R_{I_1, \ldots, I_m}$.
\end{proof}

\subsection{An $O^*(|D|^{\frac{n}{2}})$ algorithm for 2-edge-CSP}
\label{section:malt}
Given a binary relation $R$ one can construct a bipartite graph where
two vertices $x$ and $y$ have an edge between them if and only if
$(x,y) \in R$. Formally, the vertices $V_1 \cup V_2$ of this graph
will consist of the disjoint union of $\pro_1(R)$ and $\pro_2(R)$,
i.e., $V_1 = \{(1, x) \mid x \in \pro_1(R)\}$ and $V_2 = \{(2, x) \mid
x \in \pro_2(R)\}$. However, whenever convenient, we will not make
this distinction and instead assume that $V_1 = \pro_1(R)$ and $V_2 =
\pro_2(R)$.  We say that a binary relation $R$ is {\em rectangular} if
its bipartite graph representation is a disjoint union of bicliques.

\begin{lemma} \label{lemma:rect}
  Let $\malt_D$ be the partial Maltsev operation over a domain $D$. Then every
  binary relation preserved by $\malt_D$ is rectangular.
\end{lemma}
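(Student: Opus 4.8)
The plan is to reduce the statement to the standard combinatorial description of rectangularity and then verify that description from a single application of $\malt_D$. Recall that a binary relation $R\subseteq D^2$ has a bipartite graph which is a disjoint union of bicliques precisely when $R$ is closed under \emph{rectangular completion}: whenever $(a,b),(a,b'),(a',b)\in R$, also $(a',b')\in R$. The forward direction is immediate; conversely, rectangular completion forces any two left vertices sharing a neighbour to have equal neighbourhoods, so each connected component is complete bipartite. I would record this equivalence as a short preliminary observation (or cite it as folklore) rather than reprove it in detail.

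It then suffices to establish rectangular completion for $R\in\inv(\malt_D)$. Recall that $\malt_D$ is defined exactly on triples of the forms $(x,x,y)$ and $(y,x,x)$, with $\malt_D(x,x,y)=\malt_D(y,x,x)=y$, and is undefined on every other triple. Given $(a,b),(a,b'),(a',b)\in R$, I would apply $\malt_D$ coordinatewise to the ordered triple of tuples $t_1=(a',b)$, $t_2=(a,b)$, $t_3=(a,b')$. In coordinate $1$ this computes $\malt_D(a',a,a)=a'$ (an instance of $\malt_D(y,x,x)=y$), and in coordinate $2$ it computes $\malt_D(b,b,b')=b'$ (an instance of $\malt_D(x,x,y)=y$); both are defined, so $\malt_D(t_1,t_2,t_3)=(a',b')$ is defined. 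Since $R$ is preserved by $\malt_D$ we conclude $(a',b')\in R$. The degenerate cases $a=a'$ or $b=b'$ hold trivially (and are consistent with idempotency of $\malt_D$). Combined with the previous paragraph, $R$ is rectangular.

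I do not expect a serious obstacle here; the only point requiring care is choosing the order in which the three tuples are fed to $\malt_D$ so that every coordinate lands in $\domain(\malt_D)$. If one instead uses the $2$-edge normalization of Section~\ref{section:patterns} (i.e.\ $\malt_D(x,x,y)=y$ and $\malt_D(x,y,x)=y$), the same argument goes through after reordering the inputs as $t_1=(a,b)$, $t_2=(a,b')$, $t_3=(a',b)$, which yields $\malt_D(a,a,a')=a'$ in coordinate $1$ and $\malt_D(b,b',b)=b'$ in coordinate $2$. If we prefer not to invoke the bipartite-graph characterization of rectangularity as given, the only extra work is the short self-contained argument that rectangular completion makes each connected component complete bipartite, which I would include as one additional paragraph.
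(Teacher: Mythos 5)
Your proof is correct and follows essentially the same route as the paper: both reduce rectangularity to the rectangle-completion condition and obtain the missing pair $(a',b')$ from a single coordinatewise application of $\malt_D$ to the three given tuples, using the identities $\malt_D(y,x,x)=y$ and $\malt_D(x,x,y)=y$ (the paper applies it to $(x,y),(x',y),(x',y')$ to get $(x,y')$, which is the same computation up to renaming). Your extra remarks on the degenerate cases and on the $2$-edge ordering of the identities are fine but not needed beyond what the paper does.
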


\begin{proof}
  The proof is very similar to the total case, which is essentially
  folklore in universal algebra. First note that $R$ is rectangular if
  and only if a path of length 4 between nodes $x, x', y, y'$ implies
  that there is an edge between $x$ and $y'$.  Therefore, let $(x,y),
  (x',y), (x',y') \in R$. But then $\malt_D((x,y), (x',y), (x',y')) =
  (\malt_D(x,x',x'), \malt_D(y,y,y')) = (x,y')$, implying that $(x,y') \in
  R$ since $R$ is preserved by $\malt_D$. Hence, $R$ is rectangular.
\end{proof}

If $R$ is an $n$-ary relation, $I_1 \cup I_2$ a partition of $[n]$,
and $s \in \pro_{I_1}(R)$, $t \in \pro_{I_2}(R)$, we write
$\concat{s}{t}{I_1}{I_2}$ to denote the $n$-ary tuple in $R$ satisfying
$\pro_{I_1}(\concat{s}{t}{I_1}{I_2}) = s$ and
$\pro_{I_2}(\concat{s}{t}{I_1}{I_2}) = t$. Let $D = \{d_0, d_1, \ldots,
d_{k-1}\}$ be a finite set of values. We can then order $D$ according
to a total order $<$, by letting $d_0 < d_1 < \ldots < d_{k-1}$. 
This order easily extends to $n$-ary tuples $s$ and $t$ over $D$ by
letting $s < t$ if and only if there exists an $i \in [n]$ such that
$\pro_{1, \ldots, i}(s) = \pro_{1, \ldots, i}(t)$ and $s[i+1] <
t[i+1]$. Given a relation $R$ we say that the tuple $t$ is {\em
  lex-min} if $t \in R$ and there does not exist any $t' \in R$ such
that $t' \neq t$ and $t' < t$.

\begin{lemma} \label{lemma:graph}
  Let $R$ be an $n$-ary relation preserved by $\malt_D$ and let $I_1
  \cup I_2$ be a partition of $[n]$. Then  there exists a bipartite
  graph $(V,E)$ where $V$ is the disjoint union of $\pro_{I_1}(R)$ and $\pro_{I_2}(R)$
  such that
  \begin{enumerate}
  \item
    $(V,E)$ is a disjoint union of bicliques,
  \item
    $\{s,t\} \in E$ if and only if
    $\concat{s}{t}{I_1}{I_2} \in R$,
  \item
    for every $s \in V$ occurring in a biclique $C_1 \cup C_2$ a pair
    $s_0 \in C_1, t_0 \in C_2$
    such that $s_0$ is lex-min in $C_1$ and $t_0$
    lex-min in $C_2$ can be computed in $O(\mathrm{poly}(n,|D|))$ time in
    the extension oracle model. 
\end{enumerate}
\end{lemma}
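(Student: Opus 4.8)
The plan is to let $(V,E)$ be precisely the bipartite graph associated with the binary relation $R_{I_1,I_2}$ obtained from $R$ and the partition $\{I_1,I_2\}$ as in Lemma~\ref{lemma:larger}, so that $V$ is the disjoint union of $\pro_{I_1}(R)$ and $\pro_{I_2}(R)$ and $\{s,t\}\in E$ exactly when $(s,t)\in R_{I_1,I_2}$, i.e.\ when $\concat{s}{t}{I_1}{I_2}\in R$; property~2 then holds by construction. For property~1, note that $\malt_D$ satisfies a polymorphism pattern, so by Lemma~\ref{lemma:larger} the relation $R_{I_1,I_2}$ over the value set $\pro_{I_1}(R)\cup\pro_{I_2}(R)$ is preserved by the partial Maltsev operation over that (finite) set, and hence by Lemma~\ref{lemma:rect} is rectangular. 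Thus $(V,E)$ is a disjoint union of bicliques.

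For property~3 I would first extract the purely combinatorial content of the biclique structure: if $s$ lies in the biclique $C_1\cup C_2$ with $C_1\subseteq\pro_{I_1}(R)$ and $C_2\subseteq\pro_{I_2}(R)$, say with $s\in C_1$, then the neighbourhood $N(s)$ equals $C_2$, and $N(t_0)=C_1$ for any $t_0\in C_2$, since an edge leaving a biclique would join two distinct components. Hence it suffices, given such an $s$, to compute the lex-min element $t_0$ of $N(s)$ and then the lex-min element $s_0$ of $N(t_0)$; the case $s\in C_2$ is handled by the symmetric procedure, outputting the lex-min element $s_0$ of $N(s)=C_1$ first and then the lex-min element $t_0$ of $N(s_0)=C_2$. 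Each of these steps is a standard self-reduction using the extension oracle of $R$: to compute $t_0$, fix the coordinates in $I_1$ to the values prescribed by $s$ and then assign the coordinates of $I_2$ one at a time in increasing index order, at each step choosing the least $d\in D$ for which the extension oracle, queried on the already-fixed indices together with the current coordinate, reports that the partial tuple extends to a member of $R$. Since $s\in\pro_{I_1}(R)$ a choice always exists, so the procedure outputs $t_0$ with $\concat{s}{t_0}{I_1}{I_2}\in R$ --- i.e.\ $t_0\in N(s)=C_2$ --- and the greedy rule makes $t_0$ lex-min in $\{t:\concat{s}{t}{I_1}{I_2}\in R\}=C_2$. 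Computing $s_0$ as the lex-min element of $N(t_0)=C_1$ is entirely analogous, now fixing $I_2$ to $t_0$ and filling in $I_1$. Both computations make $O(n|D|)$ oracle calls of polynomial cost, giving the claimed $O(\mathrm{poly}(n,|D|))$ bound.

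The one point requiring care is the order of the two self-reductions: $s$ itself need not be lex-min in $C_1$, so $s_0$ cannot simply be read off from $s$; computing $t_0\in C_2$ first and then taking the lex-min neighbour of $t_0$ on the $I_1$-side is what guarantees $s_0$ is lex-min in all of $C_1=N(t_0)$. I expect this bookkeeping, together with the (routine) verification that the greedy extension-oracle search indeed returns the lexicographic minimum of the intended neighbourhood, to be the only real obstacle --- the construction of $(V,E)$ and properties~1 and~2 being immediate from Lemma~\ref{lemma:larger} and Lemma~\ref{lemma:rect}.
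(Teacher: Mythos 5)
Your proposal is correct and follows essentially the same route as the paper: build the graph from the binary relation $R_{I_1,I_2}$ via Lemma~\ref{lemma:larger}, invoke Lemma~\ref{lemma:rect} for rectangularity (properties 1 and 2), and obtain $(s_0,t_0)$ by two greedy coordinate-by-coordinate extension-oracle searches, first finding the lex-min neighbour $t_0$ of $s$ and then the lex-min neighbour $s_0$ of $t_0$. Your explicit observation that $N(s)=C_2$ and $N(t_0)=C_1$ within a biclique, which justifies why the greedy neighbour searches indeed return lex-min elements of the biclique sides, is a slightly more careful rendering of what the paper leaves implicit.
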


\begin{proof}
  Consider the binary relation $R_{I_1, I_2} = \{(\pro_{I_1}(t),
  \pro_{I_2}(t)) \mid t \in R\}$ over the set of values $\pro_{I_1}(R)
  \cup \pro_{I_2}(R)$. By Lemma~\ref{lemma:larger} this relation is
  preserved by $\malt$ over the larger domain, and Lemma~\ref{lemma:rect} then
  implies that $R_{I_1, I_2}$ is rectangular. Take the bipartite graph
  representation $(V_1 \cup V_2, E)$ of $R_{I_1, I_2}$ (which by the
  rectangularity property is a disjoint union of bicliques), and thus
  satisfies property (1). Property number (2) then follows easily from the
  construction of the bipartite graph $(V_1 \cup V_2, E)$ since two
  vertices $s$ and $t$ are connected with an edge if and only if
  $(s,t) \in R_{I_1, I_2}$, which holds if and only if $\concat{s}{t}{I_1}{I_2}
  \in R$. 

  For property (3) we need to show that we, given $s \in V$, can compute 
  lex-min representatives of the biclique $C_1 \cup C_2$
  containing $s$, in
  polynomial time with respect to $n$ and $|D|$. Assume without loss
  of generality that $s \in V_1$, and
  order $I_2$ in ascending order as $i_1, \ldots, i_{|I_2|}$. Then
  determine the smallest value $d_1 \in D$ such that
  $\concat{s}{(d_1)}{I_1}{\{i_1\}}$ is included in the projection
  $\pro_{I_1 \cup \{i_1\}}(R)$. This can be computed in polynomial time
  using the extension oracle. Then continue, by for each $i_2,
  \ldots, i_j$ determine the smallest $d_j \in D$ such that
  $\concat{s}{(d_1)}{I_1}{\{i_1\}} \in \pro_{I_1 \cup \{i_1, \ldots,
    i_j\}}(R)$. Let $t_0$ denote the resulting tuple, and observe that
  $t_0 \in C_2$ and that $\{s, t_0\} \in E$. We then repeat this using
  the index set $I_1$ in order to obtain a lex-min tuple $s_0$
  such that $\{s_0, t_0\} \in E$, which again can be done in
  polynomial time in the extension oracle model.
\end{proof}

\begin{theorem} \label{thm:2edgecsp}
  2-edge-CSP is solvable in
  $O^*(|D|^{\frac{n}{2}})$ time in both the extension oracle
  model and the explicit representation.
\end{theorem}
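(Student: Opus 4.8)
The plan is a \textsc{Subset Sum}-style meet-in-the-middle, built on the rectangularity of Maltsev-preserved relations (Lemma~\ref{lemma:rect}) and the canonical-representative computation of Lemma~\ref{lemma:graph}. Fix an instance with variables $V$, $|V|=n$, and constraints $R_1(\mathbf{v}_1),\ldots,R_m(\mathbf{v}_m)$ with each $R_j\in\inv(\malt_D)$. Split $V$ as $X_1\cup X_2$ with $|X_1|=\lceil n/2\rceil$. For each $j$, let $I_1^j$ be the argument positions of constraint $j$ whose variable lies in $X_1$ and $I_2^j$ the remaining positions. By Lemma~\ref{lemma:larger}, the binary relation $G_j:=(R_j)_{I_1^j,I_2^j}$ over the value set $\pro_{I_1^j}(R_j)\cup\pro_{I_2^j}(R_j)$ is preserved by the partial Maltsev operation on that enlarged domain, hence rectangular by Lemma~\ref{lemma:rect} --- a disjoint union of bicliques.

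The crux is that a pair of half-assignments $S\colon X_1\to D$, $T\colon X_2\to D$ jointly satisfies all constraints precisely when, for every $j$, the restriction $s^{(j)}$ of $S$ to the $X_1$-variables of constraint $j$ and the restriction $t^{(j)}$ of $T$ to its $X_2$-variables lie in a common biclique of $G_j$. I would therefore label each $S$ by the tuple $\mathrm{lab}(S)=(c_1(S),\ldots,c_m(S))$, where $c_j(S)$ is a canonical name of the biclique of $G_j$ containing $s^{(j)}$ --- concretely, its lex-min vertex on the $X_2$-side, which Lemma~\ref{lemma:graph}(3) computes starting from $s^{(j)}$ using only the extension oracle of $R_j$ (in the explicit representation this oracle is built trivially by scanning the tuple list). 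If $s^{(j)}\notin\pro_{I_1^j}(R_j)$ for some $j$, then $S$ extends to no solution and is discarded. Symmetrically, each $T$ receives $\mathrm{lab}(T)$ whose $j$-th entry is the lex-min $X_2$-side vertex of the biclique of $G_j$ containing $t^{(j)}$, again by Lemma~\ref{lemma:graph}(3) applied at $t^{(j)}$. Since this lex-min representative is an invariant of the biclique component --- independent of which vertex one starts from --- the assignment $S\cup T$ satisfies the instance if and only if $s^{(j)}$ and $t^{(j)}$ share a biclique of $G_j$ for every $j$, i.e.\ if and only if $\mathrm{lab}(S)=\mathrm{lab}(T)$.

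The algorithm then enumerates all $|D|^{|X_1|}$ choices of $S$, records the surviving labels in a lexicographically sorted list, enumerates all $|D|^{|X_2|}$ choices of $T$, and answers ``satisfiable'' iff some $\mathrm{lab}(T)$ occurs in the list (a witness assignment is recovered from the matching $S$). Computing a single label costs $m$ calls to Lemma~\ref{lemma:graph}(3), each polynomial in the arities, $|D|$ and the oracle cost; the degenerate cases --- a constraint with all variables on one side, or a variable repeated within a constraint --- are folded in by letting the corresponding label component simply record whether $S$ (resp.\ $T$) already satisfies that constraint. Thus the total running time is $O^*(|D|^{\lceil n/2\rceil})=O^*(|D|^{n/2})$ in both the extension oracle model and the explicit representation.

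The main obstacle is the correctness of the labeling scheme: verifying that two half-assignments glue to a global solution exactly when all $m$ per-constraint canonical labels agree. This rests on (i) satisfaction decomposing constraint-by-constraint, so that only the per-constraint extension oracles are needed and never a global oracle for $\bigwedge_j R_j$; (ii) rectangularity of each partitioned constraint (Lemmas~\ref{lemma:larger} and~\ref{lemma:rect}); and (iii) the lex-min $X_2$-side representative produced by Lemma~\ref{lemma:graph}(3) being a property of the biclique component alone, reached identically from either side of the bipartition. Once these are nailed down, the time bound and the passage from the explicit representation to the oracle model are routine.
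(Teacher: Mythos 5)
Your proposal is correct and follows essentially the same route as the paper's proof: split the variables in half, use Lemma~\ref{lemma:larger} and Lemma~\ref{lemma:rect} to get rectangularity of each partitioned constraint, label each half-assignment by the per-constraint lex-min biclique representatives from Lemma~\ref{lemma:graph}(3), and match labels in $O^*(|D|^{n/2})$ time. The only cosmetic difference is that you identify a biclique by its lex-min $X_2$-side vertex alone while the paper uses the lex-min pair $(p_0,q_0)$; both are valid component identifiers.
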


\begin{proof}
Let $(V,C)$ be an instance of 2-edge-CSP, where $V = \{x_1,
\ldots, x_n\}$ and $C = \{C_1, \ldots, C_m\}$. Assume without loss of
generality that $n$ is even, and let $I = [\frac{n}{2}]$ and $J = [n]
\setminus I$. Consider two sets $P$ and $Q$ constructed as
follows. Initially we let $P$ and $Q$ consist of all $\frac{n}{2}$-ary
tuples over $D$. Then, for each $p \in P$, $q \in Q$ we enumerate each
constraint in the instance containing only variables indexed by $I$ or
$J$ and check whether $p$ or $q$ is contradicted by the constraint. If
this is the case we remove $p$ from $P$ or $q$ from $Q$.  More
formally, if $p \in P$ and $R_i(x_{i_1}, \ldots, x_{i_k}) \in C$, $k =
\ar(R_i)$, such that $\{i_1, \ldots, i_k\} \subseteq I$, we check
whether $\pro_{i_1, \ldots, i_k}(p) \in \pro_{i_1, \ldots, i_k}(R_i)$, and similarly for $q \in
Q$. Each such step can be done in $O(\mathrm{poly}(k))$ time in the
extension oracle model and in $O(k + |R_i|)$ time if constraints
are explicitly represented. By repeating this for all elements in $P$ and
$Q$ we will therefore obtain two sets of partial assignments that do
not directly contradict individual constraints in the input instance.

Next, for each $p \in P$ and $q \in Q$ create two $m$-ary tuples $p'$
and $q'$. By using Lemma~\ref{lemma:graph} we for each constraint $C_i
\in C$ will associate the $i$th element of $p'$ and $q'$ with a
representative of the biclique corresponding to $C_i$, $p$, and $q$.
Hence, let $C_i = R_i(x_{i_1}, \ldots, x_{i_k}) \in C$, $k =
\ar(R_i)$, be a constraint. We distinguish between two cases. First,
assume that $\{i_1, \ldots, i_k\} \subseteq I$ or that $\{i_1, \ldots,
i_k\} \subseteq J$. In this case we for every $t \in P \cup Q$ let
$t'[i] = 1$. Second, assume that $i_1, \ldots, i_k \in I \cup J$ but
that $\{i_1, \ldots, i_k\} \not \subseteq I$ and $\{i_1, \ldots,
i_k\} \not \subseteq J$. In other words the constraint contains
variables indexed by members of both $I$ and $J$. For every $p \in P$
compute the lex-min representatives $p_0$ and $q_0$ of the biclique
containing $p$, with respect to the two index sets $P_i = \{j \mid i_j
\in I\}$ and $Q_i = \{j \mid i_j \in J\}$. This can be done in
polynomial time via Lemma~\ref{lemma:graph}. Assign the $i$th value to
the tuple $p'$ the value $(p_0, q_0)$, and then repeat this for every
$q \in Q$.

Let $P' = \{p' \mid p \in P\}$ and $Q' = \{q' \mid q \in Q\}$ be the
sets resulting from repeating this for every constraint in the
instance. We observe that the combination of $p \in P$ and $q \in Q$ satisfies a
constraint $R_i(x_{i_1}, \ldots, x_{i_k}) \in C$ if and only if $p'[i]
= q'[i]$, due to property (2) in Lemma~\ref{lemma:graph}. Hence, the
instance is satisfiable if and only if the two sets $P'$ and $Q'$
intersect. Since $P'$  and $Q'$ contain at most $|D|^{\frac{n}{2}}$
tuples, each of length $m$, this test can easily be accomplished in
$O^{*}(|D|^{\frac{n}{2}})$ time using standard algorithms.
\end{proof}

\subsection{An $O^{*}(|D|^{\frac{\omega n}{3}})$ algorithm for \textsc{3-NU-CSP}}

\label{section:maj}

The algorithm in Section~\ref{section:malt} used the rectangularity
property of binary relations in order to obtain an improved algorithm
for 2-edge-CSP. In this section we will devise an
$O^{*}(|D|^{\frac{\omega n}{3}})$ time algorithm for
3-NU-CSP 
by exploiting a structural property that is valid for all
ternary relations preserved by $\near_3$. 
Here, $\omega < 2.373$ is the matrix multiplication exponent.
We will need the following definition.

\begin{definition}
  An $n$-ary relation $R$ over $D$ is {\em $k$-decomposable} if there
  for every $t \notin R$ exists an index set $I \subseteq [n]$, $|I|
  \leq k$, such that $\pro_{I}(t) \notin \pro_{I}(R)$.
\end{definition}

In the total case it is known that $R$ is $k$-decomposable if $R$ is
preserved by a total $k$-ary NU-operation~\cite{jeavons1997}. In
general, this is not true for partial NU-operations, but we still
obtain the following result.

\begin{lemma} \label{lemma:decompose}
  Let $R$ be a $k$-ary relation preserved by $\near_k$. Then $R$ is $(k-1)$-decomposable.
\end{lemma}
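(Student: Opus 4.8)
The plan is to prove the contrapositive. It suffices to test projections onto $(k-1)$-element subsets of the coordinates, since any projection onto a smaller set factors through one of these; so I will show that if $t \in D^k$ satisfies $\pro_I(t) \in \pro_I(R)$ for every $I \subseteq [k]$ with $|I| = k-1$, then $t \in R$.

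First I would, for each $i \in [k]$, set $I_i = [k] \setminus \{i\}$ and use the assumption $\pro_{I_i}(t) \in \pro_{I_i}(R)$ to choose a witness tuple $t_i \in R$ with $t_i[j] = t[j]$ for all $j \neq i$; thus $t_i$ is allowed to differ from $t$ only in coordinate $i$. I then apply $\near_k$ to the sequence $t_1, \ldots, t_k$. The crux is a column analysis: fix a coordinate $j \in [k]$ and consider the column $(t_1[j], \ldots, t_k[j])$. For every $i \neq j$ we have $t_i[j] = t[j]$, because $j \in I_i$; hence the only entry of this column that can possibly deviate from the common value $t[j]$ is the $j$-th one. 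This is exactly the shape of arguments allowed by the pattern defining $\near_k$ (all $k$ entries equal, or exactly one deviating), so $\near_k$ is defined on this column and returns $t[j]$. Since this holds for every $j$, the application $\near_k(t_1, \ldots, t_k)$ is defined and equals $t$.

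Finally, since $t_1, \ldots, t_k \in R$ and $R$ is preserved by $\near_k$, it follows that $t = \near_k(t_1, \ldots, t_k) \in R$, as required; by contraposition, every $t \notin R$ is ruled out by the projection onto some $I$ with $|I| \le k-1$, i.e.\ $R$ is $(k-1)$-decomposable. I do not expect a genuine obstacle here — the argument is short. The only point needing care is the bookkeeping in the column analysis: verifying that in each column at most one entry deviates and that, when a deviation occurs, it sits in a position consistent with the near-unanimity pattern so that $\near_k$ evaluates to exactly $t[j]$ (rather than being undefined). The case where a witness $t_i$ happens to equal $t$ on coordinate $i$ as well needs no separate treatment, since it only makes the corresponding column constant.
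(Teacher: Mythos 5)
Your proposal is correct and follows essentially the same argument as the paper: choose for each $i$ a witness $t_i \in R$ agreeing with $t$ outside coordinate $i$ (from the projection onto $[k]\setminus\{i\}$), observe that each column then matches the near-unanimity pattern, and conclude $\near_k(t_1,\ldots,t_k)=t$, so $t\in R$. The only difference is presentational (contraposition with an explicit column analysis versus the paper's contradiction), and your reduction to projections of size exactly $k-1$ is a valid simplification.
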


\begin{proof}
  Let $t$ be a $k$-ary tuple not included in $R$. Assume that
  $\pro_{I}(t) \in \pro_I(R)$ for every index set 
  $I \subseteq [k]$, $|I| < k$. But then there must exist $t_1,
  \ldots, t_k \in R$ such that each $t_i$ differ from $t$ in at most
  one position. This furthermore implies that $\near_k(t_1, \ldots,
  t_k)$ is defined, and therefore also that $\near_k(t_1, \ldots, t_k)
  = t \notin R$. This contradictions the assumption that $\near_k$
  preserves $R$, and we therefore conclude that there must exist an
  index set $I \subseteq [k]$ of size at most $k-1$, such that
  $\pro_I(t) \notin \pro_I(R)$. 
\end{proof}

\begin{theorem} \label{thm:3-nu-csp}
  $3$-NU-CSP is solvable in $O^{*}(|D|^{\frac{\omega n}{3}})$
  time in both the extension oracle model and the explicit representation,
  where $\omega < 2.373$ is the matrix multiplication exponent.
\end{theorem}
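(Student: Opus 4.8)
The plan is to generalise the meet-in-the-middle idea of Theorem~\ref{thm:2edgecsp} by splitting the variables into \emph{three} blocks instead of two, and to exploit that ternary (more generally, arity-$\leq 3$) relations preserved by $\near_3$ are $2$-decomposable (Lemma~\ref{lemma:decompose}) in order to collapse the instance to a triangle-detection problem over an exponentially large graph, which is then solved by fast matrix multiplication.

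In detail, let $(V,C)$ be an instance of $3$-NU-CSP with $V=\{x_1,\ldots,x_n\}$; assume $3\mid n$ (otherwise pad with dummy variables) and fix a partition $[n]=I_1\cup I_2\cup I_3$ into blocks of size $n/3$. Fix a constraint $C_i=R_i(x_{i_1},\ldots,x_{i_k})\in C$, and let $J_1,J_2,J_3$ be the partition of its argument positions induced by $I_1,I_2,I_3$. By Lemma~\ref{lemma:larger}, applied with the $3$-NU polymorphism pattern, the grouped relation $R_i':=R_{i;J_1,J_2,J_3}$ of arity at most $3$ over the product domain is preserved by $\near_3$ over that domain, and hence is $2$-decomposable by Lemma~\ref{lemma:decompose}. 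Thus $R_i'$ is exactly the conjunction of its at-most-three binary projections $\pro_{\{1,2\}}(R_i')$, $\pro_{\{1,3\}}(R_i')$, $\pro_{\{2,3\}}(R_i')$ (and, when one block is untouched by $C_i$, of a unary projection). Viewed over the three ``super-variables'' $y_1,y_2,y_3$ ranging over $D^{I_1},D^{I_2},D^{I_3}$, every constraint of the instance therefore becomes a conjunction of binary and unary constraints, i.e.\ $(V,C)$ is equivalent to a binary CSP on the three variables $y_1,y_2,y_3$.

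Such a binary CSP on three variables is satisfiable iff a suitable tripartite graph $G$ has a triangle. Concretely, set $N=|D|^{n/3}$; let $V_j$ consist of those $a\in D^{I_j}$ satisfying all unary constraints living entirely within block $j$; and, for each unordered pair of blocks, put an edge between $a$ and $b$ (one in each corresponding part) iff for every constraint touching both of those blocks the associated binary projection is satisfied by the relevant restriction of $(a,b)$. Since $G$ is tripartite, every triangle of $G$ is rainbow, and by $2$-decomposability of each $R_i'$ a triangle $\{a,b,c\}$ with $a\in V_1,b\in V_2,c\in V_3$ corresponds exactly to a satisfying assignment of $(V,C)$, and conversely. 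Deciding each of the $O(N^2)$ potential edges costs, per relevant constraint, one membership test in a binary projection of $R_i'$: in the extension oracle model this is a single extension query (does the partial pair extend within $R_i$?), and in the explicit representation one may precompute each binary projection as a hash set from the tuple list of $R_i$ and then perform $O(1)$ lookups. Building $G$ therefore takes $O^*(N^2)=O^*(|D|^{2n/3})$ time, which is dominated by the final step because $\omega>2$. Finally, a triangle in a tripartite graph on $3N$ vertices is found in time $O(N^\omega)$ by one integer product of the $N\times N$ adjacency matrices (compute $A_{12}A_{23}$ over $\mathbb{Z}$ and test it against $A_{13}$), giving an overall running time of $O^*(|D|^{\omega n/3})$.

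The main obstacle is spotting the structural reduction: that $2$-decomposability (Lemma~\ref{lemma:decompose}, via Lemma~\ref{lemma:larger}) collapses every constraint, no matter its arity, to the conjunction of at most three binary projections over the three blocks, so that the exponential search over assignments becomes plain triangle finding. Once this observation is in place, the remaining pieces --- correctness via $2$-decomposability, the extension-oracle bookkeeping, and the matrix-multiplication running time --- are routine. A secondary point requiring care is the handling of constraints that do not touch all three blocks (these contribute only binary or unary constraints and need no decomposition step) and checking that the $O^*(|D|^{2n/3})$ overhead of constructing $G$ indeed stays below $|D|^{\omega n/3}$, which holds since $2n/3<\omega n/3$.
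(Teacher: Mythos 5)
Your proposal is correct and follows essentially the same route as the paper's proof: split the variables into three blocks, use Lemma~\ref{lemma:larger} together with the $2$-decomposability of $\near_3$-invariant relations (Lemma~\ref{lemma:decompose}) to reduce every constraint to checks on pairs of blocks, and detect a triangle in the resulting tripartite graph on $O(|D|^{n/3})$ vertices via fast matrix multiplication. Your added bookkeeping (unary/binary projections, oracle versus explicit representation, and the $O^*(|D|^{2n/3})$ graph-construction overhead) matches what the paper does, just spelled out in slightly more detail.
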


\begin{proof}
  Let $(V, C)$ be an instance of
  3-NU-CSP where $V = \{x_1, \ldots, x_n\}$ and $C = \{C_1, \ldots, C_m\}$. Partition $[n]$
  into three sets $I_1, I_2, I_3$ such that $|I_i| = \frac{n}{3}$ (or, if
  this is not possible, as close as possible). Let $F_1, F_2, F_3$
  denote the set of all partial truth assignments corresponding to
  $I_1, I_2, I_3$, and observe that $|F_i| \leq |D|^{\frac{n}{3}}$. First,
  for each partial truth assignment $f \in F_i$, remove it from the set $F_i$ if there exists a
  constraint in the instance which is not satisfied by $f$. This can
  be done in polynomial time with respect to the number of constraints
  in the instance,  using a extension oracle query for each
  constraint. Second, construct a 3-partite graph where the node set is
  the disjoint union of $F_1$, $F_2$ and $F_3$, and add an edge between
  two nodes in this graph if and only if the combination of this partial truth
  assignment is not contradicted by any constraint in the
  instance. Last, answer yes if and only if the resulting graph contains
  a triangle.

  We begin by proving correctness of this algorithm and then analyse
  its complexity. We first claim that if the combination of $f_1 \in
  F_1, f_2 \in F_2, f_3 \in F_3$ does not satisfy a constraint in the
  instance, then there exists $g_1, g_2 \in F_1 \cup F_2 \cup F_3$ which
  do not satisfy the instance either. Hence, take a constraint
  $R(x_{i_1}, \ldots, x_{i_k}) \in C$, $k = \ar(R)$, which is not
  satisfied by the combination of $f_1, f_2, f_3$. Let $I'_1 = \{j
  \mid i_j \in I_1\}$, $I'_2 = \{j \mid i_j \in I_2\}$, and $I'_3 = \{j
  \mid i_j \in I_3\}$ and consider the relation
  $R_{I'_1, I'_2, I'_3} = \{(\pro_{I_1}(t), \pro_{I_2}(t),
  \pro_{I_3}(t)) \mid t \in R\}$ over the set of values $\pro_{I_1}(R)
  \cup \pro_{I_2}(R) \cup \pro_{I_3}(R)$. By Lemma~\ref{lemma:larger}
  this relation is preserved by the $\near_k$ operation over the larger
  domain, and it then follows from Lemma~\ref{lemma:decompose} that this
  relation is $2$-decomposable.
  But
  then it is easy to see that there must exist partial truth
  assignments $y_1, y_2 \in F_1 \cup F_2
  \cup F_3$ such that $y_1$ and $y_2$ do not satisfy
  $R(x_{i_1}, \ldots, x_{i_k})$. Hence, if $(V,C)$ is satisfiable, then there
  clearly exists a triangle in the 3-partite graph, and if there
  exists a triangle, then by following the reasoning above, the
  instance must be satisfiable.

  For the complexity, we begin by enumerating the three sets of
  partial truth assignments, which takes $O(|D|^{\frac{n}{3}})$
  time. We then remove any partial truth assignment which is not
  consistent with the instance, which increases this by a polynomial
  factor, depending only on the number of constraints and the
  extension queries for each constraint. Similarly, when
  constructing the 3-partite graph we enumerate all binary combinations of
  partial truth assignments from the three sets and check whether they
  are consistent. After this we check for the
  existence of a triangle in the resulting graph with $O(|D|^{\frac{n}{3}})$
  nodes, which can be solved in $O(|D|^{{\frac{n}{3}}^{\omega}})$ time
  for $\omega < 2.373$, using fast matrix
  multiplication.
\end{proof}

\subsection{Strategies for $k$-NU-SAT}

It is easy to see that the strategy used in Theorem~\ref{thm:3-nu-csp} extends to reducing 
\textsc{$k$-NU-CSP} problems to \textsc{$(k, k-1)$-hyperclique}, 
i.e., the problem of finding a $k$-vertex hyperclique in a
$(k-1)$-regular hypergraph.  Thus we get the following. 

\begin{lemma}
  Assume that \textsc{$(k, k-1)$-hyperclique} on $n$ vertices can be
  solved in time $O^*(n^{k-\varepsilon})$ for some $\varepsilon > 0$. 
  Then \textsc{$k$-NU-CSP} admits an improved algorithm in the
  extension oracle model, i.e., an algorithm running in time
  $O^*(|D|^{(1-\varepsilon')n})$ on domain size $D$ and on $n$
  variables, for some $\varepsilon' > 0$. 
\end{lemma}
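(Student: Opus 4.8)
The plan is to generalize the $3$-NU-CSP argument from Theorem~\ref{thm:3-nu-csp} directly. First I would partition the variable set $V=\{x_1,\ldots,x_n\}$ into $k$ blocks $I_1,\ldots,I_k$ of size $n/k$ each, and let $F_j$ denote the set of surviving partial assignments to the variables of $I_j$, i.e., those not contradicting any single constraint of the instance (checked via the extension oracle in polynomial time per constraint). Each $|F_j|\le |D|^{n/k}$. Then I would build a $k$-partite $(k-1)$-uniform hypergraph $H$ whose vertex set is the disjoint union of the $F_j$, and which contains a hyperedge on a $(k-1)$-subset of vertices $\{f_{j_1},\ldots,f_{j_{k-1}}\}$ (one from $k-1$ distinct blocks) whenever the combination of those $k-1$ partial assignments does not violate any constraint of the instance. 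The algorithm answers ``yes'' iff $H$ contains a $k$-clique (i.e., $k$ vertices, one per block, all $\binom{k}{k-1}$ induced hyperedges present).

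The correctness argument is the key step, and it mirrors the $k=3$ case. In one direction, a satisfying assignment restricts to one vertex per block, and any $(k-1)$-subset of these clearly violates no constraint, so it forms a $k$-clique. For the converse, suppose $f_1\in F_1,\ldots,f_k\in F_k$ form a $k$-clique in $H$ but their combination violates some constraint $R(x_{i_1},\ldots,x_{i_r})\in C$. For each $\ell\in[k]$ set $I'_\ell=\{a\mid i_a\in I_\ell\}$, and form the $k$-ary ``regrouped'' relation $R_{I'_1,\ldots,I'_k}$ over the enlarged value set $\bigcup_\ell \pro_{I_\ell}(R)$. By Lemma~\ref{lemma:larger} this relation is preserved by $\near_k$ over the larger domain, so by Lemma~\ref{lemma:decompose} it is $(k-1)$-decomposable. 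Hence the ``bad'' $k$-tuple $(\pro_{I_1}(t_{\text{assignment}}),\ldots)$ has a sub-$(k-1)$-tuple that is already not a projection of $R_{I'_1,\ldots,I'_k}$ --- which translates back into a $(k-1)$-subset of $\{f_1,\ldots,f_k\}$ whose combination violates $R$, contradicting that this subset is a hyperedge of $H$. Therefore the clique gives a satisfying assignment.

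For the running time, enumerating the $F_j$ and pruning against each constraint costs $O^*(|D|^{n/k})$; building $H$ requires checking all $O(k)$-choose-$(k-1)$ tuples spanning $k-1$ blocks, of which there are at most $k\cdot|D|^{(k-1)n/k}$, each checked against each constraint in polynomial time. Finally, the $k$-clique test on $H$, which has $N:=\Theta(|D|^{n/k})$ vertices and is $(k-1)$-uniform, is exactly an instance of \textsc{$(k,k-1)$-hyperclique} on $N$ vertices; by hypothesis it is solvable in $O^*(N^{k-\varepsilon}) = O^*(|D|^{(k-\varepsilon)n/k}) = O^*(|D|^{(1-\varepsilon/k)n})$ time. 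Taking $\varepsilon'=\varepsilon/k$ (and noting the hypergraph-construction cost $|D|^{(k-1)n/k}$ is also $O^*(|D|^{(1-\varepsilon')n})$ for $k\ge 2$), the total is $O^*(|D|^{(1-\varepsilon')n})$ as claimed.

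The main obstacle is purely the bookkeeping in the correctness direction: one must be careful that the ``bad'' tuple witnessing the violated constraint really does regroup to a tuple outside $R_{I'_1,\ldots,I'_k}$ and that the $(k-1)$-decomposability index set $I$, which is a subset of the $k$ \emph{blocks}, pulls back to a genuine hyperedge slot of $H$ (one partial assignment per block, at most $k-1$ blocks). This is routine but needs the observation --- already implicit in the $k=3$ proof --- that a partial assignment restricted to fewer blocks is still among the surviving assignments since pruning only removed assignments contradicting a constraint outright. No new ideas beyond Lemma~\ref{lemma:larger} and Lemma~\ref{lemma:decompose} are needed.
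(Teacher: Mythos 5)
Your proposal is correct and follows essentially the same route as the paper, which proves this lemma simply by noting that the strategy of Theorem~\ref{thm:3-nu-csp} (partition into $k$ blocks, prune via the extension oracle, build the $(k-1)$-uniform consistency hypergraph, and invoke Lemmas~\ref{lemma:larger} and~\ref{lemma:decompose} for correctness) extends verbatim to a reduction to \textsc{$(k,k-1)$-hyperclique}. Your write-up just makes explicit the bookkeeping the paper leaves implicit.
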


However, it should be noted that this is a notoriously difficult
problem, and there is some evidence against such results~\cite{williams2018}.
Thus, we also investigate a less general algorithm that rests on a
milder assumption. 

\subsubsection{$k$-NU-SAT via local search}

We show that subject to a popular conjecture, \textsc{$k$-NU-SAT}
admits an improved algorithm in the explicit representation model via
a local search strategy.
To state this we need a few basic definitions. A \emph{sunflower (with $k$ sets)} 
is a collection of $k$ sets $S_1$, \ldots, $S_k$
with common intersection $S=S_1 \cap \ldots \cap S_k$,
called the \emph{core}, 
such that for every pair $i, j \in [k]$, $i \neq j$,
we have $S_i \cap S_j = S$. Note that we may have $S=\emptyset$.
The \emph{sunflower conjecture}~\cite{JLMS:JLMS0085}, in the form we will need,
states that for every $k$ there is a constant $C_k$
such that for every $n$, every collection of at least
$C_k^n$ sets of cardinality $n$ contains a sunflower with $k$ petals. 
This conjecture was the subject of the Polymath 10 collaborative
mathematics project, but remains a notorious open problem.
See Alon, Shpilka and Umans~\cite{AlonSU13sunflower}
for variations of the conjecture and connections to other problems.

We first show a simple connection between the sunflower conjecture for
sunflowers with $k$ sets and relations $R \in \inv(\near_k)$. 
For convenience, for a set $S \subseteq [n]$ we 
denote by $\chi_S^n$ the tuple $t \in \{0,1\}^n$
such that for each $i \in [n]$, 
$t[i]=1$ is $i \in S$ and $t[i]=0$ otherwise. 

\begin{lemma} \label{lemma:minl-sunflowerfree}
  Let $R \subset \{0,1\}^n$ be a relation with $0^n \notin R$.
  Say that a tuple $t=\chi_S^n$ is \emph{minimal in $R$} 
  if $t \in R$ but for every  $S' \subset S$ we have 
  $\chi_{S'}^n \notin R$. For $i \in [n]$, let $\mathcal{F}_i$
  be the set of minimal tuples in $R$ of Hamming weight $i$.
  If $R$ is preserved by $\near_k$, then $\mathcal{F}_i$ 
  does not contain a sunflower of $k$ sets. 
\end{lemma}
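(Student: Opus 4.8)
The plan is a short proof by contradiction. Suppose $\mathcal{F}_i$ contains a sunflower consisting of $k$ (distinct) sets $S_1, \ldots, S_k$ with core $S = S_1 \cap \cdots \cap S_k$. Since each $S_\ell \in \mathcal{F}_i$ is minimal in $R$, in particular $\chi_{S_1}^n, \ldots, \chi_{S_k}^n \in R$. The goal is to show that $\near_k(\chi_{S_1}^n, \ldots, \chi_{S_k}^n)$ is defined and equals $\chi_S^n$; then invariance of $R$ under $\near_k$ forces $\chi_S^n \in R$, from which we derive a contradiction.

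The computation to carry out is coordinatewise. First I would recall that, over $\{0,1\}$, $\domain(\near_k)$ consists precisely of the two constant tuples $0^k, 1^k$ together with every tuple having exactly one coordinate different from the rest, and that on any such tuple $\near_k$ returns the majority value. Now fix $j \in [n]$ and consider the column $c_j = (\chi_{S_1}^n[j], \ldots, \chi_{S_k}^n[j])$, whose $\ell$-th entry equals $1$ iff $j \in S_\ell$. If $j \in S$ then $c_j = 1^k$. If $j \notin S$, the sunflower property $S_\ell \cap S_{\ell'} = S$ for $\ell \neq \ell'$ means $j$ lies in at most one petal, so $c_j$ is either $0^k$ or has a single $1$; in either case $c_j \in \domain(\near_k)$, and $\near_k(c_j) = 1$ iff $j \in S$. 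Hence the application is defined in every coordinate and its value is exactly $\chi_S^n$, so $\chi_S^n \in R$.

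It then remains to note that $S$ is a proper subset of each $S_\ell$: the $S_\ell$ are distinct and all have cardinality $i$, so no $S_\ell$ can equal $S$, while $S \subseteq S_\ell$ always. If $S \neq \emptyset$, then $\chi_S^n \in R$ with $S \subsetneq S_1$ contradicts the minimality of $\chi_{S_1}^n$; if $S = \emptyset$, then $\chi_S^n = 0^n \in R$ contradicts the hypothesis $0^n \notin R$. Either way we reach the desired contradiction, so $\mathcal{F}_i$ contains no sunflower of $k$ sets.

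The only real obstacle is the middle step, and specifically the observation that the sunflower condition is precisely what keeps every non-core column inside $\domain(\near_k)$: a coordinate shared by two but not all petals would yield a column with two deviating entries, outside the domain of the partial near-unanimity operation, and the argument would fail. Everything else — the description of $\domain(\near_k)$ over the Boolean domain, and the elementary cardinality argument giving $S \subsetneq S_\ell$ — is routine.
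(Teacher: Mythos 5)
Your proof is correct and follows essentially the same route as the paper: assume a sunflower of $k$ distinct sets in $\mathcal{F}_i$, observe that the sunflower condition makes the coordinatewise application of $\near_k$ to the characteristic tuples defined with value $\chi_S^n$, and derive a contradiction with minimality (or with $0^n \notin R$ when the core is empty). The only difference is that you spell out the column-by-column verification that the paper leaves implicit, which is a fine addition.
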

\begin{proof}
  Let $\mathcal{F}_i$ be as in the statement, and assume that
  $R$ is preserved by $\near_k$. Assume that 
  there are distinct sets $S_1, \ldots, S_k$ forming
  a sunflower with some core $S$, such that
  $\chi_{S_j} \in \mathcal{F}_i$ for every $j \in [k]$. 
  But then the operation $\near_k(\chi_{S_1}, \ldots \chi_{S_k})$
  is defined, and produces the tuple $\chi_S$. 
  This contradicts that the tuples are minimal in $R$. 
\end{proof}

We show that the sunflower conjecture is sufficient to allow an
improved algorithm. 

\begin{lemma} \label{lemma:sunflower-knualg}
  Assume that the sunflower conjecture holds for sunflowers with $k$
  sets, with some constant $C_k$. 
  Let $\Gamma$ be a sign-symmetric language preserved by $\near_k$. 
  Assume that for every $n$-ary relation $R \in \Gamma$
  and every $p \in [n]$, the minimal tuples in $R$
  of Hamming weight at most $p$ can be enumerated in time
  $O^*(2^{O(p)})$. 
  Then $\SAT(\Gamma)$ admits an improved algorithm. 
\end{lemma}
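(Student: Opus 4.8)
The plan is to realise a Sch\"oning/Dantsin-style local search. Using sign-symmetry we reduce satisfiability to the question of whether a Hamming ball of radius $t$ around $0^n$ contains a solution, and we answer that question by a branching procedure whose branching is controlled by Lemma~\ref{lemma:minl-sunflowerfree} together with the hypothesised sunflower bound, while the subproblems that arise are kept inside $\Gamma$ by coordinate-freezing (condition~(2) of sign-symmetry).

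First I would design the core routine $\textsc{Search}(a,t)$: given $a \in \{0,1\}^n$ with $\mathrm{wt}(a)\le t$, decide whether the instance has a satisfying assignment $y$ with $y \ge a$ coordinate-wise and $\mathrm{wt}(y)\le t$. If $a$ satisfies every constraint, return \emph{yes}. Otherwise pick a violated constraint $R(v_1,\dots,v_r)$ with scope $V_R=\{v_1,\dots,v_r\}$, let $P\subseteq[r]$ be the positions of $V_R$ on which $a$ is already $1$, and let $R'\in\Gamma$ be $R$ with the positions in $P$ frozen to $1$; then $0^r\notin R'$ and $R'$ is still preserved by $\near_k$. Call $Q\subseteq[r]\setminus P$ a \emph{repair} if the $r$-tuple $\chi_{P\cup Q}$ with support $P\cup Q$ lies in $R$; an easy check shows that the inclusion-minimal repairs are exactly the sets $(\mathrm{supp}(s)\setminus P)$ for the minimal tuples $s$ of $R'$. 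Enumerate all minimal tuples of $R'$ of Hamming weight at most $t$ — possible in $O^*(2^{O(t)})$ by the hypothesis applied to $R'\in\Gamma$ — extract the minimal repairs $Q$ of size at most $t-\mathrm{wt}(a)$, and for each such $Q$ recurse on the assignment obtained from $a$ by setting to $1$ the variables of $V_R$ indexed by $Q$; return \emph{yes} iff some branch does (and \emph{no} otherwise).

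Correctness is by downward induction on $t-\mathrm{wt}(a)$: if a witness $y\ge a$ of weight $\le t$ exists and $a$ is not satisfying, then $y$ restricted to $V_R$ lies in $R'$ and hence dominates some minimal tuple of $R'$; the corresponding minimal repair $Q$ is contained in the support of $y|_{V_R}$, so raising $a$ on $Q$ keeps the assignment below $y$, and $\mathrm{wt}(a)+|Q|\le\mathrm{wt}(y)\le t$, so $Q$ is one of the enumerated branches and $y$ is still a valid, smaller-budget witness for that child call. For the running time, apply Lemma~\ref{lemma:minl-sunflowerfree} to $R'$ and note that shifting every minimal tuple by the common set $P$ carries sunflowers to sunflowers; hence the minimal repairs of size $\ell$ form a $k$-sunflower-free family and, by the sunflower conjecture, number at most $C_k^{\,\ell}$. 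Each such repair raises $\mathrm{wt}(a)$ by exactly $\ell$, so the number $N(b)$ of nodes of the search tree rooted at a node with remaining budget $b=t-\mathrm{wt}(a)$ obeys $N(b)\le 1+\sum_{\ell=1}^{b}C_k^{\,\ell}N(b-\ell)$, which gives $N(b)=2^{O(b)}$; combined with $O^*(2^{O(t)})$ work at each node (enumeration, scanning for a violated constraint), $\textsc{Search}(0^n,t)$ runs in $O^*(2^{O(t)})$, the hidden constants depending only on $C_k$ and the enumeration constant.

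Finally I would assemble the algorithm by a covering argument. Fix a small $\varepsilon>0$, set $t=\lceil\varepsilon n\rceil$, and repeat $O^*\!\big(2^{n}/\binom{n}{\le t}\big)$ times: draw a uniform sign pattern $s\in\{0,1\}^n$, replace each constraint $R(v_1,\dots,v_r)$ by the corresponding constraint over the sign-flip of $R$ (a $\SAT(\Gamma)$ instance by sign-symmetry, whose solutions are the $s$-translates of the original ones), and run $\textsc{Search}(0^n,t)$ on it. If the original instance has a solution $y^*$, then $\Pr_s[\mathrm{wt}(y^*\oplus s)\le t]=\binom{n}{\le t}/2^{n}$, so with high probability some iteration answers \emph{yes}; conversely the procedure answers \emph{yes} only on a genuinely satisfiable instance. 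The total running time is $O^*\!\big(2^{n}\cdot 2^{O(\varepsilon n)}/\binom{n}{\le\varepsilon n}\big)=O^*\!\big(2^{(1-H(\varepsilon)+O(\varepsilon))n}\big)$ with $H$ the binary entropy function, which is $O^*(c^{n})$ for a constant $c<2$ once $\varepsilon$ is small enough, since $H(\varepsilon)/\varepsilon\to\infty$ as $\varepsilon\to0$. (The sign pattern could be derandomised with a covering code, and a true random-walk variant would give a better base, but neither is needed here.) The one point that needs care — rather than being genuinely hard — is precisely the step that collapses the branching to $2^{O(t)}$: one must confirm that the frozen relations $R'$ remain inside $\Gamma$, so that both the enumeration hypothesis and Lemma~\ref{lemma:minl-sunflowerfree} apply to them, and that shifting by $P$ transports sunflower-freeness from minimal tuples to minimal repairs; everything else is a routine instantiation of the local-search template.
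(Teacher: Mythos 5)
Your proposal is correct and follows essentially the same route as the paper's proof: a local-search routine that, at each violated constraint, branches over the enumerated minimal tuples, bounds the branching by Lemma~\ref{lemma:minl-sunflowerfree} together with the sunflower conjecture via the recurrence $T(p)\le c^p+\sum_i c^i T(p-i)$, and then wraps this in a standard Sch\"oning-style covering argument. The only differences are bookkeeping: you run a \emph{monotone} search using coordinate-freezing (a variant the paper itself points to via Fomin et al.) instead of re-applying sign patterns around the current assignment, and you spell out the sunflower transfer under the shift by $P$ and the outer random-shift loop, which the paper delegates to citations.
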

\begin{proof}
  We first show that the assumptions are sufficient to allow 
  a solution for the \emph{local search} problem for $\SAT(\Gamma)$,
  in the following form. Let an instance $(V,C)$ of $\SAT(\Gamma)$
  with $|V|=n$, a tuple $t \in \{0,1\}^n$, and an integer $p \in [n]$ 
  be provided. We can in $O^*(2^{O(p)})$ time decide whether there is
  a tuple $t' \in \{0,1\}^n$ 
  with Hamming distance at most $p$ from $t$
  that satisfies $(V,C)$.

  For this, we repeatedly perform the following procedure. 
  Verify whether the present tuple $t$ satisfies $(V,C)$,
  and if not, let $R(X)$ be a constraint in $C$ falsified by $t$, and
  let $I \subseteq [n]$ be the set of indices corresponding to the set
  of variables $X$.
  Let $s$ be the sign pattern such that $(\pro_I(t))^s = 0^{|X|}$. 
  Note that $R^s \in \Gamma$ by assumption. We then enumerate
  the minimal tuples in $R^s$ of Hamming weight at most $p$,
  and for every such tuple $t'$, of weight $i$, let $t''$ be
  the tuple $t$ with bits flipped according to $t'$, and 
  recursively solve the local search problem from tuple $t''$
  with new parameter $p-i$. Correctness is clear, since the search
  is exhaustive (because we loop through all minimal tuples).  
  We argue that this solves the local search problem itself in
  $O^*(2^{O(p)})$ time. For the running time, assume for simplicity
  that producing the tuples takes $O^*(c^p)$ time and, for the same
  constant $c$, there are at most $c^i$ minimal tuples of weight $i$
  (by Lemma~\ref{lemma:minl-sunflowerfree}). Up to polynomial factors, 
  the running time is then bounded by a recurrence
  \[
  T(p) = c^p + \sum_{i=1}^p c^iT(p-i),
  \]
  which is bounded as $T(p) \leq (2c)^p$. 
  
  From here on, well-known methods can be used to complete 
  the above into an improved algorithm; cf. Sch\"oning's algorithm for
  $k$-SAT~\cite{schoning1999} and its derandomization~\cite{DantsinGHKKPRS02}, 
  or even restrict the above to \emph{monotone} local search instead of
  arbitrary local search and apply the method of
  Fomin et al.~\cite{FominGLS16localsearch}.  
\end{proof}

In particular, this is allows for an algorithm in the explicit
representation model.  

\begin{theorem} \label{theorem:sunflower-algorithm}
  Assume that the sunflower conjecture holds for sunflowers with $k$
  sets. Then \textsc{$k$-NU-SAT} admits an improved algorithm in 
  the explicit representation model. 
\end{theorem}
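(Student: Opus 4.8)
The plan is to obtain this theorem as a short corollary of Lemma~\ref{lemma:sunflower-knualg}, instantiated with $\Gamma = \inv(\near_k)$, so that $\SAT(\Gamma)$ is exactly \textsc{$k$-NU-SAT}. That lemma has three hypotheses: that $\Gamma$ is sign-symmetric, that $\Gamma$ is preserved by $\near_k$, and that for every $n$-ary $R \in \Gamma$ and every $p$ the minimal tuples of $R$ of Hamming weight at most $p$ can be enumerated in time $O^*(2^{O(p)})$. The first two hold trivially: $\near_k$ satisfies a polymorphism pattern (Section~\ref{section:patterns}) and is therefore a pSDI-operation, so $\inv(\near_k)$ is sign-symmetric (as $\inv(f)$ is sign-symmetric for every pSDI-operation $f$); and $\inv(\near_k)$ is by definition the class of relations preserved by $\near_k$. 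So the only thing that needs checking is the enumeration hypothesis, in the explicit representation model.

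For this, suppose $R \in \inv(\near_k)$ is given as an explicit list of its tuples, with $0^{\ar(R)} \notin R$, and let $p$ be given. First I would bound the output size: by Lemma~\ref{lemma:minl-sunflowerfree}, for each weight $i$ the family of supports of the minimal tuples of $R$ of weight $i$ is a sunflower-free family of $i$-element sets, so under the sunflower conjecture for $k$ sets it has fewer than $C_k^{\,i}$ members; summing over $i \le p$, the number of minimal tuples of weight at most $p$ is $O(C_k^{\,p}) = 2^{O(p)}$. To produce them, scan the list of $R$, keep only the tuples of weight at most $p$, sort these by increasing weight, and process them in that order while maintaining a list $L$ of the minimal tuples discovered so far; a freshly processed tuple $t$ is inserted into $L$ precisely when no element of $L$ has support contained in $\mathrm{supp}(t)$. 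Processing by increasing weight guarantees that every minimal tuple strictly below $t$ has already been inserted, so this correctly identifies minimality, and each membership-in-$L$ test costs $|L|\cdot\mathrm{poly}(n) = 2^{O(p)}\cdot\mathrm{poly}(n)$; the whole procedure runs in $\mathrm{poly}(|R|,n)\cdot 2^{O(p)}$, which is $O^*(2^{O(p)})$ under the usual convention that $O^*$ absorbs polynomial factors in the instance size. (When the local-search subroutine of Lemma~\ref{lemma:sunflower-knualg} needs minimal tuples of a negated relation $R^s$, one simply computes the explicit list of $R^s$ from that of $R$ by flipping the indicated bits; $R^s \in \inv(\near_k)$ by sign-symmetry.)

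With all three hypotheses verified, Lemma~\ref{lemma:sunflower-knualg} yields an improved algorithm for $\SAT(\inv(\near_k))$, i.e.\ for \textsc{$k$-NU-SAT}, in the explicit representation. Since essentially all of the real work — setting up the recursive local search, bounding its running time by the recurrence $T(p) = c^p + \sum_{i=1}^{p} c^i\,T(p-i) \le (2c)^p$, and wrapping it into a Sch\"oning-style global algorithm (or a monotone-local-search variant à la Fomin et al.) — is already carried out inside Lemma~\ref{lemma:sunflower-knualg}, there is no substantial further obstacle here. The one genuinely load-bearing point is precisely the enumeration step: the explicit representation lets us list the (provably few, by Lemma~\ref{lemma:minl-sunflowerfree}) minimal tuples within the required time budget, whereas it is not known how to perform this enumeration efficiently from an extension oracle alone.
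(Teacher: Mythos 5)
Your proposal is correct and follows essentially the same route as the paper, which obtains Theorem~\ref{theorem:sunflower-algorithm} as an immediate consequence of Lemma~\ref{lemma:sunflower-knualg} together with the observation that the explicit representation makes the required enumeration of minimal tuples feasible (with the count bounded via Lemma~\ref{lemma:minl-sunflowerfree} and the sunflower conjecture). You merely spell out the enumeration procedure that the paper leaves implicit, and that filling-in is accurate.
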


We leave it as an open question whether access to an extension oracle
(also known as an interval oracle)
suffices to solve the local search problem in single-exponential
time. The problem, of course, is that the bounds above only apply to
the \emph{minimal} tuples, and while it is easy to find a single
minimal tuple using an extension oracle, it is less obvious how to 
test for the existence of a minimal tuple within a given interval.
Meeks~\cite{Meeks16IPEC} showed how a similar result is possible,
but her method would require an oracle for finding minimal satisfying
tuples of weight \emph{exactly} $i$, which is also not clear how to do.

\subsubsection{$k$-NU-SAT and bounded block sensitivity}

Finally, we briefly investigate connections between the $\near_k$
partial operation and a notion from Boolean function analysis known as
\emph{block sensitivity}, introduced by Nisan~\cite{Nisan91SICOMP}.
See also the book by O'Donnell~\cite{ODonnellBook2014}.

We first introduce some temporary notation.
For any relation $R \subseteq \{0,1\}^n$,
let $f_R: \{0,1\}^n \to \{0, 1\}$ be a function
defined as $f_R(t)=[t \in R]$, i.e., $f_R(t)=1$
if $t \in R$ and $f_R(t)=0$ otherwise. 
For a tuple $t \in \{0,1\}^n$ and 
a set $S \subseteq [n]$, let $t^S$ 
denote the tuple $t$ with the bits of $S$ flipped.
A function $f:\{0,1\}^n \to \{0,1\}$
has \emph{block sensitivity} bounded by $b$
if for every $t \in \{0,1\}^n$ there are at 
most $b$ disjoint sets $S_1, \ldots, S_b \subseteq [n]$
such that $f(t^{S_i})\neq f(t)$ for every $i \in [b]$.
We show that $\near_k$ can be seen as a one-sided version of block
sensitivity. 

\begin{lemma}
  Let $R \subseteq \{0,1\}^n$ be a relation.
  Then $f_R$ has block sensitivity less than $k$
  if and only if both $R$ and its 
  complement $\overbar{R} := \{0,1\}^n \setminus R$
  are preserved by $\near_k$. 
\end{lemma}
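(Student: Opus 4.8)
The plan is to prove both implications by contraposition, after first setting up the combinatorial dictionary that makes the two notions coincide. The key preliminary observation is to unpack $\domain(\near_k)$ over the Boolean domain: from the defining pattern, a column $(c_1,\dots,c_k)\in\{0,1\}^k$ lies in $\domain(\near_k)$ exactly when it is one of the two constant tuples or has Hamming weight $1$ or $k-1$, and on each such column $\near_k$ returns the majority bit. Hence, given tuples $t_1,\dots,t_k$, the application $\near_k(t_1,\dots,t_k)$ is defined iff in every coordinate at most one of the $t_i$ deviates from the common value of the others; when this holds, writing $t$ for the coordinatewise majority (so $t=\near_k(t_1,\dots,t_k)$) and $S_i=\{j: t_i[j]\neq t[j]\}$, the sets $S_1,\dots,S_k$ are pairwise disjoint and $t_i=t^{S_i}$. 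Conversely, for \emph{any} tuple $t$ and pairwise disjoint $S_1,\dots,S_k\subseteq[n]$, the application $\near_k(t^{S_1},\dots,t^{S_k})$ is always defined and equals $t$: outside $\bigcup_i S_i$ each column is constant, and inside a unique $S_m$ each column has weight $1$ or $k-1$ with majority value $t[j]$.

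For the direction ``both $R$ and $\overbar{R}$ preserved by $\near_k$ $\Rightarrow$ $f_R$ has block sensitivity $< k$'', I would argue the contrapositive. Suppose $f_R$ has block sensitivity $\geq k$, witnessed by a point $t$ and pairwise disjoint $S_1,\dots,S_k$ with $f_R(t^{S_i})\neq f_R(t)$ for all $i$ (each $S_i\neq\emptyset$ automatically, since flipping an empty block changes nothing). By the observation above $\near_k(t^{S_1},\dots,t^{S_k})=t$. If $t\notin R$, then $t^{S_1},\dots,t^{S_k}\in R$, so $R$ is not preserved by $\near_k$; if $t\in R$, then $t^{S_1},\dots,t^{S_k}\in\overbar{R}$, so $\overbar{R}$ is not preserved. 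Either way the conclusion fails.

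For the reverse direction, suppose one of $R,\overbar{R}$ is not preserved by $\near_k$. Since $f_{\overbar{R}}=1-f_R$ has the same block sensitivity as $f_R$ and $\overbar{\overbar{R}}=R$, the two cases are symmetric, so assume $R$ is not preserved: take witnesses $t_1,\dots,t_k\in R$ with $t=\near_k(t_1,\dots,t_k)$ defined and $t\notin R$. Because $t\notin R$ we have $t_i\neq t$ for every $i$, so the sets $S_i=\{j: t_i[j]\neq t[j]\}$ supplied by the dictionary are nonempty, pairwise disjoint, and satisfy $t_i=t^{S_i}$ with $f_R(t^{S_i})=1\neq 0=f_R(t)$. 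Thus $t$ together with $S_1,\dots,S_k$ witnesses block sensitivity $\geq k$, so $f_R$ does \emph{not} have block sensitivity $< k$.

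I do not anticipate a real obstacle: once the Boolean domain of $\near_k$ is identified correctly, both directions amount to the same reindexing between ``a $\near_k$-witness $(t_1,\dots,t_k)\mapsto t$'' and ``a majority point $t$ carrying $k$ disjoint sensitive blocks''. The only points needing a little care are (i) verifying that the columns arising from disjoint block flips always fall in $\domain(\near_k)$ (they have weight $0$, $1$, $k-1$, or $k$), and (ii) recording that block sensitivity is measured with nonempty blocks, which on the algebraic side corresponds exactly to $t_i\neq t$, forced by $t\notin R$.
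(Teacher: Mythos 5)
Your proof is correct and follows essentially the same route as the paper's: both directions are argued by the same correspondence between a defined application $\near_k(t_1,\dots,t_k)=t$ and a point $t$ with $k$ pairwise disjoint sensitive blocks $S_i=\{j: t_i[j]\neq t[j]\}$, split on whether $f_R(t)=0$ or $1$. Your preliminary "dictionary" just makes explicit the Boolean domain of $\near_k$ (constant columns plus weight $1$ and $k-1$), which the paper leaves implicit.
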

\begin{proof}
  In the first direction, assume that $f$ has block sensitivity 
  at least $k$. Let $t \in \{0,1\}^n$ be a tuple and let $[n]=X_0 \cup
  \ldots \cup X_k$ be a partition of $[n]$ into blocks such that for
  each $1 \leq i \leq k$,  we have $f(t^{X_i}) \neq f(t)$. 
  Then if $f(t)=1$, then the tuples $t^{X_i}$ form a witness
  that $R$ is not preserved by $\near_k$, and if $f(t)=0$ they
  form a witness against $\overbar{R}$ being preserved by $\near_k$. 
  In the other direction, let $t_1, \ldots, t_k \in R$
  be such that $\near_k(t_1, \ldots, t_k) = t$ is defined
  and $t \notin R$. For $i \in [k]$, let $X_i$ be the positions $j$
  where $t[j] \neq t_i[j]$. Then $X_1 \cup \ldots \cup X_k$ forms
  a subpartition of $[n]$, showing that $f$ has block sensitivity
  at least $k$. The case that $\overbar{R}$ is not preserved
  by $\near_k$, instead of $R$, is completely dual.
\end{proof}

It is known that a block sensitivity of at most $b$ implies a
\emph{certificate complexity} of at most $b^2$, i.e., for any relation
$R \in \inv(\near_k)$ and any tuple $t \in \overbar{R}$, there are at
most $b^2$ bits in $t$ that certify that $t \notin R$~\cite{Nisan91SICOMP}. 
This suggests a branching or local search algorithm for $\SAT(\Gamma)$
where $\Gamma$ contains such relations. However, more strongly, it
implies that $R$ has a decision tree of bounded
depth~\cite{Nisan91SICOMP}, and thus, since $k$ is a 
constant, that $R$ only depends on constantly many arguments. 
Thus, block sensitivity is a significantly stronger
restriction than what $\near_k$ imposes.

However, one related question remains. Assume that $R$ is an $n$-ary
relation preserved by $\near_k$, and which does depend on all its
arguments. Is there a non-trivial upper bound on $|R|$, e.g.,
does it hold that $|R| \leq (2-\varepsilon_k)^n$ for some
$\varepsilon_k$ depending on $k$? A positive answer to this question
would imply a trivial improved algorithm for \textsc{$k$-NU-SAT} via
enumeration of satisfying assignments, constraint by constraint.


\subsection{Symmetric 3-edge-SAT}

We finish this section with a result showing that a number of special
cases of \textsc{3-edge-CSP} admits an improved algorithm via sparse
triangle finding. The class in particular contains 
\textsc{3-edge-SAT} for symmetric relations $R \in \inv(\edge_3)$. 
We begin by characterising the symmetric relations in
$\inv(\edge_3)$. 

\begin{lemma} \label{lemma:cases-symmetric-3edge}
  Let $R \subseteq \{0,1\}^n$ be a symmetric relation preserved by $\edge_3$,
  Let $S \subseteq \{0,\ldots,n\}$ be the weights accepted by $R$. 
  Then either $S$ is a complete arithmetic progression (possibly a
  trivial one, of length 1), or $S=\{a,a+b\}$ or $S=\{n-a, n-a-b\}$
  for some $a<b$. 
\end{lemma}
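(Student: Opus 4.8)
The plan is to first pin down exactly which applications of $\edge_3$ are possible among tuples of prescribed Hamming weights, and then to read the structure of $S$ off from that. Recall that $\edge_3$ is the $4$-ary partial operation whose domain is the eight-element set of ``column patterns'' $\{0000,1111,1100,0011,1010,0101,1110,0001\}\subseteq\{0,1\}^4$, with value $1$ on $1111,0011,0101,1110$ and value $0$ on the four complementary patterns. For an application $\edge_3(t_1,t_2,t_3,t_4)=t$ with $t_1,\dots,t_4\in R$ (and $R$ of arity $n$), let $m_c$ be the number of coordinates whose column equals the pattern $c$. Then $\sum_c m_c=n$, and the weights $w_1,\dots,w_4$ of $t_1,\dots,t_4$ and the weight $w$ of $t$ are fixed $\{0,1\}$-linear functions of the $m_c$; eliminating variables yields the Maltsev-type identity $w=w_2+w_3-w_1$. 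Solving the resulting system over the nonnegative integers, I would obtain the feasibility criterion: with $w:=w_2+w_3-w_1$, an application of $\edge_3$ among tuples of weights $w_1,w_2,w_3,w_4$ exists if and only if
\[ \max(0,w_2-w)+\max(0,w_3-w)+\max(0,w-w_4)\ \le\ \min(w_1,\,n-w_4). \]
(This inequality already forces $0\le w\le n$, so no extra side condition is needed. Since $R$ is symmetric, the result tuple lies in $R$ precisely when $w\in S$, so $R\in\inv(\edge_3)$ is equivalent to: $w_2+w_3-w_1\in S$ for all $w_1,\dots,w_4\in S$ meeting the inequality.) In practice only a few instances are needed, so one may instead simply exhibit the relevant count vectors $(m_c)$ and check that they are nonnegative and sum to $n$.

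From the criterion I would extract two closure facts, valid for every $n$: if $x<y$ lie in $S$, then $2y-x\in S$ whenever $2y-x\le\max S$ (take $(w_1,w_2,w_3,w_4)=(x,y,y,\max S)$), and $2x-y\in S$ whenever $2x-y\ge\min S$ (take $(w_1,w_2,w_3,w_4)=(y,x,x,\min S)$); in both cases the criterion is immediate. Let $d$ be the smallest positive difference between elements of $S$, witnessed by $s_0,s_0+d\in S$. Iterating the first fact upward and the second downward from the pair $\{s_0,s_0+d\}$ puts the whole difference-$d$ progression between $\min S$ and $\max S$ into $S$, and minimality of $d$ rules out any further element (no element can lie strictly between two consecutive terms, below the bottom term, or above the top term). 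Hence, unless $|S|\le1$ (the trivial-progression case), $S=\{a,a+d,\dots,z\}$ with $a=\min S$, $z=\max S$.

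To control the endpoints I would use ``forbidden'' applications. If $a\ge d$ and $a+2d\le n$, then $(w_1,w_2,w_3,w_4)=(a+d,a,a,a)$ satisfies the criterion and produces an output of weight $a-d\in\{0,\dots,n\}\setminus S$, contradicting $R\in\inv(\edge_3)$; hence $a<d$ or $a+2d>n$. Because $\edge_3$ is a pSDI-operation, hence self-dual, $R\in\inv(\edge_3)$ if and only if the bit-complement $R^-=\{\overline{t}:t\in R\}$ is; $R^-$ is symmetric with weight set $\{n-w:w\in S\}$, again a difference-$d$ progression but with minimum $n-z$, so the same dichotomy applied to $R^-$ gives $n-z<d$ or $(n-z)+2d>n$.

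Finally I would assemble the cases. If $|S|\ge3$ then $z\ge a+2d\ge 2d$, so $a+2d\le z\le n$ eliminates the second alternative of the first dichotomy (forcing $a<d$) and $2d\le z$ eliminates the second alternative of the second dichotomy (forcing $n-z<d$); thus $a<d$ and $z>n-d$, i.e.\ $S$ is a complete arithmetic progression. If $|S|=2$ then $z=a+d$, and the first dichotomy yields either $a<d$, whence $S=\{a,a+b\}$ with $b=d>a$, or $a+2d>n$, whence $n-z=n-a-d<d$ and $S=\{n-a',\,n-a'-b\}$ with $a'=n-z$ and $b=d$ satisfying $a'<b$; these are exactly the second and third forms of the statement. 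The case $|S|\le1$ is the trivial progression (with $S=\emptyset$ treated as degenerate). The one genuinely laborious step is establishing the feasibility criterion --- setting up the eight-variable linear system and determining its nonnegative-integer feasibility; everything afterward is bookkeeping, and the complementation remark halves the endpoint analysis.
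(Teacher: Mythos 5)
Your proof is correct, but it is organised rather differently from the paper's. Your feasibility criterion is right: writing an application of $\edge_3$ with output $t$ as a choice of pairwise disjoint deviation sets $X,Y,Z$ (argument $1$ deviating on $X\cup Y$, arguments $2,3,4$ on $X,Y,Z$ respectively), feasibility for prescribed weights amounts to the two budget inequalities $\max(0,w-w_2)+\max(0,w-w_3)+\max(0,w-w_4)\le w$ and $\max(0,w_2-w)+\max(0,w_3-w)+\max(0,w_4-w)\le n-w$, and your single inequality with the $\min(w_1,n-w_4)$ right-hand side is equivalent to this pair (using $w_1=w_2+w_3-w$); moreover the four specific instances you actually invoke are easily verified by exhibiting the packings directly, as you note, so the proof does not hinge on the general criterion. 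Where you diverge from the paper: the paper works with a local claim about a pair $a,a+b\in S$ that fails to extend by one step (packed as $|T_0|=a-b$, $|T_i|=b$, which forces the side condition $a\ge b$, $a+2b\le n$), and then, for $|S|>2$, runs a somewhat delicate case analysis involving a third element $c$ and the minimal non-extending difference. You instead anchor the fourth argument at $\max S$ (respectively $\min S$), which removes the side conditions and yields unconditional two-sided closure under $y\mapsto 2y-x$; this gives at once that $S$ is the full difference-$d$ progression between its extremes for the minimal gap $d$, after which a single forbidden application $(a+d,a,a,a)$ plus self-duality (passing to the complemented relation) settles both endpoints. The paper's route avoids any mention of $\max S$/$\min S$ as padding weights but pays for it with the extra case analysis; yours trades that for the min/max-anchored closure facts and the complementation symmetry, and is arguably the cleaner decomposition. (One cosmetic point on both sides: the argument order matters for $\edge_3$ — in your applications, as in the paper's, the "Maltsev-like" argument whose deviation set is $X\cup Y$ must sit in position $1$, so the weight vectors you write should be read with that convention; your identity $w=w_2+w_3-w_1$ is stated consistently with it.)
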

\begin{proof}
  Let us first make a simpler claim: 
  If $a, a+b \in S$ is a pair that does not extend to a complete
  progression in $S$, then either $a-b<0$ or $a+2b > n$. 

  To see this, let $a, a+b \in S$, and assume
  $a+2b \notin S$, $a+2b \leq n$. 
  First assume $a \geq b$. We subpartition $[n]$ into one set 
  $T_0$ of size $a-b \geq 0$  and three sets $T_i$ of size $b$, $i=1, 2, 3$.
  This is possible since $a-b+3b =a+2b\leq n$.
  Let $t=\chi_{T_0 \cup \ldots \cup T_3}$ and for $i=1, 2, 3$ let $t_i=t^{T_i}$.
  Finally, let $t_4=t^{T_1 \cup T_2}$.
  Then $\edge_3(t_1,\ldots,t_4)$ is defined and produces $t$.
  Thus we conclude $a<b$, i.e., $a-b < 0$. 
  By the symmetric argument, if $a, a+b \in S$
  with $a-b \geq 0$ and $a-b \notin S$, then $a+2b > n$. 
  This finishes the claim.

  Next, assume that $|S|>2$ and that $S$ contains some pair $a, a+b$
  such that the progression does not continue. Let $b>0$ be the
  smallest value such that such a pair exists, and again by symmetry 
  assume that $a+2b \leq n$; thus $a-b<0$. Let $c \in S \setminus \{a,a+b\}$. 
  First assume $c > a+2b$. Then we may, similarly to above,
  pack sets with $|T_0|=a$, $|T_1|=|T_2|=b$, and $|T_3|=c-a-2b$,
  and we have a witness showing $a+2b \in S$.  But in the remaining
  cases, $c$ must be involved in a complete progression with either
  $a$ or $a+b$, by the choice of $a$ and $b$. It is easy to check that
  this implies  the existence of a value $c' \in S$ with $a < c' < a+b$, 
  and that iterating the claim eventually produces an arithmetic
  progression of step size dividing $b$, covering $a$ and $a+b$,
  contradicting the assumption that $a+2b \notin S$. 
  Thus $|S|=2$, i.e., $S=\{a,a+b\}$. 
\end{proof}

In particular, this lemma shows that every symmetric relation in
$\inv(\edge_2)$ is a simple arithmetic progression. It also shows that
$R$ has a simple-to-compute \emph{2-edge embedding}, i.e., $\hat R
\supseteq R$, $\hat R \cap \{0,1\}^{\ar(R)} = R$, and $\hat R$ is
preserved by a total 2-edge operation~\cite{LagerkvistW17CP}, produced
by extending $S$ into a complete progression.

We now describe the algorithm. Let $R$ be a relation with arguments $X$.
For a partition $X=X_1 \cup X_2$ and an assignment $f$ to $X_1$, 
we refer to the \emph{2-edge label} of $f$ as the pair
$(f_0, g_0)$ produced by first extending $f$ to a lex-min assignment
$g_0$ such that $(f,g_0) \in R$, then extending $g_0$ to a lex-min
assignment $f_0$ such that $(f_0, g_0) \in R$. Note that this is the
same procedure used in the algorithm for \textsc{2-edge-CSP}.

We extend this to 3-partite graphs as follows. Let the variable
set be partitioned as $[n] = X \cup Y \cup Z$, and define a graph
$G=(V,E)$ with partition $V=V_X \cup V_Y \cup V_Z$, where the nodes of
each part represent partial assignments as in Section~\ref{section:maj}.
For each edge, verify that the corresponding partial assignment is 
consistent with each relation in the input instance.  
We proceed to give labels to edges of $G$ for each relation $R$ as
follows. We assume that for each relation, the ``type'' of $R$ is
known to us (2-edge, 3-NU, or symmetric 3-edge). 
If $R \in \inv(\near_3)$, all edges get the same label. 
Otherwise, let $\hat R \supseteq R$ be the 2-edge-embedding of $R$
(with $\hat R=R$ if $R$ is already 2-edge). 
Let $pq$ be an edge in $G$, corresponding to partial assignments $p, q$. 
If one of these assignments, say $p$, is an assignment to $X$, then we 
set the label of $pq$ to the 2-edge label of $p$
in the partition $X \cup (Y \cup Z)$. Otherwise, $p \cup q$ is an 
assignment to $Y \cup Z$, and we set the label of $pq$
to the 2-edge label of this assignment in $X \cup (Y \cup Z)$.
We show that this label scheme captures our language. 

\begin{lemma} \label{lemma:label-triangles-work}
  Let $R$ be a relation with arguments $U$, for some $U \subseteq [n]$, 
  and let $G=(V,E)$ and $X \cup Y \cup Z$ be as above. 
  If either $R \in \inv(\edge_2)$, or $R \in \inv(\near_3)$, 
  or $R$ is Boolean, symmetric and $R \in \inv(\edge_3)$,
  then a triple $(f, g, h)$ with $f \in V_X$, $g \in V_y$, $h \in V_z$
  satisfies $R$ if and only if $fgh$ is a triangle in $G$ where the 
  edges $fg$, $fh$, $gh$ all have the same label. 
\end{lemma}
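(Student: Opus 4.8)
The plan is to handle the three hypotheses on $R$ separately while sharing a common reduction: set $U_X = U \cap X$, $U_Y = U \cap Y$, $U_Z = U \cap Z$, and use Lemma~\ref{lemma:larger} to regard $R$ as a relation of arity at most three over a larger alphabet, obtained by grouping its coordinates according to this tripartition. Write $\hat R \supseteq R$ for the $2$-edge embedding used by the labelling, so $\hat R = R$ whenever $R \in \inv(\edge_2)$, and for symmetric Boolean $R \in \inv(\edge_3)$ we take the Boolean symmetric relation whose accepted weights are the completion of $S$ to an arithmetic progression, as in the remark after Lemma~\ref{lemma:cases-symmetric-3edge}. Two bookkeeping observations come first. (i) By construction the edges $fg$ and $fh$ carry the \emph{same} label, namely the $2$-edge label $L_1$ of $f$ for the partition $U_X \cup (U_Y \cup U_Z)$, while $gh$ carries the $2$-edge label $L_2$ of $g \cup h$ for the same partition; hence ``$fgh$ is a triangle all of whose edges carry one label'' is equivalent to ``$f \cup g$, $f \cup h$ and $g \cup h$ are each consistent with $R$, and $L_1 = L_2$''. (ii) When $R \in \inv(\near_3)$ every edge of $G$ carries a fixed label, so only the triangle condition is at issue; the degenerate situations in which $U$ fails to meet some of $X, Y, Z$ collapse to a sub-case of what follows (or are absorbed by the pre-filtering of vertices), exactly as in the proofs of Theorems~\ref{thm:2edgecsp} and~\ref{thm:3-nu-csp}, so I will argue the generic case.

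First I would dispatch $R \in \inv(\near_3)$. By Lemma~\ref{lemma:larger} the grouped ternary relation $R' = R_{U_X, U_Y, U_Z}$ is preserved by $\near_3$ over its alphabet, so by Lemma~\ref{lemma:decompose} it is $2$-decomposable. The three edge-consistency conditions say precisely that every projection of the tuple $t' = (f|_{U_X}, g|_{U_Y}, h|_{U_Z})$ onto at most two coordinates lies in the corresponding projection of $R'$; $2$-decomposability then yields $t' \in R'$, i.e.\ $f \cup g \cup h$ satisfies $R$. The converse is trivial, since projections of a satisfying tuple are consistent with $R$.

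Next I would treat the two remaining cases together through the biclique picture. Since $\hat R$ is preserved by a total $2$-edge (Maltsev) operation, Lemma~\ref{lemma:larger} and Lemma~\ref{lemma:rect} show that grouping $\hat R$'s coordinates as $U_X$ versus $U_Y \cup U_Z$ gives a rectangular binary relation, whose bipartite graph is a disjoint union of bicliques; as in Lemma~\ref{lemma:graph}, the $2$-edge label of a partial assignment is then the canonical pair of lex-min biclique representatives, computed identically from either side, and it determines the biclique uniquely. Therefore $L_1 = L_2$ iff $f$ and $g \cup h$ lie in the same biclique, which (bicliques being complete) holds iff $f \cup g \cup h$ satisfies $\hat R$. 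If $R = \hat R$ -- in particular for every $R \in \inv(\edge_2)$, and for symmetric $R$ whose weight set is already a complete progression -- this closes both directions. Otherwise $R$ is symmetric with $R \subsetneq \hat R$, so by Lemma~\ref{lemma:cases-symmetric-3edge} $S = \{a, a+b\}$ or $S = \{N-a, N-a-b\}$ with $N = |U|$ and $a < b$, and $\hat R$ accepts exactly the weights in $\{0,\dots,N\}$ congruent to $a$ (resp.\ $N-a$) modulo $b$, with minimum $a$ (resp.\ maximum $N-a$). Here the triangle condition does real work: writing $w_X, w_Y, w_Z$ for the Hamming weights of $f|_{U_X}, g|_{U_Y}, h|_{U_Z}$, consistency of $f \cup g$ with $R$ forces $w_X + w_Y \le a+b$ in the first form (project a tuple of $R$, whose weight is at most $a+b$) and $(|U_X|-w_X)+(|U_Y|-w_Y) \le a+b$ in the second (a tuple of $R$ has at most $a+b$ zeros), and likewise for the other two pairs; summing, the total weight $w$ (resp.\ the total number of zeros $N-w$) is at most $\tfrac32(a+b) < a+2b$. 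Combined with $w$ being congruent to $a$ (resp.\ $N-a$) modulo $b$ -- which holds because $f \cup g \cup h$ satisfies $\hat R$ -- and with $w \ge 0$ (resp.\ $w \le N-a$), this forces $w \in S$, so $f \cup g \cup h$ satisfies $R$. The forward direction is again immediate: a satisfying tuple has weight in $S \subseteq \hat S$, hence satisfies $\hat R$ and makes $L_1 = L_2$, and its projections make $fgh$ a triangle.

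The main obstacle is this last step. As soon as $R$ differs from its $2$-edge embedding, label agreement only certifies membership in $\hat R$, and extra combinatorial structure in $R$ is needed to recover membership in $R$ itself; the argument succeeds only because \emph{symmetry} lets Lemma~\ref{lemma:cases-symmetric-3edge} pin down $\hat S \setminus S$ exactly, so that a crude weight/zero count closes the gap. Finding an analogous handle on $\hat R \setminus R$ for non-symmetric $R \in \inv(\edge_3)$ is precisely where this approach stalls, and is left open.
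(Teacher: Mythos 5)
Your proof is correct and follows essentially the same route as the paper's: the 3-NU case via 2-decomposability of the grouped relation, the 2-edge case via rectangularity and the biclique structure behind the labels (the paper instead applies the partial Maltsev operation to three tuples, which is the same fact), and the symmetric 3-edge case via Lemma~\ref{lemma:cases-symmetric-3edge} together with the bound $\tfrac{3}{2}(a+b)<a+2b$ from the edge-filtering step and the mod-$b$ check carried by equal labels. The only quibble is your claim that the Boolean relation with weights completed to a full progression is preserved by a \emph{total} Maltsev operation -- it is not, but it is preserved by the partial operation $\edge_2$, which is all that Lemma~\ref{lemma:rect} (via Lemma~\ref{lemma:larger}) requires, so the argument is unaffected.
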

\begin{proof}
  Refer to a triangle $fgh$ with all edge labels identical as a
  \emph{single-label triangle}. We will also slightly abuse notation
  by treating $R$ as a 3-ary relation taking values from $V_X \times
  V_Y \times V_Z$.  First assume that $R \in \inv(\edge_2)$, and recall that
  $R$ is rectangular. 
  Let $fgh$ be a single-label triangle with shared label $L=(f_0,g_0h_0)$; 
  we show that $(f, g, h) \in R$. Since $L$ is the label of the edge $gh$, 
  it must be that $(f_0, g, h), (f_0, g_0, h_0) \in R$, and by the edges $fg$ and
  $fh$ it must be that $(f, g_0, h_0) \in R$ as well. By the partial
  2-edge operation, this implies $(f, g, h) \in R$.
  Thus every single-label triangle corresponds to a satisfying assignment.
  
  In the other direction, let $(f, g, h) \in R$. Since $R$ is rectangular, 
  there is a unique lex-min pair $(f_0,  g_0h_0)$ in the biclique
  containing $(f, gh)$, and both extensions $(f, g_0h_0)$ and $(f_0, gh)$ 
  are compatible with $R$. Thus all three edges get the same label and
  the algorithm works for $R \in \inv(\edge_2)$. 

  The case $R \in \inv(\near_2)$ is trivial. Since such a relation is
  2-decomposable, the entire verification of $R$ happens in the stage
  where edges are filtered, and in the remaining graph, every triangle
  represents a satisfying assignment and every triangle is single-label.

  Finally, assume $R \in \inv(\edge_3)$ and is symmetric. 
  If $R \in \inv(\edge_2)$, then we argue as above. Otherwise, 
  by Lemma~\ref{lemma:cases-symmetric-3edge}, 
  either $S=\{a,a+b\}$ or $S=\{n-a,n-a-b\}$ for $a<b$,
  and $\hat R$ verifies that each assignment $(f,g,h)$
  has the correct weight when computed $\mathrm{mod}\, b$. 
  First assume that $fgh$ is a single-label triangle in $G$. 
  First assume $S=\{a,a+b\}$. By the edge-filtering step, we know
  that for each of the edges $fg$, $gh$, $fh$
  the corresponding partial assignment has weight at most $a+b$. 
  Thus the total weight of $(f,g,h)$ is at most $(a+b)(3/2) \leq 
  a+b+(a+b)/2 < a+2b$. Dually, assume $S=\{n-a-b, n-a\}$. No edge in
  $fgh$ has more than $a+b$ zeroes, thus the total assignment has
  weight greater than $n-a-2b$. In both case, since the edge-labels
  work to verify the value $\mathrm{mod}\, b$, we conclude $(f,g,h) \in R$. 

  On the other hand, assume $(f,g,h) \in R$. Since the edge labels
  verify the more permissive relation $\hat R$, the triangle $fgh$ is
  a single-label triangle. 
\end{proof}

The remaining problem can now be solved via algorithms for
triangle-finding in sparse graphs. 

\begin{theorem} \label{thm:3-edge--almost}
  Assume a CSP or SAT problem with the following characteristic: 
  for every relation $R$, either $R \in \inv(\edge_2)$ and $R$
  is labelled with type $\edge_2$, or $R \in \inv(\near_3)$ and 
  $R$ is labelled with type $\near_3$, or the language is Boolean,
  $R$ is a symmetric relation in $\inv(\edge_3)$ and
  $R$ is labelled with type $\edge_3$. 
  This problem can be solved in time
  $O^*(|D|^{\frac{\omega + 3}{6}})$
  in the extension oracle model,
  where $\omega < 2.373$ is the matrix multiplication exponent. 
\end{theorem}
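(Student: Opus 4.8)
The plan is to use precisely the edge-coloured tripartite graph $G=(V,E)$, $V=V_X\cup V_Y\cup V_Z$, and the per-relation edge labels described in the paragraph preceding Lemma~\ref{lemma:label-triangles-work}, and to show that deciding satisfiability reduces to finding a \emph{monochromatic} triangle in $G$, which can then be done within the stated time by a heavy/light split on the colour classes. First I would partition the $n$ variables into blocks $X,Y,Z$ of size roughly $n/3$, so that $|V_\bullet|\le N:=|D|^{n/3}$, construct $G$ together with the label of every surviving edge for every input relation $R$ — a constant label when $R\in\inv(\near_3)$, and otherwise the $2$-edge label of the relevant endpoint with respect to the partition $X\cup(Y\cup Z)$, computed by the lex-min routine of Lemma~\ref{lemma:graph} using polynomially many extension-oracle queries — and then colour each surviving edge by the \emph{tuple} of its $R$-labels over all input relations; building all these colours costs $O^*(N^2)$. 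By Lemma~\ref{lemma:label-triangles-work}, a triple $(f,g,h)\in V_X\times V_Y\times V_Z$ satisfies a given relation iff the three edges of the triangle $fgh$ carry the same label for that relation; hence the whole instance is satisfiable iff $G$ contains a triangle all of whose edges share the same colour.

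It then remains to decide, in time $O^*(N^{(\omega+3)/2})$, whether an edge-coloured graph on $3N$ vertices with at most $3N^2$ edges has a monochromatic triangle; since $N=|D|^{n/3}$ this equals $O^*(|D|^{(\omega+3)n/6})$, the claimed running time. I would fix the threshold $T=N^{(\omega+1)/2}$ and call a colour \emph{heavy} if at least $T$ edges carry it, \emph{light} otherwise. There are at most $3N^2/T=O(N^{(3-\omega)/2})$ heavy colours, and for each I would pass to its monochromatic subgraph, form the three bipartite Boolean adjacency matrices between the parts, and detect a triangle by one $N\times N$ Boolean matrix multiplication in $O(N^\omega)$ time; summed over heavy colours this is $O^*(N^{(3-\omega)/2+\omega})=O^*(N^{(\omega+3)/2})$. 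For light colours I would instead run a fast sparse triangle-detection subroutine — e.g.\ that of Alon, Yuster and Zwick, which finds a triangle in a graph with $m_c$ edges in $O(m_c^{\,2\omega/(\omega+1)})$ time — and sum over colour classes: with $\alpha:=2\omega/(\omega+1)\ge 1$, the bound $\sum_c m_c^{\alpha}\le(\max_c m_c)^{\alpha-1}\sum_c m_c\le T^{\alpha-1}\cdot 3N^2$ together with $\alpha-1=(\omega-1)/(\omega+1)$ and hence $T^{\alpha-1}=N^{(\omega-1)/2}$ gives $O^*(N^{(\omega-1)/2+2})=O^*(N^{(\omega+3)/2})$ once more. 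Adding the $O^*(N^2)$ setup cost yields the theorem.

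I expect the reduction itself to be essentially routine given Lemma~\ref{lemma:label-triangles-work} and Lemma~\ref{lemma:graph}, which already carry the algebraic content; the one step requiring care is the light-colour bound, where a genuinely sub-$m^{3/2}$ triangle routine is essential — with the naive $O(m^{3/2})$ bound the light classes would dominate at $N^{2+(\omega+1)/4}$ and the claimed exponent would not be reached. A secondary point to verify is that the per-edge labelling can be performed with polynomially many extension-oracle queries for each of the three admissible relation types simultaneously: trivially for $\near_3$-relations, and, for a symmetric $\edge_3$-relation, by applying the lex-min routine of Lemma~\ref{lemma:graph} to the $2$-edge embedding $\hat R$, which is simple to compute from $R$ by completing the accepted weight set to an arithmetic progression (Lemma~\ref{lemma:cases-symmetric-3edge}), so that the whole reduction stays within the $O^*(N^2)$ budget.
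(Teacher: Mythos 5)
Your proposal is correct and takes essentially the same route as the paper: the same edge-coloured tripartite graph built via Lemmas~\ref{lemma:graph}, \ref{lemma:cases-symmetric-3edge} and \ref{lemma:label-triangles-work}, reduction to monochromatic-triangle detection, and the same dense/sparse split at threshold $N^{(\omega+1)/2}$ combining an $O(N^\omega)$ matrix-multiplication check for heavy colours with the Alon--Yuster--Zwick $O(m_c^{2\omega/(\omega+1)})$ routine for light ones. Your explicit light-colour bound $\sum_c m_c^{\alpha}\le(\max_c m_c)^{\alpha-1}\sum_c m_c$ is just a cleaner justification of the step the paper argues informally (``the worst case is at the crossover density''), so the two arguments coincide in substance.
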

\begin{proof}
  By the description above, we create a 3-partite graph $G$
  on $3|D|^{n/3}$ vertices (where $|D|=2$ in the Boolean case), 
  and for every edge in $G$ we give it a vector of labels, 
  one label per relation in the input instance. We refer to this
  vector as the \emph{colour} of the edge. 
  Note that a symmetric relation $R$ can be ``inspected''
  using its extension oracle to find out the set $S$ of accepted
  weights. By Lemma~\ref{lemma:label-triangles-work}, the instance
  has a satisfying assignment if and only if $G$ has a triangle where
  all edges have the same colour. 
  
  This we solve as follows. For every colour $c$ used by an edge in
  $G$, we generate the graph $G_c$ consisting of all edges of colour
  $c$. Let $m_c$ be the number of edges of $G_c$, and let
  $N \leq 3|D|^{n/3}$ be the number of vertices in $G$. 
  We check if $G_c$ contains a triangle. If $G_c$ is dense enough,
  then we use the  usual triangle-finding algorithm for this, with running time
  $O^*(N^{\omega})$, otherwise we use an algorithm for triangle
  finding in sparse graphs. 
  Alon, Yuster and Zwick~\cite{AlonYZ97cycles}  show such an algorithm 
  with running time $O(m_c^{2\omega/(\omega+1)})$, 
  where $\omega < 2.373$ is the matrix multiplication exponent. 
  Hence, the crossover point at which we use the dense algorithm
  is $m_c \geq N^{(\omega + 1)/2}=:N^\alpha$. 
  Summing over all colours, we have $\sum_c m_c \leq N^2$.
  Since the algorithm for sparse graphs has a super-linear running 
  time, the worst case is when we are at the crossover density
  and use the sparse algorithm $N^{2-\alpha}$ times for a cost of
  $O(N^{\omega})$ each time. This works out to a total running time
  $O(N^{(\omega+3)/2})$ for triangle-finding, i.e., the CSP is solved
  in time $O^*(|D|^{(\omega+3)n/6}) = O^*(|D|^{0.896n})$ using
  $\omega=2.373$.
\end{proof}

We do not know whether this strategy can be extended to arbitrary
relations $R \in \inv(\edge_3)$, even for a non-uniform algorithm. 

\paragraph{Section summary.}
We have proven that it is indeed feasible to construct improved
algorithms for $\inv(p)$-SAT and $\inv(p)$-CSP for individual
pSDI-operations $p$. A crucial step for constructing algorithms of
this form is first to identify non-trivial properties of relations
invariant under $p$, which for the partial 2-edge operation turned out
be rectangularity, and for the partial 3-NU operation
$2$-decomposability. However, it might not always be the case that
every invariant relation satisfies such a clear-cut property, and for
3-edge-SAT we had to settle for an improved algorithm for symmetric
relations.

For $k$-NU-CSP and $k$-NU-SAT we also gave conditional improvements in
terms of \textsc{$(k, k-1)$-hyperclique} and the sunflower
conjecture. At the present, it is too early to say whether these
algorithms constitute the only source of improvement or if more direct
arguments are applicable.

\section{Lower Bounds}
\label{section:lower}
In this section we turn to the problem of proving lower bounds for
sign-symmetric $\SAT$ problems.

\subsection{Lower bounds based on $k$-SAT}

As an easy warm-up, we first consider languages $\Gamma$ such that
$\SAT(\Gamma)$ is at least as hard as $k$-SAT for some $k$. 
For each $k \geq 3$ let $c_k \geq 0$ denote the infimum of the set $\{c
\mid k$-SAT is solvable in $O(2^{cn})$ time$\}$. Under the ETH, $c_k >
0$ for each $k \geq 3$, and for each $k \geq 3$ there exists $k' > k$
such that $c_{k'} > c_k$~\cite{impagliazzo2001}. 
The best known upper bounds yield $c_k \leq 1-\Theta(1/k)$, but
no methods for lower-bounding the values $c_k$ are known.

Recall that Lemma~\ref{lemma:notkuniv-kcnf} gives a condition under
which a language $\Gamma$ can qfpp-define all $k$-clauses. We
observe the immediate consequence of this.

\begin{lemma} \label{lemma:reducefromksat}
  Let $\Gamma$ be a sign-symmetric constraint language not preserved
  by the $k$-universal partial operation. Then $\SAT(\Gamma)$ cannot
  be solved in time $O^*(2^{cn})$ for any $c<c_k$, even in the
  non-uniform model. 
\end{lemma}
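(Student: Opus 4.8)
The plan is to combine Lemma~\ref{lemma:notkuniv-kcnf} with the variable-preserving reduction machinery recalled in Section~\ref{sec:prel}. Since $\Gamma$ is sign-symmetric and not preserved by $\universal_k$, Lemma~\ref{lemma:notkuniv-kcnf} gives that $\Gamma$ qfpp-defines every $k$-clause, i.e.\ $\sat{k} \subseteq \inv(\ppol(\Gamma))$. Because $\sat{k}$ is a \emph{finite} language (it consists of the $2^k$ relations of arity $k$, one per forbidden tuple), and each of its members has some fixed qfpp-definition over $\Gamma \cup \{\eq_{\{0,1\}}\}$, only finitely many relations of $\Gamma$ are actually used; let $\Gamma' \subseteq \Gamma$ be a finite sublanguage witnessing all these definitions. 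Then $\sat{k} \subseteq \inv(\ppol(\Gamma'))$, and by the Galois connection (Theorem~\ref{theorem:pgalois}) this is equivalent to $\ppol(\Gamma') \subseteq \ppol(\sat{k})$.

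Next I would invoke the reduction of Jonsson et al.~\cite{jonsson2017} (the theorem stated immediately after Theorem~\ref{theorem:pgalois}): from $\ppol(\Gamma') \subseteq \ppol(\sat{k})$ we obtain a polynomial-time many-one reduction from $\SAT(\sat{k})$ --- which is precisely $k$-SAT --- to $\SAT(\Gamma')$ that maps an instance on variable set $V$ to one on a variable set $V'$ with $|V'| \le |V|$ and $O(|C|)$ constraints. (In the Boolean case this can also be spelled out directly from the qfpp-definitions: replace every constraint $R(\mathbf{x})$ of the $k$-SAT instance by the conjunction prescribed by the qfpp-definition of $R$, and eliminate each $\eq$-atom by identifying its two variables, which only decreases the number of variables.) Thus every $n$-variable $k$-SAT instance is transformed, in polynomial time, into an equivalent $\SAT(\Gamma')$ instance on at most $n$ variables.

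Finally I would derive the contradiction. Suppose $\SAT(\Gamma)$ admits a non-uniform algorithm running in $O^*(2^{cn})$ with $c < c_k$. Applied to the finite sublanguage $\Gamma' \subset \Gamma$, this yields an $O^*(2^{cn})$-time algorithm for $\SAT(\Gamma')$; chaining it with the reduction above solves $k$-SAT in time $O^*(2^{cn})$, hence in $O(2^{(c+\varepsilon)n})$ for every $\varepsilon>0$. Choosing $\varepsilon$ with $c+\varepsilon<c_k$ contradicts the definition of $c_k$ as the infimum of the rates achievable for $k$-SAT; so no such $c$ exists, even in the non-uniform model.

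I do not anticipate any real obstacle: the only points needing care are (i) passing from the infinite $\Gamma$ to a finite $\Gamma'$, so that both the cited reduction theorem (stated for finite languages) and the non-uniform hypothesis apply, and (ii) checking that eliminating the equality atoms in the qfpp-definitions does not increase the variable count, so that the exponent $c$ is transferred exactly; both are routine.
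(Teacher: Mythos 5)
Your proposal is correct and follows essentially the same route as the paper's proof: invoke Lemma~\ref{lemma:notkuniv-kcnf} to get qfpp-definitions of all $k$-clauses, observe that a finite sublanguage $\Gamma' \subseteq \Gamma$ suffices for these gadgets, and use the variable-preserving gadget reduction from $k$-SAT to $\SAT(\Gamma')$ to contradict the definition of $c_k$ under the non-uniform hypothesis. The detour through Theorem~\ref{theorem:pgalois} and the Jonsson et al.\ reduction is just a more formal packaging of the direct gadget-replacement argument the paper states in one line.
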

\begin{proof}
  By Lemma~\ref{lemma:notkuniv-kcnf}, $\Gamma$ can qfpp-define all
  $k$-clauses. More concretely, there is a finite set $\Gamma'
  \subseteq \Gamma$ of relations such that every $k$-clause has a
  fixed, finite-sized gadget implementation over $\Gamma'$. Thus,
  given a $k$-SAT instance on $n$ variables, we can produce an
  equivalent instance of $\SAT(\Gamma')$ in linear time, with the same
  variable set. 
\end{proof}

As a consequence, $c_k$ is also a lower bound on the running time for
$\inv(f)$-SAT for every minimal pSDI-operation at level $k+1$ and
higher. However, this above lemma applies to any sign-symmetric
constraint language, and not just to the special case when $\Gamma=\inv(f)$.
We can also observe a similar consequence for SETH-hardness. 

\begin{corollary} \label{corollary:musthavekuni}
  Let $\Gamma$ be a sign-symmetric constraint language not preserved
  by the $k$-universal partial operation for any $k$. Then assuming
  SETH, $\SAT(\Gamma)$ does not admit an improved algorithm, even in
  the non-uniform model. 
\end{corollary}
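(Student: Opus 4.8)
The plan is to derive this as an immediate consequence of Lemma~\ref{lemma:notkuniv-kcnf}, Lemma~\ref{lemma:reducefromksat}, and the definition of SETH. First I would note that the hypothesis — $\Gamma$ sign-symmetric and not preserved by $\universal_k$ for \emph{any} $k$ — means Lemma~\ref{lemma:notkuniv-kcnf} applies for every $k \geq 2$ at once: for each $k$, $\Gamma$ can qfpp-define all $k$-clauses, and moreover (by the proof of Lemma~\ref{lemma:reducefromksat}) there is a finite $\Gamma_k \subseteq \Gamma$ over which every $k$-clause has a fixed finite-size gadget. Consequently, for every $k$ the problem $\SAT(\Gamma)$ cannot be solved in time $O^*(2^{cn})$ for any $c < c_k$, even in the non-uniform model.

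Next I would set up the contradiction. Suppose $\SAT(\Gamma)$ admits a non-uniform improved algorithm; by definition this gives a single constant $c < 2$, which I write as $c = 2^{\gamma}$ with $0 \le \gamma < 1$, such that $\SAT(\Gamma')$ is solvable in time $O^*(2^{\gamma n})$ for every finite $\Gamma' \subset \Gamma$. Applying this to the finite subfamily $\Gamma_k$ above and composing with the linear-time, variable-preserving reduction of Lemma~\ref{lemma:reducefromksat}, we get an algorithm for $k$-SAT running in $O^*(2^{\gamma n})$, hence $c_k \le \gamma$, and this holds for every $k$.

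Finally I would invoke SETH, which in the notation of this section states exactly that $\lim_{k \to \infty} c_k = 1$ (equivalently $\lim_{k \to \infty} c(\sat{k}) = 2$). Since $\gamma < 1$, the limit forces $c_k > \gamma$ for all sufficiently large $k$, contradicting $c_k \le \gamma$. Therefore no non-uniform improved algorithm for $\SAT(\Gamma)$ can exist unless SETH fails.

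As for where the content lies: there is essentially no obstacle, since the statement is a direct corollary of what has already been proved. The only point needing a moment's care is the order of quantifiers — a non-uniform improved algorithm supplies \emph{one} exponent $\gamma$ valid uniformly across \emph{all} finite subfamilies of $\Gamma$, whereas each application of Lemma~\ref{lemma:reducefromksat} only requires that $\gamma$ work for a single finite subfamily (the one carrying the $k$-clause gadgets). This mismatch is precisely what permits letting $k \to \infty$ against the fixed $\gamma$ and concluding.
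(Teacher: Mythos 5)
Your proposal is correct and follows essentially the same route as the paper: combine Lemma~\ref{lemma:reducefromksat} (which already rests on Lemma~\ref{lemma:notkuniv-kcnf} and holds in the non-uniform model) for every $k$ with the SETH statement that $c_k \to 1$; the paper merely phrases this as the contrapositive (for any $\varepsilon>0$, pick $k$ with $c_k$ close enough to $1$) rather than as your contradiction against a fixed exponent $\gamma<1$. Your extra remark on the quantifier order for non-uniform algorithms is a faithful unpacking of what the paper leaves implicit, not a deviation.
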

\begin{proof}
  By SETH, there is for every $\varepsilon > 0$ a constant $k$
  such that $k$-SAT cannot be solved in $O^*((2-\varepsilon)^n)$
  time. By Lemma~\ref{lemma:reducefromksat}, there is a reduction
  from $k$-SAT to $\SAT(\Gamma)$ for this $k$. Thus, $\SAT(\Gamma)$ 
  does not admit an improved non-uniform algorithm. 
\end{proof}

\subsection{2-edge-SAT and Subset Sum}
\label{section:subsetsumhard}

Next, we sharpen the connection between \textsc{Subset Sum} and
\textsc{2-edge-SAT}.  Recall that an instance of \textsc{Subset Sum}
consists of a set $S=\{x_1, \ldots, x_n\}$ of $n$ numbers and a target
integer $t$, with the question of whether there is a set $X' \subseteq
S$ such that $\sum X' = t$. This can also be phrased as
asking for $z_1, \ldots, z_n \in \{0,1\}$ such that
\[
\sum_{i=1}^n z_i x_i = t.
\]
Also recall from Lemma~\ref{lemma:2edge:linear} that such a relation is contained in $\inv(\edge_2)$. 
However, this does not by itself imply a problem reduction, since 
an instance or \textsc{2-edge-SAT} assumes the existence of an
extension oracle for every constraint. We show that such a reduction
can be implemented by splitting the above equation apart into several
equations, based on the bit-expansion of $t$. 

\begin{theorem}
  If \textsc{2-edge-SAT} is solvable in $O(2^{cn})$ time for $c > 0$
  in the extension oracle model, 
  then \textsc{Subset Sum} is solvable in $O(2^{(c + \varepsilon)n})$ time for every $\varepsilon > 0$.
\end{theorem}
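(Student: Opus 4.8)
The plan is to reduce \textsc{Subset Sum} to \textsc{2-edge-SAT} on the same set of $n$ variables, paying only a subexponential $2^{o(n)}$ overhead. Given the instance $\sum_{i=1}^n z_i x_i = t$, I first reduce to the case $t, x_1, \ldots, x_n < 2^{O(n)}$: replacing all numbers by their residues modulo a random prime $q_0$ of $\Theta(n)$ bits preserves the answer with high probability, since a genuine solution survives modulo every $q_0$, while each of the at most $2^n$ non-solutions fails modulo a uniformly random such prime with overwhelming probability, so $O(n)$ independent trials suffice. (This is the only randomized ingredient, and in any case the asserted running time is only meaningful when the input itself has size $2^{o(n)}$.) So fix $m = O(n)$ with all numbers in $[0, 2^m)$.

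Next I split the $m$ bit positions into $k = \lceil \sqrt n \rceil$ consecutive blocks of width $w = \lceil m/k \rceil = O(\sqrt n)$, writing $x_i = \sum_{b=1}^k x_i^{(b)} 2^{(b-1)w}$ and $t = \sum_{b=1}^k t^{(b)} 2^{(b-1)w}$ with $x_i^{(b)}, t^{(b)} \in [0, 2^w)$. Then $\sum_i z_i x_i = t$ holds if and only if there exist carries $c_0 = c_k = 0$ and $c_1, \ldots, c_{k-1} \in \{0, \ldots, n\}$ with $\sum_i z_i x_i^{(b)} + c_{b-1} = t^{(b)} + 2^w c_b$ for each $b \in [k]$; the bound $c_b \le n$ follows by induction, the left-hand side being at most $n(2^w - 1) + c_{b-1}$, and $c_b \ge 0$ since $c_b 2^w > -2^w$. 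I enumerate all $(n+1)^{k-1} = 2^{O(\sqrt n \log n)} = 2^{o(n)}$ guesses for the internal carries. For a fixed guess, block $b$ becomes a single linear equation $\sum_i z_i x_i^{(b)} = v_b$ over $\mathbb{Z}$, where $v_b = t^{(b)} + 2^w c_b - c_{b-1}$ is a nonnegative integer bounded by $2^{O(\sqrt n)}$ (discard the guess if some $v_b < 0$); picking a prime $q > n 2^{w+1}$, this is equivalent to the congruence $\sum_i z_i x_i^{(b)} \equiv v_b \pmod q$, whose solution set is a relation in $\inv(\edge_2)$ by Lemma~\ref{lemma:2edge:linear}. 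Hence the conjunction of these $k$ equations is a \textsc{2-edge-SAT} instance on $z_1, \ldots, z_n$, satisfiable exactly when the original equation has a solution realizing the guessed carries, and the \textsc{Subset Sum} instance is a yes-instance iff some carry guess yields a satisfiable \textsc{2-edge-SAT} instance.

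What remains, and where the block decomposition is essential, is to equip each block relation with an extension oracle of query time $2^{O(\sqrt n)}$. Given a partial assignment fixing $z_i$ for $i$ in a set $Z'$ (and, more generally, a projection onto a coordinate subset as the oracle definition requires), the equation $\sum_i z_i x_i^{(b)} \equiv v_b \pmod q$ extends iff the residue $v_b - \sum_{i \in Z'} z_i x_i^{(b)} \bmod q$ lies in the set of $\{0,1\}$-combinations of the remaining $x_i^{(b)}$ modulo $q$; since $q = 2^{O(\sqrt n)}$, the textbook subset-sum dynamic program over $\mathbb{Z}_q$ computes this set in $O(nq) = 2^{O(\sqrt n)}$ time. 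Running the assumed $O(2^{cn})$-time \textsc{2-edge-SAT} algorithm on each of the $2^{o(n)}$ carry-guessed instances, where each of its $O(2^{cn})$ oracle calls costs $2^{O(\sqrt n)}$, gives total time $2^{o(n)} \cdot 2^{cn} \cdot 2^{O(\sqrt n)} = 2^{(c + o(1))n} = O(2^{(c+\varepsilon)n})$ for every fixed $\varepsilon > 0$ and all sufficiently large $n$. The only real obstacle is exactly this oracle construction: blocking into $O(\sqrt n)$ pieces of $O(\sqrt n)$ bits shrinks each modulus to $2^{O(\sqrt n)}$ so that a pseudopolynomial oracle becomes available, while keeping the carry enumeration down to $2^{o(n)}$; everything else is routine bookkeeping together with the observation (immediate from Lemma~\ref{lemma:2edge:linear}) that the blocked relations are genuinely preserved by $\edge_2$.
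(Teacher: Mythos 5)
Your proposal is correct and follows essentially the same route as the paper's proof: reduce the bit length of the instance to $O(n)$ (the paper cites Harnik--Naor where you use a random prime, leaving implicit the routine step of guessing the multiple of $q_0$ by which the residue sum exceeds the reduced target), split into $\Theta(\sqrt{n})$ blocks of $O(\sqrt{n})$ bits each, enumerate the $2^{o(n)}$ carry/contribution guesses, realize each resulting block equation as a constraint preserved by $\edge_2$ with a $2^{O(\sqrt{n})}$-time tabulation/DP extension oracle, and run the assumed \textsc{2-edge-SAT} algorithm, multiplying by the oracle query cost. Your explicit congruence-mod-$q$ formulation via Lemma~\ref{lemma:2edge:linear} and the precise carry bookkeeping are just more detailed versions of steps the paper states informally.
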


\begin{proof}
  Let $x_1, \ldots, x_n, t \in \N$ be the input to a \textsc{Subset Sum}
  instance. We will reduce this instance in subexponential time
  to a disjunction over \textsc{2-edge-SAT} instances on $n$ variables
  each.
  
  We proceed as follows. Harnik and Naor~\cite{doi:10.1137/060668092} give a randomized procedure
  for this that reduces a \textsc{Subset Sum} instance to bit length
  at most $2n + \log \ell$, where $\ell$ is the bit length of the
  input. If $\ell \geq 2^n$, then we solve the instance by brute force
  in time polynomial in the input length, otherwise we are left with
  an instance of bit length $\ell' \leq 3n$. 
  
  Next, set a parameter $k=\sqrt{n}$, and split the binary expansion
  of the input integers into $k$ blocks of equal length, giving
  $\sqrt{n}$ blocks of length  $O(\sqrt{n})$. For each block
  guess the contribution of the solution to the target value. Note
  that the maximum overflow that can carry over to the next block is
  $n$, which means that for a single block there are $O(n^2)$ options
  for the contribution within the block. 
  We get at most $O(n^{2k})=2^{o(n)}$ guesses in total, after which we
  have replaced the original equation $\sum_i z_ix_i=t$ by the
  conjunction of $\sqrt{n}$ linear equations, each with a target
  integer of $O(\sqrt{n})$ bits. 
  This allows us to implement an extension oracle for every
  such constraint with a running time of $2^{O(\sqrt{n})}$,
  using the well-known tabulation approach.
  
  This encodes an instance of \textsc{2-edge-SAT} in the extension
  oracle model with $n$ variables. Using an algorithm for this
  problem, and multiplying its running time by the time required for
  answering an oracle query, yields the claimed running time for
  \textsc{Subset Sum}.  
\end{proof}

Given that the running time for \textsc{2-edge-SAT} in the extension
oracle model given in this paper matches the best known running time
for \textsc{Subset Sum}, and given that improving the latter is a
long-open problem,  it seems at the very least that an improvment to
\textsc{2-edge-SAT} would require significant new ideas.

\subsection{Padding formulas}
\label{section:padding_constants}

We now give a combinatorial interlude, showing how relations $R
\subseteq \{0,1\}^n$ can be padded with additional variables such that
the new relation lies in $\inv(f)$, for any non-total partial
operation~$f$. This will be leveraged in the next section to finally
provide concrete lower bounds on the running time of $\inv(f)$-SAT for
pSDI-operations $f$. 

For a partial operation $p$, say of arity $k$, and a sequence of tuples
$t_1, \ldots, t_k$, we say that $p(t_1, \ldots, t_k)$ is a
\emph{projective application} if $p(t_1, \ldots, t_k)$ is either
undefined or $p(t_1, \ldots, t_k) \in \{t_1, \ldots, t_k\}$. 
Similarly, if $p(t_1, \ldots, t_k)$ is defined and 
$p(t_1, \ldots, t_k) \notin \{t_1, \ldots, t_k\}$ 
we call $p(t_1,\ldots, t_k)$ a {\em non-projective application}.

\begin{definition}
  Let $R \subseteq \{0,1\}^n$ be a relation and $P$ a set of Boolean partial operations. 
  A \emph{padding} of $R$ with respect to $P$
  is an $(n + m)$-ary relation $\padd{R}$ such that (1) $\pro_{1, \ldots,
    n}(\padd{R}) = R$,
  (2) $|\padd{R}| = |R|$, and (3) $\padd{R} \in \inv(P)$. 
  A \emph{universal padding formula} for $n \geq 1$ with respect to $P$
  is an $(n+m)$-ary relation $\upadd{P}$ which (1) is a padding of the
  relation $\{0,1\}^n$ and (2) $p(t_1, \ldots, t_{\ar(p)})$ is a projective
  application for every partial operation $p \in P$ and every sequence of tuples $t_1, \ldots, t_{\ar(p)} \in \upadd{P}$.
\end{definition}

Note that if $R$ is a relation and $p$ a $k$-ary partial operation
such that $p(t_1, \ldots, t_k)$ is a projective application for every
sequence $t_1, \ldots, t_k \in R$, then $R \in \inv(P)$. In particular
this implies that $\upadd{P} \in \inv(P)$ for every universal padding
formula $\upadd{P}$ of $P$. Also, critically, if 
$\upadd{P}$ is an $(n+m)$-ary universal padding formula for a set of
partial operations $P$, and $R$ is an $n$-ary relation, then the
relation $R'(x_1, \ldots, x_n, y_1, \ldots, y_m) \equiv R(x_1, \ldots,
x_n) \land \upadd{P}(x_1, \ldots, x_n, y_1, \ldots, y_m)$ is a padding
formula for $R$. Hence, a universal padding formula can be viewed as a 
blueprint which can be applied to obtain a concrete padding formula
for any relation. 
It is known that if $P$ contains no total operation, then a universal
padding formula can be constructed using a universal hash
family~\cite{lagerkvist2017c}. 



\begin{lemma} \label{lemma:padding_constant}
  Let $P$ be a finite set of partial operations such that the only
  total functions in $\strongof{P}$ are projections. For
  every $n \geq 1$ there exists an $(n+m)$-ary universal padding formula $\upadd{P}$
  such that $m \leq c \cdot n$, for a constant $c$ depending on $P$.
\end{lemma}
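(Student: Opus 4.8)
The plan is to take $\upadd{P}$ to be the graph of a function and to show that a random such function works. Concretely, let $g : \{0,1\}^n \to \{0,1\}^m$ be a uniformly random function (a $k$-wise independent hash family, of description length $O(kn)$, would do equally well) and set $\upadd{P} = \{(x,g(x)) \mid x \in \{0,1\}^n\}$. This relation satisfies $|\upadd{P}| = 2^n$ and $\pro_{1,\ldots,n}(\upadd{P}) = \{0,1\}^n$ for free, so by the remark following the definition of a universal padding formula it suffices to choose $m = \Theta(n)$ and show that, with positive probability over $g$, every application $p(t_1,\ldots,t_k)$ with $p \in P$ of arity $k$ and $t_1,\ldots,t_k \in \upadd{P}$ is a projective application; this last property in particular gives $\upadd{P} \in \inv(P)$.

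So fix $p \in P$ of arity $k$ and tuples $t_i = (x_i, g(x_i))$, let $\pi$ be the partition of $[k]$ with $i,j$ in the same block iff $x_i = x_j$, say with $b = |\pi|$ blocks, and let $L_\pi = \{c \in \{0,1\}^k : c[i]=c[j]\text{ whenever }i,j\text{ lie in a common block of }\pi\}$, a linear subspace of $\{0,1\}^k$ of dimension $b$. Every ``column profile'' of the sequence $(t_1,\ldots,t_k)$ --- both the profiles $(x_1[\ell],\ldots,x_k[\ell])$ in the first $n$ coordinates and the profiles $(g(x_1)[j],\ldots,g(x_k)[j])$ in the $m$ padding coordinates --- lies in $L_\pi$. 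I would then split into the cases $L_\pi \subseteq \domain(p)$ and $L_\pi \not\subseteq \domain(p)$.

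If $L_\pi \subseteq \domain(p)$, I claim the application is projective for every $g$. Let $\sigma : [k] \to [b]$ send $i$ to the index of its block, and consider $q(z_1,\ldots,z_b) = p(z_{\sigma(1)},\ldots,z_{\sigma(k)})$, a composition of $p$ with projections, so $q \in \strongof{P}$. For any $z$ the tuple $(z_{\sigma(1)},\ldots,z_{\sigma(k)})$ is constant on the blocks of $\pi$, hence lies in $L_\pi \subseteq \domain(p)$, so $q$ is total. By the hypothesis on $\strongof{P}$, $q$ is therefore a projection $\pi^b_s$, which means $p(c) = c[i_0]$ for all $c \in L_\pi$ and any fixed $i_0$ in block $s$; applied coordinatewise this gives $p(t_1,\ldots,t_k) = t_{i_0}$, a projective application. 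This step --- that $p$ restricted to $L_\pi$ must be a projection once it is total there --- is exactly where the assumption that $\strongof{P}$ contains no non-projection total function enters, and it is what tames the applications with repeated arguments; I expect it to be the main point of the proof.

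If instead $L_\pi \not\subseteq \domain(p)$, then $|L_\pi \cap \domain(p)| \le 2^b - 1$. Choosing distinct block representatives $x_{i_1},\ldots,x_{i_b}$, the tuples $g(x_{i_1}),\ldots,g(x_{i_b})$ are independent and uniform, hence the $m$ padding column profiles are independent and uniform in $L_\pi$; so the probability that all of them lie in $\domain(p)$ --- which is necessary for $p(t_1,\ldots,t_k)$ to be defined at all, and failing it makes the application vacuously projective --- is at most $(1-2^{-k})^m \le e^{-m/2^K}$, where $K = \max_{p \in P}\ar(p)$. A union bound over the at most $|P|$ operations and their at most $2^{nK}$ sequences $(x_1,\ldots,x_k)$ --- partial projections in $P$ contribute nothing, since their applications are projective outright --- then shows that as soon as $m \ge 2^K(nK + \ln|P|) + 1$, with positive probability no application of any $p \in P$ on $\upadd{P}$ is non-projective. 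Fixing such a $g$ gives the desired universal padding formula, and since $n \ge 1$ and $K,|P|$ depend only on $P$, the chosen $m = \lceil 2^K(nK+\ln|P|)\rceil + 1$ is at most $c\cdot n$ for a constant $c = c(P)$.
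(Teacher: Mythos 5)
Your proof is correct, and it is self-contained where the paper is not: the paper proves this lemma purely by citation to Lagerkvist \& Wahlstr\"om (Lemma~35 of the earlier paper), describing that construction only informally as a padding by a universal hash family / random parity variables. Your argument is essentially that construction carried out in full: pad each tuple $x$ with the image of a random (or $K$-wise independent) map $g$, observe that any application whose column-profile space $L_\pi$ is not contained in $\domain(p)$ gets killed by some padding coordinate with probability $1-(1-2^{-K})^m$, and take a union bound over the at most $|P|\cdot 2^{nK}$ sequences, which gives $m=O_P(n)$. The part of your write-up that goes beyond the paper's informal sketch is the treatment of applications with $L_\pi\subseteq\domain(p)$, which no amount of padding can make undefined: you correctly reduce these to the identification minor $q=p(z_{\sigma(1)},\ldots,z_{\sigma(k)})\in\strongof{P}$, note that it is total, and invoke the hypothesis that all total members of $\strongof{P}$ are projections to conclude projectivity. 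This is exactly where that hypothesis is needed, and isolating it is the right move. One minor remark: the paper's concrete constructions in Section~5.3 deliberately use parity-check (linear) padding rather than a fully random function, because linearity is what later yields efficient extension oracles and explicit constants; for the bare existence statement of this lemma your fully random $g$ is simpler and entirely adequate.
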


\begin{proof}
  See Lagerkvist \& Wahlstr\"om~\cite[Lemma~35]{lagerkvist2017c}.
\end{proof}

A quick note is in place on the role of universal padding formulas in 
obtaining lower bounds for $\inv(P)$-SAT, when $P$ is a finite set of
partial operations. Note that in a standard ``gadget'' reduction
from CNF-SAT to some problem $\SAT(\Gamma)$, one would introduce 
some number of local variables for every clause of the input, to
create an equivalent output formula that only uses constraints from
$\Gamma$. The existence of padding formulas does allow us to do this
for $\inv(P)$-SAT, but for lower bounds under SETH this is not useful
since we have no control over the number of additional variables
created this way. However, the universality property of universal
padding formulas allow us to \emph{reuse} the padding variables
between different constraints, to produce an output which only has
$n+m=O(n)$ variables in total. The details are given in the next section,
but first we investigate concrete values of the
constant $c$ for partial $k$-edge and $k$-NU operations.


\begin{lemma}
  Let $X=\{x_1, \ldots, x_n\}$ be a set of variables,
  and let $y=\bigoplus_{i \in S} x_i$ be the parity sum 
  for a set $S \subseteq [n]$ chosen uniformly at random. 
  For any tuple $t \in \{0,1\}^n$, let $t'$ be $t$ padded by $y$.
  Let $p$ be a partial operation as specified below, let $r=\ar(p)$,
  and let $(t_1, \ldots, t_r)$ be a sequence of tuples in $\{0,1\}^n$
  such that $p(t_1,\ldots,t_r)$ is a non-projective application.
  Then the following hold.
  \begin{enumerate}
  \item If $p$ is the partial 2-edge operation, with $r=3$, 
    then the probability that $p(t_1', t_2', t_3')$ is defined
    is $3/4$.
  \item If $p$ is the partial 3-edge operation, with $r=4$,
    then the probability that $p(t_1', \ldots, t_4')$ is defined
    is $1/2$.
  \item If $p$ is the partial $k$-NU operation, $k \geq 4$, then the
    probability that $p(t_1', \ldots, t_r')$ is defined is $(2k+2)/2^k$.
    For every weaker operation, e.g., for the partial $k$-edge or
    $k$-universal operations, the probability is at most this high. 

  \item If $p$ is the partial $k$-universal operation, $k \geq 3$, 
    then the probability that $p(t_1', \ldots, t_r')$ is defined
    is $(k+1)/2^k$.
  \end{enumerate}
\end{lemma}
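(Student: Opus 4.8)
The plan is to turn the statement, in all four cases simultaneously, into a computation of the size of an $\mathbb F_2$-subspace spanned by a relation's columns, and then to evaluate it case by case. Write $r=\ar(p)$ and fix a non-projective application $p(t_1,\dots,t_r)$ with $t_1,\dots,t_r\in\{0,1\}^n$; for $i\in[n]$ put $c^{(i)}=(t_1[i],\dots,t_r[i])\in\domain(p)$. The padded tuples $t_a'$ add only one new coordinate, carrying the column $(y_1,\dots,y_r)$, so --- since $p(t_1,\dots,t_r)$ is defined --- $p(t_1',\dots,t_r')$ is defined iff $(y_1,\dots,y_r)\in\domain(p)$. The map $\mathbf 1_S\mapsto(y_1,\dots,y_r)$ is $\mathbb F_2$-linear onto the column span $V:=\mathrm{span}_{\mathbb F_2}\{c^{(i)}:i\in[n]\}\subseteq\mathbb F_2^r$, so a uniform $S$ makes $(y_1,\dots,y_r)$ uniform on $V$; hence the sought probability is $|V\cap\domain(p)|/|V|$, and of course $V\subseteq W:=\mathrm{span}_{\mathbb F_2}(\domain(p))$.

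The key structural step exploits non-projectivity. As $p$ is a pSDI-operation, $\domain(p)=\{0^r,1^r\}\cup\{u_1,\overline{u_1},\dots,u_\ell,\overline{u_\ell}\}$ with $\ell=(|\domain(p)|-2)/2$, $p(u_a)=0$, $p(\overline{u_a})=1$. Each of the four operations is a minimal non-trivial pSDI-operation, so by the argument in the proof of Lemma~\ref{lemma:whichminimal} (and a direct check for the $2$-edge operation) each $u_a$ has a ``private'' coordinate $j_a\in[r]$: $u_a[j_a]=1$ and $u_{a'}[j_a]=0$ for $a'\neq a$, coming from the fact that deleting $u_a,\overline{u_a}$ collapses $p$ onto the projection $\pi_{j_a}$. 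One then checks $\{w\in\domain(p):p(w)=w[j_a]\}=\domain(p)\setminus\{u_a,\overline{u_a}\}$, so $p(t_1,\dots,t_r)=t_{j_a}$ unless some column $c^{(i)}\in\{u_a,\overline{u_a}\}$; since the application is non-projective, $p(t_1,\dots,t_r)\neq t_{j_a}$ for every $a$, so for each $a$ there is such a column $w_a\in\{u_a,\overline{u_a}\}$, whence $V\supseteq V_0:=\mathrm{span}_{\mathbb F_2}(w_1,\dots,w_\ell)$. Note the $u_a$ are linearly independent by their private coordinates, so in particular $\dim W\geq\ell$.

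Now comes the clean core: for every $V$ with $V_0\subseteq V\subseteq W$ the ratio $|V\cap\domain(p)|/|V|$ equals $|\domain(p)|/|W|$. Indeed, with $B:=\{0^r,w_1,\dots,w_\ell\}\subseteq V_0$ and using $\overline{u_a}=1^r+u_a$, we have $\domain(p)=B\cup(B+1^r)$. If $1^r\in V_0$ then $B+1^r\subseteq V_0$ forces $W=V_0$, so $V=V_0=W$ and $|V\cap\domain(p)|=|\domain(p)|$. If $1^r\notin V_0$ then $[W:V_0]=2$, so $V$ is $V_0$ or $W$; for $V=W$ again $|V\cap\domain(p)|=|\domain(p)|$, while for $V=V_0$ one has $V_0\cap(B+1^r)=\emptyset$ (else $1^r\in V_0$), hence $|V\cap\domain(p)|=|B|=\ell+1=|\domain(p)|/[W:V_0]$. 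In all cases the probability is exactly $|\domain(p)|\big/\bigl|\mathrm{span}_{\mathbb F_2}(\domain(p))\bigr|$.

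Finally I would compute $|\mathrm{span}_{\mathbb F_2}(\domain(p))|$. For the partial $2$-edge and $3$-edge operations, $\domain(p)$ spans all of $\mathbb F_2^3$ resp.\ $\mathbb F_2^4$, so the probabilities are $6/8=3/4$ and $8/16=1/2$. For $\near_k$ ($r=k$), $\domain(\near_k)$ contains $e_1,\dots,e_k$, hence spans $\mathbb F_2^k$, giving $(2k+2)/2^k$; any weaker level-$k$ operation $p'$ has the same domain size $2k+2$ but $|\mathrm{span}(\domain(p'))|\geq 2^k$ by the independence noted above, so its probability is at most $(2k+2)/2^k$. The only case in which $\domain(p)$ fails to span the ambient $\mathbb F_2^r$ --- and hence the only place the answer changes shape --- is $\universal_k$ ($r=2^k-1$): the $k\times(2^k-1)$ matrix with rows $t_1,\dots,t_k$ has the nonzero vectors of $\mathbb F_2^k$ as its columns, so the $t_a$ are independent and $1^{2^k-1}\notin\mathrm{span}(t_1,\dots,t_k)$ (a relation $\sum_{a\in A}t_a=1^{2^k-1}$ would force $|A\cap S|$ odd for every nonempty $S\subseteq[k]$, which fails for $S=[k]\setminus A$ or $S=\{1,2\}$), giving $\dim\mathrm{span}(\domain(\universal_k))=k+1$ and probability $(2k+2)/2^{k+1}=(k+1)/2^k$. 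The step I expect to be the main obstacle is the structural claim in the second paragraph --- that a non-projective application forces a full copy $w_1,\dots,w_\ell$ of the decisive domain tuples among the columns; once that is established, the rest is linear-algebra bookkeeping and the four numbers drop out of the single formula $|\domain(p)|/|\mathrm{span}_{\mathbb F_2}(\domain(p))|$.
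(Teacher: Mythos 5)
Your proposal is correct, and it reaches all four probabilities by a route that is organised quite differently from the paper's. The paper proves the lemma case by case: for each operation it identifies the (pairwise disjoint, nonempty) classes of coordinates according to which pair of $\domain(p)$ the column $(t_1[i],\dots,t_r[i])$ realises, observes that non-projectivity forces every class to be nonempty, and then directly counts which of the equally likely parity patterns of $S$ against these classes keep the padded column inside $\domain(p)$ (e.g.\ $2k+2$ out of $2^k$ patterns for $\near_k$, $k+1$ out of $2^k$ for $\universal_k$). You instead prove a single statement: the padded column is uniform on the $\mathbb{F}_2$-span $V$ of the realised columns; minimality of $p$ (via the private-coordinate argument that the paper uses in Lemma~\ref{lemma:whichminimal} but not here) turns non-projectivity into the guarantee that every pair $\{u_a,\overline{u_a}\}$ is represented among the columns; and your index-two observation $V_0\subseteq V\subseteq W=V_0+\mathrm{span}(1^r)$ shows the ratio $|V\cap\domain(p)|/|V|$ collapses to $|\domain(p)|/|\mathrm{span}_{\mathbb{F}_2}(\domain(p))|$ regardless of which $V$ occurs — a lemma the paper has no analogue of. The span computations you then do match the paper's numbers exactly (including the dimension-$(k+1)$ count for $\universal_k$). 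What each approach buys: the paper's version keeps each case elementary and makes the exact set of admissible parity patterns explicit (which is also how it phrases items 2--4), while yours gives one closed formula valid for every minimal non-trivial pSDI-operation, makes the ``weaker operations'' bound in item 3 essentially automatic for all level-$k$ operations, and would extend without change to any other operation in the hierarchy. I see no gap; the only point deserving a sentence of care in a final write-up is that the private-coordinate/collapse-to-$\pi_{j_a}$ property is checked (or cited) for all four operations, including $\universal_k$, exactly as you indicate via Lemma~\ref{lemma:whichminimal} and the direct check for $\edge_2$.
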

\begin{proof}
  Throughout the proof, we write $y(t)=\bigoplus_{i \in S} t[i]$. 
  Let us consider each case in turn.

  \emph{1.} We have $\ar(p)=3$. Let $I$ respectively $J$ be the set of
  indices $i \in [n]$ such that $t_1[i]=t_2[i] \neq t_3[i]$, respectively,
  $t_1[i] \neq t_2[i] = t_3[i]$. Note that both $I$ and $J$ are non-empty
  since $p(t_1,t_2,t_3)$ is a non-projective application. 
  Then $p(y(t_1), y(t_2), y(t_3))$ is undefined if and only if
  the parity of $S \cap I$ and $S \cap J$ are both odd. Since $I$ and $J$
  are disjoint, the probability of this is exactly $1/4$. 

  \emph{2.} For the partial 3-edge operation, recall from Theorem~\ref{lemma:weakeststrongestminimal} that $p$ can be constructed
  by adding a fictitious argument to the partial 3-NU operation.
  Hence, the arguments $i \in [n]$ such that $(t_1[i], \ldots, t_4[i])$
  is non-constant partition into three sets $I_1, I_2, I_3 \subseteq [n]$,
  and since $p(t_1,\ldots, t_4)$ is a non-projective application, all three sets
  must be nonempty. It can be verified that $p(y(t_1), \ldots, y(t_4))$
  is defined if and only if $S \cap I_i$ is odd for at most one $i \in [3]$.
  This happens with exactly $1/2$ probability. 
%

  \emph{3.} For the partial $k$-NU operation, we have $\ar(p)=k$; 
  let $p(t_1,\ldots,t_k)=t$. There are $k$ non-empty pairwise disjoint 
  sets $I_1$, \ldots, $I_k$ such that $t_i[j] \neq t$ if and only if 
  $j \in I_i$, for each $i \in [k]$, $j \in [n]$.
  The tuple $(y(t_1), \ldots, y(t_k))$ has one value, say $b$,
  in every row $i \in [k]$ where $S \cap I_i$ is odd,
  and another value, $1-b$, in every row $i$ where $S \cap I_i$ is odd.
  Thus $p(y(t_1), \ldots, y(t_k))$ is defined if either
  $S \cap I_i$ is odd for at most one index or
  $S \cap I_i$ is even for at most one index;
  these are $2k+2$ possibilities. For all other
  $2^k-(2k+2)$ possibilities, the operation is undefined.
  Note that all these possibilities happen with 
  equal probability, since the sets $I_i$ are non-empty 
  and pairwise disjoint. 

  \emph{4.} We have $\ar(p)=2^k-1 = r$, with the non-constant
  parts of $\domain(p)$ partitioned into $k$ pairs. 
  Let $I_i$, $i \in [k]$ be the sets of indices $j \in [n]$ 
  such that $(t_1[j], \ldots, t_r[j])$ belongs to the $i$th
  of these pairs, in some enumeration. 
  We claim that $p(y(t_1),\ldots, y(t_r))$ is defined
  if and only if $S \cap I_i$ is odd for at most one $i \in [k]$.
  On the one hand, if this holds, then 
  $(y(t_1), \ldots, y(t_r))$ is contained in pair number $i$ 
  or is constant, and it is clear that the operation is defined. 
  Otherwise, let $S \cap I_i$ and $S \cap I_j$ both be odd, $i \neq j$. 
  Let $t=p(t_1,\ldots,t_r)$; let $a \in [r]$ be the argument 
  such that $t_a[i] \neq t[i]$ if and only if $i \in I_i$;
  let $b \in [r]$ be the argument such that $t_b[i] \neq t[i]$
  if and only if $i \in I_j$; and let $c \in [r]$ be the
  argument such that $t_c[i] \neq t[i]$ if and only if 
  $i \in I_i \cup I_j$. Then the three positions $y(t_a), y(t_b), y(t_c)$
  have a pattern that is not compatible with any domain
  element of $p$. 
  It follows that the probability that $p(t_1',\ldots,t_r')$
  is defined is exactly $(k+1)/2^k$. 
\end{proof}


\begin{lemma} \label{lemma:countapplications}
  Let $p$ be a partial operation. There are $(|\domain(p)|)^n$
  sequences $(t_1, \ldots, t_{\ar(p)})$ of tuples in $\{0,1\}^n$
  such that $p(t_1,\ldots, t_{\ar(p)})$ is defined.
\end{lemma}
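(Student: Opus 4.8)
The plan is to count directly, by transposing the data. Write $r=\ar(p)$ and consider an arbitrary sequence $(t_1,\ldots,t_r)$ of tuples in $\{0,1\}^n$. Recall that, by the definition of how $p$ extends to tuples, $p(t_1,\ldots,t_r)$ is defined precisely when for every position $i\in[n]$ the ``column'' $(t_1[i],\ldots,t_r[i])$ lies in $\domain(p)$. So I would identify the sequence $(t_1,\ldots,t_r)$ with the $r\times n$ Boolean matrix $M$ whose $j$-th row is $t_j$. This identification is a bijection between sequences $(t_1,\ldots,t_r)$ of tuples in $\{0,1\}^n$ and $r\times n$ Boolean matrices, hence also a bijection with $n$-tuples of columns $(c_1,\ldots,c_n)$, each $c_i\in\{0,1\}^r$, obtained by reading off the columns of $M$.

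Under this correspondence the condition ``$p(t_1,\ldots,t_r)$ is defined'' becomes simply ``$c_i\in\domain(p)$ for every $i\in[n]$''. Since the columns may be chosen independently, the number of such sequences equals $\prod_{i=1}^{n}|\domain(p)| = |\domain(p)|^n$, which is the claimed value. There is essentially no obstacle here: the only points to be careful about are that we are counting \emph{ordered} sequences of tuples (equivalently, matrices with a fixed row order), so no automorphism or symmetry factor enters, and that the per-position constraints are genuinely independent because the defining condition is a conjunction of one constraint per column.
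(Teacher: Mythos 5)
Your proof is correct and follows essentially the same argument as the paper: the paper likewise counts by choosing, independently for each position $i\in[n]$, which element of $\domain(p)$ the column $(t_1[i],\ldots,t_{\ar(p)}[i])$ is, noting that distinct choices give distinct sequences. Your column/matrix phrasing just makes the bijection explicit.
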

\begin{proof}
  For every argument $i \in [n]$, we choose which element
  from $\domain(p)$ the tuple $(t_1[i], \ldots, t_{\ar(p)}[i])$
  will correspond to. Every such choice results in a distinct
  sequence of tuples. 
\end{proof}

\begin{lemma} \label{lemma:paddingbounds}
  Let $R(x_1, \ldots, x_n, y_1, \ldots, y_m)$ be a padding formula for
  $\{0,1\}^n$, where each $y_i$ is a
  a parity bit over $\{x_1, \ldots, x_n\}$ chosen uniformly at random.
  Then the following hold. 
  \begin{enumerate}
  \item For the partial 2-edge operation, $R(x_1, \ldots, x_n, y_1, \ldots, y_m)$ is a universal padding formula
    with probability at least $1-\varepsilon$ if $m \geq 6.23n+\log (1/\varepsilon)$. 
  \item For the partial 3-edge operation, $R(x_1, \ldots, x_n, y_1, \ldots, y_m)$ is a universal padding formula
    with probability at least $1-\varepsilon$ if $m \geq 3n+\log (1/\varepsilon)$.
  \item For the partial $k$-NU operation, $k \geq 4$, and for any operation weaker than it,
    $R(x_1, \ldots, x_n, y_1, \ldots, y_m)$ is a universal padding formula with exponentially small failure probability
    if $m = \Omega(\frac{\log k}{k} n)$. 
  \end{enumerate}
\end{lemma}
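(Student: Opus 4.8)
The plan is to bound the probability that the random relation fails to be a universal padding formula by a union bound over the possible ``bad witnesses'', feeding in the single-parity survival estimates from the lemma above (the one computing the probabilities $3/4$, $1/2$, $(2k+2)/2^k$, $\dots$) together with the counting bound of Lemma~\ref{lemma:countapplications}. Write $R$ for the relation $R(x_1,\dots,x_n,y_1,\dots,y_m)$ in the statement, and for $t\in\{0,1\}^n$ let $t'=(t,y_1(t),\dots,y_m(t))$ be its extension. First note that $R$ automatically satisfies part~(1) of the definition of a universal padding formula: the projection of $R$ onto the first $n$ coordinates is $\{0,1\}^n$, each appended bit is a deterministic (parity) function of the first $n$ coordinates, and hence $t\mapsto t'$ is a bijection between $\{0,1\}^n$ and $R$ with $|R|=2^n$. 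So it suffices to bound the probability that part~(2) fails, i.e.\ that for some sequence $t_1,\dots,t_r\in\{0,1\}^n$ (with $r=\ar(p)$) the padded application $p(t_1',\dots,t_r')$ is not a projective application.

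Fix such a sequence and set $c_j=(t_1[j],\dots,t_r[j])$ for $j\in[n]$, so that by construction $c_{n+\ell}=\bigoplus_{j\in S_\ell}c_j$, where $S_\ell\subseteq[n]$ is the random set defining $y_\ell$. If $p(t_1',\dots,t_r')$ is defined, then in particular every padding column $c_{n+\ell}$ lies in $\domain(p)$; hence the probability that this sequence is a bad witness is at most the probability that all $m$ padding columns land in $\domain(p)$. When $p(t_1,\dots,t_r)$ is a non-projective application over $\{0,1\}^n$, the lemma above gives that for a single random parity bit $p$ remains defined with probability exactly $q$, where $q=3/4$ for $\edge_2$, $q=1/2$ for $\edge_3$, and $q=(2k+2)/2^k$ for $\near_k$ with $k\geq 4$ (and no larger than this for any operation weaker than $\near_k$); since the $S_\ell$ are independent, the probability that all $m$ keep $p$ defined is at most $q^m$. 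It remains to treat sequences with $p(t_1,\dots,t_r)=t_i$ for some $i$, i.e.\ where every column $c_1,\dots,c_n$ lies in $D_i:=\{c\in\domain(p):p(c)=c_i\}$; for these the padded application is non-projective only if some padding column additionally lands in $\domain(p)\setminus D_i$, and a short argument --- any $\mathbb{F}_2$-subspace spanned by elements of $D_i$ that also meets $\domain(p)\setminus D_i$ is already large enough that the surviving fraction is at most $q$ --- shows the probability of this event is again at most $q^m$.

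By Lemma~\ref{lemma:countapplications} there are at most $|\domain(p)|^n$ sequences $(t_1,\dots,t_r)$ on which $p$ is defined, and only these can be bad witnesses, so $R$ fails to be a universal padding formula with probability at most $|\domain(p)|^n\,q^m$. For $\edge_3$ this is $8^n 2^{-m}=2^{3n-m}$ (using $|\domain(\edge_3)|=8$), which is at most $\varepsilon$ as soon as $m\geq 3n+\log(1/\varepsilon)$. For $\edge_2$ it is $6^n(3/4)^m$ (using $|\domain(\edge_2)|=6$); since $6\cdot(3/4)^{6.23}<1$, this is at most $\varepsilon$ once $m\geq 6.23\,n+\Theta(\log(1/\varepsilon))$, which is implied by the bound in the statement. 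For $\near_k$ with $k\geq 4$ it is at most $(2k+2)^n\big((2k+2)/2^k\big)^m$, which is $2^{-\Omega(n)}$ as soon as $m\geq\frac{\log_2(2k+2)+1}{\,k-\log_2(2k+2)\,}\,n$; this exponent is $\Theta\!\big(\tfrac{\log k}{k}\big)$, so for a suitable constant this becomes $m=\Omega(\tfrac{\log k}{k}n)$, and by Lemma~\ref{lemma:weakeststrongestminimal} and Theorem~\ref{thm:inclusions} the same $m$ works for every operation weaker than $\near_k$ (such an operation has domain of size $2\ell+2$ and single-parity survival probability $(2\ell+2)/2^\ell$ at its level $\ell\geq k$, and $\tfrac{\log\ell}{\ell}\leq\tfrac{\log k}{k}$, so the required $m$ only decreases).

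The step I expect to need the most care is the ``projective'' subcase in the second paragraph: checking that a sequence which is already a projective application over $\{0,1\}^n$ cannot become a non-projective application over the padded space with probability exceeding $q^m$. Everything else is a routine union bound plus arithmetic. One could also bypass this subcase entirely by invoking the qualitative existence statement Lemma~\ref{lemma:padding_constant} (from \cite{lagerkvist2017c}) for the structure of the padding and using the probabilistic argument only to pin down the constant, but the self-contained version above is more direct.
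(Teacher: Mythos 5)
Your argument is essentially the paper's: a union bound over the at most $|\domain(p)|^n$ defined sequences of Lemma~\ref{lemma:countapplications}, the single-bit survival probabilities of the preceding lemma applied independently to the $m$ parity bits, and the same arithmetic in all three cases (your treatment of the $\log(1/\varepsilon)$ term in case 1, like the paper's, is off by the same harmless constant factor $1/\log_2(4/3)$). The one substantive difference is the subcase you single out, and it is worth recording that it is not a formality: the paper's proof bounds only the survival of applications that are already non-projective over $\{0,1\}^n$ and tacitly assumes that projective applications stay projective after padding. That assumption is automatic for $\edge_2$ and $\edge_3$ (there any span of columns taken from $D_i$ avoids $\domain(p)\setminus D_i$), but it fails for $\near_k$ with $k\geq 4$: taking $t_1=(1,0,0)$, $t_2=(0,1,0)$, $t_3=(0,0,1)$, $t_4=(0,0,0)$ and the single parity bit $y=x_1\oplus x_2\oplus x_3$, the application of $\near_4$ over $\{0,1\}^3$ is projective (it returns $t_4$), yet on the padded tuples every column --- including the parity column $(1,1,1,0)$ --- lies in $\domain(\near_4)$, and the result $(0,0,0,1)$ is not among the padded tuples, so the padded relation is not even preserved. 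Hence your extra paragraph closes a genuine (if small) gap in the written proof, and your criterion is the right one: each parity column is uniform on $V=\mathrm{span}_{\mathbb{F}_2}(c_1,\ldots,c_n)$, and for $\near_k$ a short case split on whether $1^k\in V$ shows that a subspace spanned by elements of $D_i$ and meeting $\domain(p)\setminus D_i$ is either all of $\{0,1\}^k$ or a suitable subspace of dimension $k-1$, and in both cases the in-domain fraction is exactly $(2k+2)/2^k$, so your $q^m$ bound and the stated constants go through. The only step left at the same level of brevity as the paper is extending both the survival probability and this subspace claim to the ``weaker'' operations in part 3 (the argument paddings of $\near_k$); that deserves a line of proof in either version.
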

\begin{proof}
  \emph{1.} 
  By Lemma~\ref{lemma:countapplications}, there are $6^n$ triples
  such that $p$ is defined. For each such triple such that the
  application of $p$ is non-projective, the probability 
  that it remains defined after the addition of a single random parity 
  bit is $3/4$. Thus after adding $t$ parity bits, the expected number 
  of  non-projective triples is at most
  \[
  6^n (3/4)^t=2^{n \log 6 - t \log (4/3)}.
  \]
  With $t = (n \log 6)/(\log 4/3) + d$, this number equals $1/2^d$,
  which means that with probability at least $1-1/2^d$, no defined triples remain.  
  The constant factor works out to
  $(\log 6)/(\log (4/3)) = (1+\log 3)/(2 - \log 3) < 6.23$. 
  
  \emph{2.} There are $8^n$ tuples $(t_1,\ldots,t_4)$ such that
  $p$ is defined, and for each of them which is non-projective the probability 
  of remaining defined after the addition of a single parity
  bit is $1/2$. Thus adding $3n+d$ parity bits leaves in expectation
  at most
  \[
  8^n (1/2)^{3n+d} = 2^{-d}
  \]
  non-projective tuples, and the probability that no non-projective tuples
  remain is at least $1-1/2^d$. 

  \emph{3.} In the general case, there are $(2k+2)^n=2^{(1+\log (k+1))n}$ 
  defined tuples, and the probability of a non-projective tuple remaining defined after the 
  addition of a random parity bit is $O(k/2^k)$.
  Note that $(ck/2^k)^t = 2^{(\log c + \log k - k)t}$. 
  Thus the expected number of non-projective tuples after $t$ parity
  bits is at most
  \[
  2^{(1+\log(k+1))n-(k - \log k - c')t},
  \]
  and it suffices to let $t=\Omega(\frac {\log k}{k} n)$.
\end{proof}

We remark that with a padding strategy other than simple parity bits,
a significantly lower scaling ratio may be possible for the
partial $k$-universal operation.  However, the advantage of paddding
with parity bits is that the padding can be efficiently inverted,
allowing for efficient extension oracles for the padded relation.

\subsection{Lower bounds in the extension oracle model}
\label{section:oracle_bounds}


In this section we use the bounds obtained in
Section~\ref{section:padding_constants} to obtain lower bounds for
$\inv(P)$-SAT in the extension oracle model.

\begin{lemma} \label{lemma:parityextensionoracle}
  Let $\upadd{P}$ be an $(n+m)$-ary universal padding formula via the construction
  in Lemma~\ref{lemma:paddingbounds}. Let $R = \{0,1\}^{k} \setminus \{t\}$
  for a $k$-ary tuple $t \in \{0,1\}^{k}$. Then there is a polynomial-time extension
  oracle for $R(x_1, \ldots, x_k) \land \upadd{P}(x_1,
  \ldots, x_n, y_1, \ldots, y_m)$. 
\end{lemma}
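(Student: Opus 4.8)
The plan is to observe that the relation $R(x_1,\ldots,x_k) \wedge \upadd{P}(x_1,\ldots,x_n,y_1,\ldots,y_m)$ is, up to a single forbidden point, an affine subspace of $\mathbb{F}_2^{\,n+m}$, so that every extension-oracle query reduces to a bounded number of Gaussian eliminations. First I would recall that, by the construction in Lemma~\ref{lemma:paddingbounds}, $\upadd{P}$ is exactly the set of $(x_1,\ldots,x_n,y_1,\ldots,y_m) \in \{0,1\}^{n+m}$ satisfying the $\mathbb{F}_2$-linear system $y_j = \bigoplus_{i \in S_j} x_i$ for $j \in [m]$, where $S_1,\ldots,S_m \subseteq [n]$ are the (fixed) supports of the chosen parity bits; and the constraint $R(x_1,\ldots,x_k)$ with $R=\{0,1\}^k\setminus\{t\}$ simply forbids the assignment $(x_1,\ldots,x_k)=t$. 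Hence the conjunction is the solution set of this linear system, with the affine slice where $(x_1,\ldots,x_k)=t$ removed. (Here $k \le n$, since $x_1,\ldots,x_k$ are among the $x$-variables.)

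Next I would describe the oracle. An extension query fixes some subset of the coordinates $\{x_1,\ldots,x_n,y_1,\ldots,y_m\}$ to specified values in $\{0,1\}$; each such fixing is itself an affine-linear equation over $\mathbb{F}_2$, so together with the $m$ parity equations it forms a linear system $L$, whose feasibility and solution-space dimension are computable in polynomial time by Gaussian elimination. Writing $\operatorname{Sol}(\cdot)$ for the solution set of a linear system, the oracle answers \emph{no} if $L$ is infeasible. If $L$ is feasible, I would then form $L'$ by adjoining the $k$ equations $x_i = t_i$ ($i \in [k]$); the solutions of $L'$ are precisely those solutions of $L$ that hit the forbidden tuple, so $\operatorname{Sol}(L') \subseteq \operatorname{Sol}(L)$. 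A valid extension of the query exists if and only if some solution of $L$ is not a solution of $L'$, i.e.\ iff $L'$ is infeasible or $\operatorname{Sol}(L') \subsetneq \operatorname{Sol}(L)$; and for nonempty affine subspaces over $\mathbb{F}_2$, proper inclusion is equivalent to $\dim\operatorname{Sol}(L') < \dim\operatorname{Sol}(L)$. The oracle answers \emph{yes} in exactly these cases, and this amounts to a constant number of rank/dimension computations, hence polynomial time.

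The only mildly delicate point — and the step I would be most careful about — is the treatment of the single disequality: one must not merely test whether the forbidden slice is empty, but whether it is a \emph{proper} subset of the whole solution set, since if every solution of the linear system happened to agree with $t$ on the first $k$ coordinates the answer would be \emph{no}. This is exactly what the dimension comparison above handles. Everything else is routine linear algebra over $\mathbb{F}_2$; in particular, the randomness used in Lemma~\ref{lemma:paddingbounds} to choose the supports $S_j$ plays no role here, as the oracle works for any fixed parity-check structure.
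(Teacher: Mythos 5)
Your proposal is correct and follows essentially the same route as the paper's proof: both reduce an extension query to an $\mathbb{F}_2$-linear system induced by the parity definitions of the $y_j$'s together with the queried fixings, and then check for a solution that avoids the single forbidden tuple $t$ on the first $k$ coordinates. Your dimension-comparison of $\operatorname{Sol}(L')\subseteq\operatorname{Sol}(L)$ is simply a more explicit way of carrying out the step the paper states as ``check whether the system has a solution $f$ with $f[i]\neq t[i]$ for some $i\in[k]$,'' so it is a welcome refinement rather than a different argument.
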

\begin{proof}
  Let $\alpha : X \rightarrow \{0,1\}$, $X \subseteq \{x_1, \ldots, x_k,
  y_1, \ldots, y_m\}$, be a partial truth assignment. We need to show
  that we can decide if $\alpha$ is consistent with $R(x_1, \ldots, x_k)
  \land \upadd{P}(x_1, \ldots, x_n, y_1, \ldots, y_m)$ in polynomial
  time. First, we check whether $\alpha$ is consistent with the
  constraint $R(x_1, \ldots, x_k)$, which is easy to do due to the
  representation of $R$. Second, recall that there for each $y_i$ exists an
  index set $S_i$ such that $y_i = \bigoplus_{s \in S_i} x_s$. Hence,
  the partial assignment $\alpha$ together with $R(x_1, \ldots, x_k)
  \land \upadd{P}(x_1, \ldots, x_n, y_1, \ldots, y_m)$ induces a system
  of  linear equations over GF(2) where the unknown variables are those
  unassigned by $\alpha$.  We may thus solve this system and check
  whether it has any solution $f$ where $f[i] \neq t[i]$ for some $i \in [k]$.
\end{proof}




\begin{theorem} \label{thm:concrete-lower-bounds}
  Let $P$ be a set of partial operations, and set $m \geq cn + \log n$ 
  such that a random parity-padded formula $\upadd{P}(x_1, \ldots,
  x_n, y_1, \ldots, y_m)$ is a universal padding formula with high
  probability. 
  Then $\inv(P)$-SAT cannot be solved in time $O^*(2^{(1/(c+1)-\varepsilon)n})$
  for any $\varepsilon>0$, assuming the randomized version of the 
  SETH is true. 
  In particular, we have the following lower bounds for specific problems:
  \begin{enumerate}
  \item 2-edge-SAT cannot be solved in $O(2^{(c-\varepsilon)n}$ time
    for any $\varepsilon > 0$, where $c \approx 1/7.28$.
  \item 3-edge-SAT cannot be solved in $O(2^{(c-\varepsilon)n}$ time
    for any $\varepsilon > 0$, where $c = 1/3$.
  \item For $k \geq 4$, $k$-NU-SAT cannot be solved in $O(2^{(c-\varepsilon)n})$ time
    for any $\varepsilon > 0$, where $c=1-\Theta(\frac{\log k}{k})$,
    and the same bound holds for the harder problems
    $k$-edge-SAT and $k$-universal SAT.    
  \end{enumerate}
\end{theorem}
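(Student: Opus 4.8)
The plan is to reduce $q$-SAT, for an arbitrarily large constant $q$, to $\inv(P)$-SAT in the extension oracle model, using one \emph{globally shared} universal padding formula so that the output instance has only $O(n)$ variables. Fix a $q$-SAT instance on variables $X=\{x_1,\ldots,x_n\}$ with clauses $C_1,\ldots,C_\ell$, where $C_i$ is an application of the $q$-clause relation $R_i=\{0,1\}^q\setminus\{t_i\}$. First I would draw $m\geq cn+\log n$ parity bits $y_1,\ldots,y_m$ over $X$ uniformly at random and form the relation $\upadd{P}(x_1,\ldots,x_n,y_1,\ldots,y_m)$; by Lemma~\ref{lemma:paddingbounds}, for the appropriate value of $c$ (the padding ratio recorded there for $\edge_2$, $\edge_3$, and $\near_k$ with $k\geq 4$, respectively), this is a universal padding formula for $P$ with probability at least $1-1/n$. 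The reduction then outputs the $\inv(P)$-SAT instance on the $n+m$ variables $X\cup\{y_1,\ldots,y_m\}$ whose constraints are $R'_i\equiv R_i(x_{j_1},\ldots,x_{j_q})\wedge\upadd{P}(x_1,\ldots,x_n,y_1,\ldots,y_m)$, one per clause $C_i$, reusing the \emph{same} variables $y_1,\ldots,y_m$ and the \emph{same} subformula $\upadd{P}$ in every constraint. Viewed as subsets of $\{0,1\}^{n+m}$ we have $R'_i\subseteq\upadd{P}$, so every application of every $p\in P$ on tuples of $R'_i$ is projective and therefore $R'_i\in\inv(P)$; and since $R'_i$ is the conjunction of a $q$-clause constraint with a system of parity equations, Lemma~\ref{lemma:parityextensionoracle} supplies a polynomial-time extension oracle for it. The reduction runs in polynomial time.

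Correctness follows because $\upadd{P}$ is a padding of the full cube $\{0,1\}^n$: it leaves $X$ entirely unconstrained and only forces each $y_a$ to equal its prescribed parity of $X$. Hence $(x,y)$ satisfies all the $R'_i$ if and only if $y$ is the vector of parities of $x$ and $x$ satisfies every $R_i$, i.e.\ if and only if $x$ satisfies the $q$-SAT instance. The output has $n+m=(1+c)n+\log n$ variables, and the reuse of a single padding formula across all clauses --- permitted precisely by the \emph{universality} property --- is what keeps the variable count linear.

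Suppose now, for contradiction, that $\inv(P)$-SAT on $N$ variables can be solved in $O^*(2^{(1/(c+1)-\varepsilon)N})$ time in the extension oracle model for some fixed $\varepsilon>0$. Composing the randomized reduction with this algorithm solves $q$-SAT on $n$ variables in time $O^*\!\big(2^{(1/(c+1)-\varepsilon)((1+c)n+\log n)}\big)=O^*\!\big(2^{(1-\varepsilon(c+1))n}\big)$ --- the $2^{O(\log n)}$ overhead and the polynomial cost of simulating each oracle call are absorbed into $O^*$ --- with success probability at least $1-1/n$, the only failure mode being that the random padding is not universal (in which case the produced relations need not lie in $\inv(P)$ and the algorithm's output is unreliable). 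Since $\delta:=\varepsilon(c+1)>0$ is a constant independent of $q$, this is a bounded-error randomized algorithm solving $q$-SAT in $O^*(2^{(1-\delta)n})$ time for \emph{every} $q$, contradicting the randomized SETH. Thus no such $\varepsilon$ exists, which is the general claim; the three itemized bounds follow by inserting the corresponding values of $c$ established in Lemma~\ref{lemma:paddingbounds}.

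The step I expect to require the most care is the soundness bookkeeping in the randomized setting: I must confirm that the failure probability of the \emph{padding construction} itself (as opposed to any failure of the hypothetical algorithm) can be driven below an arbitrary inverse polynomial while keeping $m=O(n)$ --- exactly what the additive $\log n$ term in Lemma~\ref{lemma:paddingbounds} provides --- and that one never needs to \emph{certify} membership of the $R'_i$ in $\inv(P)$, which could not be done in the oracle model, since a Monte Carlo reduction only needs the produced instance to be a genuine $\inv(P)$-SAT instance with high probability. Everything else --- the observation that $R'_i\in\inv(P)$, the extension oracle of Lemma~\ref{lemma:parityextensionoracle}, and the variable-count arithmetic --- is routine given those lemmas.
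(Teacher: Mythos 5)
Your proposal is correct and follows essentially the same route as the paper's proof: a single randomly generated universal padding formula (Lemma~\ref{lemma:paddingbounds}) shared across all clauses of a CNF/$q$-SAT instance, extension oracles from Lemma~\ref{lemma:parityextensionoracle}, and the variable count $n+m=(c+1)n+O(\log n)$ translating an improved $\inv(P)$-SAT algorithm into a randomized improvement for SAT, contradicting randomized SETH. Your write-up is in fact somewhat more explicit than the paper's (spelling out why $R'_i\in\inv(P)$ and the Monte Carlo error bookkeeping), but there is no substantive difference in approach.
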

\begin{proof}
  Let $\mathcal{F}$ be a CNF-SAT instance on variable set $X$, $|X|=n$,
  and compute a random padding formula $\upadd{P}(x_1, \ldots, x_n,
  y_1, \ldots, y_m)$, with $m$ as stated.
  We assume that the construction 
  is successful,
  i.e., that the resulting relation is a universal padding formula
  with respect to $P$. For every clause in the input,
  defined on a tuple of variables $(x_{i_1}, \ldots, x_{i_r})$, let
  $R(x_{i_1}, \ldots, x_{i_r})$ be the 
  corresponding relation, and let $R'(x_{i_1}, \ldots, x_{i_r}) \land \upadd{P}(x_1,
  \ldots, x_n, y_1, \ldots, y_m)$ be the relation as in
  Lemma~\ref{lemma:parityextensionoracle} (up to the ordering of
  variables). Note that we do not need
  to explicitly enumerate the tuples in this relation, since we may simply
  provide the extension oracle proven to exist
  in Lemma~\ref{lemma:parityextensionoracle}. 
  Then the output is a conjunction of $\inv(P)$-SAT relations,
  with a polynomial-time extension oracle for each one,
  and the resulting instance is equivalent to $\mathcal{F}$. 
  Since the output instance has $n + m = (c+1) \cdot n$ variables,
  an algorithm solving $\inv(P)$-SAT faster than the time stated
  would imply an improved algorithm for CNF-SAT.
  The bounds for specific problems follow from the bounds for universal
  padding formulas computed in Lemma~\ref{lemma:paddingbounds}.
\end{proof}


Finally, we note that the convergence of the lower bounds for
$k$-NU-SAT towards $2^n$, assuming SETH, is at a slower rate 
than the upper bounds for the best known algorithms for $k$-SAT,
which scale as $c_k \leq 1-\Theta(1/k)$~\cite{impagliazzo2001}.
There are also significant differences in problem model (finite
language versus infinite language, and concrete constraints versus
extension oracles). It would be interesting to improve these results,
to either improve the convergence rate or provide bounds in some
explicit representation model, assuming SETH. 

\paragraph{Section summary.}
We have proven lower bounds under SETH. The bounds obtained in
Theorem~\ref{thm:concrete-lower-bounds} are only valid in the extension oracle
model, and it does not appear entirely straightforward to extend them
to the explicit representation. However, for 2-edge-SAT we also gave a lower bound subject to
the \textsc{Subset Sum} problem, which as remarked is strong evidence
that the $O^{*}(2^{\frac{n}{2}})$ algorithm from Theorem~\ref{thm:2edgecsp} is the
best we could reasonably hope for.

\section{Discussions and Conclusions}

We have investigated the structure of constraint languages under
fine-grained reductions, with a focus on sign-symmetric Boolean
languages, and applied the results to an analysis of the time
complexity of NP-hard SAT problems, in a general setting.

The structural analysis uses an algebraic connection to analyse
constraint languages via their partial polymorphisms.  Thereby the
structural conclusions are relevant for any problem that takes as
input a constraint formula over some fixed constraint language, under
just a few assumptions: (1) that the constraints in the formula are
``crisp'' rather than soft, and are required to all be satisfied
(as opposed to problems such as MAX-SAT, where a feasible solution may
falsify some constraints); (2) that there are no structural
restrictions of the formula itself (e.g., no bounds on the number of
occurrences per variable); and (3) that the constraint language is
sign-symmetric, i.e., allows the free application of negated variables
and the use of constants in constraints.  Thus it naturally applies to
SAT$(\Gamma)$ problems, but would also be relevant for the analysis of
problems such as \#SAT and optimisation problems, or even 
parameterized problems such as \textsc{Local Search SAT$(\Gamma)$} --
is there a solution within distance $k$ of a given non-satisfying
assignment $t$?

\textbf{Structural results.}
The expressive power of sign-symmetric languages is
characterised by the restricted partial polymorphisms in this paper
referred to as pSDI-operations. We characterise the structure of all minimal
non-trivial pSDI-operations, and find that they are organised into a
hierarchy, whose levels correspond to the problem complexity, with
close connections to being able to express the $k$-SAT languages. 
Moreover, we described the weakest and strongest operations on each level. 
We find that particular families of pSDI-operations correspond to
partially defined versions of well-known algebraic conditions from the
study of CSPs; in particular, the strongest operation at each level
$k$ corresponds to the $k$-NU condition. Finally, we also give a
result in the ``vertical'' direction of the hierarchy, 
giving a simple characterisation of languages not preserved by the
partial $k$-NU operation for any $k$. By the above discussion, this
result should be of interest also for other inquiries. 

\textbf{Complexity of SAT$(\Gamma)$ problems.}
We apply our results to an analysis of the fine-grained time
complexity of $\SAT(\Gamma)$ for sign-symmetric languages, under SETH.
We consider previously studied languages with improved algorithms --
i.e., such that $\SAT(\Gamma)$ can be solved in time $O^*(c^n)$ for
some $c<2$ -- and find that they correspond well to particular classes
of the hierarchy. Conversely, every known language $\Gamma$
such that $\SAT(\Gamma)$ is SETH-hard -- i.e., admits no improved
algorithm assuming SETH -- lives entirely outside of the hierarchy.
We also show the feasibility of giving improved algorithms whose
correctness relies only and directly on the above-mentioned
pSDI-operations, by showing that known algorithmic strategies such as
fast matrix multiplication and (conjecturally) fast local search can
be extended to work for such classes.  

Finally, we give complementary lower bounds -- for every invariant $f$
as above, there is a constant $c_f$ such that \textsc{$\inv(f)$-SAT}
cannot be solved in $O^*(c^n)$ time for any $c<c_f$, assuming SETH.
These results are arguably the first of their kind; every previously
known concrete lower bound under SETH has either been for showing that
a problem admits no non-trivial algorithm, or has been applied to
problems analysed under more permissive parameters such as treewidth. 
In particular, \textsc{2-edge-SAT} is the first SAT problem which
simultaneously has non-trivial upper and lower bounds on the running
time under SETH. 

\subsection{The abstract problem and polynomial-time connections}

Finally, let us make a short detour to consider what we may call the
abstract problem. We have noted that for every Boolean pSDI-operation $f$, 
there is a set of equational conditions that characterise $f$,
similarly to definitions of varieties in universal algebra, and for every larger
domain $D$, these conditions will uniquely determine a partial
operation over the domain $D$. Furthermore, these conditions are
preserved under taking powers of the domain, which we have exploited
for particular cases of \textsc{$\inv(f)$-SAT} and \textsc{$\inv(f)$-CSP}
to reduce input instances to instances of polynomial-time solvable
problems on exponentially many variables. 

These polynomial-time problem will in general be search problems, like
CSPs, and will be preserved by the same type of operation $f$, 
but have a fixed number of variables $d$ and with an unbounded
domain size $n$. Let us refer to this as the \emph{abstract
  $\inv(f)$-problem}. 
The question can be raised, for which pSDI-operations $f$ 
does such a problem allow improved polynomial-time algorithms? 

We refrain from phrasing the question formally, because the
polynomial-time complexity may be strongly affected by details such as
constraint representation, but we note that the class of problems
defined this way, unlike the original problems $\SAT(\Gamma)$, contain
several problems conjectured \emph{not} to have such an improvement. 

First, we note that every constraint of arity less than $d$ is
preserved by the $k$-NU-type partial operation with $k \geq d$. 
This in particular includes the $k$-hyperclique problem for
$(k-1)$-uniform hypergraphs, which has been conjectured not to be
solvable in time~$O(n^{k-\varepsilon})$ for any $\varepsilon>0$ and $k>3$~\cite{williams2018}. 
Thus the abstract $d$-NU problem does not admit an improved algorithm
for $d>3$ under this conjecture. 

Second, it can be verified that the problem of finding a zero-weight
triangle, under arbitrary large edge weights, if viewed as a single
constraint of arity $d=3$, is preserved by the corresponding
3-universal partial operation. It is known that subject to the 3SUM
conjecture, this problem cannot be solved in $O(n^{3-\varepsilon})$
for any $\varepsilon>0$~\cite{VWilliamsW13SICOMP}. 

If we restrict ourselves to the minimal non-trivial pSDI-operations
$f$ defined for the Boolean domain in this paper, this
leaves only a small number of concrete problems open
under the above conjectures. By the inclusions we have established, 
any operation $f$ at a level $k>3$ yields an abstract problem as hard
as the $k$-NU operation. Furthermore, the abstract 3-NU problem does
admit an improved algorithm via fast matrix multiplication. 
It can be easily checked that up to argument permutation, there are
only eight distinct pSDI-operations $f$ at level 3 of the hierarchy; 
and by the above discussion, the easiest and the hardest are
(conjecturally) resolved.  We consider it an interesting question 
to investigate the complexity of the problem for these  remaining
cases.  

\subsection{Regarding a dichotomy for sign-symmetric SAT problems}


Ignoring for the moment the lower bounds discussed in the previous
section, the results throughout our paper suggest a simple potential
dichotomy between NP-complete SAT problems solvable in $O(2^{cn})$
time for $c < 1$ and SAT problems not solvable in $O(2^{cn})$ time for
any $c < 1$ unless SETH fails. We can formulate this conjecture
as follows. To simplify the conjecture we restrict ourselves to the non-uniform model.

\begin{conjecture} \label{conjecture:1}
  Let $\Gamma$ be a possibly infinite sign-symmetric Boolean
  constraint language such that $\SAT(\Gamma)$ is NP-complete. Then
  $\SAT(\Gamma)$ admits a non-uniform algorithm with running time
  in $O(2^{cn})$ time for $c < 1$ if and only if $\Gamma$ is preserved
  by a non-trivial pSDI-operation.
\end{conjecture}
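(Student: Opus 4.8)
The plan is to split the biconditional into its two directions, only one of which is within reach of the machinery developed above, and to pin down exactly where the difficulty concentrates.

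\textbf{The ``only if'' direction.} I would establish this unconditionally from what is already proved. The first step is to record the equivalence, for a sign-symmetric language $\Gamma$, between ``$\Gamma$ is preserved by some non-trivial pSDI-operation'' and ``$\Gamma$ is preserved by $\universal_k$ for some constant $k$''. One direction is immediate, since $\universal_k$ is itself a non-trivial pSDI-operation. For the other, given a non-trivial pSDI-operation $f$ with $f \in \ppol(\Gamma)$, pick a subfunction $f'$ of $f$ that is minimal among the non-trivial pSDI-subfunctions of $f$; such an $f'$ is then a minimal non-trivial pSDI-operation (if some proper pSDI-subfunction of $f'$ were non-trivial it would contradict the minimality of $f'$), hence sits at some level $k$, and $\Gamma \subseteq \inv(f) \subseteq \inv(f') \subseteq \inv(\universal_k)$ by the corollary following Theorem~\ref{thm:inclusions} (equivalently, by Lemma~\ref{lemma:weakeststrongestminimal}). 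With this equivalence, the contrapositive of the ``only if'' direction is precisely Corollary~\ref{corollary:musthavekuni}: a sign-symmetric $\Gamma$ not preserved by $\universal_k$ for any $k$ admits no improved algorithm under SETH, even in the non-uniform model.

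\textbf{The ``if'' direction.} This is the hard part and the reason the statement is a conjecture. Using the equivalence above, I would reduce it to the single clean statement: \emph{for every constant $k$, $\inv(\universal_k)$-$\SAT$ admits a non-uniform improved algorithm}, with some rate $c_k < 2$. This reduction is pure bookkeeping: if $\Gamma$ is preserved by a non-trivial pSDI-operation then $\Gamma \subseteq \inv(\universal_k)$ for some $k$, every finite $\Gamma' \subset \Gamma$ is then a finite subset of $\inv(\universal_k)$, and in the non-uniform model the algorithm for $\inv(\universal_k)$-$\SAT$ applies directly with the same rate (no reduction overhead, and the choice of constraint representation is irrelevant). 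For $k = 2$ this target is already met, since $\universal_2$ is the partial $2$-edge operation and Theorem~\ref{thm:2edgecsp} gives an $O^*(2^{n/2})$ algorithm. For $k \ge 3$ it is open.

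\textbf{The main obstacle.} I expect the difficulty to lie entirely in the case $k \ge 3$ of the reduced statement. The natural route is to first prove a structure theorem for $\inv(\universal_k)$ — ideally, that every $R \in \inv(\universal_k)$ is the solution set of a system of degree-$(k-1)$ polynomial equations over a commutative ring, so that the algorithm of Lokshtanov et al.~\cite{LokshtanovPTWY17SODA} can be invoked or adapted; or at least that every such $R$ has a ``low-complexity'' normal form compatible with the polynomial method or with meet-in-the-middle. The defining property of $\universal_k$ (its arguments realise exactly the $2^k-1$ non-zero $k$-bit patterns) is a strong Vandermonde-type condition — collapsing tuples of $R$ along any partition of the coordinates into at most $k$ blocks, whenever the collapse is defined, must land back inside $R$, which is exactly what makes degree-$(k-1)$ polynomial relations invariant, cf.\ Lemma~\ref{lemma:dpoly-d+1univ}. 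However, Lemma~\ref{lemma:sidon-uni3} already exhibits a symmetric relation in $\inv(\universal_3)$ with no evident polynomial description, so one cannot appeal to a classification as roots of polynomials; the algorithm would have to be correct on the sole basis of the abstract collapsing property, i.e., it must \emph{generalise} the polynomial method rather than reuse it. The intermediate level-$k$ classes give partial and conditional evidence (Theorem~\ref{thm:3-nu-csp} for $3$-NU, Theorem~\ref{thm:3-edge--almost} for symmetric $3$-edge, Theorem~\ref{theorem:sunflower-algorithm} for $k$-NU under the sunflower conjecture), and the lower bounds of Theorem~\ref{thm:concrete-lower-bounds} confirm the level constants $c_k$ are genuine so that no uniform $c<2$ can serve all $k$; but bridging from these to an unconditional improved algorithm for $\inv(\universal_k)$-$\SAT$ with $k \ge 3$ is the step I do not see how to complete, and is what I would flag as the crux.
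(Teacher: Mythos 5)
Your reading matches the paper's own treatment: the statement is left as a conjecture, and the paper's discussion does exactly what you propose --- the ``only if'' direction is observed to follow from Corollary~\ref{corollary:musthavekuni} via the equivalence between being preserved by some non-trivial pSDI-operation and being preserved by $\universal_k$ for some $k$ (so it holds only modulo SETH; your opening word ``unconditionally'' overstates this, since the conjecture as written is unconditional), while the ``if'' direction is reduced, just as you do, to the open question of whether $\inv(\universal_k)$-SAT admits a non-uniform improved algorithm for every $k$, with the same obstruction identified (one must generalise the algorithm of Lokshtanov et al.~\cite{LokshtanovPTWY17SODA} beyond properties specific to polynomials, cf.\ the Sidon-set example of Lemma~\ref{lemma:sidon-uni3}). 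So your proposal is correct in everything it actually claims to establish and takes essentially the same route as the paper; the ``if'' direction remains open for both.
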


Note that by Corollary~\ref{corollary:musthavekuni}, the negative
direction of this conjecture is already known, up to SETH. It thus
remains to consider whether \textsc{$k$-universal SAT} admits a
non-uniform improved algorithm for every $k$. Furthermore, as
discussed in the Introduction, the class of constraints definable as
the roots of bounded-degree multivariate polynomials represents an
example which by Lemma~\ref{lemma:dpoly-d+1univ} is directly
associated with \textsc{$k$-universal SAT}, and which has an improved
algorithm by Lokshtanov et al.~\cite{LokshtanovPTWY17SODA}. 
Thus, the above conjecture at least represent a kind of
Occam's razor-type extrapolation of least mathematical surprise. 

%
%

However, at the moment this conjecture seems difficult to settle. 
An extreme negative result, such as the conclusion that the full
problem \textsc{$\inv(f)$-SAT} admits an improved algorithm only when
the abstract $\inv(f)$-problem does, would by
Theorem~\ref{theorem:sunflower-algorithm} need to refute the sunflower 
conjecture. A full positive resolution would need to generalise the
result of Lokshtanov et al.~\cite{LokshtanovPTWY17SODA} to apply based
only on a weak abstract condition, whereas their present algorithm
strongly uses properties specific to polynomials.  
Intermediate outcomes are of course possible, but would raise further
questions of which pSDI-operations $f$ are powerful enough 
to guarantee the existence of an improved algorithm.

\subsection{Future work}

The investigations in this paper leave several concrete open
questions, and significant avenues for future work, regarding all
parts of the paper. Let us highlight a few.

\textbf{Structural aspects.} 
Assuming that the class of partial $k$-edge operations turn out to be
relevant for the analysis of future problems, it would be valuable to
have a set of canonical consequences to a language not being
preserved by any partial $k$-edge operation, similarly to
Theorem~\ref{thm:not-any-knu}. 
To this aim, it may also be enlightening to fully describe the symmetric relations
contained in various classes in the hierarchy.

Another concrete question is regarding the structure of
$\inv(\near_k)$ for $k>3$. Assume that $R \in \inv(\near_k)$ is an
$n$-ary Boolean relation, which depends on every argument. Is there a
non-trivial upper bound on $|R|$?

\textbf{Extension to CSPs.} 
Many questions remain regarding an extension of the project to CSPs on
non-Boolean domains. While the minimal non-trivial pSDI-operations 
defined in this paper do have higher-domain analogues, via
polymorphism patterns, and while these analogues do in some cases have
useful consequences for the complexity of the corresponding CSP, 
it is not clear that they are in general the only kind of condition
that is relevant for the fine-grained complexity of CSPs. 
In particular, in the Boolean domain there is a known correspondence
between pSDI-operations and sign-symmetric languages. 
No such correspondence has been shown for CSPs in general. 

In a different vein, for higher-domain CSPs there are also classes of
NP-hard problems whose time complexity is far better than
$O^*(|D|^n)$, e.g., \textsc{$k$-Colouring} corresponds to a CSP of
domain size $|D|=k$ and can be solved in $O^*(2^n)$ time for every
$k$~\cite{BjorklundHK09}. Arguably, we do not have a good
understanding of when this occurs in general, and we cannot claim that
an $O(c^n)$ time algorithm for $c < |D|$ is necessarily an
improvement. A reasonable starting point to mitigate some of these
technical difficulties is to initially only consider consider
constraint languages whose total polymorphisms are the projections.


\textbf{Problems.}
Let us mention a few concrete algorithmic questions. 
First of all, by Lemma~\ref{lemma:sidon-uni3}, symmetric relations
defined by Sidon sets are preserved by the $3$-universal operation,
but they do not seem to be captured by currently known algorithms for
problems in this class. Does the language consisting of all such
relations admin an improved algorithm?

Another problem is to find a generalisation of the algorithm for
constraints defined via bounded-degree
polynomials~\cite{LokshtanovPTWY17SODA},
without explicitly using properties specific to polynomials. 
A different generalisation of this class was considered
by the present authors (see the arXiv version of~\cite{LagerkvistW17CP}),
in the form of relations with bounded-degree Maltsev embeddings. 
Since this properly generalises bounded-degree polynomials,
it is natural to ask whether this class admits an improved
algorithm. 

More broadly, as remarked earlier, the classification of the
expressiveness of sign-symmetric constraint languages may be of
interest for questions other than just satisfiability.
The algorithm for \textsc{2-edge-SAT}, for instance, can be used
to solve the corresponding counting problem, showing that
pSDI-operations may be powerful enough also in other settings. 
Concrete questions to consider here include improved algorithms for
the counting problem $\#\SAT(\Gamma)$ and the parameterized problem
\textsc{Local search SAT$(\Gamma)$}. 

\textbf{Lower bounds.}
Can the padding scheme be improved to give better asymptotics with
respect to the level $k$?  Recall that the lower bound behaves
as a bound of $2-\Theta((\log k)/k)$, whereas all known algorithmic
strategies yield running times of the form $(2-\Theta(1/k))^n$. 

It would also be very interesting to have a SETH-based lower bound in
the explicit representation model. As discussed earlier the padding
construction is valid also in this representation, but is difficult to
implement in practice since the resulting relations may contain 
exponentially many tuples with respect to the number of variables.



\bibliography{references}
\bibliographystyle{abbrv}
\end{document}